\newtheorem{theorem}{Theorem}[section]
\newtheorem{lemma}[theorem]{Lemma}
\newtheorem{corollary}[theorem]{Corollary}
\newtheorem{proposition}[theorem]{Proposition}
\newtheorem{observation}[theorem]{Observation}
\newtheorem*{rep@theorem}{\rep@title}
\newcommand{\newreptheorem}[2]{%
\newenvironment{rep#1}[1]{%
 \def\rep@title{#2 \ref{##1}}%
 \begin{rep@theorem}}%
 {\end{rep@theorem}}}
\newcommand{\defcal}[1]{\expandafter\newcommand\csname c#1\endcsname{{\mathcal{#1}}}}
\newcommand{\defbb}[1]{\expandafter\newcommand\csname b#1\endcsname{{\mathbb{#1}}}}
\newcounter{calBbCounter}
    \edef\letter{\Alph{calBbCounter}}
\newcommand{\eps}{\varepsilon}
\newcommand{\ie}{{\it i.e.}}
\newcommand{\eg}{{\it e.g.}}
\newcommand{\USM}{{\texttt{USM}}}
\newcommand{\UDRSM}{{\texttt{UDRSM}}}
\newcommand{\characteristic}{\mathbf{1}}
\newcommand{\Update}{{\text{\texttt{Update}}}}
\newcommand{\DiscreteUpdate}{{\text{\texttt{Discrete-Update}}}\xspace}
\newcommand{\PreProcess}{{\text{\texttt{Pre-Process}}}}
\newcommand{\DiscretePreProcess}{{\text{\texttt{Discrete-PreProcess}}}}
\newcommand{\RSet}{{\mathsf{R}}}
\newcommand{\opt}{z^*}
\newcommand{\EventUpdate}[1][]{{\cE^{\ifx&#1& \else #1 \fi}
_{\text{\textnormal{U}}}}}
\newcommand{\EventGreedy}{{\cE_{\text{\textnormal{G}}}}}
\newcommand{\EventPreProcess}{{\cE_{\text{\textnormal{P}}}}}
\newcommand{\EventGreedyComplement}{{\cE^c_{\text{\textnormal{G}}}}}
\newcommand{\EventPreProcessComplement}{{\cE^c_{\text{\textnormal{P}}}}}
\crefname{observation}{Observation}{Observations}
\crefname{assumption}{Assumption}{Assumptions}
\title{\textbf{Unconstrained Submodular Maximization\\with Constant Adaptive Complexity}}
\author{
	Lin Chen\thanks{Yale Institute for Network Science, Yale University.  E-mail: \texttt{lin.chen@yale.edu}.}
	\and
	Moran Feldman\thanks{Depart. of Mathematics and Computer Science, The Open University of Israel. E-mail: \texttt{moranfe@openu.ac.il}.}
	\and
	Amin Karbasi\thanks{Yale Institute for Network Science, Yale University. E-mail: \texttt{amin.karbasi@yale.edu}.}
}
\begin{document}

\maketitle
\begin{abstract}
	In this paper, we consider the unconstrained submodular maximization problem. We propose the 
	first 
	algorithm for this problem that achieves a tight $(1/2-\eps)$-approximation 
	guarantee using 
	$\tilde{O}(\eps^{-1})$ 
	adaptive rounds and a linear number of function evaluations. No previously known algorithm 
	for this problem achieves an approximation ratio better than $1/3$ using less than $\Omega(n)$ 
	rounds of adaptivity, where $n$ is the size of the ground set. Moreover, our algorithm easily extends to the maximization of a non-negative continuous DR-submodular function subject to a box constraint, and achieves a tight $(1/2-\eps)$-approximation guarantee for this problem while keeping the same adaptive and 
	query complexities.

\medskip
\noindent \textbf{Keywords:} Submodular maximization, low adaptive complexity, parallel computation
\end{abstract}
\pagenumbering{Alph}
\thispagestyle{empty}
\clearpage
\pagenumbering{arabic}
\setcounter{page}{1}

\section{Introduction}

Faced with the massive data sets ubiquitous in many modern machine learning and data mining applications, there 
has been a tremendous interest in developing  parallel and scalable optimization algorithms. At the 
heart of designing 
such algorithms, there is an inherent trade-off between the number of adaptive sequential rounds of 
parallel computations (also known as
\emph{adaptive complexity}), the total number of objective function evaluations (also known as 
\emph{query complexity}) and the resulting solution quality. 

In the context of submodular maximization, the above trade-off has recently received a growing interest. 
We say that a set function $f\colon 2^\cN \to \bR$ on a finite ground set $\cN$ of size $n$ is
\textit{submodular} if it 
satisfies 
\[f(S\cup\{e\})-f(S)\geq f(T\cup\{e\})-f(T) \quad \text{for every } S\subseteq T\subseteq \cN \text{ 
	and } 
e \in \cN\setminus T \enspace.\]
We also say that such a function is \emph{monotone} if it satisfies $f(S) \leq f(T)$ for every two sets $S 
\subseteq T 
\subseteq \cN$.  
The definition of submodularity intuitively captures \textit{diminishing returns}, which allows submodular functions to faithfully model 
diversity, cooperative costs 
and information gain, making  them increasingly important in various machine learning and 
artificial intelligence
applications \cite{dolhansky2016deep}. Examples include viral marketing \cite{kempe03}, data 
summarization 
\cite{lin2012learning, 
wei2013using}, 
neural network 
interpretation \cite{elenbergDFK17}, active learning \cite{golovin11,guillory2012active}, sensor 
placement 
\cite{krause06nearoptimal}, dictionary 
learning \cite{das2011submodular}, 
compressed sensing 
\cite{elenberg2016restricted} and fMRI 
parcellation \cite{salehi2017submodular}, to name a few. At the same time, submodular functions 
also enjoy tractability as they can be 
minimized exactly and maximizaed approximately in polynomial time.  In 
fact, there has been a surge of novel algorithms to solve submodular maximization problems at 
scale under various models of computation, including centralized \cite{BFNS15,CCPV11,feldman2011unified,nemhauser78},  streaming 
\cite{badanidiyuru2014streaming,buchbinder2015online, chekuri2015streaming, kumar13fast}, 
distributed \cite{barbosa2015power,barbosa2016new,mirrokni2015randomized,mirzasoleiman2013distributed} and 
decentralized \cite{mokhtari2018decentralized} 
frameworks. While  the 
aforementioned works aim to 
obtain tight approximation guarantees, and some other works strove to achieve this goal with a minimal number of functions evaluations 
\cite{badanidiyuru2014fast,buchbinder2014submodular, feldmanHK17, feldman2018less,Mirzasoleiman15}, until recently almost all works on submodular maximization ignored one important  aspect of 
optimization, 
namely, the adaptive 
complexity. More formally,   the adaptive complexity of a submodular maximization procedure is 
the minimum
number of sequential rounds required for implementing it, where in each 
round polynomially-many independent 
function evaluations 
can be executed in parallel \cite{BS18}. All the previously mentioned works may require  $\Omega(n)$ adaptive rounds in the worst case. 
 
A year ago, Balkanski and Singer~\cite{BS18} showed, rather surprisingly, that one can achieve an approximation ratio of 
 $1/3 - \eps$ for maximizing a non-negative monotone submodular function
 subject to a cardinality constraint using $O(\eps^{-1} \log n)$ 
 adaptive rounds. They also proved that no constant factor approximation guarantee can be obtained for this 
 problem in $o(\log n)$ adaptive rounds. The approximation guarantee 
 of~\cite{BS18} was very quickly improved in several independent works~\cite{BRS19,EN19,FMZ19a} to $1 - 
 \nicefrac{1}{e} - \eps$ (using $O(\eps^{-2} \log n)$ adaptive rounds), which almost matches an impossibility 
 result by~\cite{NW78} showing that no polynomial time algorithm can achieve $(1 -\nicefrac{1}{e} +\eps)$-approximation for the problem, regardless of the amount of adaptivity it uses. It should be noted also 
 that~\cite{FMZ19a} manages to achieve the above parameters while keeping the query complexity linear in 
 $n$. An even more recent line of work studies algorithms with low adaptivity for more general 
 submodular maximization problems, which includes problems with non-monotone objective 
 functions and/or constraints beyond the cardinality constraint~\cite{CQ19,ENV18,FMZ18b}. Since 
 all these results achieve constant approximation for problems generalizing the maximization of a 
 monotone submodular function subject to a cardinality constraint, they all inherit the impossibility 
 result of~\cite{BS18}, and thus, use at least $\Omega(\log n)$ adaptive rounds.

In this paper, we study the  \texttt{Unconstrained Submodular Maximization} (\USM) problem 
which asks to find an arbitrary set $S\subseteq\cN$ maximizing  a given non-negative submodular function 
$f(S)$. 
This 
problem was studied by a long list of works~\cite{BF18,BFNS15,FMV11,FNS11,GV11}, culminating 
with a linear time $\nicefrac{1}{2}$-approximation algorithm~\cite{BFNS15}, which was proved to 
be the best possible approximation for the problem by~\cite{FMV11}. Since it does not impose any 
constraints on the solution, {\USM} does not inherit the impossibility result of~\cite{BS18}. In fact, 
it is known  that one 
can get approximation ratios of $\nicefrac{1}{4}$ and $\nicefrac{1}{3}$ for this problem using 
$0$ and $1$ adaptive rounds, respectively~\cite{FMV11}. 
The results of~\cite{FMV11} leave open the question of whether one can get an optimal 
approximation for {\USM} while keeping the number of adaptive rounds independent of $n$. In 
this paper we answer this question in the affirmative. Specifically, we prove the following theorem, 
where the notation $\tilde{O}$ hides a polylogarithmic dependence on $\eps^{-1}$.

\begin{theorem} \label{thm:main_result}
For every constant $\eps > 0$, there is an algorithm that achieves $(\nicefrac{1}{2} - \eps)$-approximation for {\USM} using $\tilde{O}(\eps^{-1})$ adaptive rounds and a query complexity which is linear in $n$.
\end{theorem}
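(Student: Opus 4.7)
The plan is to parallelize the sequential $\nicefrac{1}{2}$-approximation double greedy algorithm of~\cite{BFNS15} --- which processes the elements one at a time --- into a $T = \tilde O(\eps^{-1})$-round procedure by handling $\Theta(\eps n)$ elements in parallel in each round, while losing only a $(1-O(\eps))$ factor in the approximation.

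\textbf{Algorithm.} Fix a uniformly random permutation of $\cN$ and split it into $T$ contiguous batches $B_1, \ldots, B_T$ with $|B_t| = \Theta(\eps n)$. Maintain two sets $X_t \subseteq Y_t$ with $X_0 = \emptyset$ and $Y_0 = \cN$. In round $t$, \emph{freeze} $(X_{t-1}, Y_{t-1})$ and compute in parallel, for each $e \in B_t$, the positive marginals
\[
  a_e \;=\; \bigl(f(X_{t-1} \cup \{e\}) - f(X_{t-1})\bigr)_+,
  \qquad
  b_e \;=\; \bigl(f(Y_{t-1} \setminus \{e\}) - f(Y_{t-1})\bigr)_+ .
\]
Then, independently across $e \in B_t$, include $e$ in $X_t$ with probability $a_e/(a_e + b_e)$ or remove it from $Y_t$ with the complementary probability. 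After $T$ rounds, $X_T = Y_T$, and this common set is returned. The total query complexity is $O(n)$ and the adaptive complexity is $T = \tilde O(\eps^{-1})$ by construction.

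\textbf{Approximation.} Fix an optimal set $\mathrm{OPT}$ and define $\mathrm{OPT}_t = \bigl(\mathrm{OPT} \setminus (B_1 \cup \cdots \cup B_t)\bigr) \cup X_t$. The sequential analysis of~\cite{BFNS15} telescopes the per-element bound
\[
  \bE\bigl[f(\mathrm{OPT}_{t-1}) - f(\mathrm{OPT}_t)\bigr]
    \;\leq\; \tfrac{1}{2} \, \bE\bigl[ f(X_t) - f(X_{t-1}) + f(Y_t) - f(Y_{t-1}) \bigr],
\]
but only when the marginals used for each element $e$ reflect the state of $(X,Y)$ at the moment $e$ is processed. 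In our parallel variant the marginals are instead computed at the stale state $(X_{t-1}, Y_{t-1})$; the key lemma to prove is that this staleness costs at most $O(\eps/T)\cdot f(\mathrm{OPT})$ per round in expectation. Summing over the $T$ rounds then yields an aggregate additive loss of $O(\eps)\,f(\mathrm{OPT})$, and rescaling $\eps$ delivers the $(1/2-\eps)$-approximation guarantee.

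\textbf{Main obstacle.} The heart of the argument is the per-round staleness bound. By submodularity, the stale inclusion marginals $a_e$ upper bound the fresh ones while the stale exclusion marginals $b_e$ lower bound the fresh ones, so the stale probabilities are systematically biased toward adding elements and this bias must be charged somewhere. Controlling it requires both (i) a sampling argument, showing that a uniformly random $\Theta(\eps)$-fraction of the still-undecided ground set cannot shift the marginals by more than an $O(\eps)$ multiplicative factor in expectation; and (ii) a decoupling step, showing that the $|B_t|$ binary decisions within a single batch do not compound into a $|B_t|$-fold error (crucially using that each element's rounding coin is independent of the batch composition beyond the element itself). I anticipate that the $\mathrm{polylog}(\eps^{-1})$ factor hidden in $\tilde O(\eps^{-1})$ arises from boosting these constant-probability concentration bounds to hold simultaneously across all $T$ rounds and across both the inclusion and exclusion sides of the double greedy.
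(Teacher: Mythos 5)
Your proposal takes a genuinely different route from the paper, and it has a real gap precisely at the step you flag as the ``main obstacle.''

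The paper does not process fixed-size random batches at all. It maintains the potential $\Phi(x,y) = \characteristic_\cN[\nabla F(x) - \nabla F(y)]$ (discretely, $\sum_{u \in Y\setminus X}[f(u\mid X) - f(u\mid Y-u)]$), first spends one round on a {\PreProcess} step that drives $\Phi$ from a possibly huge initial value down to $O(\tau) = O(f(OPT))$, and then takes {\Update} steps whose size $\delta$ is chosen \emph{adaptively} from a geometric grid as the first value at which the marginals drop by $\gamma$. It is the adaptive choice of $\delta$ that simultaneously (a) guarantees $\Phi$ falls by $\Omega(\eps\tau)$ per iteration, bounding the number of iterations by $O(\eps^{-1})$, and (b) caps the error term in the gain--loss inequality at $O(\eps^2\Phi)$, so that after pre-processing the per-iteration error is $O(\eps^2 f(OPT))$. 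You instead get the $\tilde O(\eps^{-1})$-round count by fiat (fixed batches of size $\Theta(\eps n)$), which shifts the entire burden onto the staleness lemma, and that lemma does not hold as stated.

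Two concrete reasons. First, claim (i) --- that a uniformly random $\Theta(\eps)$-fraction shifts marginals by only an $O(\eps)$ multiplicative factor in expectation --- is false for general submodular $f$: for truncation-like objectives such as $f(S) = \min\{|S|,1\}\cdot\min\{n-|S|,1\}$, processing a single element of a batch collapses the remaining inclusion marginals from $1$ to $0$, a shift that is not controlled by the batch size at all. The paper's {\Update} sidesteps this by searching for the smallest $\delta$ at which a significant drop occurs and committing only to that $\delta$, instead of to a fixed step of size $\eps$. Second, and more fundamentally, even granting an $O(\eps)$ multiplicative shift, the aggregate staleness in a batch scales with $\sum_{e}(a_e + b_e)$, and this sum --- exactly the paper's potential --- can initially be as large as $\Theta(n)\cdot f(OPT)$ (the paper explicitly states this as the motivation for {\PreProcess}). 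Your algorithm never establishes the invariant $\Phi = O(f(OPT))$, so there is no way to conclude that the per-round staleness is $O(\eps/T)\cdot f(OPT)$ rather than something like $\Omega(\eps n/T)\cdot f(OPT)$. Without the pre-processing idea and without the adaptive step size, the telescoping argument you outline cannot be closed, and those two ingredients are precisely what the paper identifies as its new contributions beyond naive parallelization of double greedy (which, the paper notes, only yields poly-logarithmic adaptivity).
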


To better understand our result, one should consider the way in which the algorithm is allowed to 
access the objective function $f$. The most natural way to allow such an access is via an oracle 
that given a set $S \subseteq \cN$ returns $f(S)$. Such an oracle is called a \emph{value oracle} 
for $f$. A more powerful way to allow the algorithm access to $f$ is through an oracle known as a 
value oracle for the multilinear extension $F$ of $f$. The multilinear extension of the set function 
$f$ is a function $F\colon [0, 1]^\cN \to \bR$ defined as $F(x) = \bE[f(\RSet(x))]$ for every vector $x \in 
[0, 1]^\cN$, where $\RSet(x)$ is a random set that includes every element $u \in \cN$ with 
probability $x_u$, independently. A value oracle for $F$ is an oracle that given a vector $x \in [0, 
1]^\cN$ returns $F(x)$.

In Section~\ref{sec:algorithm} we describe and analyze an algorithm which satisfies all the requirements 
of Theorem~\ref{thm:main_result} and assumes value oracle access to $F$. Since the multilinear 
extension $F$ can be approximated arbitrarily well using value oracle access to $f$ via sampling~(see, 
\eg, \cite{CCPV11}), it is standard practice to convert algorithms that assume value oracle access to 
$F$ into algorithms that assume such access to $f$. However, a straightforward conversion of this 
kind usually increases the query complexity of the algorithm by a factor of $O(n)$, which is 
unacceptable 
in many applications. Thus, we describe and analyze in Appendix~\ref{sec:discrete} an alternative algorithm which 
satisfies all the requirements of Theorem~\ref{thm:main_result} and assumes value oracle access to 
$f$. 
While this algorithm is not directly related to the algorithm from Section~\ref{sec:algorithm}, the two 
algorithms are based on the same ideas, and thus, we chose to place only the simpler of them in the main part of the paper.

Before concluding this section, we would like to mention that the notion of diminishing returns can be  extended to the 
continuous 
domains as follows. A differentiable function $F\colon\cX\rightarrow\bR^n$, defined over a 
compact set 
$\cX\subset \bR^n$, 
is called 
\textit{DR-submodular} \cite{bian2016guaranteed} if for all vectors $x,y\in \cX$ such that $x\leq y$ we 
have 
$\nabla F(x)\geq \nabla F(y)$---where the inequalities are interpreted coordinate-wise. A canonical example of a 
DR-submodular function is the multilinear extension of a submodular set function.  It has been recently 
shown 
that non-negative DR-submodular functions can be (approximately) maximized over convex bodies 
using first-order methods \cite{bian2016guaranteed,hassanigradient2017, mokhtari2017conditional}. 
Moreover, inspired by 
the double greedy algorithm of~\cite{BFNS15}, it was shown that one can achieve a tight $1/2$-approximation guarantee for the maximization of such functions
subject to 
a box constraint \cite{bian2018optimal,niazadeh2018optimal}. The algorithm we describe in Section~\ref{sec:algorithm} can be easily extended to maximize also arbitrary non-negative DR-submodular functions subject to a box constraint as long as it is possible to evaluate both the objective function and its derivatives. The extended algorithm still achieves a tight $(1/2-\eps)$-approximation guarantee, while keeping its original adaptive and query complexities. The details of the extension are given in Appendix~\ref{sec:dr_submodular}.

\subsection{Our Technique}
All the known algorithms for maximizing a non-negative monotone submodular function subject to a cardinality constraint that use few adaptive rounds update their solutions in iterations. A typical such algorithm decides which elements to add to the solution in a given iteration by considering the set of elements with (roughly) the largest marginal, and then adding as many such elements as possible, as long as the improvement in the value of the solution is roughly linear in the number of added elements. This yields a bound on the number iterations (and thus, adaptive rounds) through the following logic.
\begin{itemize}
	\item The increase stops being roughly linear only when the marginal of a constant fraction of the elements considered decreased significantly. Thus, the set of elements with the maximum marginal decreases in an exponential rate, and after a logarithmic number of iterations no such elements remains, which means that the maximum marginal itself decreases.
\item After the maximum marginal decreases a few times, it becomes small enough that one can argue that there is no need to add additional elements to the solution.
\end{itemize}

A similar idea can be used to decrease the number of adaptive round used by standard algorithms for {\USM} such as the algorithm of~\cite{BFNS15}. However, this results in an algorithm whose adaptive complexity is still poly-logarithmic in $n$. Moreover, both parts of the logic presented above are responsible for this. First, the maximum marginal is only guaranteed to reduce after a logarithmic number of iterations. Second, the maximum marginal has to decrease all the way from $f(OPT)$ to $\eps \cdot f(OPT) / n$, where $OPT$ is an arbitrary optimal solution, which requires a logarithmic number of decreases even when every decrease is by a constant factor.

Getting an adaptive complexity which is independent of $n$ requires us to modify the above framework in two ways. The first modification is that rather than using the maximum marginal to measure the ``advancement'' we have made so far, we use an alternative potential function which is closely related to the gain one can expect from a single element in the next iteration. Since each update adds elements until the gain stops being linear in the number of elements added, we are guaranteed that the gain pair element decreases significantly after every iteration, and so does the potential function.

Unfortunately, the potential function might originally be as large as $2n \cdot f(OPT)$, and the algorithm has to decrease it all the way to at most $\eps \cdot f(OPT)$, which means that the above modification alone cannot make the adaptive complexity independent of $n$. Thus, we also need a second modification which is a pre-processing step designed to decrease the potential to $O(1) \cdot f(OPT)$ in a single iteration. The pre-processing is based on the observation that as long as the gain that can be obtained from a random element is large enough, this gain overwhelms any loss that can be incurred due to this element. Thus, one can evolve the solution in a random way until the potential becomes larger than $f(OPT)$ only by a constant factor.

\section{Preliminaries} \label{sec:preliminaries}

Given a set $S \subseteq \cN$, we denote by $\characteristic_S$ the characteristic vector of $S$, \ie, a vector that contains $1$ in the coordinates corresponding to elements of $S$ and $0$ in the remaining coordinates. Additionally, given vectors $x, y \in [0, 1]^\cN$, we denote by $x \vee y$ and $x \wedge y$ their coordinate-wise maximum and minimum, respectively. Similarly, we write $x < y$ and $x \leq y$ when these inequalities hold coordinate-wise.

Given an element $u \in \cN$ and a vector $x \in [0, 1]^\cN$, we denote by $\partial_u F(x)$ the partial derivative of the multilinear extension $F$ with respect to the $u$-coordinate of $x$. One can note that, due to the multilinearity of $F$, $\partial_u F(x)$ obeys the equality
\[
	\partial_u F(x)
	=
	F(x \vee \characteristic_{\{u\}}) - F(x \wedge \characteristic_{\cN \setminus \{u\}})
	\enspace.
\]
One consequence of this equality is that an algorithm with a value oracle access to $F$ also has access to $F$'s derivatives.
As usual, we denote by $\nabla F(x)$ the gradient of $F$ at the point $x$, \ie, $\nabla F(x)$ is a vector whose $u$-coordinate is $\partial_u F(x)$.

The following is a well-known property that we often use in our proofs.
\begin{observation}
Given the multilinear extension $F$ of a submodular function $f\colon 2^\cN \to \bR$ and two vectors $x, y \in [0, 1]^\cN$ obeying $x \leq y$, $\nabla F(x) \geq \nabla F(y)$.
\end{observation}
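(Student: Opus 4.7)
The plan is to verify the inequality coordinate-by-coordinate: fix an arbitrary $u \in \cN$ and show $\partial_u F(x) \geq \partial_u F(y)$ whenever $x \leq y$. Starting from the identity
\[
\partial_u F(z) = F(z \vee \characteristic_{\{u\}}) - F(z \wedge \characteristic_{\cN \setminus \{u\}})
\]
given in the excerpt, I would rewrite each term using the probabilistic definition of $F$. Letting $z_{-u}$ denote the vector obtained from $z$ by setting the $u$-coordinate to $0$, we have $F(z \wedge \characteristic_{\cN \setminus \{u\}}) = \bE[f(\RSet(z_{-u}))]$ and $F(z \vee \characteristic_{\{u\}}) = \bE[f(\RSet(z_{-u}) \cup \{u\})]$, so $\partial_u F(z)$ is exactly the expected marginal contribution of $u$ to the random set $\RSet(z_{-u})$.

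Next I would set up the standard monotone coupling between $\RSet(x_{-u})$ and $\RSet(y_{-u})$: draw an i.i.d.\ family of uniform $[0,1]$ random variables $\{U_v\}_{v \in \cN \setminus \{u\}}$ once and, for each $v$, place $v$ in the sample $\RSet(z_{-u})$ if and only if $U_v \leq z_v$. Because $x \leq y$ implies $x_v \leq y_v$ for every $v \neq u$, this coupling guarantees $\RSet(x_{-u}) \subseteq \RSet(y_{-u})$ pointwise in the underlying probability space, while still giving each set the correct marginal distribution.

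Under this coupling, submodularity of $f$ applied with $S = \RSet(x_{-u})$, $T = \RSet(y_{-u})$, and the element $u \notin T$ yields
\[
f(\RSet(x_{-u}) \cup \{u\}) - f(\RSet(x_{-u})) \geq f(\RSet(y_{-u}) \cup \{u\}) - f(\RSet(y_{-u}))
\]
in every realization. Taking expectations on both sides and using the identifications above gives $\partial_u F(x) \geq \partial_u F(y)$; since $u$ was arbitrary, the coordinate-wise inequality $\nabla F(x) \geq \nabla F(y)$ follows.

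There is no real obstacle here; the argument is essentially bookkeeping. The only subtlety worth flagging is making sure to exclude $u$ from the randomness when interpreting the derivative, since otherwise the identity used above does not correctly separate ``$u$ in'' from ``$u$ out.'' Using $z_{-u}$ throughout handles this cleanly, after which the monotone coupling plus pointwise submodularity closes the proof.
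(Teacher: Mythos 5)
Your proof is correct and uses essentially the same argument as the paper: both construct a monotone coupling of $\RSet(\cdot)$ via i.i.d.\ uniform $[0,1]$ variables, invoke the submodularity inequality pointwise on the coupled pair, and take expectations. The only cosmetic difference is that you explicitly exclude $u$ from the randomness by working with $z_{-u}$, whereas the paper keeps $u$ random and writes the marginal as $f(\RSet(\cdot) \cup \{u\}) - f(\RSet(\cdot) \setminus \{u\})$; both bookkeeping conventions are valid.
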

\begin{proof}
Let $t$ be a uniformly random vector $t \in [0, 1]^\cN$.
For the sake of the proof it is useful to assume that $\RSet(x) = \{u \in \cN \mid x_u \geq t_u\}$ and $\RSet(y) = \{u \in \cN \mid y_u \geq t_u\}$. Notice that this assumption does not change the distributions of $\RSet(x)$ and $\RSet(y)$, and thus, we still have $F(x) = \bE[f(\RSet(x))]$ and $F(y) = \bE[f(\RSet(y))]$. Furthermore, the assumption yields $\RSet(x) \subseteq \RSet(y)$, which implies (by the submodularity of $f$) that for every element $u \in \cN$ we have
\begin{align*}
	\partial_u F(x)
	={} &
	F(x \vee \characteristic_{\{u\}}) - F(x \wedge \characteristic_{\cN \setminus \{u\}})
	=
	\bE[f(\RSet(x) \cup \{u\}) - f(\RSet(x) \setminus \{u\})]\\
	\geq{} &
	\bE[f(\RSet(y) \cup \{u\}) - f(\RSet(y) \setminus \{u\})]
	=
	F(y \vee \characteristic_{\{u\}}) - F(y \wedge \characteristic_{\cN \setminus \{u\}})
	=
	\partial_u F(y)
	\enspace.
	\qedhere
\end{align*}
\end{proof}
\section{Algorithm} \label{sec:algorithm}

Consider Theorem~\ref{thm:main_result_actual}, and observe that Theorem~\ref{thm:main_result} follows from it when we allow the algorithm value oracle access to the multilinear extension $F$ of the objective function. In this section, we prove Theorem~\ref{thm:main_result_actual} by describing and analyzing an algorithm that obeys all the properties guaranteed by this theorem.
\begin{theorem} \label{thm:main_result_actual}
For every constant $\eps > 0$, there is an algorithm that assumes value oracle access to the multilinear extension $F$ of the objective function and achieves $(\nicefrac{1}{2} - 44\eps)$-approximation for {\USM} using $O(\eps^{-1})$ adaptive rounds and $O(n\eps^{-2}\log \eps^{-1})$ value oracle queries to $F$.
\end{theorem}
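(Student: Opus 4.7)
}
My plan is to build a parallel version of the continuous double greedy algorithm of~\cite{BFNS15}. I would maintain two vectors $x, y \in [0,1]^\cN$ with $x \leq y$, initialized to $\characteristic_\emptyset$ and $\characteristic_\cN$, and track for each coordinate $u$ the quantities $a_u = \max\{\partial_u F(x), 0\}$ and $b_u = \max\{-\partial_u F(y), 0\}$, which by the observation above are monotonically non-increasing in the ``progress'' vectors $x$ and $-y$. The central potential is $\Phi(x,y) = \sum_u (y_u - x_u) \cdot \max\{a_u, b_u\}$, which measures the total first-order improvement still available; crucially it upper bounds $2 f(OPT) - 2F(x) - 2F(y) + 2F(x\vee OPT) + \ldots$ in the telescoping identities that underlie the $\nicefrac12$-approximation. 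The sequential algorithm decreases this potential continuously at a rate of order $\Phi$ itself; the goal is to simulate this continuous flow in $O(\eps^{-1})$ coarse parallel steps.

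The core building block is an \emph{update round} that takes one common step size $\delta \in (0,1]$ and, in parallel, tentatively pushes every coordinate toward its favored endpoint (up if $a_u \geq b_u$, down otherwise) by a $\delta$-fraction of the remaining slack $y_u - x_u$. The first-order prediction for the resulting change in $F(x) + F(y)$ is linear in $\delta$ with slope $\Phi$, while the true change may lag behind due to the non-linearity caused by many simultaneous updates. I would pick the largest $\delta$ from a geometric grid of $O(\log\eps^{-1})$ values for which the true change is within a $(1-\eps)$-factor of the prediction; since the grid is fixed in advance, all trials constitute a single adaptive batch, and the round costs $O(n \log\eps^{-1})$ queries. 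The key invariant I would then prove is that whenever the largest admissible $\delta$ is smaller than some threshold $\Omega(\eps)$, a constant fraction of the ``potential mass'' must have been absorbed, so $\Phi$ drops by a factor $1 - \Omega(\eps)$ per round, yielding $O(\eps^{-1})$ rounds to shrink $\Phi$ from $O(f(OPT))$ down to $\eps \cdot f(OPT)$.

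The remaining obstacle, and in my view the hardest step, is that $\Phi$ starts as large as $2n \cdot f(OPT)$, so the logic above alone forces an $\Omega(\eps^{-1}\log n)$ dependence. To remove the $\log n$, I would insert a randomized pre-processing round before the update loop, exploiting the fact that when the per-coordinate potential is much larger than $f(OPT)/n$, a random perturbation picks up an expected gain dwarfing the worst-case loss. Concretely I would independently move each coordinate a random amount toward its currently preferred endpoint and argue, via the submodularity-based inequality $F(x \vee \characteristic_S) + F(x \wedge \characteristic_{\cN\setminus S}) \geq 2F(x) - \sum_{u\in S} b_u(y_u - x_u)$ together with a Chernoff/Efron--Stein style concentration bound, that after one such round $\Phi \leq C \cdot f(OPT)$ with high probability while $F(x) + F(y)$ does not decrease in expectation.

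Finally, I would close the approximation analysis by adapting the standard double-greedy telescoping argument of~\cite{BFNS15}: in each update round the actual increase in $F(x) + F(y)$ is at least $(1-\eps)\delta\Phi$, while the standard ``charging to $OPT$'' inequality loses only $\delta\Phi$ up to an error controlled by the round's second-order slack, so the accumulated deficit from $f(OPT)$ is $O(\eps) f(OPT)$ plus the residual $\Phi \leq \eps f(OPT)$. Returning the better of $x$ and $y$ (viewed through $F$, then rounded by pipage rounding or a single sample of $\RSet(\cdot)$), this gives the $(\nicefrac12 - 44\eps)$ guarantee. The dominant cost is $O(\eps^{-1})$ adaptive rounds, each with $O(n\log\eps^{-1})$ queries for the doubling search plus $O(n\eps^{-1})$ queries to estimate derivatives, matching the claimed $O(n\eps^{-2}\log\eps^{-1})$ total.
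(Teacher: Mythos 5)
Your proposal captures the architecture of the paper's algorithm---two vectors $x \leq y$, a batched continuous double greedy, a potential to bound rounds, and a pre-processing phase to kill the $\log n$---but two of the concrete choices you make would fail or fall short of the claimed bounds.

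First, and most seriously, the update rule you specify would only give a $\nicefrac{1}{3}$-approximation, not $\nicefrac{1}{2}$. You propose to push each coordinate ``toward its favored endpoint (up if $a_u \geq b_u$, down otherwise)'' by a $\delta$-fraction of its slack. This is the deterministic threshold rule of~\cite{BFNS15}, and the per-coordinate charging argument for it only gives gain $\max\{a_u, b_u\}$ against loss $\min\{a_u, b_u\}$, which are equal when $a_u = b_u$; that yields $\nicefrac{1}{3}$. To get $\nicefrac{1}{2}$ you need the proportional rate $r_u = a_u/(a_u + b_u)$, \emph{simultaneously} raising $x_u$ by $\delta r_u$ and lowering $y_u$ by $\delta(1 - r_u)$, which is exactly what the paper's {\Update} (\cref{alg:update}) does; the key inequality $2\max\{b_u r_u, a_u(1 - r_u)\} \leq a_u r_u + b_u(1 - r_u)$, which powers the factor-$2$ charging (\cref{cor:iteration_conclusion_step}), holds for the proportional rate but not for the threshold rate when $a_u \approx b_u$.

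Second, your round-count argument is a multiplicative decrease claim---$\Phi$ drops by a factor of $1 - \Omega(\eps)$ per round---and you then assert this gives $O(\eps^{-1})$ rounds to shrink $\Phi$ from $O(f(OPT))$ to $\eps f(OPT)$. It does not: geometric decay by $1 - \Omega(\eps)$ from $C$ to $\eps C$ takes $\Omega(\eps^{-1}\log\eps^{-1})$ rounds, which matches $\tilde{O}(\eps^{-1})$ but not the $O(\eps^{-1})$ claimed by the statement. The paper instead proves an \emph{additive} decrease: each {\Update} call that does not finish the run drops $\characteristic_\cN[\nabla F(x) - \nabla F(y)]$ by at least $\gamma = 4\eps\tau$ (part~\eqref{item:potential_decrease} of \cref{prop:step}), and {\PreProcess} starts it at most at $16\tau$, so $\ell \leq 1 + 4\eps^{-1}$. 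Relatedly, the paper does \emph{not} run until the potential is $\eps f(OPT)$; it runs until $\Delta = 0$, and the small additive slack of $\eps^2 \cdot \characteristic_\cN[\nabla F(x) - \nabla F(y)]$ per round is aggregated over $\ell$ rounds. Finally, the paper avoids the concentration argument you sketch for pre-processing by maintaining the invariant $y - x = \Delta \cdot \characteristic_\cN$ throughout: {\PreProcess} simply sets $x = \delta \characteristic_\cN$, $y = (1 - \delta)\characteristic_\cN$ for the first grid $\delta$ where the gradient sum drops below $16\tau$, and bounds the loss to $OPT(x,y)$ via the Lov\'{a}sz extension (\cref{lem:loss_pre}); no high-probability argument is needed for the value-oracle-to-$F$ version.
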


Before presenting the promised algorithm, let us quickly recall the main structure of one of the algorithms used by~\cite{BFNS15} to get an optimal $\nicefrac{1}{2}$-approximation for {\USM}. This algorithm maintains two vectors $x, y \in [0, 1]^\cN$ whose original values are $\characteristic_\varnothing$ and $\characteristic_\cN$, respectively. To update these vectors, the algorithm considers the elements of $\cN$ one after the other in some arbitrary order. When considering an element $u \in \cN$, the algorithm finds an appropriate value $r_u \in [0, 1]$, increases $x_u$ from $0$ to $r_u$ and decreases $y_u$ from $1$ to $r_u$. One can observe that this update rule guarantees two things. First, that $x \leq y$ throughout the execution of the algorithm, and second, that both $x$ and $y$ become equal to the vector $r$ (\ie, the vector whose $u$-coordinate is $r_u$ for every $u \in \cN$) when the algorithm terminates.

The analysis of the algorithm of~\cite{BFNS15} depends on the particular choice of $r_u$ used. Specifically, Buchbinder et al.~\cite{BFNS15} showed that their choice of $r_u$ guarantees that the change in $F(x) + F(y)$ following every individual update of $x$ and $y$ is at least twice the change in $-F(OPT(x, y))$ following this update, where $OPT(x, y) \triangleq (\characteristic_{OPT} \vee x) \wedge y$. Since $x$ and $y$ start as $\characteristic_\varnothing$ and $\characteristic_\cN$, respectively, and end up both as $r$, this yields
\[
	2F(r) - [f(\varnothing) + f(\cN)]
	\geq
	2[F(OPT(\characteristic_\varnothing, \characteristic_\cN)) - F(OPT(r, r))]
	=
	2[f(OPT) - F(r)]
	\enspace,
\]
which implies the $\nicefrac{1}{2}$-approximation ratio of the algorithm by rearrangement and the non-negativity of $f$.

The algorithm that we present in this section is similar to the algorithm of~\cite{BFNS15} in the sense that it also maintains two vectors $x, y \in [0, 1]^\cN$ and updates them in a way that guarantees two things. First, that the inequality $x \leq y$ holds throughout the execution of the algorithm, and second, that the change in $F(x) + F(y)$ following every individual update of $x$ and $y$ is at least (roughly) twice the change in $-F(OPT(x, y))$ following this update. More formally, the properties of our update procedure, which we term {\Update}, are described by the following proposition. In this proposition, and throughout the section, we assume that $n \geq 3$ and $\eps$ is in the range $(0, \nicefrac{1}{3})$.\footnote{If $n < 3$, then {\USM} can be solved in constant time (and adaptivity) by enumerating all the possible solutions; and if $\eps \geq \nicefrac{1}{3}$, then Theorem~\ref{thm:main_result_actual} is trivial.} 
\newcommand{\stepProp}[1][]{The input for {\Update} consists of two vectors $x, y \in [0, 1]^\cN$ and two scalars $\Delta \in (0, 1]$ and $\gamma \geq 0$. If this input obeys $y - x = \Delta \cdot \characteristic_\cN$ (\ie, every coordinate of the vector $y$ is larger than the corresponding coordinate of $x$ by exactly $\Delta$), then {\Update} outputs two vectors $x', y' \in [0, 1]^\cN$ and a scalar $\Delta' \in [0, 1]$ obeying
\begin{compactenum}[(a)]
	\item $y' - x' = \Delta' \cdot \characteristic_\cN$, \ifx&#1& \else \label{item:diff_delta} \fi
	\item either $\Delta' = 0$ or $\characteristic_\cN [\nabla F(x') - \nabla F(y')] \leq \characteristic_\cN [\nabla F(x) - \nabla F(y)] - \gamma$ and \ifx&#1& \else \label{item:potential_decrease} \fi
	\item $[F(x') + F(y')] - [F(x) + F(y)]
	\geq
	2(1 - 3\eps) \cdot [F(OPT(x, y)) - F(OPT(x', y'))] - \gamma(\Delta - \Delta') - 2\eps^2 \cdot \characteristic_\cN[\nabla F(x) -\nabla F(y)]$. \ifx&#1& \else \label{item:gain_loss} \fi
\end{compactenum}
Moreover, {\Update} requires only a constant number of  adaptive rounds and $O(n\eps^{-1} \log \eps^{-1})$ value oracle queries to $F$.}
\begin{proposition} \label{prop:step}
\stepProp[l]
\end{proposition}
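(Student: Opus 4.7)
\textbf{Proof plan for Proposition~\ref{prop:step}.} My plan is to realize {\Update} as a constant-round parallel discretization of the continuous double-greedy dynamics of~\cite{bian2018optimal,niazadeh2018optimal}. The key observation is that because $y - x = \Delta \characteristic_\cN$, the natural double-greedy path starting at $(x, y)$ has uniform gap $\Delta - \delta$ at ``time'' $\delta$, so the whole family of candidate outputs is one-dimensional in $\delta \in [0, \Delta]$ and amenable to parallel search over $\delta$.

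First, in a single parallel round of $O(n)$ value oracle queries, I would compute $a_u := \partial_u F(x)$ and $b_u := -\partial_u F(y)$ for every $u \in \cN$, using the multilinear identity $\partial_u F(z) = F(z \vee \characteristic_{\{u\}}) - F(z \wedge \characteristic_{\cN \setminus \{u\}})$; by DR-submodularity $a_u + b_u \geq 0$. Set the double-greedy fractions $p_u := a_u^+ / (a_u^+ + b_u^+)$ (with an arbitrary convention when both vanish) and define the one-parameter path $x(\delta) := x + \delta p$, $y(\delta) := y - \delta(1 - p)$ for $\delta \in [0, \Delta]$. A direct calculation gives $y(\delta) - x(\delta) = (\Delta - \delta)\characteristic_\cN$, so property~(a) holds for any chosen $\delta^* \in [0, \Delta]$ with $\Delta' := \Delta - \delta^*$. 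To pick $\delta^*$, set $\Phi(\delta) := \characteristic_\cN[\nabla F(x(\delta)) - \nabla F(y(\delta))]$, which is nonincreasing by the observation preceding Proposition~\ref{prop:step}. In a second parallel round I evaluate $\Phi(\delta)$ at an $O(\eps^{-1} \log \eps^{-1})$-point grid in $[0, \Delta]$---fine enough (geometric in $\Phi$-drops, with an additional refinement near $\delta = 0$) that consecutive grid values differ in $\Phi$ by at most $\eps^2 \Phi(0)$---for a total of $O(n\eps^{-1}\log\eps^{-1})$ queries. Let $\delta^*$ be the smallest grid point with $\Phi(0) - \Phi(\delta^*) \geq \gamma$, or $\delta^* := \Delta$ if none exists, and output $x' := x(\delta^*)$, $y' := y(\delta^*)$, $\Delta' := \Delta - \delta^*$. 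Property~(b) is then immediate and both complexity bounds are met.

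The heart of the argument is property~(c). Writing $g(\delta) := F(x(\delta)) + F(y(\delta))$ and $h(\delta) := F(OPT(x(\delta), y(\delta)))$ and using that $x \leq y$ implies $OPT(x, y)_u = y_u$ for $u \in OPT$ and $x_u$ otherwise, I compute
\[
  g'(\delta) = \sum_u \bigl[p_u \partial_u F(x(\delta)) - (1 - p_u) \partial_u F(y(\delta))\bigr]
\]
and the analogous expression for $h'(\delta)$. The classical algebraic identity of~\cite{BFNS15}, namely $(a_u - b_u)^2 / (a_u + b_u) \geq 0$, combined with the DR-submodularity chain $\partial_u F(x(\delta)) \geq \partial_u F(OPT(x(\delta), y(\delta))) \geq \partial_u F(y(\delta))$, yields the pointwise inequality $g'(\delta) + 2 h'(\delta) \geq 0$ at $\delta = 0$. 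For $\delta > 0$ the same bound holds up to an error controlled termwise by how much gradients have changed, bounded by a constant times the potential drop $\Phi(0) - \Phi(\delta)$. Integrating the pointwise bound $g'(\delta) + 2(1 - 3\eps) h'(\delta) \geq -E(\delta)$ from $0$ to $\delta^*$, and bounding $\int_0^{\delta^*} E(\delta)\, d\delta$ by $\gamma \delta^* + 2\eps^2 \Phi(0)$---using $\Phi(0) - \Phi(\delta) < \gamma$ for every grid point preceding $\delta^*$ and the grid-overshoot control for the final step---then gives~(c).

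The main obstacle is precisely this last step: simultaneously absorbing three distinct sources of error into the three explicit slack terms of~(c). Specifically, (i)~the $(1 - 3\eps)$ multiplier on the $F(OPT)$ loss term must absorb the cost of using $p_u$ computed from $\delta = 0$ gradients rather than current ones; (ii)~the $\gamma(\Delta - \Delta')$ slack must absorb the accumulated linearization error along the step, which is why choosing $\delta^*$ against the target potential drop $\gamma$ is essential; and (iii)~the $2\eps^2 \Phi(0)$ slack must absorb the overshoot of $\Phi$ at $\delta^*$ from the grid being finite. Making these three accountings consistent---and designing the grid in the second parallel round fine enough so that the overshoot bound holds while still using only $O(\eps^{-1}\log\eps^{-1})$ grid points---is the technical crux of the proof.
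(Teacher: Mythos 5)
Your proposal tracks the paper's approach closely: parallel discretization of the double-greedy dynamics, the same $r_u = a_u^+/(a_u^+ + b_u^+)$ update directions, a geometric grid of $O(\eps^{-1}\log\eps^{-1})$ candidate step sizes evaluated in one parallel round, and an integral argument pairing a pointwise gain/loss inequality against the classical BFNS identity $a_u^2 + b_u^2 \ge 2a_u b_u$. The one genuinely different choice is your stopping rule: you track the potential $\Phi(\delta) = \characteristic_\cN[\nabla F(x(\delta)) - \nabla F(y(\delta))]$ directly and stop when it drops by $\gamma$, whereas the paper stops when the $r$-weighted derivative $r \cdot \nabla F(x+\delta r) - (\characteristic_\cN - r)\cdot\nabla F(y - \delta(\characteristic_\cN - r))$ drops by $\gamma$ and then deduces the $\Phi$-drop by adding the submodularity inequalities $(\characteristic_\cN - r)\cdot\nabla F(x') \le (\characteristic_\cN - r)\cdot\nabla F(x)$ and $-r\cdot\nabla F(y') \le -r\cdot\nabla F(y)$. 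Your criterion is also serviceable — since $\Phi(0)-\Phi(\delta)$ dominates the $r$-weighted drop termwise, it implies the derivative lower bound the gain lemma needs, and it gives part~(b) for free.

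There is, however, a real gap in the way you control the overshoot. You assert that the grid can be designed so that ``consecutive grid values differ in $\Phi$ by at most $\eps^2\Phi(0)$.'' That property is not achievable with a fixed, non-adaptive grid: $\Phi$ is a nonincreasing function of $\delta$ that can drop arbitrarily steeply in any small interval, so no single-round choice of $O(\eps^{-1}\log\eps^{-1})$ points in $\delta$-space can equispace its values. The paper does not need this. Its grid is geometric \emph{in $\delta$}, with smallest point $\eps^2$ and ratio $1+\eps$, and the overshoot on the last interval $[\delta(\lceil j^*\rceil -1), \delta]$ is bounded not through a $\Phi$-drop bound but through a crude lower bound on the integrand itself, $r\cdot\nabla F(y) - (\characteristic_\cN - r)\cdot\nabla F(x) \ge -\sum_u \max\{a_u, b_u\}$, which follows from gradient monotonicity. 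The resulting error is then split by cases: for $j^*=0$ the interval width is $\eps^2$, giving an error at most $\eps^2\sum_u(a_u+b_u) \le 2\eps^2\Phi(0)$ that lands in the additive slack, while for $j^*>0$ the interval width is $\eps\,\delta(\lceil j^*\rceil-1)$ and the error is absorbed \emph{multiplicatively} into the $(1-3\eps)$ factor via $\max\{a_u,b_u\} \le 2[a_u r_u + b_u(1-r_u)]$ and $\delta(\lceil j^*\rceil -1) \ge (1+\eps)^{-1}\delta$. Your write-up conflates these two absorptions and attributes the $(1-3\eps)$ factor to ``using $p_u$ from $\delta=0$ gradients,'' which is not the actual source of that loss; the stale-gradient cost is paid for by the $-\gamma$ slack in the stopping criterion (yielding $-\gamma(\Delta-\Delta')$), not by the multiplicative factor. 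Making the three slack accountings precise as the paper does — including the $j^*=0$ versus $j^*>0$ split — is the piece your plan is missing.
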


One can observe that in addition to the guarantees discussed above, Proposition~\ref{prop:step} also shows that {\Update} significantly decreases the expression $\characteristic_\cN[\nabla F(x) - \nabla F(y)]$. Intuitively, this decrease represents the ``progress'' made by every execution of {\Update}, and it allows us to bound the number of iterations (and thus, adaptive rounds) used by our algorithm. Nevertheless, to make the number of iterations independent of $n$, we need to start with $x$ and $y$ vectors for which the expression $\characteristic_\cN[\nabla F(x) - \nabla F(y)]$ is already not too large. We use a procedure named {\PreProcess} to find such vectors. The formal properties of this procedure are given by the next proposition.
\newcommand{\preProcessProp}[1][]{The input for {\PreProcess} consists of a single value $\tau \geq 0$. If $\tau \geq f(OPT) / 4$, then {\PreProcess} outputs two vectors $x, y \in [0, 1]^\cN$ and a scalar $\Delta \in [0, 1]$ obeying
\begin{compactenum}[(a)]
	\item $y - x = \Delta \cdot \characteristic_\cN$, \ifx&#1& \else \label{item:diff_delta_pre} \fi
	\item either $\Delta = 0$ or $\characteristic_\cN [\nabla F(x) - \nabla F(y)] \leq 16\tau$ and \ifx&#1& \else \label{item:potential_start} \fi
	\item $F(x) + F(y) \geq 2[f(OPT) - F(OPT(x, y))] - 4\eps \cdot f(OPT)$. \ifx&#1& \else \label{item:gain_loss_pre} \fi
\end{compactenum}
Moreover, {\PreProcess} requires only a constant number of adaptive rounds and $O(n / \eps)$ value oracle queries to $F$.}
\begin{proposition} \label{prop:pre-process}
\preProcessProp[l]
\end{proposition}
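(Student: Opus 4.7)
The plan is to describe a {\PreProcess} procedure that, in a single adaptive round, evaluates $F$ and its gradient at $K = \Theta(\eps^{-1})$ candidate configurations in parallel, and then selects one that satisfies all three required properties. A natural one-parameter family of candidates is
\[
x_\alpha = \tfrac{\alpha}{2}\characteristic_\cN, \qquad y_\alpha = \bigl(1 - \tfrac{\alpha}{2}\bigr)\characteristic_\cN, \qquad \Delta_\alpha = 1 - \alpha, \qquad \alpha \in [0, 1],
\]
interpolating between the extreme configuration $(\characteristic_\varnothing, \characteristic_\cN)$ at $\alpha = 0$ (where the gap is $1$ and property (c) is trivial since $F(OPT(\characteristic_\varnothing, \characteristic_\cN)) = f(OPT)$) and $((1/2)\characteristic_\cN, (1/2)\characteristic_\cN)$ at $\alpha = 1$ (where $\Delta = 0$ and the potential vanishes). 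I would discretize $\alpha$ on an $O(\eps)$-spaced grid, evaluate $\nabla F(x_\alpha)$ and $\nabla F(y_\alpha)$ at each grid point ($O(n)$ queries per grid point, $O(n/\eps)$ in total, all parallelizable), and return the \emph{smallest} grid value $\alpha^*$ for which $\characteristic_\cN \cdot [\nabla F(x_{\alpha^*}) - \nabla F(y_{\alpha^*})] \leq 16\tau$; if no grid value except $\alpha = 1$ qualifies, we return $\alpha^* = 1$ and $\Delta = 0$ so that property (b) is vacuously true.

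Property (a) is built into the parametrization, and property (b) is by the selection rule. For property (c), define $G(\alpha) \triangleq F(x_\alpha) + F(y_\alpha) + 2 F(OPT(x_\alpha, y_\alpha))$; we need $G(\alpha^*) \geq 2(1 - 2\eps)f(OPT)$, while $G(0) = f(\varnothing) + f(\cN) + 2 f(OPT) \geq 2 f(OPT)$, so it suffices to bound $G(0) - G(\alpha^*) \leq 4\eps f(OPT)$. A short multilinearity calculation gives $OPT(x_\alpha, y_\alpha) = \tfrac{\alpha}{2}\characteristic_\cN + (1 - \alpha)\characteristic_{OPT}$, from which $\tfrac{d}{d\alpha} G(\alpha)$ can be written as a linear combination of coordinates of $\nabla F(x_\alpha)$, $\nabla F(y_\alpha)$, and $\nabla F(OPT(x_\alpha, y_\alpha))$. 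Since $x_\alpha \leq OPT(x_\alpha, y_\alpha) \leq y_\alpha$, the DR-submodularity observation in Section~\ref{sec:preliminaries} sandwiches the middle gradient between the outer two, and a careful rearrangement shows $\tfrac{d}{d\alpha} G(\alpha) \geq -O(1) \cdot \characteristic_\cN \cdot [\nabla F(x_\alpha) - \nabla F(y_\alpha)]$.

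The hardest step will be turning the resulting integrated-potential bound into $G(0) - G(\alpha^*) \leq 4\eps f(OPT)$. The subtlety is that the hypothesis $\tau \geq f(OPT)/4$ only lower-bounds $\tau$, while the potential along the trajectory can reach $16\tau$, which could a priori be much larger than $f(OPT)$. To close this gap I would either (i) strengthen the derivative bound by peeling off a contribution of the form $2[f(OPT) - F(OPT(x_\alpha, y_\alpha))]$ that can be absorbed into $G(\alpha)$ itself, so that the remaining ``cost'' is intrinsically $O(f(OPT))$ rather than $O(\tau)$, or (ii) replace the single parallel round by a constant number of sequential refinement rounds, each using a geometrically shrinking guess of the target potential. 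Identifying the right submodularity-based identity that converts the crude potential bound into an $f(OPT)$-bound is where I expect the bulk of the technical effort.
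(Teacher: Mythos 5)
Your algorithm is the paper's {\PreProcess}: the symmetric grid $x_\alpha = (\alpha/2)\characteristic_\cN$, $y_\alpha = (1-\alpha/2)\characteristic_\cN$ (with the paper's $\delta = \alpha/2$), the stopping rule on the potential, and the $O(n/\eps)$-query single-round implementation all match, so parts (a), (b), and the complexity analysis are fine.

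The gap is in part (c), and it is not in the place you flag. You assert that sandwiching $\nabla F(OPT(x_\alpha,y_\alpha))$ between $\nabla F(x_\alpha)$ and $\nabla F(y_\alpha)$ yields $\tfrac{d}{d\alpha}G(\alpha) \geq -O(1)\cdot \characteristic_\cN[\nabla F(x_\alpha) - \nabla F(y_\alpha)]$. That bound is false. Direct computation gives
\[
\frac{dG}{d\alpha} = \frac{1}{2}\characteristic_\cN\cdot[\nabla F(x_\alpha) - \nabla F(y_\alpha)] + (\characteristic_{\cN\setminus OPT} - \characteristic_{OPT})\cdot\nabla F(OPT(x_\alpha,y_\alpha)),
\]
and the second term is an antisymmetric pairing against a \emph{single} gradient, which the sandwich cannot relate to the potential (a sum of \emph{differences} of gradients). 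Concrete counterexample: take $f(S) = |S|$ on $\cN$ with $|\cN| = n \geq 3$. Then $F$ is linear, $\nabla F \equiv \characteristic_\cN$, $OPT = \cN$, so the potential is identically $0$, yet $\tfrac{d}{d\alpha}G = -n < 0$. Even if some derivative bound of that shape held, integrating $-O(1)\Phi_\alpha$ over $[0,\alpha^*]$ would only give $G(0) - G(\alpha^*) \leq O(1)\cdot\int_0^{\alpha^*}\Phi_\alpha\,d\alpha$, and that integral is (via the chain rule) essentially $F(x_{\alpha^*})+F(y_{\alpha^*}) - f(\varnothing) - f(\cN)$, which can be $\Theta(f(OPT))$ rather than $O(\eps f(OPT))$. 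Your option (ii), adding refinement rounds, does not address this.

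What is missing is a non-differential handle on the $F(OPT(x,y))$ term. The paper observes that the vector $OPT(x,y)$ has only two distinct coordinate values ($\delta$ off $OPT$, $1-\delta$ on $OPT$), so the Lov\'{a}sz extension of $f$ evaluates there to $\delta f(\cN) + (1-2\delta)f(OPT) + \delta f(\varnothing) \geq (1-2\delta)f(OPT) = \Delta \cdot f(OPT)$, and the Lov\'{a}sz extension lower bounds the multilinear extension. This caps the loss term once and for all: $2[f(OPT) - F(OPT(x,y))] \leq 4\delta \cdot f(OPT)$. Separately, integrating the potential (which is $\geq 16\tau$ for all $\delta' < \delta - \eps$) gives $F(x) + F(y) \geq (\delta-\eps)\cdot 16\tau \geq 4(\delta-\eps)f(OPT)$ using $\tau \geq f(OPT)/4$. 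Adding the two bounds gives part (c) directly, without ever comparing $G(\alpha^*)$ to $G(0)$. Note also that your worry about $16\tau$ being ``much larger than $f(OPT)$'' points the wrong way: the hypothesis only needs the lower bound $\tau \geq f(OPT)/4$ to convert the per-unit-$\delta$ gain of $16\tau$ into a gain of at least $4f(OPT)$, which dominates the per-unit-$\delta$ loss of $4f(OPT)$ from the Lov\'{a}sz bound; a larger $\tau$ only helps.
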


We defer the presentation of the procedures {\Update} and {\PreProcess} and their analyses to Sections~\ref{ssc:update} and~\ref{ssc:pre-process}, respectively. However, using these procedures we are now ready to present the algorithm that we use to prove Theorem~\ref{thm:main_result_actual}. This algorithm is given as Algorithm~\ref{alg:multilinear_oracle}.

\begin{algorithm}[ht]
\caption{\texttt{Algorithm for {\USM} with Value Oracle Access to $F$}} \label{alg:multilinear_oracle}
\DontPrintSemicolon
Let $\tau \gets F(\nicefrac{1}{2} \cdot \characteristic_\cN)$ and $\gamma \gets 4\eps\tau$.\\
Let $i \gets 0$ and $(x^0, y^0, \Delta^0) \gets \PreProcess( \tau)$.\\
\While{$\Delta^i > 0$}
{
	Let $(x^{i + 1}, y^{i + 1}, \Delta^{i + 1}) \gets \Update(x^i, y^i, \Delta^i, \gamma)$.\\
	Update $i \gets i + 1$.
}
\Return{$\RSet(x^i)$}.
\end{algorithm}

Let us denote by $\ell$ the number of iterations made by Algorithm~\ref{alg:multilinear_oracle}. We begin the analysis of the algorithm with the following lemma, which proves some basic properties of Algorithm~\ref{alg:multilinear_oracle}.
\begin{lemma} \label{lem:conditions_hold_two_sided}
It always holds that $\tau \in [f(OPT)/4, f(OPT)]$, and for every integer $0 \leq i \leq \ell$ it holds that $x^i, y^i \in [0, 1]^\cN$, $\Delta^i \in [0, 1]$ and $x^i + \Delta^i \cdot \characteristic_\cN = y^i$.
\end{lemma}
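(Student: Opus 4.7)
The claim breaks into two parts and I would handle them separately.

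For the range of $\tau$, the upper bound $\tau \leq f(OPT)$ is immediate: by definition of the multilinear extension, $\tau = F(\nicefrac{1}{2}\cdot \characteristic_\cN) = \bE[f(\RSet(\nicefrac{1}{2}\cdot\characteristic_\cN))]$ is an average of values $f(S)$ over $S \subseteq \cN$, each of which is at most $f(OPT)$. The lower bound $\tau \geq f(OPT)/4$ is the classical Feige--Mirrokni--Vondr\'ak bound stating that a uniformly random subset gives a $1/4$-approximation for non-negative submodular maximization; this is exactly the ``$0$-round $1/4$-approximation'' cited in the introduction from~\cite{FMV11}, and I would simply invoke it.

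The statements about the iterates $(x^i, y^i, \Delta^i)$ I would prove by induction on $i$. For the base case $i = 0$, the preceding bound $\tau \geq f(OPT)/4$ is precisely the hypothesis required to apply Proposition~\ref{prop:pre-process}, which then guarantees $x^0, y^0 \in [0,1]^\cN$, $\Delta^0 \in [0,1]$, and $y^0 - x^0 = \Delta^0\cdot\characteristic_\cN$; rearranging the last equation yields $x^0 + \Delta^0\cdot\characteristic_\cN = y^0$, as claimed. For the inductive step, suppose the claim holds at iteration $i$, where $i < \ell$. Since the algorithm only entered the loop body, we have $\Delta^i > 0$, and the induction hypothesis gives $y^i - x^i = \Delta^i \cdot \characteristic_\cN$, which is exactly the precondition on the inputs required by Proposition~\ref{prop:step}. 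Applying that proposition, the outputs $x^{i+1}, y^{i+1}, \Delta^{i+1}$ lie in $[0,1]^\cN$ and $[0,1]$ respectively and satisfy $y^{i+1} - x^{i+1} = \Delta^{i+1}\cdot\characteristic_\cN$, closing the induction.

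The only non-bookkeeping step is the Feige--Mirrokni--Vondr\'ak bound on $F(\nicefrac{1}{2}\cdot\characteristic_\cN)$; the rest is simply threading the preconditions and postconditions of Propositions~\ref{prop:pre-process} and~\ref{prop:step} through the loop structure of Algorithm~\ref{alg:multilinear_oracle}.
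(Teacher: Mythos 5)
Your proof is correct and follows essentially the same approach as the paper: it establishes the range of $\tau$ via the Feige–Mirrokni–Vondr\'ak bound for the lower bound and the feasibility of $\RSet(\nicefrac{1}{2}\cdot\characteristic_\cN)$ for the upper bound, then proves the remaining claims by induction, using Proposition~\ref{prop:pre-process} for the base case and Proposition~\ref{prop:step} for the inductive step (the fact that the loop was entered guaranteeing $\Delta^i > 0$).
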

\begin{proof}
It was proved by Feige et al.~\cite{FMV11} that $F(\nicefrac{1}{2} \cdot \characteristic_\cN) = \bE[f(\RSet(\nicefrac{1}{2} \cdot \characteristic_\cN))] \geq f(OPT) / 4$. In contrast, since $\RSet(\nicefrac{1}{2} \cdot \characteristic_\cN)$ is always a feasible solution, we get
$F(\nicefrac{1}{2} \cdot \characteristic_\cN) = \bE[f(\RSet(\nicefrac{1}{2} \cdot \characteristic_\cN))] \leq \bE[f(OPT)] = f(OPT)$. This completes the proof of the first part of the lemma.

We prove the rest of the lemma by induction. For $i = 0$ the lemma holds by the guarantee of Proposition~\ref{prop:pre-process}. Assume now that the lemma holds for some $0 \leq i - 1 < \ell$, and let us prove it for $i$. By the induction hypothesis we have $x^{i-1}, y^{i-1} \in [0, 1]^\cN$, $\Delta^{i-1} \in [0, 1]$ and $x^{i-1} + \Delta^{i-1} \cdot \characteristic_{\cN} = y^{i-1}$. Moreover, the fact that the $i-1$ iteration was not the last one implies that $\Delta^{i-1} \neq 0$. Hence, all the conditions of Proposition~\ref{prop:step} on the input for {\Update} hold with respect to the execution of this procedure in the $i$-th iteration of Algorithm~\ref{alg:multilinear_oracle}, and thus, the proposition guarantees $x^i, y^i \in [0, 1]^\cN$, $\Delta^i \in [0, 1]$ and $x^i + \Delta^i \cdot \characteristic_\cN = y^i$, as required.
\end{proof}

The last lemma shows that the input for the procedure {\PreProcess} obeys the conditions of Proposition~\ref{prop:pre-process} and the input for the procedure {\Update} obeys the conditions of Proposition~\ref{prop:step} in all the iterations of Algorithm~\ref{alg:multilinear_oracle}. The following lemma uses these facts to get an upper bound on the number of iterations performed by Algorithm~\ref{alg:multilinear_oracle} and a lower bound on the value of the output of this algorithm. We note that the proofs of this lemma and the corollary that follows it resemble the above discussed analysis of the algorithm of~\cite{BFNS15}.
\begin{lemma} \label{lem:initial_bound_two_sided}
$\ell \leq 5\eps^{-1}$ and $F(x^\ell) + F(y^\ell) \geq 2(1 - 3\eps) \cdot [f(OPT) - F(OPT(x^\ell, y^\ell))] - 168\eps \cdot f(OPT)$.
\end{lemma}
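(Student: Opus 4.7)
The plan is to use the quantity $P_i \triangleq \characteristic_\cN[\nabla F(x^i) - \nabla F(y^i)]$ as a potential function, and to observe first that $P_i \geq 0$: by Lemma~\ref{lem:conditions_hold_two_sided} we have $x^i \leq y^i$, which combined with the observation of Section~\ref{sec:preliminaries} yields $\nabla F(x^i) \geq \nabla F(y^i)$ coordinate-wise. To bound $\ell$, I would invoke property (b) of Proposition~\ref{prop:pre-process} to get $P_0 \leq 16\tau$, and property (b) of Proposition~\ref{prop:step} to conclude that for every iteration whose output satisfies $\Delta^{i+1} > 0$ we have $P_{i+1} \leq P_i - \gamma = P_i - 4\eps\tau$. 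Since $\Delta^i > 0$ for every $0 \leq i < \ell$, iterating gives $0 \leq P_{\ell-1} \leq 16\tau - 4(\ell-1)\eps\tau$, whence $\ell \leq 4/\eps + 1 \leq 5/\eps$ (using $\eps < 1/3$).

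For the second bound, I would telescope property (c) of Proposition~\ref{prop:step} over $i = 0, 1, \dots, \ell-1$. The left-hand side collapses to $F(x^\ell) + F(y^\ell) - F(x^0) - F(y^0)$; the main right-hand term telescopes to $2(1-3\eps)[F(OPT(x^0, y^0)) - F(OPT(x^\ell, y^\ell))]$; the deterministic error $-\gamma(\Delta^i - \Delta^{i+1})$ telescopes to $-\gamma\Delta^0 \geq -4\eps\tau$ since $\Delta^0 \leq 1$ and $\Delta^\ell = 0$; and the error $-2\eps^2 P_i$ is controlled via the uniform bound $P_i \leq P_0 \leq 16\tau$ combined with $\ell \leq 5/\eps$, yielding at most $2\eps^2 \cdot (5/\eps) \cdot 16\tau = 160\eps\tau$ in total magnitude.

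Adding the resulting inequality to property (c) of Proposition~\ref{prop:pre-process} gives
\[
F(x^\ell) + F(y^\ell) \geq 2[f(OPT) - F(OPT(x^0, y^0))] + 2(1-3\eps)[F(OPT(x^0, y^0)) - F(OPT(x^\ell, y^\ell))] - 4\eps f(OPT) - 4\eps\tau - 160\eps\tau.
\]
Expanding the two bracketed terms, the main expression equals $2f(OPT) - 6\eps F(OPT(x^0, y^0)) - 2(1-3\eps)F(OPT(x^\ell, y^\ell))$, which is at least $2(1-3\eps)[f(OPT) - F(OPT(x^\ell, y^\ell))]$ once one applies $F(OPT(x^0, y^0)) \leq f(OPT)$ (a standard consequence of the multilinear extension being bounded above by $\max_S f(S)$). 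Substituting $\tau \leq f(OPT)$ into the remaining three error terms yields a total loss of at most $(4 + 4 + 160)\eps f(OPT) = 168\eps f(OPT)$, matching the claimed bound.

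The main obstacle I anticipate is the careful bookkeeping of error terms: the $-2\eps^2 P_i$ contribution, although tiny per iteration, is summed across $\Theta(1/\eps)$ iterations, so its control hinges crucially on $P_i$ remaining uniformly bounded by $16\tau$ throughout the loop; likewise, reconciling the coefficient $2$ (from preprocessing) with $2(1-3\eps)$ (from the updates) into a single clean coefficient in front of $[f(OPT) - F(OPT(x^\ell, y^\ell))]$ is where the inequality $F(OPT(x^0, y^0)) \leq f(OPT)$ becomes indispensable.
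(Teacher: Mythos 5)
Your proposal is correct and follows essentially the same route as the paper's own proof: the same potential $\Phi(i)=\characteristic_\cN[\nabla F(x^i)-\nabla F(y^i)]$, the same bound $\Phi(0)\le 16\tau$ and lower bound $\Phi(\ell-1)\ge 0$ to cap $\ell$, and the same telescoping of Proposition~\ref{prop:step}(c) with the per-iteration error $2\eps^2\Phi(i)$ controlled uniformly by $32\eps^2\tau$. The only cosmetic difference is that the paper weakens the preprocessing inequality to coefficient $2(1-3\eps)$ \emph{before} telescoping (using $F(OPT(x^0,y^0))\le f(OPT)$), whereas you apply that same bound after telescoping to reconcile the mixed coefficients; the algebra is equivalent and both yield $8\eps + 160\eps = 168\eps$.
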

\begin{proof}
Consider the potential function $\Phi(i) = \characteristic_\cN [\nabla F(x^i) - \nabla F(y^i)]$, and let us study the change in this potential as a function of $i$. Since $\Delta^i > 0$ for every $i < \ell$ and the conditions of Proposition~\ref{prop:step} are satisfied in all the iterations of Algorithm~\ref{alg:multilinear_oracle}, this proposition guarantees $\Phi(i) \leq \Phi(i-1) - 4\eps\tau$ for every $1 \leq i \leq \ell - 1$. In other words, the potential function decreases by at least $4\eps\tau$ every time that $i$ increases by $1$, and thus, $\Phi(0) \geq \Phi(\ell - 1) + 4\eps\tau(\ell - 1)$. Next, we would like to bound $\Phi(0)$ and $\Phi(\ell - 1)$. Since the conditions of Proposition~\ref{prop:pre-process} are also satisfied, it guarantees that $\Phi(0) = \characteristic_\cN [\nabla F(x^0) - \nabla F(y^0)] \leq 16\tau$. In contrast, the submodularity of $f$ and the inequality $y^{\ell - 1} = x^{\ell - 1} + \Delta^{\ell - 1} \cdot \characteristic_\cN \geq x^{\ell - 1}$ imply $\Phi(\ell - 1) = \characteristic_\cN [\nabla F(x^{\ell - 1}) - \nabla F(y^{\ell - 1})] \geq \characteristic_\cN [\nabla F(x^{\ell - 1}) - \nabla F(x^{\ell - 1})] = 0$.
Combining all the above observations, we get
\[
	\Phi(0)
	\geq
	\Phi(\ell - 1) + 4\eps\tau(\ell - 1)
	\Rightarrow
	16\tau
	\geq
	4\eps\tau(\ell - 1)
	\Rightarrow
	\ell \leq 1 + 4\eps^{-1}
	\leq
	5\eps^{-1}
	\enspace.
\]

Let us now get to proving the second part of the lemma. By Proposition~\ref{prop:pre-process}, we get
\begin{align} \label{eq:pre-process_inequality}
	F(x^0) + F(y^0)
	\geq{} &
	2[f(OPT) - F(OPT(x^0, y^0))] - 4\eps \cdot f(OPT)\\ \nonumber
	\geq{} &
	2(1 - 3\eps) \cdot [f(OPT) - F(OPT(x^0, y^0))] - 4\eps \cdot f(OPT)
	\enspace,
\end{align}
where the second inequality holds since $F(OPT(x^0, y^0))$ is the expected value of $f$ over some distribution of sets, and thus, is upper bounded by $f(OPT)$. Additionally, Proposition~\ref{prop:step} implies that for every $1 \leq i \leq \ell$ we have
\begin{align*}
	[F(x^i) + F(y^i)] - [F(x^{i-1}) + F&(y^{i-1})] - 2(1 - 3\eps) \cdot [F(OPT(x^{i-1}, y^{i-1})) - F(OPT(x^i, y^i))]\\
	\geq{}&
	 - 4\eps\tau(\Delta^{i - 1} - \Delta^i) - 2\eps^2 \cdot \characteristic_\cN[\nabla F(x^{i-1}) - \nabla F(y^{i-1})]\\
	={} &
	 - 4\eps\tau(\Delta^{i - 1} - \Delta^i) - 2\eps^2 \cdot \Phi(i-1)
	\geq
	- 4\eps\tau(\Delta^{i - 1} - \Delta^i) - 32\eps^2\tau
	\enspace,
\end{align*}
where the second inequality holds since we have already proved that $\Phi(0) \leq 16\tau$ and that $\Phi(i)$ is a decreasing function of $i$ in the range $0 \leq i \leq \ell - 1$.
Adding up the last inequality for every $1 \leq i \leq \ell$ and adding Inequality~\eqref{eq:pre-process_inequality} to the sum, we get
\begin{align*}
	F(x^\ell) + F(y^\ell)
	\geq{} &
	2(1 - 3\eps) \cdot [f(OPT) - F(OPT(x^\ell, y^\ell))] - 4\eps\tau(\Delta^0 - \Delta^\ell) - 32\ell\eps^2\tau - 4\eps \cdot f(OPT)\\
	\geq{} &
	2(1 - 3\eps) \cdot [f(OPT) - F(OPT(x^\ell, y^\ell))] - 4\eps\tau - 32\ell\eps^2\tau - 4\eps \cdot f(OPT)\\
	\geq{} &
	2(1 - 3\eps) \cdot [f(OPT) - F(OPT(x^\ell, y^\ell))] - (8\eps + 32\ell\eps^2) \cdot f(OPT)\\
	\geq{} &
	2(1 - 3\eps) \cdot [f(OPT) - F(OPT(x^\ell, y^\ell))] - 168\eps \cdot f(OPT)
	\enspace,
\end{align*}
where the second inequality holds since $\Delta^0 - \Delta^\ell \leq 1$, the third inequality holds since $\tau \leq f(OPT)$ by Lemma~\ref{lem:conditions_hold_two_sided} and the last inequality holds by plugging in the upper bound we have on $\ell$.
\end{proof}

\begin{corollary}
$
	F(x^\ell)
	\geq
	(\nicefrac{1}{2} - 44\eps) \cdot f(OPT)
$.
Hence, the approximation ratio of Algorithm~\ref{alg:multilinear_oracle} is at least $\nicefrac{1}{2} - 44\eps$.
\end{corollary}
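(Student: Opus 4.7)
The plan is to exploit the termination condition of the while loop, which forces $\Delta^\ell=0$, and then plug the resulting equalities into Lemma~\ref{lem:initial_bound_two_sided}.

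First I would observe that since the loop exits once $\Delta^i = 0$, the last iterate satisfies $\Delta^\ell = 0$, and by Lemma~\ref{lem:conditions_hold_two_sided} this gives $y^\ell = x^\ell + \Delta^\ell \cdot \characteristic_\cN = x^\ell$. Consequently, $F(y^\ell) = F(x^\ell)$ and, by definition of $OPT(x,y) = (\characteristic_{OPT} \vee x) \wedge y$, we get
\[
OPT(x^\ell, y^\ell) = (\characteristic_{OPT} \vee x^\ell) \wedge x^\ell = x^\ell,
\]
since $\characteristic_{OPT} \vee x^\ell \geq x^\ell$. Hence $F(OPT(x^\ell, y^\ell)) = F(x^\ell)$.

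Next I would substitute these identities into the inequality from Lemma~\ref{lem:initial_bound_two_sided} to obtain
\[
2 F(x^\ell) \geq 2(1 - 3\eps) \cdot \bigl[f(OPT) - F(x^\ell)\bigr] - 168\eps \cdot f(OPT),
\]
and then solve for $F(x^\ell)$. Gathering $F(x^\ell)$ on the left and $f(OPT)$ on the right yields
\[
(4 - 6\eps) \cdot F(x^\ell) \geq (2 - 6\eps - 168\eps) \cdot f(OPT) = (2 - 174\eps) \cdot f(OPT),
\]
so $F(x^\ell) \geq \frac{2 - 174\eps}{4 - 6\eps} \cdot f(OPT)$.

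Finally, I would verify the arithmetic bound $\frac{2 - 174\eps}{4 - 6\eps} \geq \nicefrac{1}{2} - 44\eps$, which after clearing denominators reduces to $5\eps \geq 264\eps^2$, i.e., $\eps \leq \nicefrac{5}{264}$. For larger $\eps$ the stated guarantee $\nicefrac{1}{2} - 44\eps$ is already non-positive (indeed it is non-positive for any $\eps \geq \nicefrac{1}{88}$), so non-negativity of $f$ combined with $\bE[f(\RSet(x^\ell))] = F(x^\ell) \geq 0$ makes the claim trivial in that regime. Since Algorithm~\ref{alg:multilinear_oracle} returns $\RSet(x^\ell)$, whose expected value equals $F(x^\ell)$, the stated approximation ratio follows. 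The only mildly delicate step is this last case analysis on $\eps$; everything else is mechanical rearrangement.
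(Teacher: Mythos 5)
Your proposal is correct and follows essentially the same route as the paper: observe $\Delta^\ell = 0$, conclude $y^\ell = x^\ell$ and $OPT(x^\ell,y^\ell) = x^\ell$, substitute into Lemma~\ref{lem:initial_bound_two_sided}, and rearrange. The one place you work harder than necessary is the final arithmetic: rather than splitting into cases on $\eps$, the paper's ``use of non-negativity of $F$'' is just the observation that $(4-6\eps)\cdot F(x^\ell) \leq 4\cdot F(x^\ell)$ since $F(x^\ell)\geq 0$, which gives $F(x^\ell) \geq \tfrac{2-174\eps}{4}\cdot f(OPT) = (\nicefrac{1}{2}-43.5\eps)\cdot f(OPT) \geq (\nicefrac{1}{2}-44\eps)\cdot f(OPT)$ uniformly in $\eps$.
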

\begin{proof}
We first note that the second part of the corollary follows from the first one since the last line of Algorithm~\ref{alg:multilinear_oracle} returns a random set whose expected value, with respect to $f$, is $F(x^\ell)$. Thus, the rest of the proof is devoted to proving the first part of the corollary.

Observe that $\Delta^\ell = 0$ because otherwise the algorithm would not have stopped after $\ell$ iterations. Thus, $y^\ell = x^\ell + \Delta^\ell \cdot \characteristic_\cN = x^\ell$ and $OPT(x^\ell, y^\ell) = (\characteristic_{OPT} \vee x^\ell) \wedge y^\ell = x^\ell$. Plugging these observations into the guarantee of Lemma~\ref{lem:initial_bound_two_sided}, we get
\[
	2F(x^\ell) \geq 2(1 - 3\eps) \cdot [f(OPT) - F(x^\ell)] - 168\eps \cdot f(OPT)
	\enspace,
\]
and the corollary now follows immediately by rearranging the last inequality and using the non-negativity of $F$.
\end{proof}

To complete the proof of Theorem~\ref{thm:main_result_actual} we still need to upper bound the adaptivity of Algorithm~\ref{alg:multilinear_oracle} and the number of value oracle queries that it uses, which is done by the next lemma.
\begin{lemma}
The adaptivity of Algorithm~\ref{alg:multilinear_oracle} is $O(\eps^{-1})$, and it uses $O(n\eps^{-2} \log \eps^{-1})$ value oracle queries to $F$.
\end{lemma}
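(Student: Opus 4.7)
The plan is to assemble the adaptive complexity and query complexity directly from the per-call guarantees of {\PreProcess} and {\Update} (Propositions~\ref{prop:pre-process} and~\ref{prop:step}), combined with the iteration bound $\ell \leq 5\eps^{-1}$ already established in Lemma~\ref{lem:initial_bound_two_sided}. Since Lemma~\ref{lem:conditions_hold_two_sided} has shown that the preconditions of both propositions are satisfied in every call made by Algorithm~\ref{alg:multilinear_oracle}, we may freely invoke the stated bounds on rounds and queries.

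First, I would account for the work outside the loop: computing $\tau = F(\nicefrac{1}{2} \cdot \characteristic_\cN)$ costs one adaptive round and one oracle query, and the call $\PreProcess(\tau)$ costs a constant number of adaptive rounds and $O(n/\eps)$ queries by Proposition~\ref{prop:pre-process}. Next, for the loop, each of the $\ell$ iterations consists of a single call to {\Update}, which by Proposition~\ref{prop:step} costs a constant number of adaptive rounds and $O(n \eps^{-1} \log \eps^{-1})$ queries. Crucially, the iterations are sequential (iteration $i$ depends on the output of iteration $i-1$), so the rounds simply add up across iterations.

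Summing the adaptive rounds yields $O(1) + O(1) + \ell \cdot O(1) = O(\ell) = O(\eps^{-1})$, using $\ell \leq 5\eps^{-1}$ from Lemma~\ref{lem:initial_bound_two_sided}. Summing the query counts yields $O(1) + O(n/\eps) + \ell \cdot O(n \eps^{-1} \log \eps^{-1}) = O(n\eps^{-2} \log \eps^{-1})$, where the loop term dominates.

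There is no real obstacle here; the lemma is essentially a bookkeeping consequence of Propositions~\ref{prop:pre-process} and~\ref{prop:step} together with the iteration bound. The only thing one must be mildly careful about is observing that the loop is inherently sequential (so its adaptive rounds multiply by $\ell$, not absorb into a constant), and that the $O(\eps^{-1})$ from the loop dominates the constant-round pre-processing stage.
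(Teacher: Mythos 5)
Your proposal is correct and follows essentially the same bookkeeping decomposition as the paper's proof: one query for $\tau$, the per-call bounds from Propositions~\ref{prop:pre-process} and~\ref{prop:step}, and the iteration bound $\ell \leq 5\eps^{-1}$ from Lemma~\ref{lem:initial_bound_two_sided}. No meaningful difference.
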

\begin{proof}
Except for the value oracle queries used by the procedures {\Update} and {\PreProcess}, Algorithm~\ref{alg:multilinear_oracle} uses only a single value oracle query (for evaluating $\tau$). Thus, the adaptivity of Algorithm~\ref{alg:multilinear_oracle} is at most
\begin{equation} \label{eq:adaptivity_two_sided}
	1 + (\text{adaptivity of {\PreProcess}}) + \ell \cdot (\text{adaptivity of {\Update}})
	\enspace,
\end{equation}
and the number of oracle queries it uses is at most
\begin{equation} \label{eq:queries_two_sided}
	1 + (\text{value oracle queries used by {\PreProcess}})  + \ell \cdot (\text{value oracle queries used by {\Update}})
	\enspace.
\end{equation}
Proposition~\ref{prop:step} guarantees that each execution of the procedure {\Update} requires at most $O(1)$ rounds of adaptivity and $O(n\eps^{-1}\log \eps^{-1})$ oracle queries, and Proposition~\ref{prop:pre-process} guarantees that the single execution of the procedure {\PreProcess} requires at most $O(1)$ rounds of adaptivity and $O(n / \eps)$ oracle queries. Plugging these observations and the upper bound on $\ell$ given by Lemma~\ref{lem:initial_bound_two_sided} into~\eqref{eq:adaptivity_two_sided} and~\eqref{eq:queries_two_sided}, we get that the adaptivity of Algorithm~\ref{alg:multilinear_oracle} is at most
\[
	1 + O(1) + 5\eps^{-1} \cdot O(1)
	=
	O(\eps^{-1})
	\enspace,
\]
and its query complexity is at most
\[
	1 + O(n / \eps) + 5\eps^{-1} \cdot O(n\eps^{-1}\ln \eps^{-1})
	=
	O(n\eps^{-2}\log \eps^{-1})
	\enspace.
	\qedhere
\]
\end{proof}

\subsection{The Procedure {\Update}} \label{ssc:update}

In this section we describe the promised procedure {\Update} and prove that it indeed obeys all the properties guaranteed by Proposition~\ref{prop:step}. Let us begin by recalling Proposition~\ref{prop:step}.
\begin{repproposition}{prop:step}
\stepProp
\end{repproposition}

The procedure {\Update} itself appears as Algorithm~\ref{alg:update} and consists of two main steps. In the first step the algorithm calculates for every element $u \in \cN$ a basic rate $r_u \in [0, 1]$ whose intuitive meaning is that if an update of size $\delta$ is selected during the second step, then $x_u$ will be increased by $\delta r_u$ and $y_u$ will be decreased by $\delta (1 - r_u)$. Thus, we are guarantees that the difference $y_u - x_u$ decreases by $\delta$ for all the elements of $\cN$. We also note that the formula for calculating the basic rate $r_u$ is closely based on the update rule used by the algorithm of~\cite{BFNS15}.

To understand the second step of {\Update}, observe that $F(x') + F(y')$ can be rewritten in terms of the $\delta$ selected as $F(x + \delta r) + F(y - \delta(\characteristic_\cN - r))$, whose derivative according to $\delta$ is
\[
	r \cdot \nabla F(x + \delta r) - (\characteristic_\cN - r) \cdot \nabla F(y - \delta(\characteristic_\cN - r))
	\enspace.
\]
For $\delta = 0$ this derivative is $r \cdot \nabla F(x) - (\characteristic_\cN - r) \cdot \nabla F(y) = ar + b(\characteristic_\cN - r)$, and the algorithm looks for the minimum $\delta \in [0, \Delta]$ for which the derivative becomes significantly smaller than that. For efficiency purposes, the algorithm only checks possible $\delta$ values out of an exponentially increasing series of values rather than every possible $\delta$ value. The algorithm then makes an update of size $\delta$. Since $\delta$ is (roughly) the first $\delta$ value for which the derivative decreased significantly compared to the original derivative, making a step of size $\delta$ using rates calculated based on the marginals at $x$ and $y$ makes sense. Moreover, since the derivative does decrease significantly after a step of size $\delta$, $\characteristic_\cN[\nabla F(x') - \nabla F(y')]$ should intuitively be significantly smaller than $\characteristic_\cN[\nabla F(x) - \nabla F(y)]$, which is one of the guarantees of Proposition~\ref{prop:step}.
\begin{algorithm}[ht]
\caption{$\Update(x, y, \Delta, \gamma)$} \label{alg:update}
\DontPrintSemicolon
Let $a \gets \nabla F(x)$ and $b \gets -\nabla F(y)$.\\
\For{every $u \in \cN$}{
	\lIf{$a_u > 0$ and $b_u > 0$}{$r_u \gets a_u / (a_u + b_u)$.}
	\lElseIf{$a_u > 0$}{$r_u \gets 1$.}
	\lElse{$r_u \gets 0$.}
}
	
\BlankLine
	
Let $\delta$ be the minimum value in $\{\delta \in [0, \Delta) \mid \exists_{j \in \bZ, j \geq 0}\; \delta = \eps^2(1 + \eps)^j \}$ for which
\[
	r \cdot \nabla F(x + \delta r) - (\characteristic_\cN - r) \cdot \nabla F(y - \delta(\characteristic_\cN - r))
	\leq
	a r + b(\characteristic_\cN - r) - \gamma
	\enspace.
\]
If there is no such value, we set $\delta = \Delta$.\label{line:delta_finding}\\
Let $x' \gets x + \delta r$, $y' \gets y - \delta (\characteristic_\cN - r)$ and $\Delta' = \Delta - \delta$.\\
\Return{$(x', y', \Delta')$}.
\end{algorithm}

We begin the analysis of {\Update} with the following observation, which states some useful properties of the vectors produced by {\Update}, and (in particular) implies part~\eqref{item:diff_delta} of Proposition~\ref{prop:step}. In this observation, and in the rest of the section, we implicitly assume that the input of {\Update} obeys all the requirements of Proposition~\ref{prop:step}.
\begin{observation} \label{obs:basic_update}
$\characteristic_\varnothing \leq r \leq \characteristic_\cN$ and $0 \leq \delta \leq \Delta$, and thus, $x \leq x'$, $y' \leq y$ and $\Delta' \in [0, 1]$. Moreover, $y' - x' = \Delta' \cdot \characteristic_\cN$.
\end{observation}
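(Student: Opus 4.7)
The plan is to verify each claim directly from the definitions in Algorithm~\ref{alg:update}, essentially via case analysis and simple arithmetic; no deep ideas are involved.

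First I would establish $\characteristic_\varnothing \leq r \leq \characteristic_\cN$ by inspecting the three cases that define $r_u$. If both $a_u, b_u > 0$, then $r_u = a_u / (a_u + b_u)$, which lies strictly between $0$ and $1$ because both numerator and denominator are positive and the numerator is strictly smaller. The other two branches set $r_u$ to exactly $1$ or $0$. So $r_u \in [0, 1]$ for every $u$, giving the componentwise bound on $r$.

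Next I would argue $0 \leq \delta \leq \Delta$. By construction in Line~\ref{line:delta_finding}, $\delta$ is chosen either from the set $\{\eps^2 (1 + \eps)^j : j \in \bZ_{\geq 0}\} \cap [0, \Delta)$ (all of whose elements are strictly between $0$ and $\Delta$) or, as a fallback, set to $\Delta$ itself. Either way $\delta \in [0, \Delta]$. Combined with the previous step, $\delta r \geq 0$ and $\delta (\characteristic_\cN - r) \geq 0$ coordinatewise, yielding $x' = x + \delta r \geq x$ and $y' = y - \delta(\characteristic_\cN - r) \leq y$. Also $\Delta' = \Delta - \delta \in [0, \Delta] \subseteq [0, 1]$, using the assumption $\Delta \in (0, 1]$ from Proposition~\ref{prop:step}.

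For the final identity, a one-line computation suffices:
\[
    y' - x'
    =
    \bigl[y - \delta(\characteristic_\cN - r)\bigr] - \bigl[x + \delta r\bigr]
    =
    (y - x) - \delta \characteristic_\cN
    =
    \Delta \characteristic_\cN - \delta \characteristic_\cN
    =
    \Delta' \characteristic_\cN,
\]
where the third equality uses the hypothesis $y - x = \Delta \characteristic_\cN$ from Proposition~\ref{prop:step}. There is no real obstacle here; the only thing to be slightly careful about is handling the case where the set in Line~\ref{line:delta_finding} is empty (e.g., when $\eps^2 \geq \Delta$), which is precisely why the fallback $\delta = \Delta$ is specified. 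I would not bother verifying $x', y' \in [0, 1]^\cN$ separately in this observation, since $x \leq x' \leq y' \leq y$ forces both into $[0, 1]^\cN$ given $x, y \in [0, 1]^\cN$, but this is immediate and can be noted in one sentence if needed.
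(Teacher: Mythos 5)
Your proof is correct and follows the same route as the paper's: a case analysis on the definition of $r_u$, reading off $0 \leq \delta \leq \Delta$ from Line~\ref{line:delta_finding}, and the one-line algebraic identity for $y' - x'$. The only (harmless) additions are your explicit remark about the fallback $\delta = \Delta$ and the closing note that $x', y' \in [0, 1]^\cN$ follows automatically; neither changes the substance.
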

\begin{proof}
To see why $\characteristic_\varnothing \leq r \leq \characteristic_\cN$ holds, consider an arbitrary coordinate $r_u$ of $r$. The only case in which this coordinate is not set to either $0$ or $1$ by {\Update} is when both $a_u$ and $b_u$ are positive, in which case
\[
	r_u
	=
	\frac{a_u}{a_u + b_u}
	\in
	(0, 1)
	\enspace.
\]
We also observe that the definition of $\delta$ implies $0 \leq \delta \leq \Delta$, and thus, we get $x' = x + \delta r \geq x$, $y' = y - \delta (\characteristic_\cN - r) \leq y$ and $\Delta' = \Delta - \delta \in [0, \Delta] \subseteq [0, 1]$.

It remains to prove $y' - x' = \Delta' \cdot \characteristic_\cN$, which follows since
\[
	y' - x'
	=
	(y - \delta(\characteristic_\cN - r)) - (x + \delta r)
	=
	(y - x) - \delta \cdot \characteristic_\cN
	=
	\Delta \cdot \characteristic_\cN - \delta \cdot \characteristic_\cN
	=
	(\Delta - \delta) \cdot \characteristic_\cN
	=
	\Delta' \cdot \characteristic_\cN
	\enspace.
	\qedhere
\]
\end{proof}

Our next objective is to prove part~\eqref{item:potential_decrease} of Proposition~\ref{prop:step}, which shows that {\Update} makes a significant progress when one measures progress in terms of the decrease in the value of the expression $\characteristic_\cN[\nabla F(x) - \nabla F(y)]$.
\begin{lemma} \label{lem:single_iteration_S_properties}
If $\Delta' > 0$, then $\characteristic_\cN [\nabla F(x') - \nabla F(y')] \leq \characteristic_\cN [\nabla F(x) - \nabla F(y)] - \gamma$.
\end{lemma}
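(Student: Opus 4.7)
The plan is to reduce the desired inequality to the defining inequality for $\delta$ in Line~\ref{line:delta_finding}, after splitting $\characteristic_\cN$ as $r + (\characteristic_\cN - r)$ and handling the ``wrong-weight'' remainder by monotonicity of the gradients.

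First I would observe that the hypothesis $\Delta' > 0$ means $\delta < \Delta$, so $\delta$ was actually picked from the exponential grid in Line~\ref{line:delta_finding} because it satisfied the inequality there (it was not the fallback $\delta = \Delta$). Writing $x' = x + \delta r$ and $y' = y - \delta(\characteristic_\cN - r)$, and recalling that $a = \nabla F(x)$ and $b = -\nabla F(y)$ so that $ar + b(\characteristic_\cN - r) = r\cdot\nabla F(x) - (\characteristic_\cN - r)\cdot\nabla F(y)$, that defining inequality reads
\[
  r\cdot\nabla F(x') - (\characteristic_\cN - r)\cdot\nabla F(y')
  \;\leq\;
  r\cdot\nabla F(x) - (\characteristic_\cN - r)\cdot\nabla F(y) - \gamma.
\]

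Next, I would decompose $\characteristic_\cN = r + (\characteristic_\cN - r)$ inside both sides of the target inequality, so that
\[
  \characteristic_\cN[\nabla F(x') - \nabla F(y')] - \characteristic_\cN[\nabla F(x) - \nabla F(y)]
\]
equals the sum of two differences: one containing the ``$r$ on $\nabla F(x)$ and $(\characteristic_\cN - r)$ on $\nabla F(y)$'' pattern (which is exactly the expression controlled by the displayed inequality above, hence is at most $-\gamma$), and a second remainder of the form
\[
  (\characteristic_\cN - r)\cdot[\nabla F(x') - \nabla F(x)] \;+\; r\cdot[\nabla F(y) - \nabla F(y')].
\]

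The main (but easy) step is to show this remainder is nonpositive. By Observation~\ref{obs:basic_update} we have $x' \geq x$ and $y' \leq y$, and by the gradient-monotonicity observation for the multilinear extension (\textit{i.e.}, $\nabla F$ reverses the coordinate-wise order), this gives $\nabla F(x') \leq \nabla F(x)$ and $\nabla F(y') \geq \nabla F(y)$. Since also $r, \characteristic_\cN - r \geq \characteristic_\varnothing$ by Observation~\ref{obs:basic_update}, both inner products in the remainder are nonpositive. Adding the $-\gamma$ bound from the defining inequality to this nonpositive remainder yields exactly $\characteristic_\cN[\nabla F(x') - \nabla F(y')] \leq \characteristic_\cN[\nabla F(x) - \nabla F(y)] - \gamma$, as required. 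I do not anticipate any real obstacle here; the only subtlety is remembering to use $\Delta' > 0$ to rule out the fallback case in the definition of $\delta$.
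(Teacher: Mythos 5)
Your proof is correct and is essentially identical to the paper's: both use $\Delta' > 0$ to extract the defining inequality for $\delta$, and both eliminate $r$ by invoking $\nabla F(x') \leq \nabla F(x)$ and $\nabla F(y') \geq \nabla F(y)$ (from $x \leq x'$, $y' \leq y$ and gradient antitonicity) with the non-negativity of $r$ and $\characteristic_\cN - r$. Your presentation phrases it as "decompose $\characteristic_\cN$ and bound the remainder" while the paper phrases it as "add two auxiliary inequalities," but these are the same calculation.
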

\begin{proof}
Note that $\Delta' > 0$ implies $\delta < \Delta$, which only happens when
\[
	r \cdot \nabla F(x + \delta r) - (\characteristic_\cN - r) \cdot \nabla F(y - \delta(\characteristic_\cN - r))
	\leq
	a r + b(\characteristic_\cN - r) - \gamma
	\enspace.
\]
Plugging in the definitions of $a$, $b$, $x'$ and $y'$, the last inequality becomes
\[
	r \cdot \nabla F(x') - (\characteristic_\cN - r) \cdot \nabla F(y')
	\leq
	r \cdot \nabla F(x) - (\characteristic_\cN - r) \cdot \nabla F(y) - \gamma
	\enspace.
\]
To remove the vector $r$ from this inequality, we add to it the two inequalities $(\characteristic_\cN - r) \cdot \nabla F(x') \leq (\characteristic_\cN - r) \cdot \nabla F(x)$ and $-r \cdot \nabla F(y') \leq - r \cdot \nabla F(y)$. Both these inequalities hold due to submodularity since $x \leq x'$ and $y \geq y'$. One can observe that the result of this addition is the inequality guaranteed by the lemma.
\end{proof}

We now get to proving part~\eqref{item:gain_loss} of Proposition~\ref{prop:step}. Towards this goal, we need to find a way to relate the expression $[F(x') + F(y')] - [F(x) + F(y)]$ to the difference $F(OPT(x, y)) - F(OPT(x', y'))$. The next lemma upper bounds the last difference. Let us define $U^+ = \{u \in \cN \mid a_u > 0 \text{ and } b_u > 0\}$.
\newcommand{\lossTwoSided}{$F(OPT(x, y)) - F(OPT(x', y')) \leq \delta \cdot \sum_{u \in U^+} \max\{b_u r_u, a_u(1 - r_u)\}$.}
\begin{lemma} \label{lem:loss_two_sided}
\lossTwoSided
\end{lemma}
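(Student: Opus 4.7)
The plan is to compute $OPT(x,y)$ and $OPT(x',y')$ coordinate-wise, split the difference $F(OPT(x,y)) - F(OPT(x',y'))$ along the intermediate point $OPT(x,y) \wedge OPT(x',y')$, bound each piece via the standard submodular first-order inequality, and then perform a case analysis based on the sign pattern of $(a_u, b_u)$ to reduce the bound to a sum over $U^+$.

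First, I would write $p \triangleq OPT(x,y) = (\characteristic_{OPT} \vee x) \wedge y$ and $p' \triangleq OPT(x',y') = (\characteristic_{OPT} \vee x') \wedge y'$, and observe that for $u \in OPT$, $p_u = y_u$ and $p'_u = y'_u = y_u - \delta(1 - r_u)$, so $p_u \geq p'_u$ with difference $\delta(1 - r_u)$; whereas for $u \notin OPT$, $p_u = x_u$ and $p'_u = x'_u = x_u + \delta r_u$, so $p'_u \geq p_u$ with difference $\delta r_u$. (Here I use Observation~\ref{obs:basic_update} to know $x \leq y$, $x' \leq y'$, and thus the $\min$/$\max$ pick out the claimed coordinates.) This tells me that the two vectors differ on disjoint coordinate subsets, so $p \vee p'$ and $p \wedge p'$ are easy to describe.

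Next, I would split $F(p) - F(p') = [F(p) - F(p \wedge p')] - [F(p') - F(p \wedge p')]$. Both increments start from $p \wedge p'$ and only raise coordinates (in $OPT$ for the first and in $\cN \setminus OPT$ for the second). Applying the standard submodularity inequalities for the multilinear extension along monotone paths yields
\[
F(p) - F(p \wedge p') \leq \sum_{u \in OPT} \delta(1 - r_u)\, \partial_u F(p \wedge p'),
\qquad
F(p') - F(p \wedge p') \geq \sum_{u \in \cN \setminus OPT} \delta r_u\, \partial_u F(p').
\]
Then, using $p \wedge p' \geq x$ (because $p, p' \geq x$, which follows from $x \leq x'$ and the formulas for $p, p'$) and $p' \leq y$, submodularity gives $\partial_u F(p \wedge p') \leq \partial_u F(x) = a_u$ and $-\partial_u F(p') \leq -\partial_u F(y) = b_u$. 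Combining yields
\[
F(p) - F(p') \leq \delta \Bigl[\sum_{u \in OPT} a_u(1 - r_u) + \sum_{u \in \cN \setminus OPT} b_u r_u\Bigr].
\]

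Finally, I would use the definition of $r_u$ to do a four-way case check. For $u \in U^+$, $r_u = a_u/(a_u + b_u)$, giving $a_u(1-r_u) = b_u r_u = a_u b_u/(a_u + b_u) \geq 0$, which is exactly $\max\{b_u r_u, a_u(1-r_u)\}$. For each of the three cases outside $U^+$, one verifies directly from the definition of $r_u$ that both $a_u(1-r_u)$ and $b_u r_u$ are $\leq 0$, so the corresponding term in the sum above is non-positive and can be dropped. The resulting upper bound is $\delta \sum_{u \in U^+} \max\{b_u r_u, a_u(1 - r_u)\}$, as claimed. The main subtlety I expect is the careful sign/case check in the last step, together with correctly applying the directional submodularity inequality in the right direction when bounding the two halves of the split.
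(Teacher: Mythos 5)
Your proof is correct. The difference from the paper lies in the decomposition: the paper writes $F(OPT(x,y)) - F(OPT(x',y'))$ as $-\int_0^\delta \frac{d}{dt}F\bigl((\characteristic_{OPT} \vee (x+tr)) \wedge (y - t(\characteristic_\cN - r))\bigr)\,dt$, i.e.\ a single diagonal path along which some coordinates rise and others fall, and then bounds the integrand coordinate-by-coordinate using $x \leq \text{path}(t) \leq y$. You instead route through the meet $p \wedge p'$, splitting the difference into two axis-aligned monotone increments and applying the endpoint form of the first-order bound ($F(z') - F(z) \leq (z'-z)\cdot\nabla F(z)$ and $F(z')-F(z) \geq (z'-z)\cdot\nabla F(z')$ for $z \leq z'$) to each piece. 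Both land on the same intermediate bound $\delta\bigl[\sum_{u\in OPT} a_u(1-r_u) + \sum_{u\notin OPT} b_u r_u\bigr]$, and the remaining case analysis is cosmetically different but equivalent: the paper first bounds the bracket by $\sum_{u\in\cN}\max\{b_ur_u, a_u(1-r_u)\}$ and then shows the max vanishes for $u \notin U^+$, while you drop the $u\notin U^+$ terms directly by noting both $a_u(1-r_u)\leq 0$ and $b_ur_u\leq 0$ in those cases. Your version avoids the chain-rule-along-a-mixed-direction-path step, which is perhaps slightly more elementary; the paper's integral form is more uniform with the analysis of the gain term in Lemma~\ref{lem:gain_two_sided}, where the integral over $t$ genuinely matters. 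One small slip: outside $U^+$ the procedure's definition of $r_u$ has only two branches ($a_u>0,\,b_u\leq 0$ giving $r_u=1$, and $a_u\leq 0$ giving $r_u=0$), not three, but your sign checks in both cases are correct and the argument goes through.
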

\begin{proof}
Using the chain rule, we get
\begin{align*}
	F(OPT(x, y)) - F(OPT(x', y')) 
	&=
	F((\characteristic_{OPT} \vee x) \wedge y) - F((\characteristic_{OPT} \vee x') \wedge y') \\
	={} &
	-\int_0^\delta \frac{d F((\characteristic_{OPT} \vee (x + t r)) \wedge (y - t(\characteristic_\cN - r)))}{dt} dt\\
	={} &
	\int_0^\delta \left\{\sum_{u \in OPT} (1 - r_u) \cdot \partial_u F((\characteristic_{OPT} \vee (x + t r)) \wedge (y - t(\characteristic_\cN - r))) \right.\\&-\left.  \sum_{u \in \cN \setminus OPT} \mspace{-18mu} r_u \cdot \partial_u F((\characteristic_{OPT} \vee (x + t r)) \wedge (y - t(\characteristic_\cN - r))) \right\} dt
	\enspace.
\end{align*}
Using the submodularity of $f$ and the fact that $x \leq (\characteristic_{OPT} \vee (x + t r)) \wedge (y - t(\characteristic_\cN - r)) \leq y$, the rightmost side of the last equation can be upper bounded as follows.
\begin{align*}
	F(OPT(x, y)&) - F(OPT(x', y'))
	\leq
	\int_0^{\delta} \left\{\sum_{u \in OPT} (1 - r_u) \cdot \partial_u F(x) - \sum_{u \in \cN \setminus OPT} \mspace{-18mu} r_u \cdot \partial_u F(y) \right\} dt\\
	={} &
	\int_0^{\delta} \left\{\sum_{u \in OPT} a_u(1 - r_u) + \sum_{u \in \cN \setminus OPT} \mspace{-18mu} b_u r_u \right\} dt
	\leq
	\int_0^{\delta} \sum_{u \in \cN} \max\{b_ur_u, a_u(1 - r_u)\} dt\\
	={} &
	\delta \cdot \sum_{u \in \cN} \max\{b_ur_u, a_u(1 - r_u)\}
	\enspace.
\end{align*}

To complete the proof of the lemma, it remains to observe that for every element $u \in \cN \setminus U^+$ it holds that $\max\{b_u r_u, a_u(1 - r_u)\} = 0$. To see that this is the case, note that every such element $u$ must fall into one out of only two possible options. The first option is that $a_u > 0$ and $b_u \leq 0$, which imply $r_u = 1$, and thus, $\max\{b_u r_u, a_u(1 - r_u)\} = \max\{b_u, 0\} = 0$. The second option is that $a_u \leq 0$, which implies $r_u = 0$, and thus, $\max\{b_u r_u, a_u(1 - r_u)\} = \max\{0, a_u\} = 0$.
\end{proof}

Next, we would like to lower bound $[F(x') + F(y')] - [F(x) + F(y)]$, which we do in Lemma~\ref{lem:gain_two_sided}. However, before we can state and prove this lemma, we need to prove the following technical observation.
\begin{observation} \label{obs:gain_bound}
For every element $u \in \cN$, $a_u r_u + b_u (1 - r_u) \geq \max\{a_u, b_u\} / 2 \geq 0$.
\end{observation}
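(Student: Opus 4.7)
The plan is to argue by case analysis on the signs of $a_u$ and $b_u$, after first recording one crucial monotonicity fact that makes certain cases vacuous.

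The key preliminary observation is that submodularity of $f$, combined with $x \leq y$, gives $\nabla F(x) \geq \nabla F(y)$ (this is the observation proved at the end of Section~\ref{sec:preliminaries}). Written in terms of $a$ and $b$, this says $a_u \geq -b_u$, i.e.\ $a_u + b_u \geq 0$ for every $u \in \cN$. In particular, at most one of $a_u, b_u$ can be strictly negative, and $\max\{a_u, b_u\} \geq 0$, which already yields the second inequality $\max\{a_u,b_u\}/2 \geq 0$ of the observation.

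For the first inequality I would split into cases according to the definition of $r_u$ in Algorithm~\ref{alg:update}. If $a_u > 0$ and $b_u > 0$, then $r_u = a_u/(a_u+b_u)$ and a direct calculation gives
\[
    a_u r_u + b_u(1 - r_u)
    = \frac{a_u^2 + b_u^2}{a_u + b_u}
    \geq \frac{a_u + b_u}{2}
    \geq \frac{\max\{a_u, b_u\}}{2},
\]
where the first inequality is the standard bound $a_u^2 + b_u^2 \geq (a_u+b_u)^2/2$, and the second uses $\min\{a_u, b_u\} \geq 0$. If $a_u > 0$ and $b_u \leq 0$, then $r_u = 1$ and $a_u r_u + b_u(1-r_u) = a_u$; by the preliminary observation $a_u \geq -b_u \geq 0$, so $\max\{a_u,b_u\} = a_u$ and the inequality $a_u \geq a_u/2$ is immediate. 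The symmetric case $a_u \leq 0$ and $b_u > 0$ gives $r_u = 0$, so the left-hand side equals $b_u$, and $a_u + b_u \geq 0$ forces $\max\{a_u, b_u\} = b_u$, yielding $b_u \geq b_u/2$. Finally, if $a_u \leq 0$ and $b_u \leq 0$, then $a_u + b_u \geq 0$ forces $a_u = b_u = 0$ and both sides of the inequality are zero.

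I do not expect any serious obstacle: the only subtle point is remembering to use submodularity to rule out the case where $a_u, b_u$ are both negative (so that the stated nonnegativity holds), and to make the step from $(a_u+b_u)/2$ to $\max\{a_u,b_u\}/2$ when both are positive. Everything else is a one-line algebraic check per case.
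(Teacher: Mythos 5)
Your proof is correct and follows essentially the same approach as the paper: derive $a_u + b_u \geq 0$ from submodularity and then case-split on the definition of $r_u$. The one small deviation is in the interior case $a_u, b_u > 0$, where you keep both terms and use the symmetric bound $\frac{a_u^2+b_u^2}{a_u+b_u} \geq \frac{a_u+b_u}{2} \geq \frac{\max\{a_u,b_u\}}{2}$, whereas the paper assumes WLOG $a_u \geq b_u$, drops the $b_u(1-r_u)$ term and bounds $a_ur_u = \frac{a_u^2}{a_u+b_u} \geq \frac{a_u}{2}$; both are valid one-line computations and there is no substantive difference.
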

\begin{proof}
The submodularity of $f$ implies
\[
	a_u + b_u
	=
	\partial_u F(x) - \partial_u F(y)
	\geq
	\partial_u F(x) - \partial_u F(x)
	=
	0
	\enspace,
\]
which yields the second inequality of the observation.

In the proof of the first inequality of the observation we assume for simplicity that $a_u \geq b_u$. The proof for the other case is analogous. There are now three cases to consider. If $b_u$ is positive, then so must be $a_u$ by our assumption, which implies $r_u = a_u / (a_u + b_u)$, and thus,
\[
	a_ur_u + b_u(1 - r_u)
	\geq
	a_ur_u
	=
	\frac{(a_u)^2}{a_u + b_u}
	\geq
	\frac{(a_u)^2}{2a_u}
	=
	\frac{a_u}{2}
	=
	\frac{\max\{a_u, b_u\}}{2}
	\enspace.
\]
The second case we need to consider is when $a_u = 0$. Note that in this case $r_u = 0$, which implies
\[
	a_ur_u + b_u(1 - r_u)
	=
	b_u
	=
	\frac{\max\{a_u, b_u\}}{2}
	\enspace,
\]
where the second equality holds since the inequality $a_u + b_u \geq 0$ proved above and our assumptions that $a_u = 0$ and $a_u \geq b_u$ yield together $a_u = b_u = 0$.
It remains to consider the case in which $b_u \leq 0$ and $a_u \neq 0$. Note that the inequality $a_u + b_u \geq 0$ proved above implies that in this case we have $a_u > 0$, and thus, $r_u = 1$ and
\[
	a_ur_u + b_u(1 - r_u)
	=
	a_u
	\geq
	\frac{\max\{a_u, b_u\}}{2}
	\enspace.
	\qedhere
\]
\end{proof}

Intuitively, the following lemma holds because the way in which $\delta$ is chosen by {\Update} guarantees that there exists a value $\eta$ which is small, either in absolute terms or compared to $\delta$, such that the derivative of $F(x + \delta' r) + F(y - \delta'(\characteristic_\cN - r))$ as a function of $\delta'$ is almost $ar + b(\characteristic_\cN - r)$ for every $\delta' \in [0, \delta - \eta]$.
\begin{lemma} \label{lem:gain_two_sided}
$[F(x') + F(y')] - [F(x) + F(y)] \geq (1 - 3\eps)\delta \cdot \sum_{u \in U^+} [a_u r_u + b_u(1 - r_u)] - \delta\gamma - 2\eps^2 \cdot \characteristic_\cN[\nabla F(x) - \nabla F(y)]$.
\end{lemma}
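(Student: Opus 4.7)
By the fundamental theorem of calculus,
\[
  [F(x') + F(y')] - [F(x) + F(y)] \;=\; \int_0^\delta g(t)\,dt,
\]
where $g(t) := r \cdot \nabla F(x + tr) - (\characteristic_\cN - r) \cdot \nabla F(y - t(\characteristic_\cN - r))$. Since $r, \characteristic_\cN - r \geq 0$ by Observation~\ref{obs:basic_update}, submodularity makes $g$ non-increasing in $t$, and $g(0) = ar + b(\characteristic_\cN - r) =: S$. I split $[0, \delta]$ at $\eta$, defined as the largest element of $\{\eps^2(1+\eps)^j : j \geq 0\}$ strictly below $\delta$, with $\eta = 0$ if no such element exists (which forces $\delta \leq \eps^2$). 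When $\eta > 0$, maximality makes the next series value $\eta(1+\eps)$ at least $\delta$, so $\delta - \eta \leq \delta\eps/(1+\eps)$.

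On $[0, \eta]$, the minimality of $\delta$ in Line~\ref{line:delta_finding} gives $g(\eta) > S - \gamma$ whenever $\eta > 0$, and monotonicity of $g$ lifts this to $g(t) \geq S - \gamma$ for every $t \in [0, \eta]$. On $[\eta, \delta]$ only a much cruder bound is available: the inclusions $x + tr \leq y$ and $y - t(\characteristic_\cN - r) \geq x$ combined with submodularity yield $\nabla F(x + tr) \geq -b$ and $\nabla F(y - t(\characteristic_\cN - r)) \leq a$; substituting these into $g(t)$ and simplifying gives $g(t) \geq -r \cdot b - (\characteristic_\cN - r) \cdot a = S - V$, where $V := \characteristic_\cN[\nabla F(x) - \nabla F(y)]$. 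Integrating piecewise,
\[
  \int_0^\delta g(t)\,dt \;\geq\; \eta(S - \gamma) + (\delta - \eta)(S - V) \;\geq\; \delta S - \delta\gamma - (\delta - \eta) V.
\]

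It remains to turn $\delta S$ into $(1 - 3\eps) \delta S^+$ while absorbing $(\delta - \eta) V$ into $2\eps^2 V$. Let $V^+ := \sum_{u \in U^+}(a_u + b_u)$. Two estimates suffice: for $u \in U^+$, direct algebra gives $a_u r_u + b_u(1 - r_u) = (a_u^2 + b_u^2)/(a_u + b_u) \geq (a_u + b_u)/2$, hence $S^+ \geq V^+/2$; and for $u \notin U^+$, the algorithm sets $r_u \in \{0, 1\}$, so a short case check using $a_u + b_u \geq 0$ (from submodularity) shows $a_u r_u + b_u(1 - r_u) \geq a_u + b_u$, whence $S - S^+ \geq V - V^+$. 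Combined with $V^+ \leq V$, these yield $S - (1-3\eps) S^+ = (S - S^+) + 3\eps S^+ \geq (V - V^+) + (3\eps/2) V^+ \geq (3\eps/2) V$. A case analysis on $\delta - \eta$ now closes the proof: if $\delta - \eta \leq 2\eps^2$ (which is automatic when $\eta = 0$, since then $\delta \leq \eps^2$), the penalty is directly absorbed by $2\eps^2 V$; otherwise $\delta - \eta \leq \delta\eps/(1+\eps) \leq \delta\eps$ and the chain $\delta \cdot (3\eps/2) V \geq \delta \eps V \geq (\delta - \eta) V \geq [(\delta - \eta) - 2\eps^2] V$ yields the required inequality. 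The main obstacle, as I see it, is calibrating the split point $\eta$: placed too close to $\delta$ the $g(\eta) > S - \gamma$ bound fails, and placed too far the crude $(\delta - \eta) V$ penalty cannot be absorbed—the discrete geometric step size in Line~\ref{line:delta_finding} is tuned precisely to make this trade-off work, after which everything else is algebra.
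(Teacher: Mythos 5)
Your proof is correct and takes essentially the same approach as the paper: you split $\int_0^\delta g(t)\,dt$ at the largest geometric grid point $\eta$ strictly below $\delta$ (the paper's $\delta(\lceil j^*\rceil-1)$), invoke the minimality of $\delta$ in Line~\ref{line:delta_finding} to bound $g$ from below on $[0,\eta]$, and use the submodularity-based crude bound $g(t) \geq S-V$ on $[\eta,\delta]$ (this is exactly the paper's integrand $r\cdot\nabla F(y) - (\characteristic_\cN - r)\cdot\nabla F(x)$ rewritten in your notation). The only differences are cosmetic: you route the closing algebra through $V^+$ and the chain $S-(1-3\eps)S^+ \geq (3\eps/2)V$ rather than the paper's intermediate $\sum_{u\in U^+}\max\{a_u,b_u\}$, and your final case split is on $\delta-\eta \lessgtr 2\eps^2$ rather than on $j^*=0$ versus $j^*>0$, but the two cases partition the same territory and each case is handled by the same ideas.
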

\begin{proof}
Let us define $\delta(j) = \eps^2 (1 + \eps)^j$ for every $j \geq 0$, and let us denote by $j^*$ the non-negative value for which $\delta = \delta(j^*)$---if such a value does not exist, which can happen only when $\delta = \Delta < \eps^2$, then we set $j^* = 0$. For convenience, we also define $\delta(-1) = 0$, which implies, in particular, that the inequality $\delta(\lceil j^* \rceil - 1) < \delta$ always holds. Using this notation and the chain rule, we get
\begin{align*}
	F(x') - F(x)
	={} &
	\int_0^\delta \frac{dF(x + tr)}{dt} dt
	=
	\int_0^\delta r \cdot \nabla F(x + tr) dt\\
	={} &
	\int_0^{\delta(\lceil j^* \rceil - 1)} r \cdot \nabla F(x + tr) dt
	+
	\int_{\delta(\lceil j^* \rceil - 1)}^{\delta} r \cdot \nabla F(x + tr) dt\\
	\geq{} &
	\int_0^{\delta(\lceil j^* \rceil - 1)} r \cdot \nabla F(x + tr) dt
	+
	\int_{\delta(\lceil j^* \rceil - 1)}^{\delta} r \cdot \nabla F(y) dt
	\enspace,
\end{align*}
where the inequality holds by the submodularity of $f$ since $x + tr \leq x' = y' - \Delta' \cdot \characteristic_\cN \leq y' \leq y$. We note that the range of the second integral on the rightmost side of the last inequality corresponds to the range $[\delta - \eta, \delta]$ from the intuition given before the lemma.

Using an analogous argument we can also get
\[
	F(y') - F(y)
	\geq
	- \int_0^{\delta(\lceil j^* \rceil - 1)} (\characteristic_\cN - r) \cdot \nabla F(y - t(\characteristic_\cN - r)) dt
	- \int_{\delta(\lceil j^* \rceil - 1)}^{\delta} (\characteristic_\cN - r) \cdot \nabla F(x) dt
	\enspace.
\]
Adding this inequality to the previous one yields
\begin{align} \label{eq:increase_bound_two_sided_x}\nonumber
	[F(x') + F(y')] - [F(x) + F(y)]
	\geq{} &
	\int_0^{\delta(\lceil j^* \rceil - 1)} [r \cdot \nabla F(x + tr) - (\characteristic_\cN - r) \cdot \nabla F(y - t(\characteristic_\cN - r))] dt\\
	&+
	\int_{\delta(\lceil j^* \rceil - 1)}^{\delta} [r \cdot \nabla F(y) - (\characteristic_\cN - r) \cdot \nabla F(x)] dt
	\enspace.
\end{align}

Our next objective is to lower bound the second integral on the right hand side of Inequality~\eqref{eq:increase_bound_two_sided_x}. Towards this goal, we need to prove that for every element $u \in \cN \setminus U^+$ we have
\begin{equation} \label{eq:out_of_U}
	r_u \cdot \partial_u F(y) - (1 - r_u) \cdot \partial_u F(x) \geq 0
	\enspace.
\end{equation}
To see that this is the case, note that every element $u \in \cN \setminus U^+$ must fall into one of the following two cases. The first case is $a_u = \partial_u F(x) > 0$ and $b_u = -\partial_u F(y) \leq 0$, which imply $r_u = 1$, and thus, $r_u \cdot \partial_u F(y) - (1 - r_u) \cdot \partial_u F(x) = \partial_u F(y) \geq 0$; and the other case is $a_u = \partial_u F(x) \leq 0$, which imply $r_u = 0$, and thus, $r_u \cdot \partial_u F(y) - (1 - r_u) \cdot \partial_u F(x) = -\partial_u F(x) \geq 0$. Now, that we have proved Inequality~\eqref{eq:out_of_U}, we are ready to lower bound the second integral on the right hand side of Inequality~\eqref{eq:increase_bound_two_sided_x}.
\begin{align*}
	\int_{\delta(\lceil j^* \rceil - 1)}^{\delta} [r \cdot \nabla F(y) - (\characteristic_\cN - r) \cdot \nabla F&(x)] dt
	\geq
	-\int_{\delta(\lceil j^* \rceil - 1)}^{\delta} \sum_{u \in U^+} \max\{ \partial_u F(x), -\partial_u F(y)\} dt\\
	\geq{} &
	-[\delta(\lceil j^*\rceil) - \delta(\lceil j^*\rceil - 1)] \cdot \sum_{u \in U^+} \max\{ \partial_u F(x), -\partial_u F(y)\}
	\enspace.
\end{align*}
where the first inequality follows from Inequality~\eqref{eq:out_of_U} and the second inequality holds since $\delta(\lceil j^*\rceil) \geq \delta$.

Next, we lower bound the first integral on the right hand side of Inequality~\eqref{eq:increase_bound_two_sided_x}. The definitions of $j^*$ and $\delta$ guarantee that for every $0 \leq t < \delta(\lceil j^* \rceil - 1)$ it holds that
\[
	r \cdot \nabla F(x + t r) - (\characteristic_\cN - r) \cdot \nabla F(y - t(\characteristic_\cN - r))
	\geq
	a r + b(\characteristic_\cN - r) - \gamma
	\enspace,
\]
and thus,
\[
	\int_0^{\delta(\lceil j^* \rceil - 1)} [r \cdot \nabla F(x + tr) - (\characteristic_\cN - r) \cdot \nabla F(y - t(\characteristic_\cN - r))] dt
	\geq
	\delta(\lceil j^* \rceil - 1) \cdot [a r + b(\characteristic_\cN - r) - \gamma]
	\enspace.
\]
Plugging all the bounds we have proved back into Inequality~\eqref{eq:increase_bound_two_sided_x} gives us
\begin{align*}
	[F(x') + F(y')] - [F(x) + F(y)]
	\geq{} &
	\delta(\lceil j^* \rceil - 1) \cdot \left[a r + b(\characteristic_\cN - r) - \gamma\right] \\&-[\delta(\lceil j^*\rceil) - \delta(\lceil j^*\rceil - 1)] \cdot \sum_{u \in U^+} \max\{ \partial_u F(x), -\partial_u F(y)\}
	\enspace.
\end{align*}

There are now two cases to consider. If $j^* = 0$, then $\delta(\lceil j^* \rceil - 1) = \delta(-1) = 0$ and $\delta \leq \delta(0) = \eps^2$ (to see why the last inequality holds, recall that $j^* = 0$ can happen only when $\delta = \delta(0)$ or $\delta = \Delta < \delta(0)$), which yields
\begin{align*}
	[F(x') + F(y')] - [&F(x) + F(y)]
	\geq
	- \delta(0) \cdot \sum_{u \in U^+} \max \{ \partial_u F(x), -\partial_u F(y)\}\\
	\geq{} &
	(1 - 3\eps)\delta \cdot \sum_{u \in U^+} \max \{ \partial_u F(x), -\partial_u F(y)\} - 2 \delta(0) \cdot \sum_{u \in U^+} \max\{ \partial_u F(x), -\partial_u F(y)\}\\
	\geq{} &
	(1 - 3\eps)\delta \cdot \sum_{u \in U^+} \max \{ a_u, b_u\} - 2\eps^2 \cdot \sum_{u \in U^+} [\partial_u F(x) -\partial_u F(y)]\\
	\geq{} &
	(1 - 3\eps)\delta \cdot \sum_{u \in U^+} [a_u r_u + b_u(1 - r_u)] - 2\eps^2 \cdot \characteristic_\cN[\nabla F(x) - \nabla F(y)]
	\enspace,
\end{align*}
where the second and last inequalities hold since the submodularity of $f$ and the inequality $y = x + \Delta \cdot \characteristic_\cN \geq x$ imply that for every element $u \in \cN$ we have $\partial_u F(x) - \partial_u F(y) \geq \partial_u F(x) - \partial_u F(x) = 0$; and the third inequality holds due to the definition of $U^+$ and the fact that the sum of two non-negative numbers always upper bounds their maximum.
Consider now the case of $j^* > 0$. In this case $\delta(\lceil j^*\rceil) - \delta(\lceil j^* \rceil - 1) = \eps \cdot \delta(\lceil j^* \rceil - 1)$ and $\delta = \delta(j^*) \geq \delta(\lceil j^* \rceil - 1) \geq (1 + \eps)^{-1} \cdot \delta(j^*) = (1 + \eps)^{-1}\delta$. Thus, 
\begin{align*}
	[F(&x') + F(y')] - [F(x) + F(y)]\\
	\geq{} &
	\delta(\lceil j^* \rceil - 1) \cdot \left[a r + b(\characteristic_\cN - r) - \gamma - \eps \cdot \sum_{u \in U^+} \max\{\partial_u F(x), -\partial_u F(y)\}\right]\\
	\geq{} &
	\delta(\lceil j^* \rceil - 1) \cdot (1 - 2\eps) \cdot \sum_{u \in U^+} [a_u r_u + b_u(1 - r_u)] - \delta(\lceil j^* \rceil - 1) \cdot \gamma\\
	\geq{} &
	(1 + \eps)^{-1} \delta \cdot (1 - 2\eps) \cdot \sum_{u \in U^+} [a_u r_u + b_u(1 - r_u)] - \delta \gamma
	\geq
	(1 - 3\eps)\delta \cdot \sum_{u \in U^+} [a_u r_u + b_u(1 - r_u)] - \delta \gamma
	\enspace,
\end{align*}
where the last three inequalities hold since the inequalities $a_ur_u + b_u(1 - r_u) \geq \max\{a_u, b_u\}/2 = \max\{\partial_u F(x), -\partial_u F(y)\}/2$ and $a_ur_u + b_u(1 - r_u) \geq 0$ hold for every element $u \in \cN$ by Observation~\ref{obs:gain_bound}.
The lemma now follows because in both the case of $j^* > 0$ and the case of $j^* = 0$ we got a guarantee that is at least as strong as the guarantee of the lemma.
\end{proof}

The next corollary completes the proof of part~\eqref{item:gain_loss} of Proposition~\ref{prop:step}. Its proof combines the guarnatees of Lemmata~\ref{lem:loss_two_sided} and~\ref{lem:gain_two_sided}.
\begin{corollary} \label{cor:iteration_conclusion_step}
\begin{align*}
	[F(x') + F(y')] - [F(x) + F(y)]
	\geq{} &
	2(1 - 3\eps) \cdot [F(OPT(x, y)) - F(OPT(x', y'))] - \gamma(\Delta - \Delta') \\&- 2\eps^2 \cdot \characteristic_\cN[\nabla F(x) - \nabla F(y)]
	\enspace.
\end{align*}
\end{corollary}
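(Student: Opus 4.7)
\medskip\noindent\textbf{Plan for Corollary~\ref{cor:iteration_conclusion_step}.} The plan is to chain together Lemmata~\ref{lem:loss_two_sided} and~\ref{lem:gain_two_sided} through a simple per-element algebraic inequality. Specifically, I would start from the lower bound of Lemma~\ref{lem:gain_two_sided} and replace the ``gain'' sum $\sum_{u\in U^+}[a_u r_u + b_u(1-r_u)]$ by (twice) the ``loss'' sum $\sum_{u\in U^+}\max\{b_u r_u, a_u(1-r_u)\}$ from Lemma~\ref{lem:loss_two_sided}. Finally I would rewrite $\delta\gamma$ as $\gamma(\Delta - \Delta')$ using the identity $\delta = \Delta - \Delta'$ that is immediate from the definition $\Delta' = \Delta - \delta$ in Algorithm~\ref{alg:update}.

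\medskip\noindent The one non-trivial step is the pointwise inequality
\[
    a_u r_u + b_u(1 - r_u) \;\geq\; 2\max\{b_u r_u,\, a_u(1 - r_u)\}
    \qquad \text{for every } u \in U^+.
\]
For $u \in U^+$ we have $a_u, b_u > 0$ and $r_u = a_u/(a_u+b_u)$, so a direct substitution gives
$a_u r_u + b_u(1-r_u) = (a_u^2 + b_u^2)/(a_u+b_u)$ while $b_u r_u = a_u(1-r_u) = a_u b_u/(a_u+b_u)$,
and the claim reduces to the identity $(a_u - b_u)^2 \geq 0$. This is the only genuinely new ingredient of the proof; everything else is bookkeeping.

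\medskip\noindent Combining the three pieces yields
\begin{align*}
    [F(x') + F(y')] - [F(x) + F(y)]
    &\geq (1 - 3\eps)\delta \sum_{u \in U^+}[a_u r_u + b_u(1-r_u)] - \delta\gamma - 2\eps^2 \characteristic_\cN[\nabla F(x) - \nabla F(y)] \\
    &\geq 2(1 - 3\eps)\delta \sum_{u \in U^+}\max\{b_u r_u, a_u(1-r_u)\} - \delta\gamma - 2\eps^2 \characteristic_\cN[\nabla F(x) - \nabla F(y)] \\
    &\geq 2(1 - 3\eps)[F(OPT(x,y)) - F(OPT(x',y'))] - \gamma(\Delta-\Delta') - 2\eps^2 \characteristic_\cN[\nabla F(x) - \nabla F(y)],
\end{align*}
where the first step is Lemma~\ref{lem:gain_two_sided}, the second uses the pointwise inequality above (applied termwise and summed over $U^+$), and the third uses Lemma~\ref{lem:loss_two_sided} together with $\delta = \Delta - \Delta'$. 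I expect no real obstacle here: the main work was packed into Lemmata~\ref{lem:loss_two_sided} and~\ref{lem:gain_two_sided}, and the corollary is essentially a one-line merge once the $(a_u-b_u)^2\geq 0$ identity is observed.
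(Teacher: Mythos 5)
Your proposal is correct and follows essentially the same route as the paper: chain Lemma~\ref{lem:gain_two_sided} with Lemma~\ref{lem:loss_two_sided} via the pointwise inequality $a_ur_u + b_u(1-r_u) \geq 2\max\{b_ur_u, a_u(1-r_u)\}$ and the identity $\delta = \Delta - \Delta'$. The only (cosmetic) difference is that the paper proves the pointwise inequality for all $u \in \cN$ by a three-case analysis on $r_u$, whereas you observe it is only needed for $u \in U^+$, where the $(a_u-b_u)^2 \geq 0$ identity suffices.
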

\begin{proof}
Our first objective is to show that for every element $u \in \cN$ we have
\begin{equation} \label{eq:basic_inequality}
	2 \cdot \max\{b_u r_u, a_u(1 - r_u)\} \leq a_u r_u + b_u(1 - r_u)
	\enspace.
\end{equation}
There are three cases to consider: $r_u = 0$, $r_u = 1$ and $r_u \in (0, 1)$. Let us consider first the case $r_u = 0$. In this case $a_u$ must be non-positive, which reduces Inequality~\eqref{eq:basic_inequality} to $0 \leq b_u$. To see that the last inequality is true, observe that the submodularity of $f$ and that fact that $y = x + \Delta \cdot \characteristic_\cN \geq x$ imply together $b_u = - \partial_u F(y) \geq - \partial_u F(x) = -a_u \geq 0$. Consider next the case that $r_u = 1$. In this case $b_u$ must be non-positive and $a_u$ must be positive, which guarantees that Inequality~\eqref{eq:basic_inequality} holds. It remains to consider the case that $r_u \in (0, 1)$, which implies that $r_u = a_u / (a_u + b_u)$, and thus,
\[
	a_u r_u + b_u(1 - r_u)
	=
	\frac{(a_u)^2 + (b_u)^2}{a_u + b_u}
	\geq
	\frac{2a_ub_u}{a_u + b_u}
	\enspace.
\]
The last inequality implies Inequality~\eqref{eq:basic_inequality}  since $b_u r_u = a_u b_u / (a_u + b_u) = a_u (1 - r_u)$, which completes the proof that Inequality~\eqref{eq:basic_inequality} holds for every $u \in \cN$.

We are now ready to prove the corollary. Observe that
\begin{align*}
	F(OPT(x, y)) - F&(OPT(x', y'))
	\leq
	\delta \cdot \sum_{u \in U^+} \max\{b_ur_u, a_u(1 - r_u)\}
	\leq
	\frac{\delta}{2} \cdot \sum_{u \in U^+} [a_u r_u + b_u(1 - r_u)]\\
	\leq{} &
	\frac{[F(x') + F(y')] - [F(x) + F(y)] + \gamma\delta + 2\eps^2 \cdot \characteristic_\cN[\nabla F(x) - \nabla F(y)] }{2(1 - 3\eps)}\\
	={} &
	\frac{[F(x') + F(y')] - [F(x) + F(y)] + \gamma(\Delta - \Delta') + 2\eps^2 \cdot \characteristic_\cN[\nabla F(x) - \nabla F(y)] }{2(1 - 3\eps)}
	\enspace,
\end{align*}
where the first inequality holds by Lemma~\ref{lem:loss_two_sided}, the second by Inequality~\eqref{eq:basic_inequality}, the third holds by Lemma~\ref{lem:gain_two_sided} and the equality holds by the definition of $\Delta'$. The corollary now follows by rearranging this inequality.
\end{proof}

To complete the proof of Proposition~\ref{prop:step} it remains to upper bound the adaptivity of Algorithm~\ref{alg:update} and the number of value oracle queries used by this algorithm.
\begin{lemma}
Algorithm~\ref{alg:update} has constant adaptivity and uses $O(n\eps^{-1}\log \eps^{-1})$ value oracle queries to $F$.
\end{lemma}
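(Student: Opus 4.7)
The plan is to bound separately the work done by the two oracle-intensive phases of Algorithm~\ref{alg:update} and then combine them.

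First I would handle the initial computation of $a = \nabla F(x)$ and $b = -\nabla F(y)$. By the multilinear identity $\partial_u F(z) = F(z \vee \characteristic_{\{u\}}) - F(z \wedge \characteristic_{\cN \setminus \{u\}})$ recorded in Section~\ref{sec:preliminaries}, each of the $2n$ partial derivatives needed to assemble these two gradients can be obtained with two value oracle queries; since these queries are mutually independent, they can all be issued in a single adaptive round at a total cost of $O(n)$ queries. Computing the rates $r_u$ from $a$ and $b$ requires no further oracle access.

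Second I would analyze the search for $\delta$ on line~\ref{line:delta_finding}. The candidate set $\{\eps^2(1 + \eps)^j : j \in \bZ_{\geq 0}\} \cap [0, \Delta)$ has size at most $\lceil \log_{1+\eps}(\eps^{-2}) \rceil = O(\eps^{-1}\log \eps^{-1})$, using $\Delta \leq 1$ and the estimate $\ln(1 + \eps) \geq \eps/2$ for $\eps \in (0, \nicefrac{1}{3})$. For each such candidate $\delta$, the test inequality involves the two gradients $\nabla F(x + \delta r)$ and $\nabla F(y - \delta(\characteristic_\cN - r))$, costing another $O(n)$ queries. The crucial point is that every one of these tests references only quantities that are already fixed after the first phase (namely $x$, $y$ and $r$), so the queries for all candidate $\delta$ values are mutually independent and can be issued in parallel in one further adaptive round; the total cost of this phase is therefore $O(n\eps^{-1}\log \eps^{-1})$ queries, and the minimum $\delta$ that passes the test (if any) can then be identified locally.

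Adding the two phases yields constant adaptivity (two rounds) and $O(n\eps^{-1}\log \eps^{-1})$ value oracle queries in total, matching the claim. There is no serious obstacle here: the argument is essentially a careful accounting exercise, and the one substantive point to verify is the parallelizability of the $\delta$-search, which follows from the observation above that the test on line~\ref{line:delta_finding} for each candidate depends only on data computed in the first phase.
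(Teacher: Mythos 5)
Your argument is correct and matches the paper's proof essentially step for step: both bound the first phase (computing $a$ and $b$) by $O(n)$ queries in one round, and the second phase (the $\delta$-search on Line~\ref{line:delta_finding}) by $O(\eps^{-1}\log\eps^{-1})$ candidate values of $O(n)$ queries each, all executable in a single further round since the candidates depend only on $x$, $y$, $r$. Your observation about mutual independence of the candidate tests is exactly the (implicit) justification the paper relies on for constant adaptivity.
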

\begin{proof}
Observe that all the value oracle queries used by {\Update} can be made in two parallel steps. One step for computing $a$ and $b$, which requires $4n$ value oracle queries to $F$, and one additional step for determining $\delta$. This already proves that Algorithm~\ref{alg:update} has constant adaptivity. To prove the other part of the lemma, we still need to show that $\delta$ can be determined using $O(n\eps^{-1}\ln (n / \eps))$ value oracle queries.

Algorithm~\ref{alg:update} determines $\delta$ by evaluating the inequality appearing on Line~\ref{line:delta_finding} in it for at most
\[
	\lceil \ln_{1 + \eps} (\eps^{-2}) \rceil
	\leq
	1 + \frac{\ln \eps^{-2}}{\ln(1 + \eps)}
	\leq
	1 + \frac{\ln \eps^{-1}}{\eps / 2}
	=
	1 + 2\eps^{-1} \ln \eps^{-1}
	=
	O(\eps^{-1} \log \eps^{-1})
\]
different possible $\delta$ values, where the second inequality holds since $\ln (1 + z) \geq z/2$ for every $z \in [0, 1]$. Each evaluation requires $4n$ value oracle queries, and thus, all the evaluations together require only $O(n\eps^{-1}\log \eps^{-1})$ queries, as promised.
\end{proof}

\subsection{The Procedure {\PreProcess}} \label{ssc:pre-process}

In this section we describe the promised procedure {\PreProcess} and prove that it indeed obeys all the properties guaranteed by Proposition~\ref{prop:pre-process}. Let us begin by recalling Proposition~\ref{prop:pre-process}.
\begin{repproposition}{prop:pre-process}
\preProcessProp
\end{repproposition}

The procedure {\PreProcess} itself appears as Algorithm~\ref{alg:pre-process}. To understand this procedure, consider the expression $F(t \cdot \characteristic_\cN) + F((1 - t) \cdot \characteristic_\cN)$. The derivative of this expression by $t$ is $\characteristic_\cN \cdot [\nabla F(t \cdot \characteristic_\cN) - \nabla F((1 - t) \cdot \characteristic_\cN)]$. Using this observation, we get that {\PreProcess} sets $x = t \cdot \characteristic_\cN$ and $y = (1 - t) \cdot \characteristic_\cN$ for (roughly) the first $t$ for which the derivative becomes upper bounded by $16\tau$. This has two advantages, first that $\characteristic_\cN \cdot [\nabla F(x) - \nabla F(y)]$ is small, which is one of the things we need to guarantee, and second, that $F(x) + F(y)$ is at least (roughly) $16\tau \cdot t \geq 4t \cdot f(OPT)$, which is much larger than the loss that $OPT(x, y)$ can suffer due to an $x$ whose coordinates are all only $t$ and a $y$ whose coordinates are all as close to $1$ as $1 - t$. 
\begin{algorithm}[ht]
\caption{$\PreProcess(\tau)$} \label{alg:pre-process}
Let $\delta$ be the minimum value in $\{\delta' \in [\eps, 1/2) \mid \exists_{j \in \bZ, j \geq 1}\; \delta' = \eps j \}$ for which
\[
	\characteristic_\cN \cdot [\nabla F(\delta \cdot \characteristic_\cN) - \nabla F((1 - \delta) \cdot \characteristic_\cN)]
	\leq
	16\tau
	\enspace.
\]
If there is no such value, we set $\delta = 1/2$.\label{line:delta_finding_x}\\
Let $x \gets \delta \cdot \characteristic_\cN$, $y \gets (1 - \delta) \cdot \characteristic_\cN$ and  $\Delta = 1 - 2\delta$.\\
\Return{$(x, y, \Delta)$}.
\end{algorithm}

We begin the analysis of Algorithm~\ref{alg:pre-process} with the following two quite straightforward observations which prove that this algorithm obeys parts~\eqref{item:diff_delta_pre} and~\eqref{item:potential_start} of Proposition~\ref{prop:pre-process}.

\begin{observation}\label{obs:diff_delta_pre}
$x$ and $y$ are both vectors in $[0, 1]^\cN$, $\Delta$ is a scalar in $[0, 1]$ and they obey $y - x = \Delta \cdot \characteristic_\cN$.
\end{observation}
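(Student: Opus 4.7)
The plan is to verify each of the three claims directly from the definitions made at the end of Algorithm~\ref{alg:pre-process}, so essentially no new idea is required beyond a careful reading. First I would establish the range of the scalar $\delta$ produced on Line~\ref{line:delta_finding_x}: by construction, $\delta$ is either a value in the discrete set $\{\eps j : j \in \bZ_{\geq 1}\} \cap [\eps, 1/2)$, or the fallback value $1/2$, so in all cases $\delta \in [\eps, 1/2]$. In particular $\delta \in [0,1]$ and $1 - \delta \in [1/2, 1-\eps] \subseteq [0,1]$, which immediately gives $x = \delta \cdot \characteristic_\cN \in [0,1]^\cN$ and $y = (1 - \delta) \cdot \characteristic_\cN \in [0,1]^\cN$.

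Next, $\Delta = 1 - 2\delta$, and since $\delta \in [\eps, 1/2]$ we have $\Delta \in [0, 1 - 2\eps] \subseteq [0, 1]$, as required. Finally, for the identity on the difference, a direct computation gives
\[
    y - x
    =
    (1 - \delta) \cdot \characteristic_\cN - \delta \cdot \characteristic_\cN
    =
    (1 - 2\delta) \cdot \characteristic_\cN
    =
    \Delta \cdot \characteristic_\cN,
\]
completing the observation. There is no real obstacle here; the only thing to be slightly careful about is the fallback case of Line~\ref{line:delta_finding_x} (which is why we need to include the endpoint $\delta = 1/2$ in the range, even though the search range in the algorithm is half-open at $1/2$).
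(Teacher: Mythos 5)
Your proof is correct and follows essentially the same route as the paper's: establish $\delta \in [\eps, 1/2]$ from Line~\ref{line:delta_finding_x}, deduce the ranges of $x$, $y$, and $\Delta$, and then verify $y - x = \Delta \cdot \characteristic_\cN$ by direct substitution. The only cosmetic difference is that the paper writes the final identity as $y = (\Delta + \delta)\cdot\characteristic_\cN = x + \Delta\cdot\characteristic_\cN$ rather than subtracting directly, which is immaterial.
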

\begin{proof}
The way in which Algorithm~\ref{alg:pre-process} choose a value for $\delta$ guarantees that $\delta \in [\eps, 1/2] \subseteq [0, 1]$. Thus, $x = \delta \cdot \characteristic_\cN \in [0, 1]^\cN$, $y = (1 - \delta) \cdot \characteristic_\cN \in [0, 1]^\cN$ and
\[
	\Delta
	= 1 - 2\delta
	\in
	[1 - 2\cdot (1/2), 1 - 2\eps]
	\subseteq
	[0, 1]
	\enspace.
\]

To complete the proof of the observation, note that
\[
	y
	=
	(1 - \delta) \cdot \characteristic_\cN
	=
	(\Delta + \delta) \cdot \characteristic_\cN
	=
	x + \Delta \cdot \characteristic_\cN
	\enspace.
	\qedhere
\]
\end{proof}

\begin{observation}\label{obs:potential_start}
If $\Delta > 0$, then $\characteristic_\cN [\nabla F(x) - \nabla F(y)] \leq 16\tau$.
\end{observation}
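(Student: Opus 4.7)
The plan is to observe that this is essentially a direct unpacking of the definition of $\delta$ in Algorithm~\ref{alg:pre-process}. The procedure sets $\delta$ either to a value in the discrete set $\{\eps j : j \in \bZ, j \geq 1\} \cap [\eps, 1/2)$ witnessing the stated inequality, or, failing that, to the default value $1/2$. Since $\Delta = 1 - 2\delta$, the assumption $\Delta > 0$ forces $\delta < 1/2$, which rules out the default case.

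Consequently, $\delta$ must have been chosen from the discrete set as a value for which
\[
	\characteristic_\cN \cdot [\nabla F(\delta \cdot \characteristic_\cN) - \nabla F((1 - \delta) \cdot \characteristic_\cN)]
	\leq
	16\tau
\]
actually holds. Then I would simply substitute the definitions $x = \delta \cdot \characteristic_\cN$ and $y = (1 - \delta) \cdot \characteristic_\cN$ into the left-hand side to conclude $\characteristic_\cN [\nabla F(x) - \nabla F(y)] \leq 16\tau$, which is exactly the desired bound. There is no real obstacle here: the observation is purely a bookkeeping statement about Algorithm~\ref{alg:pre-process}'s search procedure. The only thing to be careful about is to note explicitly that the default branch $\delta = 1/2$ is incompatible with $\Delta > 0$, so that one is entitled to use the inequality that defines the search.
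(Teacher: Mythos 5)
Your proof is correct and matches the paper's argument exactly: both note that $\Delta > 0$ forces $\delta < 1/2$ (ruling out the default branch), and then read off the inequality from the selection criterion on Line~\ref{line:delta_finding_x} after substituting $x = \delta \cdot \characteristic_\cN$ and $y = (1-\delta)\cdot\characteristic_\cN$.
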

\begin{proof}
Note that $\Delta = 0$ whenever $\delta = 1/2$. Thus, we only need to consider the case in which $\delta < 1/2$. We observe that in this case the way in which $\delta$ is picked by Algorithm~\ref{alg:pre-process} guarantees that
\[
	\characteristic_\cN \cdot [\nabla F(x) - \nabla F(y)]
	=
	\characteristic_\cN \cdot [\nabla F(\delta \cdot \characteristic_\cN) - \nabla F((1 - \delta) \cdot \characteristic_\cN)]
	\leq
	16\tau
	\enspace.
	\qedhere
\]
\end{proof}

Our next objective is to prove part~\eqref{item:gain_loss_pre} of Proposition~\ref{prop:pre-process}. The following lemma lower bounds the term $OPT(x, y)$ that appears in this part.
\begin{lemma} \label{lem:loss_pre}
$F(OPT(x, y)) \geq \Delta \cdot f(OPT)$.
\end{lemma}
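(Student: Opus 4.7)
First I would unfold the definition of $OPT(x,y)$ into an explicit vector. Since $\delta \in [\eps, 1/2)$ whenever $\Delta > 0$ (the case $\Delta = 0$ trivializes the lemma via non-negativity of $F$), a coordinate-wise calculation using $x = \delta \cdot \characteristic_\cN$ and $y = (1 - \delta) \cdot \characteristic_\cN$ gives
\[
	OPT(x, y) = (\characteristic_{OPT} \vee (\delta \cdot \characteristic_\cN)) \wedge ((1 - \delta) \cdot \characteristic_\cN) = (1 - \delta) \cdot \characteristic_{OPT} + \delta \cdot \characteristic_{\cN \setminus OPT}.
\]

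The main tool I would invoke is the concavity of $F$ along any non-negative direction, which is a direct corollary of the observation already stated in Section~\ref{sec:preliminaries}. Concretely, for $a \in [0, 1]^\cN$ and any direction $d \in \nnR^\cN$ with $d \geq \characteristic_\varnothing$, the function $\psi(t) = F(a + t d)$ has derivative $\psi'(t) = d \cdot \nabla F(a + t d)$; since $a + td$ is coordinate-wise non-decreasing in $t$, the observation yields that $\nabla F(a + td)$ is coordinate-wise non-increasing in $t$, and because $d \geq \characteristic_\varnothing$ this forces $\psi'$ to be non-increasing, i.e., $\psi$ is concave on the range where $a + td \in [0, 1]^\cN$.

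I would then apply this concavity twice along axis-aligned directions. First, along $d = \characteristic_{OPT}$ starting from $a = \characteristic_\varnothing$, evaluating at $t = 1 - \delta$ and using $f(\varnothing) \geq 0$, gives
\[
	F((1 - \delta) \cdot \characteristic_{OPT}) \geq (1 - \delta) \cdot F(\characteristic_{OPT}) + \delta \cdot F(\characteristic_\varnothing) \geq (1 - \delta) \cdot f(OPT).
\]
Second, along $d = \characteristic_{\cN \setminus OPT}$ starting from $a = (1 - \delta) \cdot \characteristic_{OPT}$, evaluating at $t = \delta$ and using the non-negativity of $F$, gives
\[
	F(OPT(x, y)) \geq (1 - \delta) \cdot F((1 - \delta) \cdot \characteristic_{OPT}) + \delta \cdot F((1 - \delta) \cdot \characteristic_{OPT} + \characteristic_{\cN \setminus OPT}) \geq (1 - \delta)^2 \cdot f(OPT).
\]
The lemma then follows from the elementary inequality $(1 - \delta)^2 = 1 - 2\delta + \delta^2 \geq 1 - 2\delta = \Delta$. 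This proof is essentially mechanical once the explicit form of $OPT(x,y)$ is in hand; the one point requiring mild care is verifying that the two intermediate vectors remain in $[0, 1]^\cN$ so that the concavity step is valid throughout, which is immediate here because $\delta < 1/2$ and $OPT$ and $\cN \setminus OPT$ are disjoint.
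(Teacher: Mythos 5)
Your proof is correct, and it takes a genuinely different route from the paper's. The paper proves this lemma by invoking the Lov\'{a}sz extension: it uses the standard fact that $\hat f \leq F$ and then computes $\hat f(OPT(x,y)) = \int_0^1 f(T_\lambda(OPT(x,y)))\,d\lambda \geq \int_\delta^{1-\delta} f(OPT)\,d\lambda = (1-2\delta)\cdot f(OPT)$, after observing that the threshold set equals $OPT$ on the interval $(\delta, 1-\delta]$. You instead use only the DR-submodularity property $\nabla F(x)\geq\nabla F(y)$ for $x\leq y$ to argue that $F$ is concave along non-negative directions, and apply this concavity twice to peel off a $(1-\delta)$ factor each time, getting $F(OPT(x,y))\geq(1-\delta)^2\cdot f(OPT)\geq(1-2\delta)\cdot f(OPT)$. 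Your argument is essentially the one the paper itself uses for the DR-submodular generalization (Lemma~\ref{lem:DR_loss_pre} via Lemma~\ref{lem:DR_two_inequalities}, where the Lov\'{a}sz-extension tool is unavailable), so it has the virtue of unifying the set-function and continuous cases and relying only on the preliminaries already recorded in Section~\ref{sec:preliminaries}. The Lov\'{a}sz route is shorter when the tool is available and is the more standard idiom for multilinear extensions, but requires importing the definition of $\hat f$ and the fact that it lower bounds $F$. Both are sound; your version arguably gives a cleaner modular structure at the cost of a slightly longer calculation.
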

\begin{proof}
For every $\lambda \in [0, 1]$ and vector $z \in [0, 1]^\cN$ we define $T_\lambda(z) = \{u \in \cN \mid z_u \geq \lambda\}$. Using this notation, the Lov\'{a}sz extension $\hat{f}\colon [0, 1]^\cN \to \bR$ of a set function $f$ is defined as
\[
	\hat{f}(z)
	=
	\int_0^1 f(T_\lambda(z)) d\lambda
\]
for every vector $z \in [0, 1]^\cN$.

It is well known that the Lov\'{a}sz extension of a submodular function lower bounds the multilinear extension of the same function (see, \eg, \cite{V13}), and thus,
\begin{align*}
	F(OPT(x, y))
	\geq{} &
	\hat{f}(OPT(x, y))
	=
	\int_0^1 f(T_\lambda(OPT(x, y)) d\lambda\\
	\geq{} &
	\int_{\delta}^{1 - \delta} f(T_\lambda(OPT(x, y)) d\lambda
	=
	\int_{\delta}^{1 - \delta} f(OPT) d\lambda
	=
	(1 - 2\delta) \cdot f(OPT)
	\enspace,
\end{align*}
where the second inequality follows from the non-negativity of $f$ and the second equality holds since the $u$-coordinate of $OPT(x, y) = (\characteristic_{OPT} \vee (\delta \cdot \characteristic_\cN)) \wedge ((1 - \delta) \cdot \characteristic_\cN)$ is equal to $\delta$ if $u \not \in OPT$ and to $1 - \delta$ if $u \in OPT$.
\end{proof}

Next, we prove a lower bound on the sum $F(x) + F(y)$, which also appears in part~\eqref{item:gain_loss_pre} of Proposition~\ref{prop:pre-process}.
\begin{lemma} \label{lem:gain_pre}
$F(x) + F(y) \geq (\delta - \eps) \cdot 16\tau$.
\end{lemma}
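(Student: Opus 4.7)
The plan is to rewrite $F(x) + F(y)$ as an integral along the diagonal and combine the defining property of $\delta$ with the monotonicity of $\nabla F$ supplied by the Observation in Section~\ref{sec:preliminaries}. Define the scalar function $g(t) = F(t \cdot \characteristic_\cN) + F((1-t) \cdot \characteristic_\cN)$, so that $F(x) + F(y) = g(\delta)$. A direct chain-rule computation gives
\[
g'(t) = \characteristic_\cN \cdot [\nabla F(t \cdot \characteristic_\cN) - \nabla F((1 - t) \cdot \characteristic_\cN)],
\]
which is exactly the quantity thresholded on Line~\ref{line:delta_finding_x}. Two monotonicity properties follow from the Observation: first, for $t \in [0, 1/2]$ we have $t \cdot \characteristic_\cN \leq (1-t) \cdot \characteristic_\cN$, hence $g'(t) \geq 0$; second, as $t$ grows, $t \cdot \characteristic_\cN$ increases while $(1-t) \cdot \characteristic_\cN$ decreases, so $\nabla F(t \cdot \characteristic_\cN)$ drops coordinatewise and $\nabla F((1-t) \cdot \characteristic_\cN)$ rises, making $g'$ non-increasing on $[0, 1/2]$.

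I would then write $g(\delta) = g(0) + \int_0^{\delta} g'(t)\, dt$ and note $g(0) = f(\varnothing) + f(\cN) \geq 0$ by non-negativity of $f$, so it suffices to lower bound the integral by $(\delta - \eps) \cdot 16\tau$. Assume first $\delta > \eps$, and let $\delta^-$ denote the largest grid point strictly less than $\delta$. By the minimality in the definition of $\delta$, the threshold condition fails at $\delta^-$, i.e.\ $g'(\delta^-) > 16\tau$. The non-increasingness of $g'$ then gives $g'(t) \geq g'(\delta^-) > 16\tau$ for all $t \in [0, \delta^-]$, while $g'(t) \geq 0$ on $[\delta^-, \delta]$ by the first monotonicity property. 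Consequently,
\[
\int_0^{\delta} g'(t)\, dt \geq 16\tau \cdot \delta^- \geq 16\tau \cdot (\delta - \eps),
\]
which is exactly what is needed. The boundary case $\delta = \eps$ is trivial, since then $(\delta - \eps) \cdot 16\tau = 0$ and $F$ is non-negative.

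The only subtle point is verifying $\delta^- \geq \delta - \eps$ in both of the two ways $\delta$ can be produced by the algorithm. When $\delta = k\eps$ lies in the grid with $k \geq 2$, we can take $\delta^- = (k-1)\eps = \delta - \eps$. When the algorithm falls back to $\delta = 1/2$, the grid is spaced by $\eps$ and excludes values $\geq 1/2$, so the largest grid point $\delta^-$ satisfies $\delta^- + \eps \geq 1/2 = \delta$, again giving $\delta^- \geq \delta - \eps$. I expect this small discretization bookkeeping to be the only non-routine obstacle; once it is handled, the proof is a direct application of the two gradient-monotonicity facts together with the definition of $\delta$.
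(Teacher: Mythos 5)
Your proof is correct and takes essentially the same approach as the paper: both express $F(x)+F(y)$ as an integral of $\characteristic_\cN \cdot [\nabla F(t\characteristic_\cN) - \nabla F((1-t)\characteristic_\cN)]$ over $[0,\delta]$, use the choice of $\delta$ together with the coordinate-wise monotonicity of $\nabla F$ to lower-bound the integrand by $16\tau$ on $[0,\delta-\eps]$ and by $0$ on $[\delta-\eps,\delta]$, and discard the non-negative boundary term $f(\varnothing)+f(\cN)$. The only cosmetic difference is that you integrate the combined scalar function $g$ rather than $F(x)$ and $F(y)$ separately, and you spell out the grid bookkeeping for the $\delta=1/2$ fallback more explicitly than the paper does.
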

\begin{proof}
Using the chain rule we get
\begin{align*}
	F(x)
	={} &
	F(\delta \cdot \characteristic_\cN)
	=
	f(\varnothing) + \int_0^{\delta} \frac{dF(t \cdot \characteristic_\cN)}{dt} dt
	=
	f(\varnothing) + \int_0^{\delta} \characteristic_\cN \cdot \nabla F(t \cdot \characteristic_\cN) dt\\
	\geq{} &
	\int_0^{\delta} \characteristic_\cN \cdot \nabla F(t \cdot \characteristic_\cN) dt
	=
	\int_0^{\delta - \eps} \characteristic_\cN \cdot \nabla F(t \cdot \characteristic_\cN)dt
	+
	\int_{\delta - \eps}^{\delta} \characteristic_\cN \cdot \nabla F(t \cdot \characteristic_\cN) dt
	\enspace,
\end{align*}
where the inequality follows from the non-negativity of $f$. Using an analogous argument we can also get
\[
	F(y)
	\geq
	-
	\int_0^{\delta - \eps} \characteristic_\cN \cdot \nabla F((1 - t) \cdot \characteristic_\cN)dt
	-
	\int_{\delta - \eps}^{\delta} \characteristic_\cN \cdot \nabla F((1 - t) \cdot \characteristic_\cN) dt
	\enspace.
\]
Adding this inequality to the previous one yields
\begin{align*}
	F(x) + F(y)
	\geq{} &
	\int_0^{\delta - \eps} \characteristic_\cN \cdot [\nabla F(t \cdot \characteristic_\cN) - \nabla F((1 - t) \cdot \characteristic_\cN)] dt
	\\&+
	\int_{\delta - \eps}^{\delta} \characteristic_\cN \cdot [\nabla F(t \cdot \characteristic_\cN) - \nabla F((1 - t) \cdot \characteristic_\cN)] dt\\
	\geq{} &
	(\delta - \eps) \cdot 16\tau
	+
	\int_{\delta - \eps}^{\delta} \characteristic_\cN \cdot [\nabla F(t \cdot \characteristic_\cN) - \nabla F((1 - t) \cdot \characteristic_\cN)] dt
	\geq
	(\delta - \eps) \cdot 16\tau
	\enspace,
\end{align*}
where the second inequality holds since the choice of $\delta$ and the submodularity of $f$ guarantee that $\characteristic_\cN \cdot [\nabla F(t \cdot \characteristic_\cN) - \nabla F((1 - t) \cdot \characteristic_\cN)]$ is at least $16\tau$ for every $0 \leq t < \delta - \eps$, and the last inequality holds since the submodularity of $f$ and the fact that $\delta \in [\eps, 1/2]$ imply together $\characteristic_\cN \cdot [\nabla F(t \cdot \characteristic_\cN) - \nabla F((1 - t) \cdot \characteristic_\cN)] \geq 0$ for every $t \in [0, \delta]$.
\end{proof}

Combining the last two lemmata, we can now get part~\eqref{item:gain_loss_pre} 
of Proposition~\ref{prop:pre-process}.
\begin{corollary}\label{cor:lower_bound_Fx_Fy}
If $\tau \geq f(OPT)/4$, then $F(x) + F(y) \geq 2[f(OPT) - F(OPT(x, y))] - 4\eps \cdot f(OPT)$.
\end{corollary}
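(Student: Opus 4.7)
The plan is to combine Lemma~\ref{lem:loss_pre} and Lemma~\ref{lem:gain_pre} with the hypothesis $\tau \geq f(OPT)/4$; no new argument is needed. Since $\Delta = 1 - 2\delta$, Lemma~\ref{lem:loss_pre} yields $F(OPT(x,y)) \geq (1 - 2\delta) \cdot f(OPT)$, which upon rearrangement gives the upper bound
\[
	2\bigl[f(OPT) - F(OPT(x,y))\bigr] \leq 4\delta \cdot f(OPT) \enspace.
\]

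Next I would plug $\tau \geq f(OPT)/4$ into Lemma~\ref{lem:gain_pre} to get
\[
	F(x) + F(y) \geq (\delta - \eps) \cdot 16\tau \geq 4(\delta - \eps) \cdot f(OPT) = 4\delta \cdot f(OPT) - 4\eps \cdot f(OPT) \enspace.
\]
Chaining this with the previous display immediately gives
\[
	F(x) + F(y) \geq 2\bigl[f(OPT) - F(OPT(x,y))\bigr] - 4\eps \cdot f(OPT) \enspace,
\]
as required.

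There is essentially no obstacle here: the two lemmas have been carefully designed so that the bound on the loss (which grows linearly in $\delta$) and the bound on the gain (which also grows linearly in $\delta$, up to the additive $-\eps$ slack coming from the grid on which $\delta$ is chosen) meet with exactly matching slopes, given the assumption $\tau \geq f(OPT)/4$. The factor $16$ inside Lemma~\ref{lem:gain_pre} is precisely what converts $\tau$ into $4 \cdot f(OPT)$, so that the slope $4\delta$ produced by the gain side matches $4\delta$ produced by $2 \cdot 2\delta$ on the loss side, leaving behind only the controllable additive error $4\eps \cdot f(OPT)$.
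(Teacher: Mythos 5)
Your proof is correct and follows essentially the same route as the paper: both combine Lemma~\ref{lem:gain_pre} and Lemma~\ref{lem:loss_pre} with the hypothesis $\tau \geq f(OPT)/4$, only differing in whether the chain of inequalities is presented in one display or split into two. One small detail worth making explicit: when you multiply $16\tau \geq 4 f(OPT)$ by $(\delta - \eps)$ you implicitly use $\delta \geq \eps$, which holds because {\PreProcess} chooses $\delta$ from $[\eps, 1/2]$; the paper records this as the observation $\Delta = 1 - 2\delta \leq 1 - 2\eps$.
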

\begin{proof}
Observe that
\begin{align*}
	F(x) + F(y)
	\geq{} &
	(\delta- \eps) \cdot 16\tau
	=
	(1 - \Delta - 2\eps) \cdot 8\tau
	\geq
	(1 - \Delta - 2\eps) \cdot 2f(OPT)\\
	={} &
	2[f(OPT) - \Delta \cdot f(OPT)] - 4\eps \cdot f(OPT)\\
	\geq{} &
	2[f(OPT) - F(OPT(x, y))] - 4\eps \cdot f(OPT)
	\enspace,
\end{align*}
where the first inequality follows from Lemma~\ref{lem:gain_pre} and the second inequality follows from our assumption that $\tau \geq f(OPT) / 4$ and the observation that $\Delta = 1 - 2\delta \leq 1 - 2\eps$. Additionally, the last inequality follows from Lemma~\ref{lem:loss_pre}.
\end{proof}

To complete the proof of Proposition~\ref{prop:pre-process} it remains to upper bound the adaptivity of Algorithm~\ref{alg:pre-process} and the number of value oracle queries used by it.
\begin{lemma}\label{lem:pre-process_adaptivity}
Algorithm~\ref{alg:pre-process} has constant adaptivity and uses $O(n / \eps)$ value oracle queries to $F$.
\end{lemma}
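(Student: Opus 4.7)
My plan is to bound both quantities by counting how many distinct gradient evaluations the algorithm performs, and then observing that all of them can be issued in parallel.

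First, I would note that the set of candidate values for $\delta$ examined on Line~\ref{line:delta_finding_x}, namely $\{\eps j : j \in \bZ_{\geq 1}\} \cap [\eps, 1/2)$, has cardinality at most $\lceil 1/(2\eps) \rceil = O(\eps^{-1})$. For each such candidate $\delta'$, the condition tested is
\[
    \characteristic_\cN \cdot [\nabla F(\delta' \cdot \characteristic_\cN) - \nabla F((1 - \delta') \cdot \characteristic_\cN)] \leq 16\tau,
\]
whose left-hand side is fully determined by the two gradient vectors $\nabla F(\delta' \cdot \characteristic_\cN)$ and $\nabla F((1 - \delta') \cdot \characteristic_\cN)$.

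Next, I would invoke the identity $\partial_u F(z) = F(z \vee \characteristic_{\{u\}}) - F(z \wedge \characteristic_{\cN \setminus \{u\}})$ from Section~\ref{sec:preliminaries}: computing a single gradient $\nabla F(z)$ requires $2n$ value oracle queries (two per coordinate), so evaluating the inequality for a single candidate $\delta'$ costs $4n$ queries. Multiplying by the $O(\eps^{-1})$ candidates gives $O(n/\eps)$ queries in total, matching the bound claimed by the lemma. The auxiliary assignments of $x$, $y$, and $\Delta$ at the end of Algorithm~\ref{alg:pre-process} require no additional queries to $F$.

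Finally, to get constant adaptivity, I would observe that all the queries above are non-adaptive: the $O(n/\eps)$ points at which $F$ needs to be evaluated depend only on the input $\tau$ (and on $n$ and $\eps$), not on the outcome of any previous query. Hence the algorithm can issue all $O(n/\eps)$ queries in a single parallel round, after which the comparison against $16\tau$ is performed for every candidate $\delta'$ simultaneously and the minimum feasible $\delta'$ is selected. This entire procedure fits in $O(1)$ adaptive rounds, completing the lemma. No step here is a real obstacle; the only minor care needed is to make sure the parallel issuing of queries is spelled out explicitly, since that is what distinguishes ``constant adaptivity'' from ``polynomial query complexity.''
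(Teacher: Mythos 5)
Your proof is correct and follows essentially the same route as the paper: count the $O(\eps^{-1})$ candidate values for $\delta$, charge $4n$ value-oracle queries per candidate (two gradients at $2n$ queries each via the identity from Section~\ref{sec:preliminaries}), and observe that all queries are non-adaptive and can be issued in one parallel round. The paper presents the same bound more tersely without spelling out the $2n$-per-gradient accounting, but the argument is the same.
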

\begin{proof}
Observe that all the value oracle queries used by {\PreProcess} can be made in a single parallel step, and thus, Algorithm~\ref{alg:update} has constant adaptivity. To prove the other part of the lemma, we need to show that $\delta$ can be determined using $O(n / \eps)$ value oracle queries.

Algorithm~\ref{alg:pre-process} determines $\delta$ by evaluating the inequality appearing on Line~\ref{line:delta_finding_x} in it for
\[
	\left \lfloor \frac{1/2}{\eps} \right\rfloor
	\leq
	\frac{\eps^{-1}}{2}
	\leq
	\eps^{-1}
\]
different possible $\delta$ values. Each evaluation requires $4n$ value 
oracle queries, and thus, all the evaluations together require only $O(n / 
\eps)$ queries, as promised.
\end{proof}

\bibliographystyle{plain}
\bibliography{references}
\appendix
\section{Algorithm for Discrete Set} \label{sec:discrete}

As discussed above, our algorithm from Section~\ref{sec:algorithm} can be implemented using a value oracle access to $f$ (rather than to its multilinear extension $F$) at the cost of increasing the query complexity by a factor of $n$ (\ie, making it roughly quadratic in $n$). In many practical cases, however, such a query complexity is infeasible. Thus, in this section, we present an alternative algorithm that uses only a nearly linear in $n$ number of oracle queries to $f$. The result we prove in this section is formally summarized as 
\cref{thm:main_result_discrete}. One can observe that this theorem implies Theorem~\ref{thm:main_result}.

\begin{theorem} \label{thm:main_result_discrete}
	For every constant $0<\eps \le \nicefrac{1}{208}$, there is an algorithm 
	that achieves $(\nicefrac{1}{2} - 104\eps)$-approximation for {\USM} using 
	$O(\eps^{-1}\ln \eps^{-1})$ adaptive rounds and 
	$O(n\eps^{-4}\ln^3(\eps^{-1}))$ value 
	oracle queries to $f$.
\end{theorem}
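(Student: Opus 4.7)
The plan is to discretize Algorithm~\ref{alg:multilinear_oracle} so that it accesses only $f$, replacing every query to $F$ or $\nabla F$ with an empirical estimate obtained by averaging $f$ on independent samples of $\RSet(\cdot)$. The skeleton (one pre-processing call followed by at most $O(\eps^{-1})$ update rounds, producing vectors $x^i, y^i$ with $y^i - x^i = \Delta^i \cdot \characteristic_\cN$ and tracking the potential $\characteristic_\cN[\nabla F(x^i) - \nabla F(y^i)]$) will be preserved, and I would define two new procedures {\DiscreteUpdate} and {\DiscretePreProcess} by syntactically replacing $\nabla F$ and $F$ in Algorithms~\ref{alg:update} and~\ref{alg:pre-process} with their estimators.

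The first technical ingredient is a concentration lemma: drawing $m = \Theta(\eps^{-2}\log(1/p))$ independent samples of $\RSet(z)$ costs $O(nm)$ value oracle queries in a single adaptive round and, by Hoeffding's inequality plus a union bound over coordinates, yields estimates of $F(z)$ and every $\partial_u F(z)$ accurate to additive error $O(\eps)\cdot\tau$ with failure probability at most $p$, where $\tau = F(\nicefrac{1}{2}\cdot\characteristic_\cN) \in [f(OPT)/4,\, f(OPT)]$ serves as the natural scale throughout the run. With these estimators in hand, the key analytic step is to prove perturbed analogs of Propositions~\ref{prop:step} and~\ref{prop:pre-process} in which each inequality is relaxed by an additive $O(\eps)\cdot\tau$ slack on the right-hand side. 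Propagating these slacks through Lemma~\ref{lem:initial_bound_two_sided} and its corollary, the accumulated error over the $O(\eps^{-1})$ iterations sums to $O(\eps)\cdot f(OPT)$, replacing the $44\eps$ in the original bound with the claimed $104\eps$. A brief boosting phase that draws $O(\eps^{-2}\log\eps^{-1})$ independent samples of $\RSet(x^\ell)$ and returns the best one converts the in-expectation guarantee $\mathbb{E}[f(\RSet(x^\ell))] = F(x^\ell)$ into a high-probability one.

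The main obstacle is that the aggregate quantities used inside {\Update}, in particular the scalar tested on Line~\ref{line:delta_finding} and the potential $\characteristic_\cN[\nabla F(x) - \nabla F(y)]$, are inner products of the form $\mathbf{w}\cdot \nabla F(z)$ involving all $n$ coordinates: naively summing $n$ per-coordinate estimators inflates the additive error by a factor of $n$ and would force a prohibitive sample size. The resolution is to rewrite each such inner product as a single scalar expectation $\mathbb{E}_{R \sim \RSet(z)}\bigl[\sum_u w_u (f(R\cup\{u\}) - f(R\setminus\{u\}))\bigr]$ and estimate it directly with $\tilde{O}(\eps^{-2})$ samples (each still costing $n$ oracle queries to evaluate the inner sum), so that only $\tilde{O}(\eps^{-2})$ scalar estimates enter the union bound. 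After taking this union bound over all estimates consulted across the $O(\eps^{-1})$ iterations and the boosting phase, the failure probability remains $O(\eps)$, the adaptivity stays $O(\eps^{-1}\log\eps^{-1})$ (the extra logarithmic factor absorbing the boosting phase and the cost of driving failure probabilities down below $\eps$ via repeated independent estimates), and the total query complexity works out to $O(n\eps^{-4}\log^3\eps^{-1})$, exactly matching the bounds asserted by the theorem.
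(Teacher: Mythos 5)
The paper's discrete algorithm does \emph{not} keep the fractional vectors $x^i, y^i$ and replace oracle queries by estimators; it replaces the fractional state with a pair of discrete sets $X^i \subseteq Y^i \subseteq \cN$, evolved by randomized rounding, and returns a deterministic set $Z = X^\ell \cup \{u \in Y^\ell \setminus X^\ell : f(u \mid X^\ell) > 0\}$ after a fixed number $\ell = \lceil \eps^{-1}\ln\eps^{-1}\rceil$ of rounds. This difference is not cosmetic: because $X$ and $Y$ are sets, the per-element marginals $a_u = f(u \mid X)$ and $b_u = -f(u \mid Y - u)$ are computed \emph{exactly} with two queries each, so the rates $r_u$ entering the double-greedy update are exact. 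Sampling is needed only for the aggregate ``how far to jump'' test ($G(\delta')$), and the random rounding introduces correct-in-expectation noise that the analysis conditions away via the events $\EventUpdate, \EventPreProcess, \EventGreedy$.

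Your approach keeps $x, y$ fractional and estimates every $\partial_u F(x), \partial_u F(y)$ to additive accuracy $O(\eps)\cdot\tau$, then plugs the estimates into the branching and the formula $r_u = a_u/(a_u+b_u)$. This is where the argument breaks. The proof's gain-versus-loss comparison hinges on the pointwise inequality $2\max\{b_u r_u, a_u(1-r_u)\} \leq a_u r_u + b_u(1-r_u)$ (Inequality~\eqref{eq:basic_inequality}), and on Observation~\ref{obs:gain_bound}, both of which are valid only when $r_u$ is computed from the \emph{true} $a_u, b_u$. If $\hat a_u, \hat b_u$ each carry additive error $\Theta(\eps)\cdot f(OPT)$, then a coordinate with true $a_u = 0$, $b_u = \Theta(\eps)\cdot f(OPT)$ can have $\hat a_u > 0 \geq \hat b_u$, forcing $\hat r_u = 1$; this coordinate then contributes $b_u \hat r_u = \Theta(\eps)\cdot f(OPT)$ to the loss bound of Lemma~\ref{lem:loss_two_sided} while contributing nothing to the gain $\sum_u[a_u \hat r_u + b_u(1-\hat r_u)]$. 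Summed over up to $n$ such coordinates, the excess loss is $\Theta(n\eps)\cdot f(OPT)$, which overwhelms the target slack of $O(\eps)\cdot f(OPT)$. Crucially, your proposed fix --- estimating inner products like $\mathbf{w}\cdot\nabla F(z)$ as a single scalar expectation --- addresses only the aggregate tests once $\mathbf{w}=r$ is fixed; it does nothing to repair the per-coordinate error in computing $r$ itself, which is what causes the blow-up. Likewise, the while-loop termination condition $\Delta^i > 0$ no longer has a clean analogue in the noisy setting, and the boosting phase you append is unnecessary once one adopts the paper's route (the returned set $Z$ is deterministic given the run). In short, the key missing idea is to discretize the \emph{state}, not merely the \emph{oracle}, so that the marginals used to form $r$ become exactly computable; without this, the per-coordinate estimation error propagates into the rate vector and destroys the $\nicefrac{1}{2}-O(\eps)$ guarantee.
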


The algorithm that achieves the guarantees stated as 
\cref{thm:main_result_discrete} is presented in \cref{alg:discrete_set}.
Recall that \cref{alg:multilinear_oracle} 
maintains two sequences of fractional vectors $ x^i, y^i\in [0,1]^\cN $ that 
preserve the invariant that $x^i \le y^i $. Similarly, \cref{alg:discrete_set} maintains two sequences of discrete sets $ 
X^i, Y^i \subseteq \cN $ that preserves the analogous invariant that $ 
X^i \subseteq Y^i $. In Algorithm~\ref{alg:discrete_set} and its analysis we use a few useful shorthands. Specifically, given a set $S \subseteq \cN$ and an element $u \in \cN$, we use $S + u$ and $S - u$ as shorthands for the union $S \cup \{u\}$ and the expression $S \setminus \{u\}$, respectively. We also denote by $f(u \mid S) \triangleq f(S + u) - f(S)$ the marginal gain of $u$ with respect to $S$.

\begin{algorithm}[ht]
	\caption{\texttt{Algorithm for {\USM} with Value Oracle Access to $f$}} \label{alg:discrete_set}
	\DontPrintSemicolon
	Let $ \tau $ be an estimate of the expectation of  $ 
	f(\RSet(\nicefrac{1}{2}\cdot 
	\characteristic_\cN)) $ obtained by averaging 
	$ m = \lceil 200\ln(6/\eps)\rceil $ 
	independent samples 
	from the distribution of this random expression.\label{ln:tau}\\
	Let $\ell \gets \lceil \eps^{-1} \ln(\eps^{-1}) \rceil$ and $(X^0, Y^0) \gets \DiscretePreProcess( 
	\tau)$.\\
	\lFor{$i$ = $1$ \KwTo $\ell$
		\label{ln:two_sides_condition}}
	{
		Let $(X^{i + 1}, Y^{i + 1}) \gets \DiscreteUpdate(X^i, Y^i)$.
	}
	\Return{$ Z\gets X^\ell \cup \{ u\in Y^\ell\setminus X^\ell\mid 
		f(u\mid X^\ell)  > 0 \} $}.
\end{algorithm}

\cref{alg:discrete_set} invokes two subroutines named {\DiscretePreProcess} and {\DiscreteUpdate}, which are counterparts of the subroutines {\PreProcess} and {\Update} used by Algorithm~\ref{alg:multilinear_oracle}. The subroutine {\DiscretePreProcess} outputs the two initial sets $ X^0 $ and $Y^0$, while {\DiscreteUpdate} updates the two sets at each 
iteration. Algorithm~\ref{alg:discrete_set} and its subroutines are all quite similar to their counterparts from Section~\ref{sec:algorithm}, with the main difference being that Algorithm~\ref{alg:discrete_set} and its subroutines have to approximate via sampling quantities (such as $\tau$) that can be directly determined given oracle access to $F$.

Before getting to the analysis of Algorithm~\ref{alg:discrete_set}, we present the following two 
propositions, which summarize the guarantees of the two subroutines. The proofs 
of these propositions are deferred to 
Sections~\ref{sub:discrete_update} and~\ref{sub:discrete_preprocess}. Throughout this section, 
 we use $ OPT(X, Y)\triangleq (OPT\cup X)\cap Y $, where $X$ and $Y$ are two subsets of $\cN$. We note that this is an overload of notation since we previously defined $OPT(x, y)$ for vectors $x,y\in [0, 1]^\cN$ in a related, but slightly different, way. We also assume throughout the section, like in Section~\ref{sec:algorithm}, that $n \geq 3$, and given an event $\cE$ we denote by $\characteristic[\cE]$ its indicator.

\newcommand{\propDiscreteStep}[1][]{%
	The input for {\DiscreteUpdate} consists of two subsets $X, Y \subseteq 
	\cN$. If $X \subseteq Y$, then there exists an event $\EventUpdate$ of 
	probability at least $1 - \eps^2 \ln^{-1}(\eps^{-1})/8$ such that 
	conditioned on this event {\DiscreteUpdate} outputs two subsets $X'$ and $ 
	Y' $ obeying
	\begin{compactenum}[(a)]
		\item $X\subseteq X'\subseteq Y'\subseteq 
		Y$, \ifx&#1& \else \label{item:discrete_subset_chain} \fi
		\item $\bE\left\{\sum_{u \in Y'\setminus X'}  [f(u \mid X') - f(u \mid 
		Y' - u)]\right\}
		\le (1-\eps) \cdot \sum_{u\in Y\setminus X } [f(u \mid X) - f(u \mid Y 
		- u)]$, and
		\item $\bE[f(X')+f(Y')] - [f(X)+f(Y)]
		\geq{} 
		2(1 - 15\eps) \cdot [f(OPT(X, Y)) -\bE [f(OPT(X', Y'))]] - 
		2\eps^2\ln^{-1}(\eps^{-1}) \cdot \sum_{u \in Y \setminus X} [f(u \mid 
		X) - f(u \mid Y - u)]$. 
	\end{compactenum}
	Moreover, {\DiscreteUpdate} requires a constant number of  adaptivity 
	rounds 
	and $O(n\eps^{-3}\ln^2 (\eps^{-1}))$ value 
	oracle queries to $f$.}
\begin{proposition}\label{prop:discrete_step}
	\propDiscreteStep[l]
\end{proposition}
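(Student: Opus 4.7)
The plan is to design \DiscreteUpdate\ as a sampling-based discrete analog of \Update\ from Section~\ref{sec:algorithm}, and then mirror the three-part analysis there. First, I would have \DiscreteUpdate\ compute, for every $u \in Y \setminus X$, the marginals $a_u = f(u \mid X)$ and $b_u = -f(u \mid Y - u)$ in one parallel round, set the rate $r_u = a_u/(a_u+b_u)$ when both are positive (and $r_u \in \{0,1\}$ otherwise, as in Algorithm~\ref{alg:update}), and then pick a step size $\delta$ from a geometric grid in $[0,1]$. For the chosen $\delta$, each $u \in Y\setminus X$ is independently selected with probability $\delta$, and each selected $u$ is added to $X$ with probability $r_u$ and removed from $Y$ with probability $1-r_u$. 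Part~(a) is then immediate from this update rule.

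To choose $\delta$ we would mimic Line~\ref{line:delta_finding} of Algorithm~\ref{alg:update}: starting from the largest grid value and decreasing geometrically, we test whether the expected one-step gain at rate $\delta$ is close to the ``linear prediction'' $\delta\sum_{u\in Y\setminus X}[a_u r_u + b_u(1-r_u)]$ up to a multiplicative slack of $(1-O(\eps))$. Each test requires $O(\eps^{-2}\ln\eps^{-1})$ samples of $f(X')+f(Y')$ for fresh draws of the randomized update, so that by Hoeffding the estimates are simultaneously accurate up to the required slack with probability at least $1-\eps^{2}\ln^{-1}(\eps^{-1})/8$; union-bounding over the $O(\eps^{-1}\log\eps^{-1})$ grid points yields the event $\EventUpdate$. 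The selected $\delta$ is either the smallest grid value at which linearity fails, in which case the decrease of the potential in part~(b) will come out multiplicative by roughly $(1-\eps)$, or $\delta$ is of constant order, in which case $1 - \delta$ already gives the same factor.

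For part~(c), I would follow the loss/gain split of Lemmata~\ref{lem:loss_two_sided} and~\ref{lem:gain_two_sided}. On the loss side, a telescoping argument over the processed elements (processed one by one in an arbitrary internal order, which is fine since \emph{expectations} are symmetric) shows that $f(OPT(X,Y)) - \bE[f(OPT(X',Y'))] \le \delta\sum_{u\in U^+}\max\{b_u r_u,\, a_u(1-r_u)\}$, with the submodular inequalities $f(u\mid\cdot)\le a_u$ for $u\in OPT$ and $-f(u\mid\cdot)\le b_u$ for $u\notin OPT$ playing the role of the continuous derivative bound. On the gain side, the ``linearity up to $\delta$'' guaranteed by the chosen $\delta$ plus the $\eps^{2}\ln^{-1}(\eps^{-1})$ sampling error give $\bE[f(X')+f(Y')]-[f(X)+f(Y)] \ge (1-O(\eps))\delta\sum_{u\in U^+}[a_u r_u + b_u(1-r_u)] - 2\eps^{2}\ln^{-1}(\eps^{-1})\sum_{u}[a_u - (-b_u)]$, exactly as in Lemma~\ref{lem:gain_two_sided}. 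Combining with the pointwise inequality $2\max\{b_u r_u,a_u(1-r_u)\} \le a_u r_u + b_u(1-r_u)$ (Corollary~\ref{cor:iteration_conclusion_step}) and reabsorbing constants into $15\eps$ then yields~(c).

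Part~(b) is where I expect the main technical work. For the contribution of a processed element~$u$: it simply disappears from $Y'\setminus X'$, and so its term vanishes. For an unprocessed element $v$, submodularity gives $f(v\mid X') - f(v\mid Y'-v) \le f(v\mid X) - f(v\mid Y-v)$ (since $X\subseteq X'$ and $Y'\subseteq Y$). Taking expectations, the ``surviving'' terms are a $(1-\delta)$-fraction of the original sum, which gives the desired $(1-\eps)$ contraction when $\delta \ge \eps$; the corner case $\delta<\eps$ forced by the grid needs an extra argument showing that the linearity criterion must then already be active, and the genuine decrease comes from the gain side rather than from shrinking $Y\setminus X$. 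Finally, for complexity, one parallel round suffices for the $O(n)$ marginals, the grid has $O(\eps^{-1}\log\eps^{-1})$ entries each evaluated with $O(n\eps^{-2}\log\eps^{-1})$ queries in one extra parallel round, giving constant adaptivity and $O(n\eps^{-3}\ln^{2}\eps^{-1})$ total queries; the main obstacle throughout will be calibrating the Chernoff/Hoeffding slack so that the three error quantities in~(b), (c), and the failure probability of $\EventUpdate$ can all be driven below the stated thresholds simultaneously.
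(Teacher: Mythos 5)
Your plan diverges from the paper in a way that creates a genuine gap, concentrated in part~(b). You propose to estimate the \emph{cumulative gain} $\bE[f(X')+f(Y')]-[f(X)+f(Y)]$ at each grid value $\delta'$ and compare it to the linear prediction $\delta'\sum_u[a_u r_u + b_u(1-r_u)]$. The paper's \DiscreteUpdate\ instead estimates the \emph{per-step marginal sum}
\[
G(\delta')\approx\bE\Bigl[\sum_{u\in Y\setminus X}\bigl[r_u\, f(u\mid X\cup\RSet(\delta'r)-u) - (1-r_u)\, f(u\mid Y\setminus\RSet(\delta'(\characteristic_{Y\setminus X}-r))-u)\bigr]\Bigr],
\]
which is the discrete analog of the \emph{derivative}, and stops at the first grid value for which $G(\delta')\leq a r + b(\characteristic_{Y\setminus X}-r) - 2\gamma$. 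The two quantities are related (the gain is the integral of the derivative), but they are not interchangeable for the purpose of establishing part~(b), and the difference is not cosmetic.

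The issue is the $1/\delta$ blowup. What controls part~(b) is a bound on the derivative-like quantity at $\delta$, because the potential $\sum_{u\in Y'\setminus X'}[f(u\mid X') - f(u\mid Y'-u)]$ is directly comparable to $G(\delta)$ after adding the submodularity inequalities (this is exactly the paper's Lemma~\ref{lem:discrete_single_iteration_S_properties}). You acknowledge that your $(1-\delta)$-shrinkage argument only gives the $(1-\eps)$ contraction when $\delta\geq\eps$, and propose to handle $\delta<\eps$ ``from the gain side.'' But the grid starts at $\eps^2\ln^{-1}(\eps^{-1})\ll\eps$, so this is the typical outcome, not a corner case, and the gain-side argument does not go through as stated: if the gain is estimated to additive accuracy $\gamma$ (which you need to take on the order of $\eps\cdot\sum(a_u+b_u)$ to keep the sample complexity $O(\eps^{-2}\log\eps^{-1})$), converting the gain bound into a bound on the derivative at $\delta$ requires dividing by an interval length of order $\delta\eps$, inflating the effective error by $\Omega(1/\delta)$. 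For the smallest grid values this error dwarfs the potential itself. Tightening the gain accuracy to avoid this would blow up the number of samples by $\Omega(\delta^{-2})$, defeating the query bound. The paper avoids this entirely: since $G(\delta')$ already lives on the scale of the potential, the condition $G(\delta)\leq a r + b(\characteristic_{Y\setminus X}-r) - 2\gamma$ under $\EventUpdate$ gives a clean additive-$\gamma$ drop, and $\gamma = \eps\cdot\characteristic_{Y\setminus X}(a+b)$ turns that into the $(1-\eps)$ multiplicative contraction \emph{regardless of how small $\delta$ is}.

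Your parts~(a) and~(c) and the complexity accounting follow the paper's structure and are fine contingent on the above being repaired, but the repair is essentially ``estimate the derivative, not the gain,'' which is the paper's design choice for $G(\delta')$.
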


\newcommand{\propDiscretePreProcess}[1][]{%
The input for {\DiscretePreProcess} consists of a single value $\tau \geq 
	0$. If $\tau \geq \nicefrac{1}{5} \cdot f(OPT) $,
	then there exists an event  $ \EventPreProcess $ of probability at least $1-\eps/3 $ such that conditioning on this event	 {\DiscretePreProcess} outputs two sets $ X $ and $ Y $ obeying
	\begin{compactenum}[(a)]
		\item $ 
		X\subseteq 
		Y $, \ifx&#1& \else 
		\label{item:discrete_initial_subset} \fi
		\item $\bE\left\{\sum_{u \in Y \setminus X} [f(u \mid X) - f(u \mid Y - u)]\right\} \leq 
		60\tau
		$, and \ifx&#1& \else \label{item:discrete_potential_start} \fi
		\item $\bE[f(X)+f(Y)] \geq 2[f(OPT) 
		- 
		\bE[f(OPT(X,Y))] ] - 
		4\eps \cdot f(OPT)$. 
		\ifx&#1& \else \label{item:discrete_gain_loss_pre} \fi
	\end{compactenum}
	Moreover, {\DiscretePreProcess} requires only $O(n\eps^{-3}\ln(\eps^{-1}))$ value oracle queries to $f$ and a constant number of 
	adaptivity 
	rounds.}
\begin{proposition} \label{prop:discrete_pre-process}
\propDiscretePreProcess[l]
\end{proposition}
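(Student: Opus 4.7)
The plan is to mirror the proof of Proposition~\ref{prop:pre-process} in Section~\ref{ssc:pre-process}, while handling the fact that with only value-oracle access to $f$ we must estimate the ``potential'' quantity by sampling. Concretely, I would design \DiscretePreProcess to enumerate $\delta$ over the grid $\{\eps, 2\eps, \ldots, \lfloor 1/(2\eps) \rfloor \cdot \eps\}$, and for each grid point draw $s = \Theta(\eps^{-2} \ln \eps^{-1})$ independent coupled pairs $(X_k, Y_k)$ in which each element $u$ is placed in $X_k$ with probability $\delta$ and in $Y_k \setminus X_k$ with probability $1-2\delta$ (so $X_k \subseteq Y_k$ and $\Pr[u \in Y_k] = 1-\delta$). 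Setting $S_k := \sum_{u \in Y_k \setminus X_k} [f(u \mid X_k) - f(u \mid Y_k - u)]$ and $\hat{Q}(\delta) := s^{-1} \sum_k S_k$, the procedure picks $\delta^*$ to be the smallest grid value with $\hat{Q}(\delta^*) \leq c_1 \tau$ for a suitably chosen constant $c_1$ (defaulting to $\delta^* = 1/2$ if no such value exists), and finally returns one fresh coupled sample $(X, Y)$ at parameter $\delta^*$.

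A short expectation calculation that uses the independence of the per-element coin flips shows $\bar{Q}(\delta) := \bE[S_k]$ equals $(1-2\delta) \cdot Q(\delta)$, where $Q(\delta) := \characteristic_\cN \cdot [\nabla F(\delta\characteristic_\cN) - \nabla F((1-\delta)\characteristic_\cN)]$ is precisely the continuous-side potential used in Algorithm~\ref{alg:pre-process}. The event $\EventPreProcess$ that I would use is that $|\hat{Q}(\delta) - \bar{Q}(\delta)| \leq c_2 \tau$ for every grid point $\delta$, with $c_2$ chosen small compared to $c_1$. Since submodularity gives $0 \leq S_k \leq f(Y_k) - f(X_k) \leq f(OPT) \leq 5\tau$ (using the hypothesis $\tau \geq f(OPT)/5$), Hoeffding's inequality on the $s$ samples together with a union bound over the $O(\eps^{-1})$ grid points yields $\Pr[\EventPreProcess^c] \leq \eps/3$ once $s$ is taken large enough.

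Conditioned on $\EventPreProcess$, part~(a) is immediate from the coupling, and part~(b) follows because the expectation in the conclusion equals $\bar{Q}(\delta^*)$, which is at most $\hat{Q}(\delta^*) + c_2\tau \leq (c_1 + c_2)\tau \leq 60\tau$ for suitably chosen $c_1,c_2$. For part~(c) I follow the pattern of Corollary~\ref{cor:lower_bound_Fx_Fy}. The set $OPT(X, Y) = (OPT \cup X) \cap Y$ has the same distribution as $\RSet((1-\delta^*)\characteristic_{OPT} + \delta^*\characteristic_{\cN \setminus OPT})$, so the Lovász-extension argument of Lemma~\ref{lem:loss_pre} gives $\bE[f(OPT(X, Y))] \geq (1-2\delta^*) f(OPT)$. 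For the other side, $\bE[f(X) + f(Y)] = F(\delta^*\characteristic_\cN) + F((1-\delta^*)\characteristic_\cN)$; by minimality of $\delta^*$, every strictly smaller grid point $\delta'$ satisfies $\hat{Q}(\delta') > c_1\tau$, hence on $\EventPreProcess$ also $\bar{Q}(\delta') > (c_1 - c_2)\tau$, and since $\bar{Q}(\delta') = (1-2\delta')Q(\delta') \leq Q(\delta')$ we obtain $Q(\delta') > (c_1 - c_2)\tau$. Because $Q$ is non-increasing in $\delta$ by submodularity, the same lower bound extends to every $t \in [0, \delta^* - \eps]$, and the chain-rule computation of Lemma~\ref{lem:gain_pre} then yields $\bE[f(X) + f(Y)] \geq (\delta^* - \eps)(c_1 - c_2)\tau$. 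Choosing the constants so that $(c_1 - c_2)\tau \geq 4 f(OPT)$ (which is compatible with $c_1+c_2 \leq 60$ thanks to the headroom in $\tau \geq f(OPT)/5$) and combining with the $OPT(X, Y)$ bound gives~(c); the degenerate case $\delta^* = 1/2$ in fact cannot occur on $\EventPreProcess$ once $c_1 - c_2$ is chosen large enough, since the inequality $\int_0^{1/2} Q(t)\,dt \leq 2\tau$ (a direct consequence of the chain rule) limits how many consecutive grid points can satisfy $Q(\delta) > (c_1-c_2)\tau$.

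The total query cost is $O(\eps^{-1})$ grid points times $O(\eps^{-2}\ln\eps^{-1})$ samples per point times $O(n)$ oracle calls per sample, i.e., $O(n\eps^{-3}\ln\eps^{-1})$ queries; since all samples are mutually independent, they can be drawn in a single parallel phase, giving constant adaptivity. The main obstacle I expect is the simultaneous calibration of $c_1$, $c_2$, and $s$: one needs $c_1 + c_2 \leq 60$ for~(b), $(c_1 - c_2)\tau \geq 4 f(OPT)$ for the chain-rule step of~(c), and the Hoeffding tail summed over the grid to fit under the $\eps/3$ failure budget. These constraints are mildly in tension, but the slack afforded by the hypothesis $\tau \geq f(OPT)/5$ and the quadratic dependence of Hoeffding on $s$ makes it feasible to choose, for example, $c_1 = 40$, $c_2 = 10$, and $s$ of the indicated order.
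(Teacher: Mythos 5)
Your overall scaffold---enumerate $\delta$ on an $\eps$-grid, estimate a potential at each grid point, stop at the first grid point below a threshold, return a coupled sample, and control the failure event by a union bound---is the same as the paper's, and the $\bE[S_k] = (1-2\delta)Q(\delta)$ identity, the minimality-plus-monotonicity step for part~(c), and the treatment of the degenerate $\delta=\nicefrac{1}{2}$ case are all sound in spirit. The genuine divergence is the estimator you propose, and that is exactly where the proof breaks.

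You define $S_k = \sum_{u\in Y_k\setminus X_k}[f(u\mid X_k) - f(u\mid Y_k - u)]$ from a \emph{single} coupled pair $(X_k,Y_k)$ and assert that submodularity gives $0 \leq S_k \leq f(Y_k) - f(X_k) \leq f(OPT)$, so that $S_k \in [0,5\tau]$ and Hoeffding with $\Theta(\eps^{-2}\ln\eps^{-1})$ samples does the job. The upper bound is false. Telescoping $f(Y_k)-f(X_k) = \sum_i f(v_i\mid X_k\cup\{v_1,\dots,v_{i-1}\})$ together with submodularity gives $f(Y_k)-f(X_k) \leq \sum_{u\in Y_k\setminus X_k}f(u\mid X_k)$, which is the \emph{reverse} of what you need; similarly one only gets $\sum_u f(u\mid Y_k-u)\leq f(Y_k)-f(X_k)$, again not helpful. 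Each individual summand is merely bounded by $2f(OPT)$, so $S_k$ can reach $\Theta(n\cdot f(OPT))$: with $f(S)=\min\{|S|,1\}$, $X_k=\varnothing$, $Y_k=\cN$ one has $S_k=n$ while $f(OPT)=1$. With that range Hoeffding needs $\Omega(n^2)$ samples to achieve additive error $O(f(OPT))$, blowing past the stated query budget. Nor can you split $S_k$ into independent per-element pieces: the terms all depend on the same $(X_k,Y_k)$ and are correlated, so no Chernoff-type refinement applies either.

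The paper sidesteps all of this with a different estimator. Its random expression $\sum_{u\in\cN}[f(u\mid R_x^u(\delta')-u)-f(u\mid R_y^u(\delta')-u)]$ draws a \emph{fresh, independent} pair $(R_x^u,R_y^u)\sim\RSet(\delta')$ for each element $u$, so after rescaling each term lies in $[0,1]$ and the doubly-indexed family $\{Z_{i,u}\}$ is mutually independent. That is what makes a Chernoff (multiplicative) bound usable when the mean is large, with a Markov fallback when it is small; it is also why the paper's $\EventPreProcess$ tolerates error $\max\{\bE_{\delta'}/2, f(OPT)\}$ rather than a fixed additive $c_2\tau$. Your fixed-additive-error event cannot be certified with the proposed estimator. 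Once you replace your estimator and failure event with the paper's per-element-independent version (and carry the factor $1-2\delta$ accordingly when relating the estimate to the target in part~(b)), the rest of your bookkeeping goes through essentially as the paper does it.
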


%

We are now ready to begin the analysis of Algorithm~\ref{alg:discrete_set}. Let $ \EventGreedy $ denote the event that $f(OPT) / 5 \leq \tau \leq f(OPT)$, where $ \tau $ is defined in \cref{ln:tau} of 
\cref{alg:discrete_set}.
\cref{lem:probability_event_greedy} lower bounds the probability of $ \EventGreedy $.

\begin{lemma}\label{lem:probability_event_greedy}
	$ \Pr[\EventGreedy]\ge 1 - \eps /3 $.
\end{lemma}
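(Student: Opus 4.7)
The plan is to decompose $\EventGreedy$ into its upper and lower parts and show the upper bound holds deterministically while the lower bound follows from standard concentration. First, I would note that the upper bound $\tau \leq f(OPT)$ holds deterministically (with probability $1$): since $OPT$ is an optimal set and $f$ is non-negative, every sample of $f(\RSet(\nicefrac{1}{2}\cdot\characteristic_\cN))$ lies in $[0, f(OPT)]$, so their average $\tau$ also lies in this interval.

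Next, for the lower bound $\tau \geq f(OPT)/5$, I would invoke the Feige–Mirrokni–Vondrák bound already cited in the paper, namely $\bE[f(\RSet(\nicefrac{1}{2}\cdot\characteristic_\cN))] \geq f(OPT)/4$. Letting $\mu$ denote this expectation, we have $\mu - f(OPT)/5 \geq f(OPT)/4 - f(OPT)/5 = f(OPT)/20$, so the event $\tau < f(OPT)/5$ is contained in the event $\tau < \mu - f(OPT)/20$.

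The main step is then Hoeffding's inequality applied to the $m$ i.i.d.\ samples, each lying in the bounded interval $[0, f(OPT)]$:
\[
\Pr\!\left[\tau < \mu - f(OPT)/20\right]
\leq \exp\!\left(-\frac{2m \cdot (f(OPT)/20)^2}{f(OPT)^2}\right)
= \exp(-m/200).
\]
Plugging in $m = \lceil 200\ln(6/\eps)\rceil \geq 200\ln(6/\eps)$, this probability is at most $\eps/6$. Combining with the deterministic upper bound and a union bound (which is trivial here since the upper tail has probability zero) yields $\Pr[\EventGreedy] \geq 1 - \eps/6 \geq 1 - \eps/3$, as required.

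The only subtle point, and really the only thing that needs care, is verifying that each sample is bounded in $[0, f(OPT)]$ so that Hoeffding applies with the correct range; this follows immediately from non-negativity of $f$ and optimality of $OPT$. The rest is a routine concentration argument tailored to the chosen sample size $m$.
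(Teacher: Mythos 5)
Your proof is correct and follows essentially the same route as the paper: the deterministic bound $\tau \leq f(OPT)$ from boundedness of each sample, the Feige--Mirrokni--Vondr\'{a}k lower bound on $\bE[f(\RSet(\nicefrac{1}{2}\cdot\characteristic_\cN))]$, and Hoeffding's inequality with the chosen $m$. The only (immaterial) difference is that you use the one-sided Hoeffding bound to get $\eps/6$ directly, whereas the paper uses the two-sided version and picks up a factor of $2$, landing exactly at $\eps/3$.
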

\begin{proof}
Let $ W_1,W_2,\dotsc, W_m $ denote the $ m $ independent random samples that 
	are used by Algorithm~\ref{alg:discrete_set} to estimate $ f(\RSet(\nicefrac{1}{2}\cdot \characteristic_\cN)) $. We observe that they satisfy $ 0 \le f(\RSet(\nicefrac{1}{2}\cdot 
	\characteristic_\cN))\le f(OPT) $ because $f$ is non-negative and $\RSet(\nicefrac{1}{2} \cdot \characteristic_\cN)$ is always a feasible solution. Since $ \tau = 
	m^{-1}\sum_{i=1}^{m} W_i $, we immediately get from this $\tau \leq f(OPT)$. Additionally, by Hoeffding's inequality,
	\begin{align*}
	\Pr\left[\tau < \frac{f(OPT)}{5}\right]
	\leq{} &
	\Pr\left[\tau < \bE[f(\RSet(\nicefrac{1}{2} \cdot \characteristic_\cN))] - \frac{f(OPT)}{20}\right]\\
	\le{} &
	2\exp\left(-\frac{2m^2 \cdot [f(OPT)/20]^2}{m \cdot [ 
		f(OPT)]^2}\right) = 2\exp\left(-\frac{m}{200}\right) \le \frac{\eps}{3} \enspace,
	\end{align*}
	where the first inequality holds since  $\bE[f(\RSet(\nicefrac{1}{2} \cdot \characteristic_\cN))] \geq f(OPT) / 4$ (recall that this is a known result originally proved by Feige et al.~\cite{FMV11}).
\end{proof}

Observe that, in particular, $\EventGreedy$ implies that $\tau$ fulfills the requirement of 
\cref{prop:discrete_pre-process}. The next lemma shows that together with other events it also shows that the invariant $ X^i \subseteq Y^i $ is maintained throughout the execution of Algorithm~\ref{alg:discrete_set}, which is a requirement of 
\cref{prop:discrete_step}. In this lemma, and in the rest of the section, we denote by $\EventUpdate[i]$ the event $\EventUpdate$ corresponding to the execution of the subroutine {\DiscreteUpdate} in the $i$-th iteration of Algorithm~\ref{alg:discrete_set}.

\begin{lemma} \label{lem:conditions_hold_two_sided_discrete}
For every $0 \leq i \leq \ell$,	conditioned on $ \EventGreedy $, $\EventPreProcess$ and $\{\EventUpdate[i] \mid 1 \leq j \leq i\}$, we have
	$X^i\subseteq Y^i\subseteq \cN$.
\end{lemma}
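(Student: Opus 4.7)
The plan is to prove this by induction on $i$, mirroring the proof of Lemma~\ref{lem:conditions_hold_two_sided} but being careful about which events are needed at which step. The statement is essentially a bookkeeping claim: each of the two subroutines preserves the ``$X \subseteq Y$'' invariant provided it succeeds and provided its input satisfies the relevant preconditions, and the events in the conditioning are precisely what guarantees these preconditions and successes.

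For the base case $i = 0$, I would first observe that $\EventGreedy$ ensures $\tau \geq f(OPT)/5$, which is exactly the hypothesis that Proposition~\ref{prop:discrete_pre-process} requires of the input to {\DiscretePreProcess}. Then $\EventPreProcess$ guarantees that the output $(X^0, Y^0) = \DiscretePreProcess(\tau)$ satisfies part~\eqref{item:discrete_initial_subset} of Proposition~\ref{prop:discrete_pre-process}, namely $X^0 \subseteq Y^0$; the inclusion $Y^0 \subseteq \cN$ is automatic since {\DiscretePreProcess} returns subsets of $\cN$.

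For the inductive step, I would assume the claim holds for some $0 \leq i - 1 < \ell$. Then $X^{i-1} \subseteq Y^{i-1} \subseteq \cN$, which satisfies the precondition $X \subseteq Y$ required by Proposition~\ref{prop:discrete_step} for the input to {\DiscreteUpdate}. Conditioned on $\EventUpdate[i]$, that proposition's part~\eqref{item:discrete_subset_chain} gives $X^{i-1} \subseteq X^i \subseteq Y^i \subseteq Y^{i-1} \subseteq \cN$, which yields $X^i \subseteq Y^i \subseteq \cN$ as required.

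There is no genuine obstacle here; the only thing to check is that the conditioning events align correctly with the preconditions at each step of the induction, i.e., at step $i$ we have already conditioned on $\EventUpdate[j]$ for all $j \leq i$, so we may invoke the guarantees of Proposition~\ref{prop:discrete_step} in every prior iteration as well as in the current one. The induction therefore goes through cleanly in one short paragraph once the base case and inductive step above are written out.
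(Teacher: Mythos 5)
Your proposal is correct and is essentially identical to the paper's proof: both argue by induction on $i$, using $\EventGreedy$ to meet the hypothesis of Proposition~\ref{prop:discrete_pre-process} and $\EventPreProcess$ to get $X^0 \subseteq Y^0$ for the base case, then invoking the induction hypothesis together with $\EventUpdate[i]$ and part~(a) of Proposition~\ref{prop:discrete_step} for the inductive step.
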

\begin{proof}
We prove the lemma by induction.
Since we assume that $\EventGreedy$ and $\EventPreProcess$ both hold,	part~(\ref{item:discrete_initial_subset}) of \cref{prop:discrete_pre-process} 
	guarantees that $ X^0 \subseteq Y^0 \subseteq \cN $, which is the base of the induction. Next, let us assume that the lemma holds for $i - 1 \geq 0$, and let us prove it for $i$. The induction hypothesis implies that the requirement of Proposition~\ref{prop:discrete_step} is satisfied. Since we also condition on $\EventUpdate[i]$, part~(\ref{item:discrete_subset_chain}) of this proposition gives us $X^{i-1} \subseteq X^i \subseteq Y^i \subseteq Y^{i-1} \subseteq \cN$, which 
	completes the proof.
\end{proof}


Using the last lemma we can now prove the following technical claim, which will come handy in the proofs appearing later in this section. Consider the function
\[
	\Phi(i) = \bE\left[ \sum_{u\in 
	Y^i\setminus 
	X^i } [f(u \mid X^i) - f(u \mid Y^i - u)] ~\middle|~ \EventGreedy, \EventPreProcess, \{\EventUpdate[j] \mid 1 \leq j \leq i\} \right]\enspace.\]

\begin{lemma} \label{lem:discrete_potential}
For every $0 \leq i \leq \ell$, $\Phi(i) \leq 60 \cdot f(OPT)$. Moreover, $\Phi(\ell) \leq 61\eps \cdot f(OPT)$.
\end{lemma}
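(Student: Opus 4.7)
The plan is to establish a geometric decay $\Phi(i) \le (1-\eps)\Phi(i-1)$ together with an initial bound $\Phi(0) \le 60 f(OPT)$, from which both assertions follow immediately. The proof proceeds by induction on $i$, using Proposition~\ref{prop:discrete_pre-process}(b) for the base case and Proposition~\ref{prop:discrete_step}(b) to propagate the bound through each iteration.

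For the base case, condition on $\EventGreedy$ and $\EventPreProcess$. Under $\EventGreedy$ we have $\tau \le f(OPT)$, and Proposition~\ref{prop:discrete_pre-process}(b), applied conditional on $\EventPreProcess$ (and on the value of $\tau$), gives an upper bound of $60\tau$ on the expected value of the post-preprocessing sum. Taking the outer expectation over $\tau$ under the conditioning $\EventGreedy \cap \EventPreProcess$ yields $\Phi(0) \le 60\tau \le 60 f(OPT)$.

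For the inductive step, Lemma~\ref{lem:conditions_hold_two_sided_discrete} guarantees that $X^{i-1} \subseteq Y^{i-1}$ holds under the conditioning, so the precondition of Proposition~\ref{prop:discrete_step} is met. Conditioning on the input $(X^{i-1}, Y^{i-1})$ and on $\EventUpdate[i]$, part (b) of the proposition bounds the expected $i$-th sum by $(1-\eps)$ times its value at iteration $i-1$. Taking the outer expectation under the conditioning of $\Phi(i)$ and applying the tower property yields $\Phi(i) \le (1-\eps)\Phi(i-1)$, up to a Bayes correction of $(1 - \eps^2\ln^{-1}(\eps^{-1})/8)^{-1} = 1 + O(\eps^2\ln^{-1}\eps^{-1})$ per step, arising because $\Phi(i)$'s conditioning includes $\EventUpdate[i]$ while $\Phi(i-1)$'s does not.

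Iterating the recurrence gives $\Phi(i) \le (1-\eps)^i \cdot \Phi(0) \le 60 f(OPT)$ for every $0 \le i \le \ell$, since the cumulative Bayes correction across at most $\ell = \lceil\eps^{-1}\ln(\eps^{-1})\rceil$ iterations is $\exp(O(\ell\eps^2\ln^{-1}\eps^{-1})) = 1 + O(\eps)$ and is more than offset by the $(1-\eps)^i \le 1$ factor. For $i = \ell$, the elementary inequality $1-\eps \le e^{-\eps}$ gives $(1-\eps)^\ell \le e^{-\eps\ell} \le e^{-\ln(\eps^{-1})} = \eps$, so $\Phi(\ell) \le 60\eps \cdot (1 + O(\eps)) \cdot f(OPT) \le 61\eps \cdot f(OPT)$ under the theorem's hypothesis $\eps \le 1/208$. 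The only subtle point in the whole argument is keeping the conditioning straight in the tower step; the rest is a routine iteration combined with the standard bound $(1-\eps)^{\eps^{-1}\ln\eps^{-1}} \le \eps$.
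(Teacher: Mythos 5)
Your proposal follows essentially the same route as the paper's proof: conditioning on $\EventGreedy$ and $\EventPreProcess$, bounding $\Phi(0) \le 60\tau \le 60 f(OPT)$ via Proposition~\ref{prop:discrete_pre-process}(b), propagating through Proposition~\ref{prop:discrete_step}(b) with the law of total expectation to account for the conditioning mismatch (your ``Bayes correction'' is exactly the paper's $\Pr[\EventUpdate[k] \mid \cdots]^{-1}$ denominator), and then using $(1-\eps)^\ell \le e^{-\eps\ell} \le \eps$. One small imprecision: to conclude $\Phi(i) \le 60 f(OPT)$ for \emph{every} $i$, not just $i = \ell$, you need the per-step comparison $(1-\eps)/(1 - \eps^2\ln^{-1}(\eps^{-1})/8) \le 1$ (which holds since $\eps^2 \ln^{-1}(\eps^{-1})/8 \le \eps$), rather than your cumulative statement that the $1+O(\eps)$ correction ``is more than offset by $(1-\eps)^i \le 1$,'' which does not literally yield the bound for small $i$; the paper makes the per-step comparison explicit. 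With that bookkeeping tightened, your argument matches the paper's.
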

\begin{proof}
Since Lemma~\ref{lem:conditions_hold_two_sided_discrete} guarantees that the conditions of Proposition~\ref{prop:discrete_step} are satisfied in iteration number $k$ of Algorithm~\ref{alg:discrete_set} given $\EventGreedy$, $\EventPreProcess$ and $\{\EventUpdate[j] \mid 1 \leq j \leq k-1\}$,
  part~(b) of this proposition guarantees that for every $1 \leq k \leq \ell$ we have
	\begin{align*}
		\Phi(k)
		\leq{} &
		(1 - \eps) \cdot \bE\left[ \sum_{u\in 
	Y^{k - 1}\setminus 
	X^{k - 1} } [f(u \mid X^{k - 1}) - f(u \mid Y^{k - 1} - u)] ~\middle|~ \EventGreedy, \EventPreProcess, \{\EventUpdate[j] \mid 1 \leq j \leq k\} \right]\\
	\leq{} &
	\frac{(1 - \eps) \cdot \Phi(k - 1)}{\Pr[\EventUpdate[k] \mid \EventGreedy, \EventPreProcess, \{\EventUpdate[j] \mid 1 \leq j \leq k - 1\}]}
	\enspace,
	\end{align*}
	where the second inequality holds by the law of total expectation since the submodularity of $f$ guarantees that $f(u \mid X^{k - 1}) - f(u \mid Y^{k - 1} - u)] \geq 0$ for every element $u \in Y^{k-1} \setminus X^{k-1}$. Combining this inequality for every $1 \leq k \leq i$ yields
	\[
			\Phi(i) \leq \frac{(1 - \eps)^i \cdot \Phi(0)}{\prod_{k = 1}^i \Pr[\EventUpdate[k] \mid \EventGreedy, \EventPreProcess, \{\EventUpdate[j] \mid 1 \leq j \leq k - 1\}]}
		\leq
		\frac{60(1 - \eps)^i \cdot f(OPT)}{(1 - \eps^2\ln^{-1}(\eps^{-1})/8)^i}
		\leq
		60 \cdot f(OPT)
		\enspace,
	\]
	where the last inequality holds since $\eps < \nicefrac{1}{3}$ by the assumption of Theorem~\ref{thm:main_result_discrete}, and the second inequality holds since given $\EventGreedy$ the condition of \cref{prop:discrete_pre-process} applies, and thus, $\Phi(0) \leq 60 \tau \leq 60 \cdot f(OPT)$---$\tau \leq f(OPT)$ because $\tau$ is the average value of $f$ over $m$ feasible solutions.
	
	For $i = \ell$, we get
	\begin{align*}
		\Phi(\ell) \leq{} &
		\frac{60(1 - \eps)^\ell \cdot f(OPT)}{(1 - \eps^2\ln^{-1}(\eps^{-1})/8)^\ell}
		\leq
		\frac{60(1 - \eps)^{\eps^{-1} \ln(\eps^{-1})} \cdot f(OPT)}{1 - [1 + \eps^{-1} \ln(\eps^{-1})] \cdot \eps^2\ln^{-1}(\eps^{-1})/8}\\
		\leq{} &
		\frac{60e^{-\ln(\eps^{-1})} \cdot f(OPT)}{1 - \eps(\eps \ln^{-1}(\eps^{-1}) + 1)/8}
		\leq
		\frac{60\eps \cdot f(OPT)}{0.993}
		\leq
		61\eps \cdot f(OPT)
		\enspace,
	\end{align*}
	where the penultimate inequality holds since $\eps < 1/20$ by the assumption of Theorem~\ref{thm:main_result_discrete}.
\end{proof}

The first result that we prove using the above technical lemma is a lower bound on the expected value of $f(X^\ell) + f(Y^\ell)$.
\begin{lemma} \label{lem:discrete_sum_X_Y}
Conditioned on $ \EventGreedy $ and $\EventPreProcess$,
	\[
		\bE[f(X^\ell)+f(Y^\ell)] \geq 2(1-15\eps)[ f(OPT) - \bE[f(OPT(X^{\ell},Y^{\ell}) )] ]
		-127\eps \cdot f(OPT)
		\enspace.
	\]
\end{lemma}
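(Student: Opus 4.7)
My plan is to mirror the analysis of Lemma~\ref{lem:initial_bound_two_sided} from the continuous setting, adapted to handle the conditioning on the high-probability events $\EventGreedy$, $\EventPreProcess$, and $\{\EventUpdate[i]\}_{i=1}^\ell$. Define $\cE^{(i)} = \EventGreedy \cap \EventPreProcess \cap \bigcap_{j=1}^i \EventUpdate[j]$, so that $\cE^{(0)}$ is the event conditioned on in the lemma's statement, while conditioning on $\cE^{(\ell)}$ simultaneously enables every subroutine guarantee. I also write $\psi(X,Y) \triangleq \sum_{u \in Y \setminus X}[f(u\mid X) - f(u\mid Y-u)]$, which is non-negative by submodularity, so $\Phi(i) = \bE[\psi(X^i,Y^i) \mid \cE^{(i)}]$.

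First, I would invoke Proposition~\ref{prop:discrete_pre-process}(c) (whose hypothesis $\tau \geq f(OPT)/5$ follows from $\EventGreedy$) to obtain the base inequality $\bE[f(X^0)+f(Y^0) \mid \cE^{(0)}] \geq 2[f(OPT) - \bE[f(OPT(X^0,Y^0)) \mid \cE^{(0)}]] - 4\eps\cdot f(OPT)$. Second, for each $i = 1,\ldots,\ell$, I would apply Proposition~\ref{prop:discrete_step}(c) conditional on $X^{i-1}, Y^{i-1}$ and $\EventUpdate[i]$---its hypothesis $X^{i-1} \subseteq Y^{i-1}$ holds by Lemma~\ref{lem:conditions_hold_two_sided_discrete}---and take the outer expectation under $\cE^{(i)}$. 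Summing the resulting chain telescopes the $\bE[f(X^i)+f(Y^i)]$ and $\bE[f(OPT(X^i,Y^i))]$ terms, yielding the desired inequality conditioned on $\cE^{(\ell)}$, with an aggregated residual of the form $2\eps^2\ln^{-1}(\eps^{-1}) \sum_{i=1}^\ell \bE[\psi(X^{i-1},Y^{i-1}) \mid \cE^{(i)}]$.

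Third, I would reconcile the mismatched conditioning events. Since $\Pr[\EventUpdate[i] \mid \cE^{(i-1)}] \geq 1 - \eps^2\ln^{-1}(\eps^{-1})/8$ and every relevant functional value lies in $[0, f(OPT)]$, any $g(X^{i-1},Y^{i-1})$ with $0 \leq g \leq f(OPT)$ obeys $|\bE[g \mid \cE^{(i)}] - \bE[g \mid \cE^{(i-1)}]| \leq O(\eps^2\ln^{-1}(\eps^{-1})) \cdot f(OPT)$. Because $\ell \leq 1 + \eps^{-1}\ln(\eps^{-1})$, the sum of these per-step translation errors across the telescope is only $O(\eps) \cdot f(OPT)$. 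Lemma~\ref{lem:discrete_potential} gives $\Phi(i-1) \leq 60 \cdot f(OPT)$, which upper bounds $\bE[\psi \mid \cE^{(i)}]$ by the factor $(1 - \eps^2\ln^{-1}(\eps^{-1})/8)^{-1}$, so the aggregate residual potential error is likewise $O(\eps) \cdot f(OPT)$. A final application of the same translation, using $\Pr[\cE^{(\ell)} \mid \cE^{(0)}] \geq 1 - O(\eps)$ via a union bound, converts the bound from conditioning on $\cE^{(\ell)}$ to conditioning on $\cE^{(0)}$ at another $O(\eps) \cdot f(OPT)$ cost.

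The main obstacle will be the careful bookkeeping of the four distinct sources of $O(\eps) \cdot f(OPT)$ error---the $4\eps f(OPT)$ pre-process loss, the telescoped residual potential losses, the per-step conditioning translations, and the final $\cE^{(\ell)} \to \cE^{(0)}$ transfer---to verify that their aggregate is at most $127\eps \cdot f(OPT)$, while preserving the leading factor $2(1-15\eps)$ on the $[f(OPT) - \bE[f(OPT(X^\ell, Y^\ell)) \mid \cE^{(0)}]]$ term.
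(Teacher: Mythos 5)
Your plan is structurally sound and uses all the same ingredients as the paper's proof: the pre-process base inequality from Proposition~\ref{prop:discrete_pre-process}(c), the per-iteration inequality from Proposition~\ref{prop:discrete_step}(c), Lemma~\ref{lem:conditions_hold_two_sided_discrete} to validate the subroutine hypotheses, Lemma~\ref{lem:discrete_potential} to bound the residual potential (including the $(1-\eps^2\ln^{-1}(\eps^{-1})/8)^{-1}$ lift to the conditioned-on-$\EventUpdate[i]$ version, yielding the $61 f(OPT)$ bound), and the observation that conditioning on a high-probability event perturbs bounded expectations by $O(\text{range} \cdot \Pr[\text{failure}])$. The only real difference is organizational: the paper packages all of this into a single backward induction on $i$, proving a combined inequality for $\bE[f(X^\ell)+f(Y^\ell)-f(X^i)-f(Y^i)]$ conditioned on exactly the event $\EventGreedy\cap\EventPreProcess\cap\bigcap_{j\le i}\EventUpdate[j]$; the cost of dropping $\EventUpdate[i+1]$ is absorbed one step at a time, so at $i=0$ everything is already conditioned on $\cE^{(0)}$ with no separate transfer needed. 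Your plan instead sums forward, which means the intermediate terms appear under mismatched conditioning events—your phrase ``yielding the desired inequality conditioned on $\cE^{(\ell)}$'' is not literally accurate, since the $i$-th summand is conditioned on $\cE^{(i)}$, not $\cE^{(\ell)}$; it is only after your reconciliation step (replacing $\bE[\cdot\mid\cE^{(i)}]$ by $\bE[\cdot\mid\cE^{(i-1)}]$) that the telescope closes. You then still have the $\cE^{(\ell)}$-conditioned boundary terms and need a final transfer to $\cE^{(0)}$, which is an error source the backward induction avoids. This makes your aggregate constant slightly larger than the paper's (the paper's arithmetic lands almost exactly on $127\eps$ under the assumption $\eps<1/110$, so the extra final transfer may push your version just past it); you correctly flag this bookkeeping as the main risk, and if the final constant drifts to, say, $128$ or $130$, the downstream corollary and theorem constants would need the same cosmetic adjustment.
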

\begin{proof}
Recall that the event $\cE_G$ implies that the condition of Proposition~\ref{prop:discrete_pre-process} is
	satisfied, hence, conditioned on $\cE_G$ and $\cE_P$ we have
	\begin{align} \label{eq:discrete_pre-process_inequality}
	\bE[f(X^0) + f(Y^0)]
	\geq{} &
	2[f(OPT) - \bE[f(OPT(X^0, Y^0))]] - 4\eps \cdot f(OPT)\\ \nonumber
	\nonumber
	\geq{} &
	2(1-15\eps) \cdot [f(OPT) - \bE[f(OPT(X^0, Y^0))]] - 
	4\eps 
	\cdot f(OPT)
	\enspace,
	\end{align}
	where the second inequality holds since $F(OPT(X^0, Y^0))$ is the expected 
	value of $f$ over some distribution of sets, and thus, is upper bounded by 
	$f(OPT)$.
	
	We now would like to prove by a backward induction on $i$ that for every $0 \leq i \leq \ell$ we have conditioned on $\EventGreedy$, $\EventPreProcess$ and $\{\EventUpdate[j] \mid 1 \leq j \leq i\}$ the inequality
	\begin{align} \label{eq:event_addition_inequality}
	\bE[f(X^\ell) &{}+ f(Y^\ell) - f(X^{i}) - f(Y^{i})] \geq
	2(1 - 15\eps) \cdot \bE[f(OPT(X^{i}, Y^{i})) - f(OPT(X^\ell, Y^\ell))]\\ \nonumber
	&
	- 122(\ell - i) \cdot \eps^2\ln^{-1}(\eps^{-1}) 
	\cdot
	f(OPT) - 6f(OPT) \cdot \sum_{j = i + 1}^\ell (1 - \Pr[\EventUpdate[j] \mid \{\EventUpdate[k] \mid i < k < j\}])
	\enspace.
	\end{align}
For $i = \ell$ this inequality is trivial since both its sides are equal to $0$. Thus, let us assume that the inequality holds for $1 \leq i + 1 \leq \ell$, and let us prove it for $i$. Since Lemma~\ref{lem:conditions_hold_two_sided_discrete} guarantees that the conditions of Proposition~\ref{prop:discrete_step} are satisfied in iteration number $i + 1$ of Algorithm~\ref{alg:discrete_set} given $\EventGreedy$, $\EventPreProcess$ and $\{\EventUpdate[j] \mid 1 \leq j \leq i\}$, we get---conditioned on these events and $\EventUpdate[i + 1]$---the inequality
\begin{align*}
	\bE[f(X^{i+1}) + f(Y^{i+1}) - f(X^{i}) - f(Y^{i})] &{}- 2(1 - 15\eps) \cdot 
	\bE[f(OPT(X^{i}, Y^{i})) - f(OPT(X^{i + 1}, Y^{i + 1}))]\\
	\geq{}&
	- 2\eps^2\ln^{-1}(\eps^{-1}) \cdot \bE\mleft[ \sum_{u \in Y^{i} \setminus 
	X^{i}} [f(u 
	\mid 
	X^{i}) -f(u \mid Y^{i} - u)]\mright]
	\enspace.
	\end{align*}
	Adding this inequality to the induction hypothesis for $i + 1$, we get that conditioned on the same set of events we also have
	\begin{align} \label{eq:one_more_update_condition}
	\bE[f(X^\ell&) + f(Y^\ell) - f(X^{i}) - f(Y^{i})]
	\geq
	2(1 - 15\eps) \cdot \bE[f(OPT(X^{i}, Y^{i})) - f(OPT(X^\ell, Y^\ell))] \\ \nonumber
	&
	- \left\{122(\ell - i - 1) \cdot f(OPT) + 2\bE\mleft[ \sum_{u \in Y^{i} \setminus 
	X^{i}} [f(u 
	\mid 
	X^{i}) -f(u \mid Y^{i} - u)]\mright]\right\} \cdot \eps^2\ln^{-1}(\eps^{-1}) \\ \nonumber
	&- 6f(OPT) \cdot \sum_{j = i + 2}^\ell (1 - \Pr[\EventUpdate[j] \mid \{\EventUpdate[k] \mid i + 1 < k < j\}])
	\enspace.
	\end{align}
Since the values of $f(X^\ell) + f(Y^\ell) - f(X^{i}) - f(Y^{i})$ and $f(OPT(X^{i}, Y^{i})) - f(OPT(X^\ell, Y^\ell))$ are always in the ranges $[-2f(OPT), 2f(OPT)]$ and $[-f(OPT), f(OPT)]$, respectively, removing the condition on $\EventUpdate[i + 1]$ changes the expectations of these expressions by at most $6f(OPT) \cdot \Pr[\EventUpdate[i + 1] \mid \EventGreedy, \EventPreProcess, \{\EventUpdate[j] \mid 1 \leq j \leq i\}]$. Additionally, since the submodularity of $f$ guarantees that $f(u \mid X^i) - f(u \mid Y^i - u) \geq 0$ whenever $X^i \subseteq Y^i$,
\begin{align*}
	\bE\mspace{100mu}&\mspace{-100mu}\mleft[ \sum_{u \in Y^{i} \setminus	X^{i}} [f(u \mid X^{i}) -f(u \mid Y^{i} - u)] ~\middle|~ \EventGreedy, \EventPreProcess, \{\EventUpdate[j] \mid 1 \leq j \leq i + 1\}\mright]\\
	\leq{} &
	\frac{\bE\mleft[ \sum_{u \in Y^{i} \setminus	X^{i}} [f(u \mid X^{i}) -f(u \mid Y^{i} - u)] ~\middle|~ \EventGreedy, \EventPreProcess, \{\EventUpdate[j] \mid 1 \leq j \leq i\}\mright]}{\Pr[\EventUpdate[i + 1] \mid \EventGreedy, \EventPreProcess, \{\EventUpdate[j] \mid 1 \leq j \leq i\}]}\\
	\leq{} &
	\frac{\Phi(i)}{1 - \eps^2\ln^{-1}(\eps^{-1})/8}
	\leq
	\frac{60 \cdot f(OPT)}{0.987}
	\leq
	61 \cdot f(OPT)
	\enspace,
\end{align*}
where the penultimate inequality holds by Lemma~\ref{lem:discrete_potential} and the fact that $\eps \leq \nicefrac{1}{3}$ by the assumption of Theorem~\ref{thm:main_result_discrete}. Combining all these observations with Inequality~\eqref{eq:one_more_update_condition}, we get that conditioned on the the events $\EventGreedy$, $\EventPreProcess$ and $\{\EventUpdate[j] \mid 1 \leq j \leq i\}$, but not the event $\EventUpdate[i + 1]$, we still have
\begin{align*}
	\bE[f(X^\ell&) + f(Y^\ell) - f(X^{i}) - f(Y^{i})] \geq
	2(1 - 15\eps) \cdot \bE[f(OPT(X^{i}, Y^{i})) - f(OPT(X^\ell, Y^\ell))]\\ \nonumber
	&
	- 122(\ell - i) \cdot \eps^2\ln^{-1}(\eps^{-1}) 
	\cdot
	f(OPT) - 6f(OPT) \cdot \sum_{j = i + 1}^\ell (1 - \Pr[\EventUpdate[j] \mid \{\EventUpdate[k] \mid i < k < j\}]) 
	\enspace,
	\end{align*}
which completes the proof by induction of Inequality~\eqref{eq:event_addition_inequality}.

Plugging now $i = 0$ into Inequality~\eqref{eq:event_addition_inequality} and adding the result to Inequality~\eqref{eq:discrete_pre-process_inequality}, we get that conditioned on $\EventGreedy$ and $\EventPreProcess$
	\begin{equation*}
	\begin{split}
	\bE[f&(X^\ell)+f(Y^\ell)] - 2(1-15\eps)[ f(OPT) - \bE[f(OPT(X^\ell,Y^\ell) )] ]\\
	\geq{} &-\ell \cdot 122\eps^2\ln^{-1}(\eps^{-1}) \cdot f(OPT) - 4\eps \cdot f(OPT) - 6f(OPT) \cdot \sum_{j = 1}^\ell (1 - \Pr[\EventUpdate[j] \mid \{\EventUpdate[k] \mid 0 < k < j\}])\\
		\geq{} &
		-[1 + \eps^{-1} \ln(\eps^{-1})] \cdot [122\eps^2\ln^{-1}(\eps^{-1}) + 6\eps^2\ln^{-1}(\eps^{-1})/8] \cdot f(OPT) - 4\eps \cdot f(OPT) \\
		={} &
		-[\eps \ln^{-1}(\eps^{-1}) + 1] \cdot 122.75\eps \cdot f(OPT) - 4\eps \cdot f(OPT)
		\geq
		-127\eps \cdot f(OPT)
	\enspace,
	\end{split}
	\end{equation*}
	where the second inequality holds since $\Pr[\EventUpdate[j] \mid \EventGreedy, \EventPreProcess, \{\EventUpdate[k] \mid 0 < k < j\}] \geq 1 - \eps^2 \ln^{-1} (\eps^{-1})/8$, and the last inequality holds since $\eps < 1/110$ by the assumption of Theorem~\ref{thm:main_result_discrete}.
\end{proof}

The last lemma involves the the expected values with respect to $f$ of the sets $X^\ell$, $Y^\ell$ and $OPT(X^\ell, Y^\ell)$. Since $X^\ell \subseteq Y^\ell$ (conditioned on the appropriate events), all these sets are subsets of $Y^\ell$ and include $X^\ell$. The following lemma shows that this implies that $\bE[f(Z)]$ upper bounds their expected values with respect to $f$, up to some error.
\begin{lemma} \label{lem:Z_W_bound}
Conditioned on $ \EventGreedy $, $\EventPreProcess$, we have $\bE[f(Z)] \geq \bE[f(W)] - 63\eps \cdot f(OPT)$ for every (random) set $W$ that obeys $X^\ell \subseteq W \subseteq Y^\ell$ given these events and $\{\EventUpdate[i] \mid 1 \leq i \leq \ell\}$.
\end{lemma}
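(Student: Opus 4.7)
The plan is to bound $\bE[f(W) - f(Z) \mid \EventGreedy, \EventPreProcess]$ from above by $63\eps \cdot f(OPT)$; rearranging that inequality gives the lemma. The intuition is that $Z$ is built from $X^\ell$ by greedily adding exactly those elements of $Y^\ell \setminus X^\ell$ whose marginal gain with respect to $X^\ell$ is strictly positive, so one should expect $f(Z)$ to dominate $f(W)$ for every $W$ with $X^\ell \subseteq W \subseteq Y^\ell$, up to a submodular slack. That slack will turn out to be precisely the potential $\sum_{u \in Y^\ell \setminus X^\ell}[f(u \mid X^\ell) - f(u \mid Y^\ell - u)]$, which Lemma~\ref{lem:discrete_potential} already bounds by $61\eps \cdot f(OPT)$ in expectation, provided that we additionally condition on all of the update events $\EventUpdate[i]$.

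The main submodular calculation, carried out under the full set of events (so that $X^\ell \subseteq W \subseteq Y^\ell$ and $X^\ell \subseteq Z \subseteq Y^\ell$), proceeds as follows. Expanding $f(W) - f(X^\ell)$ one element at a time and using submodularity gives
\[
f(W) - f(X^\ell) \leq \sum_{u \in W \setminus X^\ell} f(u \mid X^\ell) \leq \sum_{u \in Z \setminus X^\ell} f(u \mid X^\ell) \enspace,
\]
where the second inequality follows from the definition of $Z$ by dropping the non-positive marginals of the elements in $W \setminus (X^\ell \cup Z)$ and adding back the non-negative marginals of the elements in $Z \setminus (X^\ell \cup W)$. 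On the other hand, expanding $f(Z) - f(X^\ell)$ as a telescoping sum over some enumeration of $Z \setminus X^\ell$ and replacing each partial-prefix marginal by the smaller quantity $f(u \mid Y^\ell - u)$ (valid by submodularity) yields
\[
\sum_{u \in Z \setminus X^\ell} f(u \mid X^\ell) - [f(Z) - f(X^\ell)] \leq \sum_{u \in Z \setminus X^\ell}[f(u \mid X^\ell) - f(u \mid Y^\ell - u)] \enspace.
\]
Extending the right-hand sum to all of $Y^\ell \setminus X^\ell$ (again by submodularity, the added terms are non-negative) and subtracting the two displayed inequalities gives $f(W) - f(Z) \leq \sum_{u \in Y^\ell \setminus X^\ell}[f(u \mid X^\ell) - f(u \mid Y^\ell - u)]$ almost surely. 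Taking expectation and invoking Lemma~\ref{lem:discrete_potential} yields $\bE[f(W) - f(Z) \mid \EventGreedy, \EventPreProcess, \{\EventUpdate[i] \mid 1 \leq i \leq \ell\}] \leq 61\eps \cdot f(OPT)$.

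Finally, I strip off the extra conditioning on the update events. Since $f$ is non-negative and every subset of $\cN$ is feasible for {\USM}, both $f(W)$ and $f(Z)$ lie in $[0, f(OPT)]$ unconditionally, so $|f(W) - f(Z)| \leq f(OPT)$ always. Multiplying the per-round success probability from Proposition~\ref{prop:discrete_step} over the $\ell = \lceil \eps^{-1} \ln(\eps^{-1}) \rceil$ iterations (or equivalently applying a union bound), I obtain $\Pr[\,\overline{\bigcap_{i=1}^\ell \EventUpdate[i]} \mid \EventGreedy, \EventPreProcess\,] \leq \ell \cdot \eps^2 \ln^{-1}(\eps^{-1})/8 \leq \eps/4$ for $\eps$ in the range allowed by Theorem~\ref{thm:main_result_discrete}. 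The law of total expectation then combines the two cases to give $\bE[f(W) - f(Z) \mid \EventGreedy, \EventPreProcess] \leq 61\eps \cdot f(OPT) + (\eps/4) \cdot f(OPT) \leq 63\eps \cdot f(OPT)$, which is what we wanted. The only mildly delicate point is the bookkeeping: the submodular inequality holds almost surely only under the containment $X^\ell \subseteq W \subseteq Y^\ell$, which in turn is only guaranteed under the full set of events, so the passage back to the weaker conditioning on $\EventGreedy \cap \EventPreProcess$ must be paid for with a union bound against the crude worst-case bound on $|f(W) - f(Z)|$.
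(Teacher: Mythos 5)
Your proof is correct and follows essentially the same strategy as the paper's: derive the pointwise bound $f(W) - f(Z) \leq \sum_{u \in Y^\ell \setminus X^\ell}[f(u \mid X^\ell) - f(u \mid Y^\ell - u)]$ under the full conditioning, invoke Lemma~\ref{lem:discrete_potential} to bound its expectation by $61\eps \cdot f(OPT)$, and then remove the conditioning on $\{\EventUpdate[i]\}$ by a probability argument paid for with the crude bound $f(W), f(Z) \in [0, f(OPT)]$. The only differences are cosmetic: you route the submodular inequality through $X^\ell$ rather than directly from $W$ to $Z$, and you apply the law of total expectation to the single quantity $f(W) - f(Z)$ (error $\leq \eps/4 \cdot f(OPT)$) rather than separately to $f(W)$ and $f(Z)$ as the paper does (error $\leq 2\eps \cdot f(OPT)$); both comfortably fit within the $63\eps$ budget.
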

\begin{proof}
We begin by proving that, conditioned on the events $\EventGreedy$, $\EventPreProcess$ and $\{\EventUpdate[j] \mid 1 \leq j \leq \ell\}$, we have
\begin{equation} \label{eq:Z_W}
	\bE[f(Z)] \geq \bE[f(W)] - \Phi(\ell) \geq \bE[f(W)] - 61\eps^{-1} \cdot f(OPT)
	\enspace.
\end{equation}
The second inequality is an immedaite consequence of Lemma~\ref{lem:discrete_potential}, and thus, we concentrate now on proving the first inequality.
Note that since, given the above mentioned events, both $W$ and $Z$ are subsets of $Y^\ell$ and include $X^\ell$, the submodularity of $f$ guarantees
\begin{align*}
	f(Z)
	\geq{} &
	f(W) + \sum_{u \in Z \setminus W} f(u \mid Y^\ell - u) - \sum_{u \in W \setminus Z} f(u \mid X^\ell)\\
	\geq{} &
	f(W) - \sum_{u \in (Z \setminus W) \cup (W \setminus Z)} \mspace{-27mu} [f(u \mid X^\ell) - f(u \mid Y^\ell - u)]\\
	\geq{} &
	f(W)  - \sum_{u \in Y^\ell \setminus X^\ell} [f(u \mid X^\ell) - f(u \mid Y^\ell - u)]
	\enspace,
\end{align*}
where the second inequality holds since the definition of $Z$ guarantees that $-f(u \mid X^\ell) \leq 0$ for every $u \in Z$ and $f(u \mid Y^\ell - u) \leq f(u \mid X^\ell) \leq 0$ for every $u \not \in Z$; and the last inequality holds since the submodularity of $f$ guarantees $f(u \mid X^\ell) - f(u \mid Y^\ell - u) \geq 0$ for every $u \in Y^\ell \setminus X^\ell$. Taking now expectation of this inequality conditioned on the events $\EventGreedy$, $\EventPreProcess$ and $\{\EventUpdate[j] \mid 1 \leq j \leq \ell\}$, we get Inequality~\eqref{eq:Z_W}.

To prove the lemma, it remains to show that $|\bE[f(S) \mid \EventGreedy, \EventPreProcess] - \bE[f(S) \mid \EventGreedy, \EventPreProcess, \{\EventUpdate[i] \mid 1 \leq i \leq \ell\}]| \leq \eps \cdot f(OPT)$ for every (random) set $S$. To see why that follows from the low of total expectation, notice that $f(S)$ always has a value between $0$ and $f(OPT)$ and that
\begin{align*}
	\Pr\left[\bigcap_{i = 1}^\ell \EventUpdate[i] \mid \EventGreedy, \EventPreProcess \right]
	={} &
	\prod_{i = 1}^\ell \Pr\left[\EventUpdate[i] \mid \EventGreedy, \EventPreProcess, \{\EventUpdate[j] \mid 1 \leq j \leq i - 1\} \right]
	\geq
	(1 - \eps^{2}\ln^{-1}(\eps^{-1})/8)^\ell\\
	\geq{} &
	1 - \ell \cdot \eps^{2}\ln^{-1}(\eps^{-1})/8
	=
	1 - [1 + \eps^{-1} \ln(\eps^{-1})] \cdot \eps^{2}\ln^{-1}(\eps^{-1})/8\\
	={} &
	1 - [\eps\ln^{-1}(\eps^{-1}) + 1] \cdot \eps/8
	\geq
	1 - \eps
	\enspace,
\end{align*}
where the last inequality holds since $\eps < 1/20$ by the assumption of Theorem~\ref{thm:main_result_discrete}.
\end{proof}

The following corollary is a consequence of the last two lemmata. 
\begin{corollary} \label{cor:conditioned_Z}
Conditioned on $\EventGreedy$ and $\EventPreProcess$,
\[
	\bE[f(Z)] \ge{} 
	(\nicefrac{1}{2}-103\eps) \cdot f(OPT) \enspace.
	\]
\end{corollary}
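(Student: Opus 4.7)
The plan is to combine the main bound on $\bE[f(X^\ell)+f(Y^\ell)]$ from Lemma~\ref{lem:discrete_sum_X_Y} with three applications of Lemma~\ref{lem:Z_W_bound} in order to eliminate $f(X^\ell)$, $f(Y^\ell)$ and $f(OPT(X^\ell,Y^\ell))$ in favor of $\bE[f(Z)]$. Conditioned on $\EventGreedy$, $\EventPreProcess$ and $\{\EventUpdate[i] \mid 1\le i\le \ell\}$, Lemma~\ref{lem:conditions_hold_two_sided_discrete} gives $X^\ell\subseteq Y^\ell$, so the three sets $X^\ell$, $Y^\ell$, and $OPT(X^\ell,Y^\ell)=(OPT\cup X^\ell)\cap Y^\ell$ all lie sandwiched between $X^\ell$ and $Y^\ell$. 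Hence Lemma~\ref{lem:Z_W_bound} applies to each of them.

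Concretely, I would first apply Lemma~\ref{lem:Z_W_bound} with $W=X^\ell$ and with $W=Y^\ell$, sum the two resulting inequalities, and obtain
\[
  \bE[f(X^\ell)+f(Y^\ell)] \le 2\bE[f(Z)] + 126\eps\cdot f(OPT).
\]
Next I would apply the same lemma with $W=OPT(X^\ell,Y^\ell)$ to obtain
\[
  \bE[f(OPT(X^\ell,Y^\ell))] \le \bE[f(Z)] + 63\eps\cdot f(OPT),
\]
so that $-\bE[f(OPT(X^\ell,Y^\ell))] \ge -\bE[f(Z)] - 63\eps\cdot f(OPT)$. Plugging the latter (multiplied by the positive quantity $2(1-15\eps)$) into the bound of Lemma~\ref{lem:discrete_sum_X_Y} gives
\[
  \bE[f(X^\ell)+f(Y^\ell)] \ge 2(1-15\eps)\,f(OPT) - 2(1-15\eps)\bE[f(Z)] - 126\eps(1-15\eps)f(OPT) - 127\eps f(OPT).
\]
Combining this lower bound with the upper bound $\bE[f(X^\ell)+f(Y^\ell)] \le 2\bE[f(Z)]+126\eps f(OPT)$ from the first step and rearranging yields
\[
  (4-30\eps)\bE[f(Z)] \ge (2 - 409\eps + 1890\eps^2)\cdot f(OPT).
\]

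The last step is a routine arithmetic verification: I need to check that $(2-409\eps+1890\eps^2)/(4-30\eps) \ge 1/2 - 103\eps$, which reduces to $18\eps \ge 1200\eps^2$, i.e.\ $\eps\le 1/66.7$. This is comfortably implied by the assumption $\eps\le 1/208$ in Theorem~\ref{thm:main_result_discrete}. I anticipate no real obstacle here; the only subtle point is verifying that $X^\ell \subseteq OPT(X^\ell,Y^\ell)\subseteq Y^\ell$ (which follows immediately from $X^\ell\subseteq Y^\ell$ and the definition of $OPT(X,Y)$) so that Lemma~\ref{lem:Z_W_bound} is applicable to the random set $W=OPT(X^\ell,Y^\ell)$. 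Everything else is bookkeeping of the $\eps$-losses accumulated from Lemmas~\ref{lem:discrete_sum_X_Y} and~\ref{lem:Z_W_bound}.
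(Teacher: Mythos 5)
Your proposal is correct and matches the paper's own proof: both apply Lemma~\ref{lem:Z_W_bound} to $X^\ell$, $Y^\ell$, and $OPT(X^\ell,Y^\ell)$ and substitute into Lemma~\ref{lem:discrete_sum_X_Y}, leading to the same linear inequality in $\bE[f(Z)]$. The paper simply states that the corollary ``follows by rearranging''; your explicit verification that $(2-409\eps+1890\eps^2)/(4-30\eps) \geq \nicefrac{1}{2}-103\eps$ reduces to $\eps \leq 18/1200$, which holds under $\eps \leq \nicefrac{1}{208}$, fills in that routine step correctly.
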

\begin{proof}
All the expectations in this proof are implicitly conditioned on $\EventGreedy$ and $\EventPreProcess$. As discussed above, Lemma~\ref{lem:Z_W_bound} shows that $\bE[f(Z)]$ upper bounds $\bE[f(X^\ell)]$, $\bE[f(Y^\ell)]$ and $\bE[f(OPT(X^\ell, Y^\ell))]$ up to an error of $63\eps \cdot f(OPT)$. Plugging this bound into the guarantee of Lemma~\ref{lem:discrete_sum_X_Y},
we get
\begin{align*}
	2 \bE[f(Z)] + 126\eps \cdot f(OPT)
	\geq{} &
	2(1 - 15\eps)[f(OPT) - \bE[f(Z)] - 63\eps \cdot f(OPT)] \\{}&- 127\eps \cdot f(OPT)
	\enspace.
\end{align*}
The corollary now follows by rearranging this inequality.
\end{proof}

The last corollary gives a guarantee on the approximation ratio of Algorithm~\ref{alg:discrete_set} which is conditioned on $\EventGreedy$ and $\EventPreProcess$. However, to prove Theorem~\ref{thm:main_result_discrete}, we need to prove an unconditional lower bound on $\bE[f(Z)]$, which is done by the next lemma.

\begin{lemma}
$\bE[Z] \geq (\nicefrac{1}{2}-104\eps) \cdot f(OPT)$.
\end{lemma}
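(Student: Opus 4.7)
The plan is to lift the conditional guarantee of Corollary~\ref{cor:conditioned_Z} to an unconditional one by paying a small loss for the probability that the two ``good'' events fail. Since $f$ is non-negative, $f(Z) \geq 0$ always, so by the law of total expectation
\[
	\bE[f(Z)] \geq \bE[f(Z) \mid \EventGreedy \cap \EventPreProcess] \cdot \Pr[\EventGreedy \cap \EventPreProcess],
\]
and it suffices to lower bound the probability that both $\EventGreedy$ and $\EventPreProcess$ occur, then plug in Corollary~\ref{cor:conditioned_Z}.

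The first step is to recall that Lemma~\ref{lem:probability_event_greedy} gives $\Pr[\EventGreedy] \geq 1 - \eps/3$. Next, I would observe that whenever $\EventGreedy$ holds, we have $\tau \geq f(OPT)/5$, which is exactly the hypothesis required to invoke Proposition~\ref{prop:discrete_pre-process}. Hence $\Pr[\EventPreProcess \mid \EventGreedy] \geq 1 - \eps/3$. Multiplying,
\[
	\Pr[\EventGreedy \cap \EventPreProcess]
	=
	\Pr[\EventGreedy] \cdot \Pr[\EventPreProcess \mid \EventGreedy]
	\geq
	(1 - \eps/3)^2
	\geq
	1 - 2\eps/3.
\]

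Finally, I would combine this with Corollary~\ref{cor:conditioned_Z} to get
\[
	\bE[f(Z)]
	\geq
	(\nicefrac{1}{2} - 103\eps) \cdot f(OPT) \cdot (1 - 2\eps/3)
	\geq
	(\nicefrac{1}{2} - 103\eps - \eps/3) \cdot f(OPT)
	\geq
	(\nicefrac{1}{2} - 104\eps) \cdot f(OPT),
\]
where in the middle step I drop the positive $(2\eps/3)(103\eps) \cdot f(OPT)$ term and in the last step I use $103 + 1/3 \leq 104$. There is no real obstacle here: the proof is essentially a routine application of the law of total expectation together with a union-bound style computation, and everything needed has already been collected in the preceding lemmas and propositions.
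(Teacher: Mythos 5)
Your proof is correct and follows essentially the same route as the paper: lower bound $\bE[f(Z)]$ by the conditional expectation times the probability that both good events occur (dropping the other term by non-negativity of $f$), use $\Pr[\EventGreedy] \geq 1-\eps/3$ and $\Pr[\EventPreProcess \mid \EventGreedy] \geq 1-\eps/3$ to get the factor $(1-\eps/3)^2$, and finish with Corollary~\ref{cor:conditioned_Z}. The paper carries the factor $(1-\eps/3)^2$ directly into the final numerical inequality rather than relaxing it to $1-2\eps/3$ first, but this is an immaterial stylistic difference.
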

\begin{proof}
By the law of total expectation,
\begin{align*}
	\bE[Z]
	={} &
	\Pr[\EventGreedy] \cdot \Pr[\EventPreProcess \mid \EventGreedy] \cdot \bE[Z \mid \EventGreedy, \EventPreProcess] + (1 - \Pr[\EventGreedy, \EventPreProcess]) \cdot \bE[Z \mid \EventGreedyComplement \vee \EventPreProcessComplement]\\
	\geq{} &
	\Pr[\EventGreedy] \cdot \Pr[\EventPreProcess \mid \EventGreedy] \cdot \bE[Z \mid \EventGreedy, \EventPreProcess]\\
	\geq{} &
	(1 - \eps/3)^2 \cdot (\nicefrac{1}{2}-103\eps) \cdot f(OPT)
	\geq
	(\nicefrac{1}{2}-104\eps) \cdot f(OPT)
	\enspace,
\end{align*}
where the first inequality follows from the non-negativity of $f$, and the second inequality follows from the guarantees of Proposition~\ref{prop:discrete_pre-process}, Lemma~\ref{lem:probability_event_greedy} and Corollary~\ref{cor:conditioned_Z}.
\end{proof}

To complete the proof of Theorem~\ref{thm:main_result_discrete}, we still need 
to 
upper bound the adaptivity of Algorithm~\ref{alg:discrete_set} and the 
number of value oracle queries that it uses, which is done by the next lemma.
\begin{lemma}
	The adaptivity of Algorithm~\ref{alg:multilinear_oracle} is 
	$O(\eps^{-1}\ln \eps^{-1})$, 
	and it uses $O(n\eps^{-4}\ln^3(\eps^{-1}))$ value oracle queries 
	to $f$.
\end{lemma}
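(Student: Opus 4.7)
The plan is to decompose the total cost of Algorithm~\ref{alg:multilinear_oracle} into its three sources---the $O(1)$ auxiliary queries needed to evaluate $\tau$, the single invocation of \PreProcess{} in line~2, and the $\ell$ invocations of \Update{} inside the while-loop---bound each contribution using results already established in the main text, and then sum.

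First, I would write
\[
    (\text{adaptivity of Algorithm~\ref{alg:multilinear_oracle}})
    \leq
    1 + (\text{adaptivity of } \PreProcess) + \ell \cdot (\text{adaptivity of } \Update),
\]
with the analogous inequality for the query complexity. Second, I would plug in the per-subroutine guarantees stated earlier: Proposition~\ref{prop:pre-process} shows that the single call to \PreProcess{} costs $O(1)$ adaptive rounds and $O(n/\eps)$ oracle queries, while Proposition~\ref{prop:step} shows that every call to \Update{} costs $O(1)$ adaptive rounds and $O(n\eps^{-1}\log \eps^{-1})$ oracle queries. Third, I would apply the bound $\ell \leq 5\eps^{-1}$ that has already been established in Lemma~\ref{lem:initial_bound_two_sided} as part of the correctness analysis.

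Carrying out the arithmetic, the total adaptivity is $1 + O(1) + 5\eps^{-1} \cdot O(1) = O(\eps^{-1})$, which is in particular $O(\eps^{-1}\ln \eps^{-1})$, and the total number of oracle queries is $1 + O(n/\eps) + 5\eps^{-1} \cdot O(n\eps^{-1}\log \eps^{-1}) = O(n\eps^{-2}\log \eps^{-1})$, which is in particular $O(n\eps^{-4}\ln^3(\eps^{-1}))$. There is no real technical obstacle: every ingredient---the two per-subroutine bounds and the iteration bound on $\ell$---is already in place, and the argument reduces to a routine summation; the only thing to notice is that the bounds claimed in the lemma statement are comfortably weaker than what the decomposition naturally yields, so the claimed inequalities follow immediately from the sharper ones.
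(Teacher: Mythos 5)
There is a genuine mismatch here: you have analyzed the wrong algorithm. The lemma appears in Appendix~\ref{sec:discrete} as the final step of the proof of Theorem~\ref{thm:main_result_discrete}, and despite the (typo'd) cross-reference to Algorithm~\ref{alg:multilinear_oracle} in its statement, it is about Algorithm~\ref{alg:discrete_set} --- note that it bounds value oracle queries to $f$, which the continuous algorithm of Section~\ref{sec:algorithm} never makes (it only queries $F$). Your argument, which invokes Propositions~\ref{prop:step} and~\ref{prop:pre-process} and the bound $\ell \leq 5\eps^{-1}$ from Lemma~\ref{lem:initial_bound_two_sided}, therefore establishes a statement about $F$-queries for a different algorithm and cannot yield the claimed bound on $f$-queries. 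The observation that the claimed bounds seem ``comfortably weaker'' than what your decomposition produces should have been a warning sign: for the intended algorithm the claimed bounds are exactly what the accounting gives, not a loose relaxation.

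The correct accounting uses the same decomposition idea but with the discrete ingredients. Algorithm~\ref{alg:discrete_set} spends $m = \lceil 200\ln(6/\eps)\rceil$ queries estimating $\tau$ and up to $n$ queries constructing the output set $Z$ (two extra adaptive rounds in total); it makes one call to {\DiscretePreProcess}, which by Proposition~\ref{prop:discrete_pre-process} costs $O(1)$ rounds and $O(n\eps^{-3}\ln(\eps^{-1}))$ queries; and it makes exactly $\ell = \lceil \eps^{-1}\ln(\eps^{-1})\rceil$ calls to {\DiscreteUpdate} (a fixed for-loop, so no separate iteration-count lemma is needed), each costing $O(1)$ rounds and $O(n\eps^{-3}\ln^2(\eps^{-1}))$ queries by Proposition~\ref{prop:discrete_step}. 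Summing gives adaptivity $2 + O(1) + \lceil\eps^{-1}\ln\eps^{-1}\rceil\cdot O(1) = O(\eps^{-1}\ln\eps^{-1})$ and query complexity $m + n + O(n\eps^{-3}\ln(\eps^{-1})) + \lceil\eps^{-1}\ln\eps^{-1}\rceil\cdot O(n\eps^{-3}\ln^2(\eps^{-1})) = O(n\eps^{-4}\ln^3(\eps^{-1}))$, matching the lemma exactly.
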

\begin{proof}
	Except for the value oracle queries used by the two procedures 
	{\DiscreteUpdate} and 
	{\DiscretePreProcess}, Algorithm~\ref{alg:discrete_set} uses only $m$
	value oracle query for calculating $\tau$ and up to $n$ value oracle queries for determining the set $Z$. Since these queries can be done in two adaptive rounds, the adaptivity of 
	Algorithm~\ref{alg:discrete_set} is at most
	\begin{equation} \label{eq:adaptivity_two_sided_discrete}
	2 + (\text{adaptivity of {\DiscretePreProcess}}) + \ell \cdot 
	(\text{adaptivity of 
		{\DiscreteUpdate}})
	\enspace,
	\end{equation}
	and the number of oracle queries it uses is at most
	\begin{align} \label{eq:discrete_queries_two_sided}
	m + n +{} \mspace{100mu}&\mspace{-100mu}(\text{value oracle queries used by {\DiscretePreProcess}})  \\\nonumber &+ \ell 
	\cdot 
	(\text{value oracle queries used by {\DiscreteUpdate}})
	\enspace.
	\end{align}
	Proposition~\ref{prop:discrete_step} guarantees that each execution of the 
	procedure 
	{\DiscreteUpdate} requires at most $O(1)$ rounds of adaptivity and 
	$O(n\eps^{-3}\ln^2(\eps^{-1}))$ oracle queries, 
	and 
	Proposition~\ref{prop:discrete_pre-process} guarantees that the single 
	execution of 
	the procedure {\DiscretePreProcess} requires at most $O(1)$ rounds of 
	adaptivity 
	and $O(n\eps^{-3}\ln(\eps^{-1}))$ oracle queries. Plugging these 
	observations and the values of $m$ and $\ell$
	into~\eqref{eq:adaptivity_two_sided_discrete} 
	and~\eqref{eq:discrete_queries_two_sided}, we 
	get that the adaptivity of Algorithm~\ref{alg:discrete_set} is at most
	\[
	2 + O(1) + (\lceil \eps^{-1} \ln \eps^{-1} \rceil) \cdot O(1)
	=
	O(\eps^{-1}\ln \eps^{-1})
	\enspace,
	\]
	and its query complexity is at most
	\[
	\lceil 200\ln(6/\eps)\rceil + n + O(n\eps^{-3}\ln(\eps^{-1})) + (\lceil \eps^{-1} \ln \eps^{-1} \rceil) \cdot 
	O(n\eps^{-3}\ln^2(\eps^{-1}))
	=
	O(n\eps^{-4}\ln^3(\eps^{-1}))
	\enspace.
	\qedhere
	\]
\end{proof}

\subsection{The Procedure {\DiscreteUpdate}}\label{sub:discrete_update}
In this section we describe the procedure {\DiscreteUpdate} used by Algorithm~\ref{alg:discrete_set},  and prove 
that it obeys all the properties guaranteed by 
Proposition~\ref{prop:discrete_step}. The pseudo-code of the procedure appears as Algorithm~\ref{alg:discrete_update}. We remind the reader that {\DiscreteUpdate} is a 
counterpart of the procedure {\Update} from Section~\ref{ssc:update}.

\begin{algorithm}[ht]
	\caption{$\DiscreteUpdate(X, Y)$} \label{alg:discrete_update}
	\DontPrintSemicolon
	\For{every $u \in \cN$}{
		\If{$u \in Y \setminus X$}
		{
			Let $ a_u\gets f(u \mid X) $ and $ b_u \gets - f(u \mid Y-u)$.\\
			\lIf{$a_u > 0$ and $b_u > 0$}{$r_u \gets a_u / (a_u + b_u)$.}
			\lElseIf{$a_u > 0$}{$r_u \gets 1$.}
			\lElse{$r_u \gets 0$.}
		}
		\Else
		{
			Let $a_u \gets 0$, $ b_u \gets 0$ and $r_u \gets 0$.
		}
	}
	
	\BlankLine
	
	Let $ \gamma \gets \eps \cdot \characteristic_{Y \setminus X} (a + b) $.\\

	Let $\Gamma \gets \{\delta \in [0, 1) \mid \exists_{j \in \bZ, j \geq 0}\; \delta = \eps^2\ln^{-1}(\eps^{-1}) \cdot (1 + \eps)^j \}$.\\
	For every $\delta' \in \Gamma$, let $G(\delta')$ be an estimate of the expectation of
	\[
		\sum_{u \in Y \setminus X} [r_u \cdot f(u \mid X \cup \RSet(\delta' r) - u) - (1 - r_u) \cdot f(u \mid Y \setminus \RSet(\delta'(\characteristic_{Y \setminus X} - r)) - u)]
	\]
	obtained by averaging $ m  = \lceil 
	\eps^{-2}\ln(112\eps^{-3}\ln^2(\eps^{-1}))/2 \rceil$ independent samples from 
	the distribution of this random expression.\\
	Let $\delta$ be the minimum value in $\Gamma$ for which $G(\delta) \leq a r + b(\characteristic_{Y \setminus X} - r) - 2\gamma$. If there is no such value, we set $ \delta \gets 1 $.\label{ln:condition_delta}\\
	
	\BlankLine
	
	Let  $ X'\gets X $ and $ Y'\gets Y $.\\
	\For{every $ u\in Y \setminus X $}{
		\WithProbability{$ r_u $\label{ln:probability_ru}}{
			Update $ X'\gets X'+u $	with probability $ \delta  $. \tcp*{\footnotesize \hspace{-1.5mm}Thus, $u$ is added to $X'$ with prob.\ $\delta r_u$.}
			\label{ln:add_to_X'}
		}
		\wElse{
			Update $ Y'\gets Y'-u $ with probability $ \delta  $. \tcp*{\footnotesize \hspace{-1.5mm}$u$ is removed from $Y'$ with prob.\ $\delta (1 - r_u)$.}
			\label{ln:remove_from_Y}
		}
	}
	\Return{$(X', Y')$}.
\end{algorithm}

	%

Let us recall now Proposition~\ref{prop:discrete_step}.  We remind the reader 
that our objective is to prove that Algorithm~\ref{alg:discrete_update} obeys 
all the properties guaranteed by this proposition.
\begin{repproposition}{prop:discrete_step}
\propDiscreteStep
\end{repproposition}

We begin the analysis of Algorithm~\ref{alg:discrete_update} with the following technical lemma.

\begin{lemma}\label{lem:bound_size_gamma}
	$ |\Gamma| \le 1 + 6\eps^{-1}\ln \eps^{-1} = O(\eps^{-1} \ln \eps^{-1}) $.
\end{lemma}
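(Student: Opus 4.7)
The plan is to count the elements of $\Gamma$ directly. By definition, the elements of $\Gamma$ correspond exactly to the non-negative integers $j$ for which $\eps^2\ln^{-1}(\eps^{-1})(1+\eps)^j < 1$. Equivalently, we need $(1+\eps)^j < \eps^{-2}\ln(\eps^{-1})$, that is, $j < \log_{1+\eps}(\eps^{-2}\ln\eps^{-1})$. Thus $|\Gamma|$ equals the number of non-negative integers strictly below this real number, which is at most $\log_{1+\eps}(\eps^{-2}\ln\eps^{-1}) + 1$.

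Next, I would bound the logarithm using the standard inequality $\ln(1 + z) \geq z/2$ for $z \in [0,1]$ (which has already been invoked elsewhere in the paper). This yields
\[
	\log_{1+\eps}(\eps^{-2}\ln\eps^{-1})
	=
	\frac{\ln(\eps^{-2}\ln\eps^{-1})}{\ln(1+\eps)}
	\leq
	\frac{2\ln(\eps^{-2}\ln\eps^{-1})}{\eps}
	=
	\frac{4\ln\eps^{-1} + 2\ln\ln\eps^{-1}}{\eps}.
\]
Since $\eps \leq \nicefrac{1}{3}$, we have $\ln\eps^{-1} \geq \ln 3 > 1$, hence $\ln\ln\eps^{-1} \leq \ln\eps^{-1}$, which yields $4\ln\eps^{-1} + 2\ln\ln\eps^{-1} \leq 6\ln\eps^{-1}$. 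Combining,
\[
	|\Gamma| \leq 1 + \frac{6\ln\eps^{-1}}{\eps} = 1 + 6\eps^{-1}\ln\eps^{-1},
\]
which establishes the claim (the asymptotic bound $O(\eps^{-1}\ln\eps^{-1})$ follows trivially).

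The only mildly delicate step is controlling the $\ln\ln\eps^{-1}$ term, which is handled cleanly by the assumption $\eps \leq \nicefrac{1}{3}$. Otherwise the argument is a routine geometric-series length calculation, so I do not anticipate any real obstacle.
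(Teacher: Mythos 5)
Your proof is correct and follows the same approach as the paper's: count the non-negative integers $j$ with $\eps^2\ln^{-1}(\eps^{-1})(1+\eps)^j < 1$, then bound $\log_{1+\eps}(\eps^{-2}\ln\eps^{-1})$ using $\ln(1+z)\geq z/2$. The only cosmetic difference is that the paper disposes of the $\ln\ln\eps^{-1}$ term via the generic inequality $\ln a \leq a$, whereas you invoke $\eps \leq \nicefrac{1}{3}$ to get $\ln\eps^{-1} > 1$ first; both are fine.
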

\begin{proof}
	Every value in $\Gamma$ is a value in the range $[0, 1)$ having the form $\eps^2 \ln^{-1} (\eps^{-1}) 
	\cdot (1 + \eps)^i$ for some non-negative integer $i$. The number of values of this form in the above range is
	\begin{align*}
	\lceil \log_{1+\eps} (\eps^{-2}\ln \eps^{-1}) \rceil 
	\le{} & 1 + \frac{\ln(\eps^{-2}\ln \eps^{-1})}{\ln(1+\eps)} 
	\le 1 + \frac{2\ln \eps^{-1}+\ln \ln \eps^{-1}}{\eps/2} 
	\le 1 + \frac{3\ln \eps^{-1}}{\eps/2} \\
	={} & 1 + 6\eps^{-1}\ln \eps^{-1} = O(\eps^{-1}\ln \eps^{-1}) 
	\enspace,
	\end{align*}
	where the second inequality holds since 
	$\ln (1 + z) \geq z/2$ for every $z \in [0, 1]$ and the third holds since $\ln a \leq a$ for every real value $a$.
\end{proof}

At this point we need to define the event $\EventUpdate$ from Proposition~\ref{prop:discrete_step}. Let us define this event as the event that for every $\delta' \in \Gamma$ the estimate $G(\delta')$ is equal to the expectation it should estimate up to an error of $\gamma$. The following lemma shows that $\EventUpdate$ indeed happens with at least the probability guaranteed by Proposition~\ref{prop:discrete_step}. In this lemma, and in the rest of this section, we implicitly assume that the condition $X \subseteq Y$ of Proposition~\ref{prop:discrete_step} holds.
\begin{lemma}
$\Pr[\EventUpdate] \geq 1 - \eps^2 \ln^{-1}(\eps^{-1})/8 $.
\end{lemma}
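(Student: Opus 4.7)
The plan is to apply Hoeffding's inequality separately to each of the $|\Gamma|$ estimates $G(\delta')$ and then take a union bound over $\delta' \in \Gamma$. The first step, bounding the range of the random variable whose expectation $G(\delta')$ estimates, is the one that requires genuine work; the rest is routine calibration of the sample size $m$ and $|\Gamma|$ chosen by the algorithm.

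For the range bound, consider an arbitrary sample $V$ of the expression inside $G(\delta')$. I would fix $u \in Y \setminus X$ and observe that both random sets appearing inside the marginal terms are sandwiched between $X$ and $Y - u$: since the vectors $\delta' r$ and $\delta'(\characteristic_{Y \setminus X} - r)$ are both supported on $Y \setminus X$, the sets $\RSet(\delta' r)$ and $\RSet(\delta'(\characteristic_{Y \setminus X} - r))$ are almost surely subsets of $Y \setminus X$, and since $u \in Y \setminus X$ satisfies $u \notin X$, we obtain
\[
	X \subseteq (X \cup \RSet(\delta' r)) - u \subseteq Y - u \quad \text{and} \quad X \subseteq (Y \setminus \RSet(\delta'(\characteristic_{Y \setminus X} - r))) - u \subseteq Y - u \enspace.
\]
Submodularity then forces both marginals to lie in $[f(u \mid Y - u), f(u \mid X)] = [-b_u, a_u]$. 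The contribution of $u$ to $V$ is therefore confined to an interval of length $r_u (a_u + b_u) + (1 - r_u)(a_u + b_u) = a_u + b_u$, and summing over $u \in Y \setminus X$ places $V$ itself in an interval of length at most $L \triangleq \characteristic_{Y \setminus X}(a + b) = \gamma/\eps$.

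Second, since $G(\delta')$ is the average of $m$ i.i.d.\ copies of $V$, Hoeffding's inequality gives
\[
	\Pr\left[|G(\delta') - \bE[G(\delta')]| \geq \gamma\right] \leq 2 \exp\left(-\frac{2m\gamma^2}{L^2}\right) = 2 \exp(-2m\eps^2) \enspace.
\]
The value $m = \lceil \eps^{-2} \ln(112\eps^{-3}\ln^2(\eps^{-1}))/2 \rceil$ was calibrated exactly so that $2m\eps^2 \geq \ln(112\eps^{-3}\ln^2(\eps^{-1}))$, which makes the right-hand side at most $\eps^3/(56\ln^2 \eps^{-1})$.

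Finally, a union bound combined with Lemma~\ref{lem:bound_size_gamma} yields
\[
	\Pr\left[\overline{\EventUpdate}\right] \leq |\Gamma| \cdot \frac{\eps^3}{56 \ln^2 \eps^{-1}} \leq (1 + 6\eps^{-1}\ln \eps^{-1}) \cdot \frac{\eps^3}{56 \ln^2 \eps^{-1}} \leq \frac{\eps^2}{8 \ln \eps^{-1}} \enspace,
\]
where the last step is straightforward arithmetic: the dominant term simplifies to $\tfrac{6\eps^2}{56 \ln \eps^{-1}}$, and the assumption $\eps \leq \nicefrac{1}{208}$ ensures the leftover $\eps^3/(56\ln^2 \eps^{-1})$ contribution is absorbed into the slack between $\tfrac{6}{56}$ and $\tfrac{7}{56} = \tfrac{1}{8}$. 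Rearranging gives the claimed lower bound. The one nontrivial ingredient, and hence the main obstacle, is the range argument above: everything else is a textbook concentration-plus-union-bound calculation, and its constants were chosen to match $m$ and $|\Gamma|$ exactly.
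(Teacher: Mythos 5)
Your proof is correct and follows essentially the same route as the paper: bound the range of a single sample of the estimated expression by $\sum_{u \in Y\setminus X}(a_u + b_u) = \gamma/\eps$ using submodularity, apply Hoeffding to get a per-$\delta'$ failure probability of $\eps^3\ln^{-2}(\eps^{-1})/56$, then take a union bound over $|\Gamma|$ and simplify. The only thing the paper's proof does that you omit is explicitly dispose of the degenerate case $L = \sum_{u \in Y\setminus X}(a_u+b_u) = 0$, where Hoeffding's formula as written is $0/0$; there the samples are deterministic so the estimate is exact, which is worth a sentence.
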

\begin{proof}
We will prove that, for every given $\delta' \in \Gamma$, with probability at 
least $1 - \eps^{3}\ln^{-2}(\eps^{-1})/56$ the estimate 
$G(\delta')$ is equal to the expectation of 
\begin{equation} \label{eq:random_expression}
	\sum_{u \in Y \setminus X} [r_u \cdot f(u \mid X \cup \RSet(\delta' r) - u) - (1 - r_u) \cdot f(u \mid Y \setminus \RSet(\delta'(\characteristic_{Y \setminus X} - r)) - u)]
\end{equation}
up to an error of $\gamma$. Note that the lemma follows from this via the union 
bound because Algorithm~\ref{alg:discrete_update} calculates $|\Gamma|$ estimates and \cref{lem:bound_size_gamma} guarantees that
\begin{align*}
	[\eps^{3}\ln^{-2}(\eps^{-1})/56] \cdot |\Gamma|
	\leq{} &
	[\eps^{3}\ln^{-2}(\eps^{-1})/56] \cdot (1 + 6\eps^{-1}\ln\eps^{-1})\\
	={} &
	\eps^{3}\ln^{-2}(\eps^{-1})/56 + 6\eps^2\ln^{-1}(\eps^{-1}) / 56\\
	\leq{} &
	\eps^{2}\ln^{-1}(\eps^{-1})/56 + 6\eps^2\ln^{-1}(\eps^{-1}) / 56
	=
	\eps^2\ln^{-1}(\eps^{-1}) / 8
	\enspace.
\end{align*}

Recall now that $G(\delta')$ is calculated by averaging $m$ independent samples of the random expression~\eqref{eq:random_expression}, and let us denote these samples by $W_1, W_2, \dotsc, W_m$. 
We observe that because of the submodularity of $f$ and the fact that $\RSet(\delta'r)$ and $\RSet(\delta'(\characteristic_{Y \setminus X} - r))$ are both subsets of $Y \setminus X$ we have
\begin{align*}
	W_i
	={} &
	\sum_{u \in Y \setminus X} [r_u \cdot f(u \mid X \cup \RSet(\delta' r) - u) - (1 - r_u) \cdot f(u \mid Y \setminus \RSet(\delta'(\characteristic_{Y \setminus X} - r)) - u)]\\
	\geq{} &
	\sum_{u \in Y \setminus X} [r_u \cdot f(u \mid Y - u) - (1 - r_u) \cdot f(u \mid X)]
	=
	- \sum_{u \in Y \setminus X} [r_ub_u + (1 - r_u)a_u]
\end{align*}
and
\begin{align*}
	W_i
	={} &
	\sum_{u \in Y \setminus X} [r_u \cdot f(u \mid X \cup \RSet(\delta' r) - u) - (1 - r_u) \cdot f(u \mid Y \setminus \RSet(\delta'(\characteristic_{Y \setminus X} - r)) - u)]\\
	\leq{} &
	\sum_{u \in Y \setminus X} [r_u \cdot f(u \mid X) - (1 - r_u) \cdot f(u \mid Y - u)]
	=
	\sum_{u \in Y \setminus X} [a_u r_u + b_u (1 - r_u)]
\end{align*}
for every $1 \leq i \leq m$. Hence, the difference between the maximum and minimum values that $W_i$ can take is $\sum_{u \in Y \setminus X} (a_u + b_u)$. If this difference is $0$, then every sample $W_i$ must be equal to its expectation, and we are done. Thus, we may assume in the rest of the proof that $\sum_{u \in Y \setminus X} (a_u + b_u)$ is positive. Using this assumption we get, by Hoeffding's inequality, that the probability that the average $G(\delta') = m^{-1} \cdot \sum_{i = 1}^m W_i$ of the samples deviates from its expectation by more than $\gamma$ is at most
\begin{align*}
	2\exp\left(-\frac{2m\gamma^2}{(\sum_{u \in Y \setminus X} a_u + b_u)^2}\right)
	={} &
	2\exp(-2m\eps^2)\\
	\leq{} &
	2\exp(-\ln(112\eps^{-3}\ln^2(\eps^{-1})))
	\le
	\eps^{3}\ln^{-2}(\eps^{-1})/56
	\enspace.
	\qedhere
\end{align*}
\end{proof}

Our next objective to prove parts (a) and (b) of \cref{prop:discrete_step}, which is done by the next two claims.
\begin{observation}\label{obs:discrete_basic_update}
$ X\subseteq X' \subseteq Y' \subseteq Y $.
\end{observation}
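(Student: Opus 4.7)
The plan is to read the containment relations directly off the structure of Algorithm~\ref{alg:discrete_update}. The procedure initializes $X' \gets X$ and $Y' \gets Y$, and thereafter only grows $X'$ (by inserting elements from $Y \setminus X$ on Line~\ref{ln:add_to_X'}) and only shrinks $Y'$ (by deleting elements from $Y \setminus X$ on Line~\ref{ln:remove_from_Y}). This immediately yields the two outer inclusions $X \subseteq X'$ and $Y' \subseteq Y$, so the only nontrivial part is the middle inclusion $X' \subseteq Y'$.

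For the middle inclusion I would argue by cases on an arbitrary $u \in X'$. If $u \in X$, then $u \in Y$ (since we assume $X \subseteq Y$), and $u$ was never eligible for removal from $Y'$ because the \emph{for} loop only iterates over $Y \setminus X$; hence $u \in Y'$. If instead $u \in X' \setminus X$, then $u$ must have been added to $X'$ during the iteration for some element in $Y \setminus X$, which happens only inside the \texttt{WithProbability}~$r_u$ branch on Line~\ref{ln:add_to_X'}. The key observation is that the \texttt{WithProbability}~$r_u$ branch and the \texttt{wElse} branch on Line~\ref{ln:remove_from_Y} are mutually exclusive, so the fact that $u$ was added to $X'$ rules out the possibility that $u$ was removed from $Y'$ in the same iteration, giving $u \in Y'$.

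There is no real obstacle here; the statement is essentially a syntactic check on the pseudo-code, and no probabilistic or submodularity arguments are needed. The main care point is simply to note the mutual exclusivity of the two branches in the \emph{for} loop, which is what prevents any single element of $Y \setminus X$ from being simultaneously inserted into $X'$ and deleted from $Y'$.
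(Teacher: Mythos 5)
Your proof is correct and matches the paper's argument step for step: the outer inclusions are read off directly from the fact that {\DiscreteUpdate} only adds elements to $X'$ and only removes elements from $Y'$, and the middle inclusion is handled by the same two-case analysis on $u \in X$ versus $u \in X' \setminus X$ together with the mutual exclusivity of the two branches of the loop.
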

\begin{proof}
	The set $ X' $ is formed by adding elements to $ X $, and therefore, is a 
	superset of $ X $. Similarly, the set $ Y' $ is formed by removing elements 
	from $ Y $, and therefore, is a subset of $ Y $. Next we show that $ X'\subseteq 
	Y' $, which do by proving that every element $ u\in X' $ belongs also to $Y' $.
	
	There are two cases to consider. In the first case, we assume that $ u\in X' \cap X (=X) $, which 
	implies $ u\in Y $ (since $ X\subseteq Y $) and $ u\notin Y\setminus X $. Since only elements of $Y \setminus X$ can be removed form $Y$ when the set $Y'$ is constructed, this implies that $ u $ remains in $ Y' $. The second case we need to consider is when $u \in X' \setminus X$. In this case $u$ was added to $X'$, which implies that $ u\in Y\setminus X \subseteq Y $ because only elements in $ Y\setminus X $ may 
	be added to $ X' $. Since $ u $ can either be added to $ X' 
	$ or removed from $ Y' $ but not both, the fact that it was added to $X'$ implies that it was not removed from $Y'$, and thus, we get again $ u\in Y' $, as promised.
\end{proof}

\begin{lemma} \label{lem:discrete_single_iteration_S_properties}
Conditioned on the event $\EventUpdate$, we have
$\bE\left\{\sum_{u \in Y'\setminus X'}  [f(u \mid X') - f(u \mid Y' - u)]\right\}
		\le (1-\eps) \cdot \sum_{u\in Y\setminus X } [f(u \mid X) - f(u \mid Y - u)]$.
\end{lemma}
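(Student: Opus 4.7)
My plan is to introduce two auxiliary quantities: let
\[
J(\delta) = \sum_{u \in Y \setminus X} \bE[f(u \mid X \cup \RSet(\delta r) - u) - f(u \mid Y \setminus \RSet(\delta(\characteristic_{Y \setminus X} - r)) - u)],
\]
and let $H(\delta)$ denote the expression that $G(\delta)$ estimates (i.e.\ the $r_u$-weighted analogue of $J(\delta)$). Observe that $H(0) = \sum_{u \in Y\setminus X}[r_u a_u + (1-r_u) b_u] = ar + b(\characteristic_{Y \setminus X} - r)$, which is exactly the threshold appearing in the rule defining $\delta$. The proof then proceeds in three steps: (i) show that the left-hand side of the lemma equals $(1-\delta) J(\delta)$; (ii) relate $J(\delta)$ and $H(\delta)$ via submodularity; (iii) combine the choice of $\delta$ with the event $\EventUpdate$ to conclude.

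For step (i), I would exploit the fact that each element $u \in Y \setminus X$ is handled by an independent pair of coin flips, and the event $\{u \in Y' \setminus X'\}$---which is exactly the event that neither ``add $u$ to $X'$'' nor ``remove $u$ from $Y'$'' fires---has probability $1-\delta$ and is independent of the flips for every other element. Therefore, conditioned on $u \in Y' \setminus X'$, the marginal distribution of $X'$ coincides with that of $X \cup \RSet(\delta r) - u$, and similarly the marginal distribution of $Y'$ coincides with that of $Y \setminus \RSet(\delta(\characteristic_{Y \setminus X} - r)) - u$. The fact that $X'$ and $Y'$ are globally coupled (a given element cannot be both added to $X'$ and removed from $Y'$) is irrelevant for expressions like $f(u \mid X')$ or $f(u \mid Y'-u)$, each of which depends on only one of the two sets. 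Summing the per-$u$ contributions yields the desired identity.

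Step (ii) is a short submodularity calculation: I would rewrite
\[
J(\delta) - H(\delta) = \sum_{u \in Y \setminus X} \bE[(1-r_u) f(u \mid X \cup \RSet(\delta r) - u) - r_u f(u \mid Y \setminus \RSet(\delta(\characteristic_{Y \setminus X} - r)) - u)],
\]
and bound each term using submodularity---$f(u \mid X \cup S - u) \le f(u \mid X) = a_u$ for $S \subseteq Y \setminus X$, and analogously $-f(u \mid Y \setminus S - u) \le b_u$---together with $r_u, 1-r_u \in [0,1]$. This gives $J(\delta) \le H(\delta) + \sum_u [(1-r_u) a_u + r_u b_u] = H(\delta) + \sum_u (a_u + b_u) - H(0)$.

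For step (iii), I first dispose of the edge case $\delta = 1$: every $u \in Y \setminus X$ is then either added to $X'$ or removed from $Y'$ with probability one, so $Y' \setminus X' = \varnothing$ and the left-hand side vanishes. Otherwise $\delta < 1$ satisfies $G(\delta) \le H(0) - 2\gamma$ by definition; the event $\EventUpdate$ upgrades this to $H(\delta) \le H(0) - \gamma$, and combining with step (ii) yields $J(\delta) \le \sum_u(a_u + b_u) - \gamma = (1 - \eps) \sum_u(a_u + b_u)$. The conclusion $(1 - \delta) J(\delta) \le (1 - \eps) \sum_u(a_u + b_u)$ follows regardless of the sign of $J(\delta)$: when $J(\delta) \ge 0$ we use $1 - \delta \le 1$, and when $J(\delta) < 0$ the left-hand side is non-positive while the right-hand side is non-negative. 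I expect the main obstacle to be step (i), where the couplings between coin flips must be carefully unpacked to identify the conditional marginal laws of $X'$ and $Y'$; the remaining steps are essentially discrete analogues of the manipulations inside the continuous procedure {\Update} and should follow without new ideas.
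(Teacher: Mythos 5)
Your proof is correct and follows essentially the same argument as the paper's: bound the $r$-weighted sum $H(\delta)$ by $H(0)-\gamma$ via the selection rule for $\delta$ together with the event $\EventUpdate$, pass to the unweighted sum $J(\delta)$ by a submodularity estimate, and then reduce the left-hand side to $J(\delta)$. The one stylistic difference is your step~(i): you establish the exact identity $\mathrm{LHS} = (1-\delta)J(\delta)$ by working out the conditional distributions of $X'$ and $Y'-u$ given $u \in Y'\setminus X'$, whereas the paper obtains the weaker but sufficient bound $\mathrm{LHS} \le J(\delta)$ via the deterministic pointwise inequality $\sum_{u\in Y'\setminus X'}[f(u\mid X')-f(u\mid Y'-u)] \le \sum_{u\in Y\setminus X}[f(u\mid X'-u)-f(u\mid Y'-u)]$ (valid because the extra terms are non-negative by submodularity and $f(u\mid X')=f(u\mid X'-u)$ when $u\notin X'$), which sidesteps the conditional-distribution bookkeeping you flagged as the main obstacle; since you discard the $(1-\delta)$ factor anyway, the two routes are substantively identical.
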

\begin{proof}
We will prove that the lemma holds conditioned on any fixed choice of values for the estimates $\{G(\delta') \mid \delta' \in  \Gamma\}$ as long as every estimate deviates from the expectation it should estimate by up to $\gamma$. One can observe that this will imply that the lemma holds also unconditionally by the law of total expectation since the event $\EventUpdate$ happens exactly when the estimates have this property.

The fact that we fixed the estimates $\{G(\delta') \mid \delta' \in \delta' \in \Gamma\}$ implies that $\delta$ is also deterministic, and thus, we only need to handle the randomness used to construct $X'$ and $Y'$. If $\delta = 1$, then we deterministically have $X' = Y'$, which makes the left side of the inequality we need to prove $0$. Hence, the inequality holds in this case since the submodularity of $f$ and the fact that $X \subseteq Y$ guarantee together that $f(u \mid X) - f(u \mid Y - u) \geq 0$ for every $u \in Y \setminus X$.

In the rest of the proof we consider the case of $\delta < 1$. Note that the distributions of the sets $X'$ and $Y'$ are $X \cup \RSet(\delta r)$ and $Y \setminus \RSet(\delta (\characteristic_{Y \setminus X} - r))$, respectively. Thus, we get in this case
\begin{align} \label{eq:basic_G}
	\bE\mspace{80mu}&\mspace{-80mu}\left\{\sum_{u \in Y \setminus X} [r_u \cdot f(u \mid X' - u) - (1 - r_u) \cdot f(u \mid Y' - u)]\right\}\\ \nonumber
	={} &
	\sum_{u \in Y \setminus X} \{r_u \cdot \bE[f(u \mid X' - u)] - (1 - r_u) \cdot \bE[f(u \mid Y' - u)]\}\\ \nonumber
	={} &
	\sum_{u \in Y \setminus X} \{r_u \cdot \bE[f(u \mid X \cup \RSet(\delta r) - u)] - (1 - r_u) \cdot \bE[f(u \mid Y \setminus \RSet(\delta (\characteristic_{Y \setminus X} - r)) - u)]\}\\ \nonumber
	\leq{} &
	G(\delta) + \gamma
	\leq
	ar + b(\characteristic_{Y \setminus X} - r) - \gamma
	=
	\sum_{u \in Y \setminus X} [r_u \cdot f(u \mid X) - (1 - r_u) \cdot f(u \mid Y - u)] - \gamma
	\enspace,
\end{align}
where the first inequality follows from our assumption that $G(\delta)$ is equal to the expectation it estimates up to an error of $\gamma$, and the second inequality follows from the way $\delta$ is chosen.

Observe now that by the submodularity of $f$, for every $u \in Y \setminus X$, we have
\[
	f(u \mid X' - u)
	\leq
	f(u \mid X)
	\Rightarrow
	\bE[(1 - r_u) \cdot f(u \mid X' - u)]
	\leq
	(1 - r_u) \cdot f(u \mid X)
\]
and 
\[
	f(u \mid Y' - u)
	\geq
	f(u \mid Y - u)
	\Rightarrow
	\bE[-r_u \cdot f(u \mid Y' - u)]
	\leq
	-r_u \cdot f(u \mid Y - u)
	\enspace.
\]
Adding these inequalities (for every such $u$) to Inequality~\eqref{eq:basic_G}, we get
\begin{align*}
	\bE\left\{\sum_{u \in Y \setminus X} [f(u \mid X' - u) - f(u \mid Y' - u)]\right\}
	\leq{} &
	\sum_{u \in Y \setminus X} [f(u \mid X) - f(u \mid Y - u)] - \gamma\\
	={} &
	(1 - \eps) \cdot \sum_{u \in Y \setminus X} [f(u \mid X) - f(u \mid Y - u)]
	\enspace.
\end{align*}

To complete the proof of the lemma, it remains to observe that
\[
	\sum_{u \in Y' \setminus X'} [f(u \mid X') - f(u \mid Y' - u)]
	\leq
	\sum_{u \in Y \setminus X} [f(u \mid X' - u) - f(u \mid Y' - u)]
\]
because for every element $u \in Y' \setminus X'$ we have
\[
	f(u \mid X') - f(u \mid Y' - u)]
	=
	f(u \mid X' - u) - f(u \mid Y' - u)]
	\enspace,
\]
and for every other element $u \in Y \setminus X$ the submodularity of $f$ and the fact that $X' \subseteq Y'$ imply together
\[
	f(u \mid X' - u) - f(u \mid Y' - u)]
	\geq
	0
	\enspace.
	\qedhere
\]
\end{proof}

Proving part (c) of Proposition~\ref{prop:discrete_step} is slightly more involved, and the following few claims are devoted to its proof. Let us define $U^+ = \{u \in Y \setminus X \mid a_u > 0 \text{ and } b_u > 0\}$.

\begin{lemma}\label{lem:discrete_loss_two_sided}
Conditioned on the particular value chosen for $\delta$,
\[
	f(OPT(X,Y))-\bE[f(OPT(X',Y'))] \le \delta \cdot
	\sum_{u \in U^+} \max\{b_u r_u, a_u(1 -r_u)\} \enspace.
\]
\end{lemma}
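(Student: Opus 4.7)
The plan is to reformulate the problem in terms of the multilinear extension $F$ of $f$ and then apply a chain-rule argument analogous to Lemma~\ref{lem:loss_two_sided}. Conditioning on $\delta$, the remaining randomness in {\DiscreteUpdate} comes from independent decisions for the elements of $Y \setminus X$: each $u \in Y \setminus X$ is added to $X'$ with probability $\delta r_u$, removed from $Y'$ with probability $\delta(1-r_u)$, and otherwise left in $Y \setminus X$, all independently across $u$. Unfolding the definition $OPT(X',Y') = (OPT \cup X') \cap Y'$, I would observe that elements of $X$ are deterministically in $OPT(X',Y')$, elements outside $Y$ are deterministically out, each $u \in (Y \setminus X) \cap OPT$ lies in $OPT(X',Y')$ iff it is not removed (marginal probability $1 - \delta(1-r_u)$), and each $u \in (Y \setminus X) \setminus OPT$ lies in $OPT(X',Y')$ iff it is added (marginal probability $\delta r_u$).

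Since these indicators are independent, letting $p'$ denote the vector of the inclusion probabilities just listed and $p \triangleq \characteristic_{OPT(X,Y)}$, the multilinearity of $F$ yields $\bE[f(OPT(X',Y'))] = F(p')$ and $f(OPT(X,Y)) = F(p)$. The vectors $p$ and $p'$ agree outside $Y \setminus X$, and along the linear interpolation $p(t) = p' + t(p-p')$ every coordinate stays in $[0,1]$; in particular $\characteristic_X \leq p(t) \leq \characteristic_Y$ for all $t \in [0,1]$. By the submodularity of $f$, this sandwich gives $\partial_u F(p(t)) \in [\partial_u F(\characteristic_Y), \partial_u F(\characteristic_X)] = [-b_u, a_u]$ for every $u \in Y \setminus X$. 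Applying the chain rule and plugging in these bounds,
\[
F(p) - F(p') = \int_0^1 \sum_{u \in Y \setminus X} (p_u - p'_u)\, \partial_u F(p(t))\, dt \leq \delta \sum_{u \in (Y\setminus X)\cap OPT} (1-r_u) a_u + \delta \sum_{u \in (Y\setminus X)\setminus OPT} r_u b_u,
\]
since $p_u - p'_u$ equals $\delta(1-r_u) \geq 0$ on $(Y\setminus X)\cap OPT$ and $-\delta r_u \leq 0$ on $(Y\setminus X)\setminus OPT$.

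The right-hand side is at most $\delta \sum_{u \in Y\setminus X} \max\{a_u(1-r_u), b_u r_u\}$, and I would conclude by restricting the sum to $U^+$ via a short case analysis on $r_u$: if $a_u \leq 0$ then $r_u = 0$ and the maximum is $\max\{0, a_u\} = 0$; if $a_u > 0 \geq b_u$ then $r_u = 1$ and the maximum is $\max\{b_u, 0\} = 0$; the case $a_u, b_u < 0$ is ruled out since submodularity gives $a_u + b_u = f(u \mid X) - f(u \mid Y - u) \geq 0$. The main (and only) non-routine step is verifying directly from the pseudocode that for each $u \in Y \setminus X$ the ``add'' and ``remove'' events are mutually exclusive (so that $X' \subseteq Y'$ is preserved) and that the fates of distinct elements are independent; this is exactly what allows one to equate $\bE[f(OPT(X',Y'))]$ with the value of $F$ at a product-distribution vector and thereby port the continuous chain-rule argument of Lemma~\ref{lem:loss_two_sided} to the discrete setting.
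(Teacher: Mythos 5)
Your proof is correct but takes a genuinely different route from the paper's. The paper proves the lemma entirely in the discrete domain via a hybrid (telescoping) argument: it fixes an arbitrary ordering $u_1,\dotsc,u_{|Y\setminus X|}$ of $Y\setminus X$, defines intermediate pairs $(X^i,Y^i)$ that interpolate between $(X,Y)$ and $(X',Y')$ one coordinate at a time, bounds $\bE[f(OPT(X^{i-1},Y^{i-1})) - f(OPT(X^i,Y^i))]$ by $\delta\cdot\max\{b_{u_i}r_{u_i}, a_{u_i}(1-r_{u_i})\}$ through a two-case analysis on whether $u_i\in OPT$, and then sums. Your approach instead observes that, conditioned on $\delta$, the set $OPT(X',Y')$ is distributed as an independent-coordinate random set, so its expected $f$-value equals the multilinear extension $F$ evaluated at the coordinate-wise inclusion-probability vector $p'$; this lets you directly port the chain-rule argument from Lemma~\ref{lem:loss_two_sided}, bounding $\partial_u F(p(t))$ by $[-b_u,a_u]$ via the sandwich $\characteristic_X\le p(t)\le\characteristic_Y$. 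Both are sound. The paper's hybrid proof is self-contained in the discrete world and never invokes $F$; yours makes the conceptual parallel between Lemma~\ref{lem:discrete_loss_two_sided} and its continuous counterpart explicit and replaces the per-element case analysis with a single gradient bound, at the cost of requiring the (correct, but essential) observation that the randomness in {\DiscreteUpdate} is a product measure over $Y\setminus X$ with mutually exclusive add/remove events for each element. That independence and mutual exclusivity, which you flag as the crux, is indeed immediate from Lines~\ref{ln:probability_ru}--\ref{ln:remove_from_Y} of the pseudocode, so the argument goes through.
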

\begin{proof}
Fix an arbitrary order $u_1, u_2, \dotsc, u_{|Y \setminus X|}$ for the elements of $Y \setminus X$, and let us denote by $X^i$, for every $0 \leq i \leq |Y \setminus X|$, a set which agrees with $X'$ on the elements $u_1, \dotsc, u_i$ and with the set $X$ on all other elements (formally, $X^i = X \cup (X' \cap \{u_1, u_2, \dotsc, u_i\})$). Similarly, let $Y^i$ be a set which agrees with $Y'$ on the elements $u_1, \dotsc, u_i$ and with the set $Y$ on all other elements (formally, $Y^i = Y' \cup (Y \setminus \{u_{1}, u_{2}, \dotsc, u_i\})$).

Fix now some $1 \leq i \leq |Y \setminus X|$, and let us bound the expectation of the difference $f(OPT(X^{i-1},\allowbreak Y^{i-1})) - f(OPT(X^i, Y^i))$. There are two cases to consider. The first case is when $u_i \not \in OPT$. In this case $OPT(X^{i-1}, Y^{i-1}) = OPT(X^i, Y^i)$ unless $u_i$ ends up in $X'$, and thus, we get
\begin{align*}
	f(OPT(X^{i-1}, Y^{i-1})) - f(OPT(X^i, Y^i))
	={} &
	\characteristic[u_i \in X'] \cdot [- f(u \mid OPT(X^{i-1}, Y^{i-1}))]\\
	\leq{} &
	\characteristic[u_i \in X'] \cdot [- f(u \mid Y - u)]
	=
	b_{u_i} \cdot \characteristic[u_i \in X']
	\enspace,
\end{align*}
where the inequality follows from the submodularity of $f$. Taking expectation now over both sides of the last inequality, we get
\[
	\bE[f(OPT(X^{i-1}, Y^{i-1})) - f(OPT(X^i, Y^i))]
	\leq
	b_{u_i} \cdot \bE[\characteristic[u_i \in X']]
	=
	b_{u_i} \cdot \delta r_{u_i}
	\enspace.
\]
The second case we need to consider is the case when $u_i \in OPT$. In this case $OPT(X^{i-1}, Y^{i-1}) = OPT(X^i, Y^i)$ unless $u_i$ ends up outside of $Y'$, and thus, we get
\begin{align*}
	f(OPT(X^{i-1}, Y^{i-1})) - f(OPT(X^i, Y^i))
	={} &
	\characteristic[u_i \not \in Y'] \cdot f(u \mid OPT(X^{i-1}, Y^{i-1}) - u)\\
	\leq{} &
	\characteristic[u_i \not \in Y'] \cdot f(u \mid X)
	=
	a_{u_i} \cdot \characteristic[u_i \not \in Y']
	\enspace,
\end{align*}
where the inequality follows from the submodularity of $f$. Taking expectation now over both sides of the last inequality, we get
\[
	\bE[f(OPT(X^{i-1}, Y^{i-1})) - f(OPT(X^i, Y^i))]
	\leq
	a_{u_i} \cdot \bE[\characteristic[u_i \not \in Y']]
	=
	a_{u_i} \cdot \delta (1 - r_{u_i})
	\enspace.
\]

By combining the results that we got for the two cases, we get that it always holds that
\[
	\bE[f(OPT(X^{i-1}, Y^{i-1})) - f(OPT(X^i, Y^i))]
	\leq
	\delta \cdot \max\{b_{u_i} r_{u_i}, a_{u_i}(1 - r_{u_i})\}
	\enspace.
\]
Summing up this inequality over all $1 \leq i \leq |Y \setminus X|$, we get
\begin{align*}
	f(OPT(X, Y)) - \bE[&f(OPT(X', Y'))]
	=
	\sum_{i = 1}^{|Y\setminus X|} \bE[f(OPT(X^{i-1}, Y^{i-1})) - f(OPT(X^i, Y^i))]\\
	\leq{} &
	\sum_{i = 1}^{|Y\setminus X|} [\delta \cdot \max\{b_{u_i} r_{u_i}, a_{u_i}(1 - r_{u_i})\}]
	=
	\delta \cdot \sum_{u \in Y\setminus X} \mspace{-9mu} \max\{b_{u} r_{u}, a_{u}(1 - r_{u})\}
	\enspace.
\end{align*}
To complete the proof of the lemma, it remains to observe that for every element $u \in (Y \setminus X) \setminus U^+$ it holds that $\max\{b_u r_u, a_u(1 - r_u)\} = 0$. To see that this is the case, note that every such element $u$ must fall into one out of only two possible options. The first option is that $a_u > 0$ and $b_u \leq 0$, which imply $r_u = 1$, and thus, $\max\{b_u r_u, a_u(1 - r_u)\} = \max\{b_u, 0\} = 0$. The second option is that $a_u \leq 0$, which implies $r_u = 0$, and thus, $\max\{b_u r_u, a_u(1 - r_u)\} = \max\{0, a_u\} = 0$.
\end{proof}

\begin{observation} \label{obs:discrete_gain_bound}
	For every element $u \in \cN$, $a_u r_u + b_u (1 - r_u) \geq \max\{a_u, 
	b_u\} / 2 \geq 0$.
\end{observation}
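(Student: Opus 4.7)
The plan is to mirror the proof of Observation~\ref{obs:gain_bound} from the continuous setting almost verbatim, since the two key ingredients carry over: (i) when $u \in Y \setminus X$, the marginals $a_u = f(u \mid X)$ and $b_u = -f(u \mid Y - u)$ satisfy $a_u + b_u \geq 0$ because $u \notin X$ gives $X \subseteq Y - u$ and $f$ is submodular; (ii) the rate $r_u$ is computed from $a_u, b_u$ by exactly the same formula used in the continuous algorithm.

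First, I would dispose of the trivial case $u \notin Y \setminus X$, where Algorithm~\ref{alg:discrete_update} explicitly sets $a_u = b_u = r_u = 0$, so $a_u r_u + b_u(1 - r_u) = 0 = \max\{a_u, b_u\}/2$, and both inequalities hold. Next, for $u \in Y \setminus X$, submodularity applied to the inclusion $X \subseteq Y - u$ gives $f(u \mid X) \geq f(u \mid Y - u)$, i.e., $a_u + b_u \geq 0$, which already establishes the second inequality of the observation.

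For the first inequality, I would assume without loss of generality that $a_u \geq b_u$ (the other case is symmetric, since swapping the roles of $a$ and $b$ swaps $r_u$ with $1 - r_u$). Then I would split into the same three cases as in Observation~\ref{obs:gain_bound}: if $b_u > 0$, then $a_u > 0$ too and $r_u = a_u/(a_u+b_u)$, which yields $a_ur_u + b_u(1-r_u) \geq a_ur_u = a_u^2/(a_u+b_u) \geq a_u^2/(2a_u) = a_u/2$; if $a_u = 0$, then $a_u + b_u \geq 0$ together with $a_u \geq b_u$ forces $a_u = b_u = 0$ and $r_u = 0$, so the claim holds trivially; and if $b_u \leq 0$ with $a_u \neq 0$, then $a_u + b_u \geq 0$ forces $a_u > 0$, hence $r_u = 1$ and $a_ur_u + b_u(1 - r_u) = a_u \geq \max\{a_u, b_u\}/2$. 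The main (minor) obstacle is simply remembering that the branch $u \notin Y \setminus X$ is handled by an explicit override in the algorithm, rather than by the $r_u$ formula; once that trivial case is separated, the rest is a direct transcription of the continuous proof with partial derivatives of $F$ replaced by discrete marginals of $f$.
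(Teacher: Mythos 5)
Your proof is correct and follows essentially the same route as the paper's: dispose of the trivial $u \notin Y \setminus X$ case via the algorithm's explicit zero assignment, derive $a_u + b_u \geq 0$ from submodularity applied to $X \subseteq Y - u$, and then do a case analysis on the signs of $a_u$ and $b_u$. The only cosmetic difference is that you import the WLOG $a_u \geq b_u$ from the continuous Observation, whereas the paper's discrete proof simply lists all sign cases directly; both organizations cover the same ground, and your symmetry justification (swapping $a$ with $b$ swaps $r_u$ with $1-r_u$, with the degenerate $a_u = b_u = 0$ case handled separately) is sound.
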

\begin{proof}
If $ u\notin Y\setminus X $, $ a_u=b_u=r_u=0 $, and thus, the equality holds 
trivially. Hence, in the sequel, we assume $ u\in Y\setminus X $. Observe now that by the submodularity of $f$
\[
	a_u + b_u
	=
	f(u \mid X) - f(u \mid Y)
	\geq 0
	\enspace,
\]
hence, at least one of the values $a_u$ or $b_u$ must be positive, which implies the second inequality of the observation.

To prove the first inequality of the observation, we need to consider a few cases. If $ a_u \geq b_u >0 $, then we have 
\[
	a_ur_u+b_u(1-r_u)
	=
	\frac{a_u^2+b_u^2}{a_u+b_u}
	\geq
	\frac{a_u^2}{2a_u}
	=
	\frac{a_u}{2}
	=
	\frac{\max\{a_u, b_u\}}{2}
	\enspace.
\]
The case $b_u \geq a_u > 0$ is analogous, and thus, we skip it. Consider now the case of $ a_u>0 \geq b_u $. In this case $r_u = 1$, and thus,
\[
	a_ur_u+b_u(1-r_u) = a_u
	\geq
	\frac{a_u}{2}
	=
	\frac{\max\{a_u, b_u\}}{2}
	\enspace.
\]
The last case we need to consider is the case of $a_u \leq 0$, which implies $r_u = 0$, and also implies $b_u \geq 0$ since we already proved that $a_u + b_u \geq 0$. Thus, we get in this case
\[
	a_ur_u+b_u(1-r_u)
	=
	b_u
	\geq
	\frac{b_u}{2}
	=
	\frac{\max\{a_u, b_u\}}{2}
	\enspace.
	\qedhere
\]
\end{proof}

\begin{lemma} \label{lem:discrete_gain_two_sided}
Conditioned on any fixed choice for the estimates $\{G(\delta') \mid \delta \in \Gamma\}$ such that every one of these estimates is equal to the expectation of the expression it should estimate up to an error of $\gamma$,
\[
	\bE[f(X') + f(Y')] - [f(X) + f(Y)] \geq (1 - 15\eps)\delta\cdot \sum_{u \in U^+} [a_u r_u + 
	b_u(1 - r_u)] - 2\eps^2\ln^{-1}(\eps^{-1}) \cdot \sum_{u \in Y \setminus X} \mspace{-9mu}(a_u + b_u)
	\enspace.
\]
\end{lemma}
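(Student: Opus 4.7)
The proof will follow the structure of the continuous analogue (Lemma~\ref{lem:gain_two_sided}), with modifications to handle that the ``derivative'' is estimated through $G$ rather than known exactly, and that the grid spacing is $\eps^2 \ln^{-1}(\eps^{-1})$ rather than $\eps^2$. First I would establish the chain-rule identity
\[
	\bE[f(X') + f(Y')] - [f(X) + f(Y)] \;=\; \int_0^\delta E(t)\, dt,
\]
where $E(t)$ denotes the true expectation of the quantity estimated by $G(t)$. This follows from $\bE[f(X \cup \RSet(tr))] = F(\characteristic_X + tr)$, $\bE[f(Y \setminus \RSet(t(\characteristic_{Y\setminus X} - r)))] = F(\characteristic_Y - t(\characteristic_{Y\setminus X} - r))$, and the standard identity $\partial_u F(z) = \bE[f(u \mid \RSet(z) - u)]$ for $u \in Y \setminus X$. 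A brief submodularity argument also shows $E(t)$ is non-increasing in $t$: as $t$ grows, $\RSet(tr)$ stochastically grows while $Y \setminus \RSet(t(\characteristic_{Y\setminus X} - r))$ stochastically shrinks, moving both expected marginals in directions that cause $E(t)$ to decrease.

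Let $\delta^*$ be the largest value in $\Gamma$ strictly smaller than $\delta$, or $\delta^* = 0$ if none exists. In the case $\delta^* > 0$, the minimality of $\delta$ yields $G(\delta^*) > ar + b(\characteristic_{Y\setminus X} - r) - 2\gamma$; combined with the $\gamma$-accuracy hypothesis and monotonicity, this gives $E(t) > ar + b(\characteristic_{Y\setminus X} - r) - 3\gamma$ for every $t \in [0, \delta^*]$. On the tail $[\delta^*, \delta]$, whose length is at most $\eps\delta^*$, submodularity together with the non-negativity of the per-$u$ contribution to $E(t)$ for $u \notin U^+$ (verified exactly as in the continuous proof) yields $E(t) \geq -\sum_{u \in U^+}\max\{a_u, b_u\}$. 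Combining the two pieces gives a lower bound of
\[
	\delta^* \cdot \Bigl[ar + b(\characteristic_{Y\setminus X}-r) - 3\gamma - \eps \sum_{u \in U^+}\max\{a_u, b_u\}\Bigr].
\]
The plan is then to absorb the entire $-3\gamma$ slack into $(1-15\eps)\delta M$, where $M = \sum_{u\in U^+}[a_u r_u + b_u(1-r_u)]$. Observation~\ref{obs:discrete_gain_bound} yields $\sum_{U^+}\max\{a_u,b_u\} \leq 2M$ and $\sum_{U^+}(a_u+b_u) \leq 4M$; additionally, the non-negative term $\sum_{(Y\setminus X)\setminus U^+}\max\{a_u, b_u\}$ implicitly present inside $ar + b(\characteristic_{Y\setminus X}-r)$ dominates the corresponding portion of $3\gamma = 3\eps\sum_{Y\setminus X}(a_u+b_u)$. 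Thus the bracket is at least $(1-14\eps)M$, and $\delta^* \geq \delta/(1+\eps) \geq (1-\eps)\delta$ then delivers the desired $(1-15\eps)\delta M$, which is at least the lemma's stated bound since the term $2\eps^2\ln^{-1}(\eps^{-1})\sum_{Y\setminus X}(a_u+b_u)$ is non-negative.

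In the remaining case $\delta^* = 0$, which forces $\delta = \delta(0) = \eps^2\ln^{-1}(\eps^{-1})$, no estimate-based lower bound is available and I would apply the submodularity bound $E(t) \geq -\sum_{u\in U^+}\max\{a_u, b_u\}$ throughout $[0, \delta]$, yielding $\int_0^\delta E(t)\,dt \geq -\delta(0)\sum_{u\in U^+}\max\{a_u, b_u\}$. Using the identity $2\eps^2\ln^{-1}(\eps^{-1}) = 2\delta(0)$, the lemma then reduces to the arithmetic inequality
\[
	\sum_{u\in U^+}\max\{a_u, b_u\} + (1-15\eps)M \;\leq\; 2\sum_{u\in Y\setminus X}(a_u+b_u),
\]
which holds because both terms on the left are individually bounded by $\sum_{U^+}(a_u+b_u) \leq \sum_{Y\setminus X}(a_u+b_u)$. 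The main obstacle is the bookkeeping in the first case: unlike in the continuous proof, where $-\delta\gamma$ appears as an explicit separate error term, here $\gamma$ depends on $\sum_{Y\setminus X}(a_u+b_u)$ and must be fully absorbed into the weakened main factor $(1-15\eps)$, which forces the careful splitting of the $\gamma$-contribution between $U^+$ and its complement and a tight use of Observation~\ref{obs:discrete_gain_bound}.
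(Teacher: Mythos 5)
Your proof is correct, and the final arithmetic (splitting the $3\gamma$ slack between $U^+$ and its complement, using Observation~\ref{obs:discrete_gain_bound} to get $a_u + b_u \le 4[a_ur_u + b_u(1-r_u)]$ and $\max\{a_u,b_u\} \le 2[a_ur_u + b_u(1-r_u)]$ on $U^+$, and noting $a_ur_u + b_u(1-r_u) = \max\{a_u,b_u\} \ge a_u + b_u$ on $(Y\setminus X)\setminus U^+$) matches the paper's bookkeeping and delivers the same $(1-15\eps)$ factor. The technical route, however, is genuinely different. You invoke the multilinear-extension chain-rule identity $\bE[f(X')]-f(X) = F(\characteristic_X+\delta r)-F(\characteristic_X) = \int_0^\delta \sum_u r_u\,\bE[f(u\mid X\cup\RSet(tr)-u)]\,dt$, prove $E(t)$ is non-increasing, and split the integral at the grid point $\delta^*$. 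The paper instead stays purely combinatorial: it defines $R_1\sim\RSet(\delta r)$ and a coupled subsample $R_2\sim\RSet(\hat\delta r)$ with $R_2\subseteq R_1$, decomposes $f(R_1\mid X)=f(R_2\mid X)+f(R_1\setminus R_2\mid X\cup R_2)$, and lower-bounds each piece by telescoping with last-in marginals (using $f(R_2\mid X)\ge\sum_{u\in R_2}f(u\mid X\cup R_2 - u)$ and $f(R_1\setminus R_2\mid X\cup R_2)\ge\sum_{u\in R_1\setminus R_2}f(u\mid Y-u)$), then takes expectations. Both land on the same ``main step of length $\hat\delta$ plus tail of length $\le\eps\hat\delta$'' decomposition with identical bounds on each piece. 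Your approach makes the parallel to the continuous Lemma~\ref{lem:gain_two_sided} transparent at the cost of re-introducing $F$ and calculus into the appendix that is meant to work with $f$-oracle access only; the paper's coupling argument is self-contained in the discrete world. Either is a valid proof.
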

\begin{proof}
Let us define $\delta(j) = \eps^2\ln^{-1}(\eps^{-1}) \cdot (1 + \eps)^j$ for every $j \geq 0$, and let us denote by $j^*$ the non-negative value for which $\delta = 
	\delta(j^*)$. Note that such non-negative value always exists since $ 
	\eps^2 \ln^{-1} (\eps^{-1})\le 1 $. For convenience, we also define 
	$\delta(-1) = 0$, making $ \delta(\cdot) $ a strictly increasing function 
	on $ \{-1\}\cup [0,\infty) $, which implies, in particular, that the 
	inequality 
	$\delta(\lceil j^* \rceil - 1) < \delta$ always holds. For ease of reading, let us denote $\hat{\delta} = \delta(\lceil j^* \rceil - 1)$.
	
Let us now denote by $R_1$ a random set that contains every element $u \in Y \setminus X$ with probability $\delta r_u$, independently, and by $R_2$ a random set that contains every element of $R_1$ with probability $\hat{\delta} / \delta$, independently. Note that $R_2$ is always a subset of $R_1$ and the two sets have the same distributions as $\RSet(\hat{\delta}r)$ and $\RSet(\delta r)$, respectively. Using these observations we now get
\begin{align*}
	\bE[f(X')] - f(&X)
	=
	\bE[f(\RSet(\delta r) \mid X)]
	=
	\bE[f(R_1 \mid X)]
	=
	\bE[f(R_2 \mid X)] + \bE[f(R_1 \setminus R_2 \mid X \cup R_2)]\\
	\geq{} &
	\bE\left[ \sum_{u \in Y \setminus X} \characteristic[u \in R_2] \cdot f(u \mid X \cup R_2 - u)\right] + \bE\left[ \sum_{u \in Y \setminus X} \characteristic[u \in R_1 \setminus R_2] \cdot f(u \mid Y - u)\right]\\
	={} &
	\sum_{u \in Y \setminus X} \Pr[u \in R_2] \cdot \bE[f(u \mid X \cup R_2 - u)] - \sum_{u \in Y \setminus X} \Pr[u \in R_1 \setminus R_2] \cdot b_u\\
	={} &
	\hat{\delta} \cdot \sum_{u \in Y \setminus X} r_u \cdot \bE[f(u \mid X \cup R(\hat{\delta}r) - u)] - \delta \cdot \left(1 - \frac{\hat{\delta}}{\delta}\right) \cdot \sum_{u \in Y \setminus X} r_u b_u
	\enspace,
\end{align*}
where the inequality follows from the submodularity of $f$ and the penultimate equality holds since the distribution of $X \cup R_2 - u$ is independent of the membership of $u$ in $R_2$. Using an analogous argument we can also get
\[
	\bE[f(Y')]-f(Y)
	\geq
	-\hat{\delta} \cdot \sum_{u \in Y \setminus X} (1 - r_u) \cdot \bE[f(u \mid Y \setminus R(\hat{\delta} (\characteristic_{Y \setminus X} - r)) - u)] - \delta \cdot \left(1 - \frac{\hat{\delta}}{\delta}\right) \cdot \sum_{u \in Y \setminus X} (1 - r_u) a_u
	\enspace.
\]
Adding this inequality to the previous one yields
\begin{align}\label{eq:discrete_increase_bound_two_sided_x}
	\bE [f(X'&)+f(Y')] - [f(X)+f(Y)]  \\\nonumber
	\ge{} &
	\hat{\delta} \cdot \sum_{u \in Y \setminus X} \left\{r_u \cdot \bE[f(u \mid X \cup R(\hat{\delta}r) - u)] - (1 - r_u) \cdot \bE[f(u \mid Y \setminus R(\hat{\delta} (\characteristic_{Y \setminus X} - r)) - u)] \right\} \\\nonumber
	& - \delta \cdot \left(1 - \frac{\hat{\delta}}{\delta}\right) \cdot \sum_{u \in Y \setminus X} [r_u b_u + (1 - r_u)a_u]\\ \nonumber
	\ge{} &
	\hat{\delta} \cdot \sum_{u \in Y \setminus X} \left\{r_u \cdot \bE[f(u \mid X \cup R(\hat{\delta}r) - u)] - (1 - r_u) \cdot \bE[f(u \mid Y \setminus R(\hat{\delta} (\characteristic_{Y \setminus X} - r)) - u)] \right\} \\\nonumber
	& - \delta \cdot \left(1 - \frac{\hat{\delta}}{\delta}\right) \cdot \sum_{u \in U^+} [r_u b_u + (1 - r_u)a_u]
	\enspace,
\end{align}
where the last inequality holds since every element of $(Y \setminus X) \setminus U^+$ must obey either $a_u > 0$, $b_u \leq 0$ and $r_u = 1$ or $a_u \leq 0$ and $r_u = 0$, and in both cases we clearly have $r_u b_u + (1 - r_u)a_u \leq 0$. 

There are now two cases to consider. If $j^* = 0$, then $\hat{\delta} = \delta(\lceil j^* \rceil - 1) = \delta(-1) = 0$ and $\delta = \delta(j^*) = \delta(0) = \eps^2 \ln^{-1} (\eps^{-1})$, 
	which yields
	\begin{align*}
	\bE[f(X')+f(Y')] - [f(X)&{}+f(Y)] 
	\geq
	- \delta(0) \cdot \sum_{u \in U^+} [r_u b_u + (1 - r_u)a_u]
	\geq
	- \delta(0) \cdot \sum_{u \in U^+} \max \{ a_u,b_u\}\\
	\geq{} &
	\delta \cdot \sum_{u \in U^+} [a_ur_u + b_u(1 - r_u)] - 2
	\delta(0) \cdot \sum_{u \in U^+} (a_u + b_u)\\
	\geq{} &
	(1 - 15\eps)\delta \cdot \sum_{u \in U^+} [a_ur_u + b_u(1 - r_u)] - 2\eps^2\ln^{-1}(\eps^{-1}) \cdot 
	\sum_{u \in Y \setminus X} (a_u + b_u)
	\enspace,
	\end{align*}
where the third inequality holds by the definition of $U^+$ since the sum of any two non-negative values always upper bounds their maximum, and the last inequality holds since for every $u \in Y \setminus X$ we have $a_u r_u + b_u(1 - r_u) \geq 0$ by Observation~\ref{obs:discrete_gain_bound} and $a_u + b_u = f(u \mid X) - f(u \mid Y - u) \geq f(u \mid X) - f(u \mid X) = 0$ by the submodularity of $f$. It remains to prove the lemma for the case of $j^* > 0$. In this case two things happen. First $\hat{\delta} / \delta \geq (1 + \eps)^{-1}$. Second, since $\hat{\delta} \in \Gamma$ and $\hat{\delta} < \delta$, the way $\delta$ was chosen must imply that $G(\hat{\delta}) \geq a r + b(\characteristic_{Y \setminus X} - r) - 2\gamma$. Recall now that $G(\hat{\delta})$ is an estimate for the expectation of 
\[
	\sum_{u \in Y \setminus X} [r_u \cdot f(u \mid X \cup \RSet(\hat{\delta} r) - u) - (1 - r_u) \cdot f(u \mid Y \setminus \RSet(\hat{\delta}(\characteristic_{Y \setminus X} - r)) - u)]
	\enspace,
\]
and by the assumption of the lemma it is correct up to an error of $\gamma$, which implies that the expectation of the last expression is at least $a r + b(\characteristic_{Y \setminus X} - r) - 3\gamma$. Plugging these observations into Inequality~\eqref{eq:discrete_increase_bound_two_sided_x}, we get
\begin{align*}
	\bE [f(X')+f(Y')] - [&f(X)+f(Y)]\\
	\ge{} &
	\hat{\delta}[ar + (b(\characteristic_{Y \setminus X} - r) - 3\gamma] - \delta \cdot \left(1 - \frac{\hat{\delta}}{\delta}\right) \cdot \sum_{u \in U^+} [r_u b_u + (1 - r_u)a_u]\\
	\geq{} &
	\hat{\delta} \cdot \sum_{u \in Y \setminus X} [a_ur_u + b_u(1 - r_u) - 3\eps(a_u + b_u)] - \eps\delta \cdot \sum_{u \in Y \setminus X} \max\{a_u, b_u\}
	\enspace,
\end{align*}
where the second inequality holds since $\max\{a_u, b_u\} \geq 0$ by Observation~\ref{obs:discrete_gain_bound} for every $u \in Y \setminus X$. The same observation also shows that $a_ur_u + b_u(1 - r_u) \geq \max\{a_u, b_u\}/2 \geq \max\{0, (a_u + b_u)/4\}$. Plugging this bound into the previous inequality gives us
\begin{align*}
	\bE [f(X')+f(Y')] - [&f(X)+f(Y)]
	\ge
	(1 - 12\eps)\hat{\delta} \cdot \sum_{u \in Y \setminus X} [a_ur_u + b_u(1 - r_u)] - \eps\delta \cdot \sum_{u \in Y \setminus X} \mspace{-9mu} \max\{a_u, b_u\}\\
	\geq{} &
	(1 - 13\eps)\delta \cdot \sum_{u \in Y \setminus X} [a_ur_u + b_u(1 - r_u)] - \eps\delta \cdot \sum_{u \in Y \setminus X} \mspace{-9mu} \max\{a_u, b_u\}\\
	\geq{} &
	(1 - 15\eps)\delta \cdot \sum_{u \in Y \setminus X} [a_ur_u + b_u(1 - r_u)]
	\geq
	(1 - 15\eps)\delta \cdot \sum_{u \in U^+} [a_ur_u + b_u(1 - r_u)]
	\enspace.
	\qedhere
\end{align*}

\end{proof}
We are now ready to prove part (c) of 
\cref{prop:discrete_step}.
\begin{corollary} \label{cor:discrete_iteration_conclusion_step}
Conditioned on $\EventUpdate$,
	\begin{align*}
	\bE[f(X')+f(Y')] - [f(X)+f(Y)]
	\geq{} &
	2(1 - 15\eps) \cdot [f(OPT(X, Y)) -\bE [f(OPT(X', Y'))]]  \\&- 2\eps^2\ln^{-1}(\eps^{-1}) \cdot \sum_{u \in Y \setminus X} (a_u + b_u)
	\enspace.
	\end{align*}
\end{corollary}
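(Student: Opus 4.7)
The plan is to mimic the proof of Corollary~\ref{cor:iteration_conclusion_step}, replacing the continuous ingredients by their discrete counterparts developed in this subsection. The crux is the pointwise inequality
\[
2 \max\{b_u r_u,\, a_u(1-r_u)\} \;\leq\; a_u r_u + b_u(1-r_u) \qquad \text{for every } u \in \cN,
\]
which I would prove by a brief case analysis on $r_u$. For $u \notin Y \setminus X$ all of $a_u$, $b_u$, $r_u$ are $0$ and the inequality is trivial. For $u \in Y \setminus X$, the submodularity of $f$ together with $X \subseteq Y - u$ yields $a_u + b_u = f(u \mid X) - f(u \mid Y - u) \geq 0$, which handles the corner cases $r_u \in \{0, 1\}$ exactly as in the proof of Corollary~\ref{cor:iteration_conclusion_step}. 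When $r_u = a_u/(a_u + b_u) \in (0, 1)$, we have $b_u r_u = a_u(1-r_u) = a_u b_u/(a_u + b_u)$ while $a_u r_u + b_u(1-r_u) = (a_u^2 + b_u^2)/(a_u + b_u) \geq 2a_u b_u/(a_u + b_u)$ by AM-GM.

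Armed with this, I would chain Lemmata~\ref{lem:discrete_loss_two_sided} and~\ref{lem:discrete_gain_two_sided}. Conditioning on $\EventUpdate$ together with any realization of the estimates $\{G(\delta')\}_{\delta' \in \Gamma}$ consistent with $\EventUpdate$ fixes $\delta$, so both lemmata apply simultaneously under this finer conditioning. Lemma~\ref{lem:discrete_loss_two_sided} combined with the pointwise inequality above yields
\[
f(OPT(X,Y)) - \bE[f(OPT(X',Y'))] \;\leq\; \tfrac{\delta}{2} \sum_{u \in U^+} [a_u r_u + b_u(1-r_u)],
\]
while rearranging Lemma~\ref{lem:discrete_gain_two_sided} gives
\[
\delta \sum_{u \in U^+} [a_u r_u + b_u(1-r_u)] \;\leq\; \frac{\bE[f(X')+f(Y')] - [f(X)+f(Y)] + 2\eps^2\ln^{-1}(\eps^{-1}) \sum_{u \in Y \setminus X}(a_u + b_u)}{1 - 15\eps}.
\]
Substituting the second into the first, multiplying both sides by $2(1 - 15\eps)$, and rearranging delivers the desired inequality under this finer conditioning.

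Since both the right-hand side of the target inequality and the error term $\sum_{u \in Y \setminus X}(a_u + b_u)$ depend only on $X$ and $Y$ (the fixed inputs to {\DiscreteUpdate}) and not on the estimates, an application of the law of total expectation removes the conditioning on the specific realization of the estimates, leaving a bound conditioned only on $\EventUpdate$, as desired. I expect the main obstacle to be purely bookkeeping: Lemma~\ref{lem:discrete_loss_two_sided} conditions on the value of $\delta$ alone, while Lemma~\ref{lem:discrete_gain_two_sided} conditions on the full estimate vector. The resolution is that the estimates determine $\delta$, so the latter conditioning refines the former, and both bounds may be invoked together without any additional slack.
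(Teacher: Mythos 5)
Your proposal is correct and follows essentially the same route as the paper: prove the pointwise inequality $2\max\{b_u r_u, a_u(1-r_u)\} \leq a_u r_u + b_u(1-r_u)$ by cases, then chain Lemmata~\ref{lem:discrete_loss_two_sided} and~\ref{lem:discrete_gain_two_sided}. The only (cosmetic) difference is bookkeeping order — the paper takes expectations over the estimates in each lemma separately, producing $\bE[\delta]$ on both sides before combining, whereas you combine first under the finer conditioning (fixing $\delta$) and then invoke the tower property; both are sound, though your closing sentence slightly misstates the reason (the right-hand side does depend on the estimates through $\bE[f(OPT(X',Y'))]$; the averaging works because the inequality is linear in those conditional expectations, not because they are constant).
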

\begin{proof}
	Our first objective is to show that for every element $u \in Y \setminus X$ we have
	\begin{equation} \label{eq:discrete_basic_inequality}
	2 \cdot \max\{b_u r_u, a_u(1 - r_u)\} \leq a_u r_u + b_u(1 - r_u)
	\enspace.
	\end{equation}
	There are three cases that we need to consider. If $ a_u>0 $ and $ b_u>0 $, then Inequality~\eqref{eq:discrete_basic_inequality} 
	reduces to \[
	\frac{2a_ub_u}{a_u+b_u}\le \frac{a_u^2+b_u^2}{a_u+b_u}\enspace,
	\]
	which holds since $ a_u^2+b_u^2\ge 2a_ub_u $ for every two real numbers $a_u$ and $b_u$. Consider now the case in which $ a_u>0 $ and $ b_u\le 0$. In this case $r_u = 1$ and Inequality~\eqref{eq:discrete_basic_inequality} reduces to $ 2\max\{ b_u,0 \}\le a_u $, which is true because 
	$ 2\max\{ b_u,0 \} = 0 \le a_u $. Finally, consider the case of $ a_u\le 0 $. In this case $r_u = 0$ and Inequality~\eqref{eq:discrete_basic_inequality} reduces to $ 2\max\{ 0,a_u \}\le b_u $. To see why this inequality holds, note that since $ 
	a_u\le 0 $ and $ a_u+b_u \ge 0 $, we have $ b_u\ge 0 $ in this case, and thus, $ 2\max\{ 0,a_u \}=0\le b_u $. This completes the proof that Inequality~\eqref{eq:discrete_basic_inequality} always holds.
	
We are now ready to prove the corollary. Lemma~\ref{lem:discrete_loss_two_sided} applies conditioned on every given choice for $\delta$. By taking the expectation of its guarantee over the distribution of $\delta$ values resulting from the condition of the corollary (the condition is independent of the random decision made by the algorithm after choosing $\delta$), we get
\[
	f(OPT(X,Y))-\bE[f(OPT(X',Y'))]
	\leq
	\bE[\delta] \cdot \sum_{u \in U^+} \max\{b_u r_u, a_u(1 - r_u)\}
	\enspace.
\]
Similarly, Lemma~\ref{lem:discrete_gain_two_sided} applies conditioned on any fixed choice for the estimates $\{G(\delta') \mid \delta' \in \Gamma\}$ which obeys the condition of the current corollary. Thus, again we can take expectation of both its sides over the distribution of these estimates conditioned on the condition of the corollary, which gives us
\begin{align*}
	\bE[f(X') + f(Y')] - [f(X&) + f(Y)]\\
	\geq{} & (1 - 15\eps)\bE[\delta]\cdot \sum_{u \in U^+}[a_u r_u + 
	b_u(1 - r_u)] - 2\eps^2\ln^{-1}(\eps^{-1}) \cdot \sum_{u \in Y \setminus X} (a_u + b_u)
	\enspace.
\end{align*}
	
The corollary now follows by combining the last two inequalities with the inequality \[2 \cdot \sum_{u \in U^+}\max\{b_u r_u, a_u(1 - r_u)\} \leq \sum_{u \in U^+}[a_u r_u + b_u(1 - r_u)] \enspace,\] which is an immediate consequence of Inequality~\eqref{eq:discrete_basic_inequality}.
	%
\end{proof}

To complete the proof of Proposition~\ref{prop:discrete_step}, it remains to bound its adaptivity and the number of oracle queries it uses, which is done by the next lemma.

\begin{lemma}\label{lem:adaptivity_discrete_update}
	{\DiscreteUpdate} has constant 
	adaptivity and uses 
	$O(n\eps^{-3}\ln^2(\eps^{-1}))$
	 value oracle queries to $f$.
\end{lemma}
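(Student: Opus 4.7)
The plan is to account for every value oracle query that {\DiscreteUpdate} issues and then verify that these queries organize into a constant number of parallel rounds. Scanning Algorithm~\ref{alg:discrete_update}, oracle access to $f$ is used in only two places: (i) computing the marginals $a_u = f(u \mid X)$ and $b_u = -f(u \mid Y - u)$ for every $u \in Y \setminus X$; and (ii) producing the estimates $G(\delta')$ for $\delta' \in \Gamma$. Every other step---assigning $r_u$, computing $\gamma$, selecting $\delta$, and drawing the random coins that build $X'$ and $Y'$---is internal to the algorithm and requires no calls to $f$.

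For the adaptivity bound, observe that all $O(n)$ queries needed for the pairs $(a_u,b_u)$ can be issued simultaneously in a first round. Once this round completes, $r$ and $\Gamma$ are fully determined. Each estimate $G(\delta')$ averages $m$ independent samples, and each sample requires at most two $f$-evaluations per element $u \in Y \setminus X$ in order to compute the marginals $f(u \mid X \cup \RSet(\delta' r) - u)$ and $f(u \mid Y \setminus \RSet(\delta'(\characteristic_{Y\setminus X}-r)) - u)$. Since the random sets involved can be materialized before any call to $f$ is issued and no query depends on the outcome of another, all queries across all samples and all $\delta' \in \Gamma$ fit into a single additional parallel round. Hence the overall adaptivity is $2 = O(1)$.

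For the query complexity, stage (i) costs $O(n)$ queries. In stage (ii), one sample for a single $\delta'$ uses $O(n)$ queries, so one estimate uses $O(nm)$ queries, and all $|\Gamma|$ estimates together use $O(nm|\Gamma|)$ queries. Lemma~\ref{lem:bound_size_gamma} gives $|\Gamma| = O(\eps^{-1}\ln \eps^{-1})$, and the specified choice $m = \lceil \eps^{-2}\ln(112\eps^{-3}\ln^2(\eps^{-1}))/2 \rceil$ satisfies $m = O(\eps^{-2}\ln \eps^{-1})$ since $\ln(112\eps^{-3}\ln^2(\eps^{-1})) = 3\ln \eps^{-1} + O(\ln \ln \eps^{-1}) = O(\ln \eps^{-1})$. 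Multiplying yields a total of $O(n) + O(n \cdot \eps^{-2}\ln\eps^{-1} \cdot \eps^{-1}\ln\eps^{-1}) = O(n\eps^{-3}\ln^2\eps^{-1})$ queries, matching the claimed bound. The only subtlety to verify is that the random sets $\RSet(\delta' r)$ and $\RSet(\delta'(\characteristic_{Y\setminus X}-r))$ used for different $\delta'$ and different samples are drawn independently and can all be materialized in advance, so that the entire batch of $G$-evaluations really collapses into one adaptive round.
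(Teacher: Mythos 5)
Your proposal is correct and follows essentially the same approach as the paper's proof: two parallel rounds (one for $a,b$, one for all the estimates $G(\delta')$), then multiply $O(n)$ queries per sample by $m = O(\eps^{-2}\ln\eps^{-1})$ samples per estimate and $|\Gamma| = O(\eps^{-1}\ln\eps^{-1})$ estimates via Lemma~\ref{lem:bound_size_gamma}. The only minor imprecision is that each marginal $f(u\mid S)$ costs two $f$-queries rather than one, so a sample costs up to $4n$ rather than $2n$ queries, but this does not affect the asymptotic bound.
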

\begin{proof}
	Observe that all the value oracle queries used by {\DiscreteUpdate} can be 
	made in 
	two parallel steps. One step for computing $a$ and $b$, which requires $4n$ 
	value oracle queries to $f$, and one additional step for calculating the estimates $\{G(\delta') \mid \delta' \in \Gamma\}$. This already proves that {\DiscreteUpdate} has 
	constant 
	adaptivity. To prove the other part of the lemma, we still need to bound the number of value oracle queries used for calculating the estimates.
	
	 Since {\DiscreteUpdate} calculates one estimate for every value $\delta' 
	 \in \Gamma$, we get that it calculates only $O(\eps^{-1}\ln \eps^{-1})$ estimates by \cref{lem:bound_size_gamma}. Every estimate is 
	 done by averaging $m$ samples, 
	 and every sample is obtained using up to $4n$ value oracle queries, and 
	 thus, the number of queries requires for all the estimates is at most
	\begin{align*}
		4nm \cdot O(\eps^{-1}\ln \eps^{-1})
		={} &
		\lceil 
		\eps^{-2}\ln(112\eps^{-3}\ln^2(\eps^{-1}))/2 \rceil \cdot O(n\eps^{-1}\ln 
		\eps^{-1})\\
		\leq{} &
		\lceil \eps^{-2}\ln(112\eps^{-3} \cdot (\eps^{-1})^2)/2 \rceil \cdot O(n\eps^{-1}\ln 
		\eps^{-1})\\
		={} &
		O(\eps^{-2} \ln \eps^{-1}) \cdot O(n\eps^{-1}\ln \eps^{-1})
		=
		O(n\eps^{-3}\ln^2(\eps^{-1}))
		\enspace.
		\qedhere
	\end{align*}
\end{proof}

\subsection{The Procedure \DiscretePreProcess}\label{sub:discrete_preprocess}

In this section we describe the procedure {\DiscretePreProcess} used by Algorithm~\ref{alg:discrete_set}, and prove that it obeys all the properties guaranteed by Proposition~\ref{prop:discrete_pre-process}.  We remind the reader that this procedure is a counterpart of the procedure {\PreProcess} from Section~\ref{ssc:pre-process}. The pseudo-code of {\DiscretePreProcess} appears as Algorithm~\ref{alg:discrete_pre-process}. Given a value $\delta' \in [0, 1/2]$, we use in this procedure and its analysis the notation $\RSet(\delta')$ to denote a random pair of sets $(X', Y')$ whose distribution is defined as follows. For every element $u \in \cN$, independently, with probability $\delta'$ the element $u$ belongs to both sets $X'$ and $Y'$, with probability $\delta'$ it belongs to neither set, and with the remaining probability it belongs to $Y'$ but not to $X'$. Note that if $(X', Y') = \RSet(\delta')$, then $X' \subseteq Y'$ by definition.

\begin{algorithm}[th]
	\DontPrintSemicolon
	\caption{$\DiscretePreProcess(\tau)$} \label{alg:discrete_pre-process}
	Let $\Gamma \gets \{\delta' \in 
	[\eps, \nicefrac{1}{2}) \mid \exists_{j \in 
		\bZ, j \geq 1}\; \delta = \eps j \}$.\\
	For every $\delta' \in \Gamma$, consider the random expression
	\[
		\sum_{u\in \cN} [f(u\mid R^u_x(\delta') - u) - f(u \mid R^u_y(\delta') - u)]
		\enspace,
	\]
	where $R^u_x(\delta')$ and $R^u_y(\delta')$ are random variables such that $(R^u_x(\delta'), R^u_y(\delta')) \sim \RSet(\delta')$ for every element $u \in \cN$, independently. Let $G(\delta')$ be an estimate of 
	the expectation of this expression obtained by averaging $ m = \lceil 36\eps^{-2}\ln(3\eps^{-2}) \rceil $ 
	independent samples from the distribution of the expression.\\
	Let $\delta$ be the minimum value in $\Gamma$ for which $G(\delta) \le 30\tau $. If there is no such value, we set $ 
	\delta \gets \nicefrac{1}{2}$.\label{ln:delta_x_condition}\\

	\Return{$\RSet(\delta)$}.
\end{algorithm}

Let us now recall Proposition~\ref{prop:discrete_pre-process}. We remind the reader that our objective is to prove that Algorithm~\ref{alg:discrete_pre-process} obeys all the properties guaranteed by this proposition.

\begin{repproposition}{prop:discrete_pre-process}
\propDiscretePreProcess
\end{repproposition}

We observe that part~\eqref{item:discrete_initial_subset} of 
Proposition~\ref{prop:discrete_pre-process} follows immediately from the fact 
that Algorithm~\ref{alg:discrete_pre-process} returns a random set from the 
distribution of $\RSet(\delta)$. Proving the other parts of the proposition is 
more involved, and we begin with the following technical observation. Starting 
from this point, all the claims we present implicitly assume that the conditions 
of Proposition~\ref{prop:discrete_pre-process} hold.

\begin{observation}\label{obs:bound_Delta_xy}
	The set $ \Gamma $ is of size at most $ (2\eps)^{-1} $.
\end{observation}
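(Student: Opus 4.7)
The plan is to simply count the integers $j$ that give rise to elements of $\Gamma$, and then bound this count from above by $(2\eps)^{-1}$. Observe that by its definition, $\Gamma$ consists exactly of values of the form $\eps j$ where $j$ is a positive integer and $\eps j \in [\eps, 1/2)$. The constraint $\eps j \geq \eps$ is equivalent to $j \geq 1$, and the constraint $\eps j < 1/2$ is equivalent to $j < (2\eps)^{-1}$.

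Thus, the number of valid integers $j$ is $|\{j \in \bZ : 1 \leq j < (2\eps)^{-1}\}|$, which is at most $\lceil (2\eps)^{-1} \rceil - 1 \leq (2\eps)^{-1}$, where the last inequality follows since $\lceil x \rceil - 1 \leq x$ for every real $x$. Since each valid $j$ yields exactly one element of $\Gamma$, this bounds $|\Gamma|$ as claimed.

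The hard part here is essentially nothing — this is a routine counting argument about an arithmetic progression intersected with an interval. The only subtlety worth noting is the possible confusion in the set-builder notation (the condition is stated with $\delta = \eps j$ while the bound variable is $\delta'$, which should be read as $\delta' = \eps j$), but this does not affect the counting. The whole proof should fit in one short displayed inequality, so I would present it in two or three sentences without any additional machinery.
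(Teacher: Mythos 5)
Your proof is correct and follows essentially the same counting argument as the paper, which bounds the number of valid $j$ by $\lfloor (2\eps)^{-1} \rfloor \leq (2\eps)^{-1}$; your $\lceil (2\eps)^{-1}\rceil - 1$ is an equivalent formulation of the same bound.
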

\begin{proof}
The set $\Gamma$ contains only values from the range $[\eps, \nicefrac{1}{2}) $ that are of the 
	form $ \eps j $. The number of such values within this range is at most
	\[
	\left \lfloor \frac{1/2}{\eps} \right\rfloor
	\leq
	(2\eps)^{-1}\enspace.
	\qedhere
	\]
\end{proof}

Let us now define the event $\EventPreProcess$ references by Proposition~\ref{prop:discrete_pre-process}. For every $\delta' \in \Gamma$, we denote by $\bE_{\delta'}$ the expectation that $G(\delta')$ estimates. Then, the event $\EventPreProcess$ is the event that for every $\delta' \in \Gamma$ the estimate $G(\delta')$ is equal $\bE_{\delta'}$ up to an error of $\max\{\bE_{\delta'}/2, f(OPT)\}$.
\begin{lemma}\label{lem:probability_event_preprocess}
	$ \Pr[\EventPreProcess] \ge 1-\eps/3 $.
\end{lemma}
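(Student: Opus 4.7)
The plan is to apply a union bound over the at most $(2\eps)^{-1}$ values in $\Gamma$ (using Observation~\ref{obs:bound_Delta_xy}), reducing the task to showing that for each fixed $\delta' \in \Gamma$ the estimate $G(\delta')$ deviates from $\bE_{\delta'}$ by more than $\max\{\bE_{\delta'}/2, f(OPT)\}$ with probability at most $2\eps^2/3$. Combined, these bounds give $\Pr[\neg\EventPreProcess] \leq (2\eps)^{-1} \cdot (2\eps^2/3) = \eps/3$, as required.

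Fix $\delta' \in \Gamma$, and write a single sample as $W = \sum_{u \in \cN} D_u$ with $D_u = f(u \mid R^u_x(\delta') - u) - f(u \mid R^u_y(\delta') - u)$. The summands $D_u$ are mutually independent because the pairs $(R^u_x(\delta'), R^u_y(\delta')) \sim \RSet(\delta')$ are drawn independently across $u$. The key preparatory fact I would establish is that $D_u \in [0, 2f(OPT)]$ almost surely. The lower bound follows from submodularity together with the inclusion $R^u_x(\delta') \subseteq R^u_y(\delta')$ built into $\RSet(\delta')$. For the upper bound, the non-negativity of $f$ and the maximality of $f(OPT)$ over all subsets of $\cN$ give $f(u \mid R^u_x(\delta') - u) \leq f(R^u_x(\delta') \cup \{u\}) \leq f(OPT)$ and $f(u \mid R^u_y(\delta') - u) \geq -f(R^u_y(\delta') \setminus \{u\}) \geq -f(OPT)$, whence $D_u \leq 2f(OPT)$.

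Armed with this uniform bound, the non-negativity of $D_u$ yields $\operatorname{Var}(D_u) \leq \bE[D_u^2] \leq 2f(OPT) \cdot \bE[D_u]$, and summing over $u$ gives $\operatorname{Var}(W) \leq 2f(OPT) \cdot \bE_{\delta'}$. I would then apply Bernstein's inequality to the average $G(\delta')$ of $m$ i.i.d.\ copies of $W$, viewing it as a mean of $mn$ independent summands each of centred magnitude at most $2f(OPT)$, to obtain a tail bound of the form
\[
\Pr[|G(\delta') - \bE_{\delta'}| > t] \leq 2\exp\!\left(-\frac{mt^2}{4f(OPT)\,\bE_{\delta'} + 4f(OPT)\,t/3}\right).
\]
A short case analysis for $t = \max\{\bE_{\delta'}/2, f(OPT)\}$, split into the regimes $\bE_{\delta'} \leq 2f(OPT)$ and $\bE_{\delta'} > 2f(OPT)$, shows that in either case the exponent is at least a constant multiple of $m$. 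The prescribed value $m = \lceil 36\eps^{-2}\ln(3\eps^{-2})\rceil$ then comfortably drives the single-$\delta'$ failure probability below $2\eps^2/3$.

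The main obstacle is securing a sample-size bound that does not depend on $n$. A direct Hoeffding bound on $W$ itself, of the kind used inside the {\DiscreteUpdate} analysis, is hopeless here: the range of $W$ can be as large as $2n \cdot f(OPT)$, so Hoeffding would force $m$ to scale polynomially in $n$. The observation that rescues the argument is that each $D_u$ is non-negative and universally bounded by $2f(OPT)$, which bounds $\operatorname{Var}(W)$ by $2f(OPT) \cdot \bE_{\delta'}$ rather than by $n \cdot f(OPT)^2$. This is exactly what is needed for a variance-aware inequality like Bernstein's (or even Chebyshev's, at the cost of a logarithmic factor in $m$) to deliver the $n$-independent sample complexity prescribed by the algorithm.
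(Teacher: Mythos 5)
Your proof is correct and takes a genuinely different route from the paper. The paper normalizes each per-element sample by $2f(OPT)$ to make it lie in $[0,1]$, forms the total sum $Z = \sum_{i,u} Z_{i,u}$ over all $m$ samples and all $n$ elements, and then splits into two cases based on the size of $\bE[Z]$: when $\bE[Z] \le m\eps^2/3$, Markov's inequality (using $Z \ge 0$) already gives the failure bound $2\eps^2/3$, and when $\bE[Z] \ge m\eps^2/3$, a multiplicative Chernoff bound gives $2\exp(-\bE[Z]/12) \le 2\exp(-m\eps^2/36) \le 2\eps^2/3$. Your approach instead derives the variance bound $\operatorname{Var}(W) \le 2f(OPT)\cdot\bE_{\delta'}$ from the per-summand inclusion $D_u \in [0, 2f(OPT)]$ and feeds that directly into Bernstein's inequality, with the case split taken on $\bE_{\delta'}$ versus $2f(OPT)$ to evaluate the exponent; in both regimes the exponent comes out to at least $3m/28$. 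Both proofs hinge on exactly the same crucial observation—that each summand lies in $[0, 2f(OPT)]$, so the variance of $W$ scales with $f(OPT)\cdot\bE_{\delta'}$ rather than with $n \cdot f(OPT)^2$, which is what makes the sample complexity $n$-independent. A notable difference in power: the paper's Markov step forces $m = \Theta(\eps^{-2}\ln\eps^{-1})$ (indeed $36\eps^{-2}\ln(3\eps^{-2})$ is exactly the threshold at which the Markov and Chernoff regimes meet), whereas your Bernstein bound is $2\exp(-3m/28)$ with no $\eps$-dependence in the exponent, so it would in fact succeed with $m = O(\ln \eps^{-1})$. Since the algorithm's $m$ is already large enough for both, your proof establishes the lemma as stated, and in fact shows the prescribed sample size is larger than strictly necessary for this particular step.
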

\begin{proof}
Algorithm~\ref{alg:discrete_pre-process} makes $|\Gamma| \leq (2\eps)^{-1}$ estimates (where the inequality is due to \cref{obs:bound_Delta_xy}). Thus, by the union bound, to prove that all the estimates are correct up to the error allowed by $\EventPreProcess$, it suffices to prove that every particular estimate is correct up to this error with probability at least $1 - 2\eps^2/3$. In the rest of the prove we show that this is indeed the case.

Fix then some value $\delta' \in \Gamma$, and let $ W_{i, u}$ be the value of the $i$-th sample of $f(u\mid R^u_x(\delta') - u) - f(u \mid R^u_y(\delta') - u)$ used to calculate $G(\delta')$. Observe that since $R^u_x(\delta')$ is always a subset of $R^u_y(\delta')$ the submodularity of $f$ guarantees that
\[
	W_{i, u}
	=
	f(u\mid R^u_x(\delta') - u) - f(u \mid R^u_y(\delta') - u)
	\geq
	0
	\enspace.
\]
In contrast, the submodularity of $f$ also guarantees that
\[
	W_{i, u}
	\leq
	f(u \mid \varnothing) -f(u \mid \cN - u)
	\leq
	f(\{u\}) + f(\cN - u)
	\leq
	2f(OPT)
	\enspace,
\]
where the second inequality holds by the monotonicity of $f$ and the last inequality follows from the observation that $\{u\}$ and $\cN - u$ are feasible solutions.

Let us now define $Z_{i, u} = W_{i, u} / [2f(OPT)]$ for every $1 \leq i \leq m$ and $u \in \cN$. Clearly these variables are independent, and moreover, they all belong to the range $[0, 1]^\cN$. Using this notation, the event that $G(\delta')$ is correct up to an error of $\max\{\bE_{\delta'}/2, f(OPT)\}$ can be written as the event that the random variable
\[
	Z = \sum_{i = 1}^m \sum_{u \in \cN} Z_{i, u}
\]
belongs to the range $[\min\{\bE[Z]/2, \bE[Z]-m/2\}, \max\{1.5 \cdot \bE[Z], \bE[Z] + m/2\}]$. In the rest of the proof we show that the last event happens with probability at most $2\eps^2/3$. There are two cases to consider. If $\bE[Z] \leq m\eps^2/3$, then the inequality $Z \geq \bE[Z] - m/2$ holds trivially because $Z$ is non-negative, and by Markov's inequality we also have
\[
	\Pr[Z \geq \bE[Z] + m/2]
	\leq
	\Pr[Z \geq m/2]
	\leq
	\frac{\bE[Z]}{m/2}
	\leq
	\frac{m\eps^2/3}{m/2}
	=
	\frac{2\eps^2}{3}
	\enspace.
\]
Consider now the case in which $\bE[Z] \geq m\eps^2/3$. Since $Z$ is the sum independent random variables taking values only from the range $[0, 1]$, the Chernoff bound guarantees that the probability that $Z \not \in [\bE[Z]/2, 1.5 \cdot \bE[Z] / 2]$ is at most
\[
	2\exp\left(-\frac{\bE[Z]}{12}\right)
	\leq
	2\exp\left(-\frac{m\eps^2}{36}\right)
	\leq
	2\exp(-\ln(3\eps^{-2}))
	=
	\frac{2\eps^2}{3}
	\enspace.
	\qedhere
\]
\end{proof}

The next lemma proves part~\eqref{item:discrete_potential_start} of Proposition~\ref{prop:discrete_pre-process}.

\begin{lemma}
Conditioned on $\EventPreProcess$,
	$\bE\left\{\sum_{u \in Y \setminus X} [f(u \mid X) - f(u \mid Y - u)]\right\} \leq 
		60\tau
		$.
\end{lemma}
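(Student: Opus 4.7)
The plan is to reduce the expectation on the left-hand side to (at most) the quantity $\bE_\delta$ that the estimator $G(\delta)$ is designed to approximate, and then to use the definition of $\EventPreProcess$ together with the condition $G(\delta) \le 30\tau$ enforced by line~\ref{ln:delta_x_condition}. I first handle the degenerate case $\delta = 1/2$: here $\RSet(1/2)$ puts every element into either both of $X,Y$ or neither, so $Y \setminus X = \varnothing$ almost surely and the bound holds trivially. In the remaining case, the definition of $\delta$ guarantees $G(\delta) \le 30\tau$.

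Next, I would rewrite the sum using indicators and the fact that $u \notin X$ when $u \in Y \setminus X$, so that $f(u \mid X) = f(u \mid X - u)$; this gives
\[
\bE\!\left[\sum_{u \in Y \setminus X} [f(u \mid X) - f(u \mid Y - u)]\right]
= \sum_{u \in \cN} \bE\!\left[\characteristic[u \in Y \setminus X] \cdot [f(u \mid X - u) - f(u \mid Y - u)]\right].
\]
For a fixed $u$, the indicator $\characteristic[u \in Y \setminus X]$ depends only on the coordinate of the random choice at $u$, while $X - u$ and $Y - u$ depend only on the coordinates at $\cN - u$; by the product structure of $\RSet(\delta)$ these are independent. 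Using $\Pr[u \in Y \setminus X] = 1 - 2\delta$ and recognizing that the remaining expectation matches, for each $u$, the per-element expectation that appears in the definition of $\bE_\delta$ (with the i.i.d.\ copies $(R^u_x(\delta), R^u_y(\delta))$), the display collapses to $(1 - 2\delta)\,\bE_\delta \le \bE_\delta$; the last step uses $\bE_\delta \ge 0$, which holds since $R^u_x(\delta) \subseteq R^u_y(\delta)$ and $f$ is submodular.

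It remains to bound $\bE_\delta$. Conditioning on $\EventPreProcess$ gives $\bE_\delta - G(\delta) \le \max\{\bE_\delta/2,\, f(OPT)\}$. I would split into two cases based on which term in the max is active. If $\bE_\delta \ge 2 f(OPT)$, then $\bE_\delta - G(\delta) \le \bE_\delta/2$, whence $\bE_\delta \le 2 G(\delta) \le 60\tau$. Otherwise $\bE_\delta - G(\delta) \le f(OPT)$, so $\bE_\delta \le G(\delta) + f(OPT) \le 30\tau + 5\tau = 35\tau \le 60\tau$, where I have invoked the hypothesis $\tau \ge f(OPT)/5$ of Proposition~\ref{prop:discrete_pre-process} in the form $f(OPT) \le 5\tau$. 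Combining with the inequality $\bE[\sum_{u \in Y \setminus X}(\cdots)] \le \bE_\delta$ proved above yields the claim.

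The only mildly delicate step is the independence argument used to turn the indicator-weighted expectation into $(1-2\delta)\bE_\delta$; everything else is a direct application of the two estimates we already have on $G(\delta)$ (exact, up to error $\max\{\bE_\delta/2, f(OPT)\}$) and on $\tau$ (at least $f(OPT)/5$). No further probabilistic machinery is needed.
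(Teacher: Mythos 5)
Your proof is correct and follows essentially the same plan as the paper's: condition on the estimates so that $\delta$ is fixed, reduce the conditional expectation of the sum over $Y \setminus X$ to the quantity $\bE_\delta$ that $G(\delta)$ estimates, and then bound $\bE_\delta \le 60\tau$ by case-splitting on which term attains the $\EventPreProcess$ error bound $\max\{\bE_\delta/2, f(OPT)\}$, using $G(\delta) \le 30\tau$ and $f(OPT) \le 5\tau$. The one local difference is in the reduction step: you factor out the indicator $\characteristic[u \in Y \setminus X]$ using the coordinatewise independence built into $\RSet(\delta)$, arriving at the exact identity $(1-2\delta)\bE_\delta$, whereas the paper instead bounds $\sum_{u \in Y \setminus X}[f(u \mid X) - f(u \mid Y - u)] \le \sum_{u \in \cN}[f(u \mid X-u) - f(u \mid Y - u)]$ deterministically (the dropped terms are nonnegative by submodularity and $f(u \mid X) = f(u \mid X - u)$ for $u \notin X$) and only then takes expectation. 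Both are valid; your route gives a marginally sharper prefactor $(1-2\delta)$ which is not needed, and the rest of the argument is identical.
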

\begin{proof}
We will prove the lemma conditioned on any fixed choice of values for the estimates $\{G(\delta') \mid \delta' \in \delta' \in \Gamma\}$ which make the $\EventPreProcess$ hold. One can observe that this will imply the lemma holds also unconditionally by the law of total expectation since the event $\EventUpdate$ is completely determined by the values of these estimates.

The fact that we fixed the estimates $\{G(\delta') \mid \delta' \in \delta' \in \Gamma\}$ implies that $\delta$ is also deterministic, and thus, we only need to handle the randomness used to construct $X$ and $Y$. If $\delta = 1/2$, then we deterministically have $X = Y$, which makes the left side of the inequality we need to prove $0$. Hence, the inequality holds in this case since its right hand side is non-negative.

In the rest of the proof we consider the case of $\delta < 1/2$. Note that the distributions of $X$ and $Y$ are identical to the distributions of $R^u_x(\delta)$ and $R^u_y(\delta)$, respectively. Thus, we get in this case
\begin{align*}
	\bE\mspace{200mu}&\mspace{-200mu}\left\{\sum_{u \in \cN} [f(u \mid X - u) - f(u \mid Y - u)]\right\}
	=
	\sum_{u \in \cN} \{\bE[f(u \mid X - u)] - \bE[f(u \mid Y - u)]\}\\
	={} &
	\sum_{u \in \cN} \{\bE[f(u \mid R^u_x(\delta) - u)] - \bE[f(u \mid R^u_y(\delta) - u)]\}\\
	\leq{} &
	\max\{G(\delta) + f(OPT), 2G(\delta)\}
	\leq
	\max\{30\tau + f(OPT), 60\tau\}
	=
	60 \tau
	\enspace,
\end{align*}
where the first inequality follows from our assumption that the value of $G(\delta)$ obey the requirement of $\EventPreProcess$, and thus, is equal to $\bE_\delta$ up to an error of $\max\{\bE_\delta/2, f(OPT)\}$, the second inequality follows from the way $\delta$ is chosen and the last equality holds since $\tau \geq f(OPT) / 5$.

To complete the proof of the lemma, it remains to observe that since $X \subseteq Y$, the submodularity of $f$ guarantees that $f(u \mid X - u) - f(u \mid Y - u) \geq 0$ for every element $u \in \cN$, and thus, it deterministically holds that
\[
	\sum_{u \in Y \setminus X} [f(u \mid X) - f(u \mid Y - u)]
	\leq
	\sum_{u \in \cN} [f(u \mid X) - f(u \mid Y - u)]
	\enspace.
	\qedhere
\]
\end{proof}

Our next objective is to prove part~\eqref{item:discrete_gain_loss_pre} of Proposition~\ref{prop:discrete_pre-process}. To do that, we first bound in Lemma~\ref{lem:discrete_loss_pre} the value of the expectation $\bE[f(OPT(X, Y))]$ appearing in this part of the proposition. The following is a know lemma that we use in the proof of  Lemma~\ref{lem:discrete_loss_pre}.
\begin{lemma}[Lemma~2.3 of~\cite{FMV11} (rephrased)] \label{lem:sampling_pair}
Let $f\colon 2^\cN \to \bR$ be submodular and $A, B \subseteq \cN$ two (not necessarily disjoint) sets. If $A(p)$ and $B(q)$ are independent random subsets of $A$ and $B$, respectively, such that $A(p)$ includes every element of $A$ with probability $p$ and $B(q)$ includes every element of $B$ with probability $q$, then
\[
	\bE[f(A(p) \cup B(q))]
	\geq
	(1 - p)(1 - q) \cdot f(\varnothing) + p(1-q) \cdot f(A) + q(1 - p) \cdot f(B) + pq \cdot f(A \cup B)
	\enspace.
\]
\end{lemma}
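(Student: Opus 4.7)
The plan is to reduce the desired inequality to two applications of a standard chord inequality for submodular functions. Recall that for any non-negative submodular $h\colon 2^\cN \to \bR$ and any set $C \subseteq \cN$, the map $\lambda \mapsto \bE[h(C(\lambda))]$ is concave on $[0, 1]$, where $C(\lambda)$ denotes a random subset containing each element of $C$ independently with probability $\lambda$. Since the values at the two endpoints are $h(\varnothing)$ and $h(C)$, concavity immediately gives the chord inequality
\[
	\bE[h(C(\lambda))]
	\geq
	(1 - \lambda) \cdot h(\varnothing) + \lambda \cdot h(C)
	\enspace.
\]
This concavity is well known and follows from the fact that the multilinear extension of a submodular function is concave along every non-negative direction.

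The first step is to peel off the randomness of $A(p)$. Define the auxiliary set function $g(T) = \bE[f(T \cup B(q))]$. Then $g$ is non-negative, and it is submodular: for every $S \subseteq T \subseteq \cN$ and $u \notin T$, the submodularity of $f$ applied pointwise for each realization of $B(q)$ yields
\[
	g(S \cup \{u\}) - g(S)
	=
	\bE[f(S \cup \{u\} \cup B(q)) - f(S \cup B(q))]
	\geq
	\bE[f(T \cup \{u\} \cup B(q)) - f(T \cup B(q))]
	=
	g(T \cup \{u\}) - g(T)
	\enspace.
\]
Applying the chord inequality to $g$ with $C = A$ and $\lambda = p$ then gives
\[
	\bE[f(A(p) \cup B(q))]
	=
	\bE[g(A(p))]
	\geq
	(1 - p) \cdot \bE[f(B(q))] + p \cdot \bE[f(A \cup B(q))]
	\enspace.
\]

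The second step is to bound each of the two expectations on the right hand side by a second application of the chord inequality. For the first term, I apply it to $f$ itself with $C = B$ and $\lambda = q$, obtaining $\bE[f(B(q))] \geq (1 - q) \cdot f(\varnothing) + q \cdot f(B)$. For the second term, I apply it to the shifted function $h(T) = f(A \cup T)$---which is again non-negative and submodular by the same argument used for $g$---with $C = B$ and $\lambda = q$, obtaining $\bE[f(A \cup B(q))] = \bE[h(B(q))] \geq (1 - q) \cdot f(A) + q \cdot f(A \cup B)$. Substituting these two bounds into the inequality from the previous paragraph and expanding reproduces exactly the claimed bound. The only non-trivial ingredient is the chord inequality itself, which I regard as standard; the remaining obstacle is purely bookkeeping.
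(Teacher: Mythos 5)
Your proof is correct, and the approach is essentially the standard one. Note, though, that the paper does not actually prove this lemma---it cites Lemma~2.3 of Feige, Mirrokni, and Vondr\'{a}k~\cite{FMV11}, where the argument proceeds exactly as yours does: one first establishes a single-set sampling lemma of the form $\bE[h(C(\lambda))] \geq (1-\lambda)h(\varnothing) + \lambda h(C)$, and then conditions on the randomness of one of the two sets and applies that lemma twice. So you have recovered the original proof rather than found a new route.

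One small point worth flagging. You state the chord inequality for \emph{non-negative} submodular $h$ and then assert that the auxiliary function $g(T) = \bE[f(T \cup B(q))]$ is non-negative; neither claim is needed, and the second is not even guaranteed here, since the lemma as stated lets $f$ be an arbitrary real-valued submodular function. Concavity of the multilinear extension along any non-negative direction follows from submodularity alone: the off-diagonal second partials are non-positive by submodularity and the diagonal ones vanish by multilinearity, so for any $d \geq 0$ the second directional derivative $d^{\top}\nabla^2 F \, d \leq 0$. Consequently the chord inequality $\bE[h(C(\lambda))] \geq (1-\lambda)h(\varnothing) + \lambda h(C)$ holds for every submodular $h$, with no sign hypothesis. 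Dropping the extraneous non-negativity assumptions makes your proof apply to the lemma exactly as stated; as written, it is technically incomplete for sign-indefinite $f$ even though the underlying argument is sound.
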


\begin{lemma} \label{lem:discrete_loss_pre}
Conditioned on a fixed choice of $\delta$, $\bE[f(OPT(X, Y))] \geq (1 - 2\delta) \cdot f(OPT)$.
\end{lemma}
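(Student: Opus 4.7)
The plan is to decompose the random set $OPT(X,Y) = (OPT \cup X) \cap Y$ into a distribution that fits the form required by Lemma~\ref{lem:sampling_pair}, and then apply that lemma directly. First, I would work out the marginal distribution of each element's membership in $OPT(X,Y)$. Since $\RSet(\delta)$ always yields $X \subseteq Y$, for any $u \in OPT$ we have $u \in OPT \cup X$ deterministically, so $u \in OPT(X,Y)$ iff $u \in Y$, which happens with probability $1 - \delta$. For any $u \notin OPT$ we have $u \in OPT(X,Y)$ iff $u \in X$, which happens with probability $\delta$.

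The key observation is then that $OPT(X,Y)$ has the same distribution as $A(p) \cup B(q)$, where $A = OPT$, $B = \cN \setminus OPT$, $p = 1 - \delta$, $q = \delta$, and $A(p)$ and $B(q)$ are independent random subsets with the usual inclusion probabilities. Independence of $A(p)$ and $B(q)$ is immediate from the definition of $\RSet(\delta)$, since the per-element decisions are independent across $\cN$ and the two sets $A$ and $B$ are disjoint.

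Applying Lemma~\ref{lem:sampling_pair} with these parameters gives
\[
\bE[f(OPT(X,Y))] \geq \delta(1 - \delta) f(\varnothing) + (1 - \delta)^2 f(OPT) + \delta^2 f(\cN \setminus OPT) + \delta(1 - \delta) f(\cN).
\]
Using the non-negativity of $f$ to discard all terms other than the one involving $f(OPT)$, and then applying the elementary inequality $(1-\delta)^2 = 1 - 2\delta + \delta^2 \geq 1 - 2\delta$, yields the claimed bound $\bE[f(OPT(X,Y))] \geq (1 - 2\delta) f(OPT)$. There is no real obstacle here beyond correctly identifying the decomposition so that Lemma~\ref{lem:sampling_pair} is applicable; once the marginal distribution above is set up, the rest is a one-line calculation.
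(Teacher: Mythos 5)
Your proof is correct and follows essentially the same route as the paper's: identify the per-element marginals of $OPT(X,Y)$ under $\RSet(\delta)$, recognize it as a union of independent random subsets of $OPT$ and $\cN\setminus OPT$ with probabilities $1-\delta$ and $\delta$, apply Lemma~\ref{lem:sampling_pair}, and drop the non-negative terms before using $(1-\delta)^2 \geq 1-2\delta$.
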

\begin{proof}
Every element $u \in OPT$ belongs to $OPT(X, Y)$ with probability $1 - \delta$ and every element $u \not \in OPT$ belongs to $OPT(X, Y)$ with probability $\delta$. Moreover, for every element its membership in $OPT(X, Y)$ is independent from the membership of other elements in this set once $\delta$ is fixed. Thus, $OPT(X, Y)$ can be view as the union of two independent random sets, one of which is a random subset of $OPT$ containing every element of $OPT$ with probability $1 - \delta$, and the other is a random subset of $\cN \setminus OPT$ containing every element of $\cN \setminus OPT$ with probability $\delta$. By Lemma~\ref{lem:sampling_pair}, the expected value according to $f$ of such a set is at least
\begin{align*}
	\delta(1 - \delta) \cdot f(\varnothing) + (1 - \delta)^2 \cdot f(OPT)+ \delta^2 \cdot f(\cN \setminus OPT) &{}+ \delta(1 - \delta) \cdot f(\cN)\\
	\geq{} &
	(1 - \delta)^2 \cdot f(OPT)
	\geq
	(1 - 2\delta) \cdot f(OPT)
	\enspace,
\end{align*}
where the inequalities follows from the non-negativity of $f$.
\end{proof}


Next, we prove a lower bound on the expectation $\bE[f(X) + f(Y)]$, which is another term appearing in part~\eqref{item:discrete_gain_loss_pre} of Proposition~\ref{prop:discrete_pre-process}.
\begin{lemma} \label{lem:discrete_gain_pre}
	Conditioned on any fixed choice for the estimates $\{G(\delta') \mid \delta \in \Gamma\}$ which makes the event $\EventPreProcess$ happen, $\bE[f(X) + f(Y)] \geq 4(\delta - \eps) \cdot f(OPT)$.
\end{lemma}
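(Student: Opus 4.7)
The plan is to reduce the discrete statement to the continuous Lemma~\ref{lem:gain_pre} by exploiting the fact that, once the estimates (and hence $\delta$) are fixed, $(X,Y) \sim \RSet(\delta)$ has marginals $X \sim \RSet(\delta \cdot \characteristic_\cN)$ and $Y \sim \RSet((1-\delta) \cdot \characteristic_\cN)$. This gives
\[
\bE[f(X) + f(Y)] = F(\delta \cdot \characteristic_\cN) + F((1-\delta) \cdot \characteristic_\cN),
\]
so I would apply verbatim the chain-rule manipulation of Lemma~\ref{lem:gain_pre}, discarding the non-negative boundary terms $f(\varnothing)$ and $f(\cN)$, to conclude
\[
\bE[f(X) + f(Y)] \geq \int_0^\delta \characteristic_\cN \cdot [\nabla F(t \cdot \characteristic_\cN) - \nabla F((1-t) \cdot \characteristic_\cN)] \, dt.
\]

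Next I would identify the integrand at $t = \delta'$ with the expectation $\bE_{\delta'}$ that $G(\delta')$ estimates. Because $R^u_x(\delta') \setminus \{u\}$ is distributed as a random subset of $\cN \setminus \{u\}$ containing each element with probability $\delta'$, we have $\bE[f(u \mid R^u_x(\delta') - u)] = \partial_u F(\delta' \cdot \characteristic_\cN)$, and analogously for $R^u_y(\delta')$. Summing over $u$ matches the integrand exactly. Then I would translate the algorithm's threshold into a true-value bound: for every $\delta' \in \Gamma$ with $\delta' < \delta$ the minimality of $\delta$ ensures $G(\delta') > 30\tau$, and the event $\EventPreProcess$ guarantees $|G(\delta') - \bE_{\delta'}| \leq \max\{\bE_{\delta'}/2, f(OPT)\}$. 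Since $\tau \geq f(OPT)/5$ implies $30\tau \geq 6 f(OPT)$, the additive-error branch is numerically inconsistent with $G(\delta') > 30\tau$, so the multiplicative branch applies and yields $\bE_{\delta'} \geq \tfrac{2}{3} G(\delta') > 20\tau$.

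Finally, I would use the DR-submodularity of $F$ (the Observation of Section~\ref{sec:preliminaries}) to note that the integrand is a non-increasing function of $t$ on $[0, 1/2]$. Hence on each subinterval $[(k-1)\eps, k\eps]$ with $k\eps \in \Gamma$ and $k\eps < \delta$, the integrand is at least $\bE_{k\eps} \geq 20\tau$, contributing at least $20\eps\tau$ to the integral. Summing these contributions covers the range $[0, \delta - \eps]$ in the case $\delta < 1/2$, and covers the range $[0, \lfloor 1/(2\eps) \rfloor \cdot \eps] \supseteq [0, 1/2 - \eps]$ in the boundary case $\delta = 1/2$ (where by construction no $\delta' \in \Gamma$ meets the threshold). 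Either way the integral is at least $20\tau (\delta - \eps)$, and using $\tau \geq f(OPT)/5$ one more time delivers $\bE[f(X) + f(Y)] \geq 4(\delta - \eps) \cdot f(OPT)$, as required. The bulk of the work is routine bookkeeping; the only subtle step is the error-type dichotomy in the second paragraph, where one must confirm that the additive-error regime is ruled out by the choice of constants.
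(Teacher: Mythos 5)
Your proof is correct but takes a genuinely different route from the paper. The paper stays entirely in the discrete world: it introduces a monotone coupling of two nested random sets $R_2 \subseteq R_1$ distributed as $\RSet((\delta-\eps)\cdot\characteristic_\cN)$ and $\RSet(\delta\cdot\characteristic_\cN)$, peels off elements one at a time via submodularity to establish $\bE[f(X)+f(Y)] \geq (\delta-\eps)\cdot\bE_{\delta-\eps}$ directly, and then invokes the $\EventPreProcess$ error bound only for the single estimate $G(\delta-\eps)$. You instead pass to the multilinear extension: once $\delta$ is fixed, the marginals give $\bE[f(X)+f(Y)] = F(\delta\cdot\characteristic_\cN) + F((1-\delta)\cdot\characteristic_\cN)$, which you lower-bound by the chain-rule integral from \cref{lem:gain_pre}, identify the integrand at the grid points $\delta' \in \Gamma$ with $\bE_{\delta'}$ via the partial-derivative formula from Section~\ref{sec:preliminaries}, and then exploit the monotonicity of the integrand (a consequence of DR-submodularity of $F$) to convert the pointwise bound $\bE_{\delta'} > 20\tau$ into a bound on the integral. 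Your error-type dichotomy (the additive branch is ruled out because $G(\delta') > 30\tau \geq 6f(OPT)$ forces $\bE_{\delta'} > 2f(OPT)$) is numerically sound and is essentially the same case split the paper encodes as $\max\{G(\delta-\eps) - f(OPT),\, (2/3)G(\delta-\eps)\}$. One small simplification you could make: rather than summing contributions over each $\eps$-length subinterval, a single application of monotonicity gives $\int_0^{\delta-\eps} h(t)\,dt \geq (\delta-\eps)\cdot h(\delta-\eps) \geq (\delta-\eps)\cdot\bE_{\delta-\eps} > 20\tau(\delta-\eps)$, handling both the interior case and the $\delta = \nicefrac{1}{2}$ boundary uniformly once you observe that $\delta - \eps$ is dominated by the largest grid point. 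What your approach buys is elegance and reuse of the Section~\ref{sec:algorithm} machinery; what the paper's buys is a proof expressed purely in terms of the set function $f$, which fits the self-contained spirit of Appendix~\ref{sec:discrete}.
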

\begin{proof}
If $\delta = \eps$, then the left side of the inequality that we need to prove is non-negative and its right side is $0$, which makes the lemma trivial. Thus, we may assume in the rest of the proof that $\delta \geq 2\eps$. Let us  define now by $R_1$ a random set that contains every element $u \in \cN$ with probability $\delta$, independently, and by $R_2$ a random set that contains every element of $R_1$ with probability $(\delta - \eps) / \delta$, independently. Note that $R_2$ is always a subset of $R_1$ and the two sets have the same distributions as $R^u_x(\delta - \eps)$ and $R^u_x(\delta)$, respectively (for every element $u \in \cN$). Using these observations we now get
\begin{align*}
	\bE[f(X)]
	={} &
	\bE[f(R^u_x(\delta))]
	=
	\bE[f(R_1)]
	=
	\bE[f(R_2)] + \bE[f(R_1 \mid R_2)]\\
	\geq{} &
	\bE\left[ \sum_{u \in \cN} \characteristic[u \in R_2] \cdot f(u \mid R_2 - u)\right] + \bE\left[ \sum_{u \in \cN} \characteristic[u \in R_1 \setminus R_2] \cdot f(u \mid \RSet(1/2) - u)\right]\\
	={} &
	\sum_{u \in \cN} \Pr[u \in R_2] \cdot \bE[f(u \mid R_2 - u)] + \sum_{u \in \cN} \Pr[u \in R_1 \setminus R_2] \cdot \bE[f(u \mid \RSet(1/2) - u)]\\
	={} &
	(\delta - \eps) \cdot \sum_{u \in \cN} \bE[f(u \mid R^u_x(\delta - \eps) - u)] + \delta \cdot \left(1 - \frac{\eps}{\delta}\right) \cdot \sum_{u \in \cN} \bE[f(u \mid \RSet(1/2) - u)]
	\enspace,
\end{align*}
where the inequality follows from the non-negativity and submodularity of $f$ and the penultimate equality holds since the distribution of $R_2 - u$ is independent of the membership of $u$ in $R_2$. Using an analogous argument we can also get
\[
	\bE[f(Y)]
	\geq
	(\delta - \eps) \cdot \sum_{u \in \cN} \bE[-f(u \mid R^u_y(\delta - \eps) - u)] + \delta \cdot \left(1 - \frac{\eps}{\delta}\right) \cdot \sum_{u \in \cN} \bE[-f(u \mid \RSet(1/2) - u)]
	\enspace.
\]
Adding this inequality to the previous one yields
\begin{align*}
	\bE[f(X) + f(Y)]
	\geq{} &
	(\delta - \eps) \cdot \sum_{u \in \cN} \bE[f(u \mid R^u_x(\delta - \eps) - u)-f(u \mid R^u_y(\delta - \eps) - u)]\\
	\geq{} &
	(\delta - \eps) \cdot \max\{G(\delta - \eps) - f(OPT), (2/3) \cdot G(\delta - \eps)\}\\
	\geq{} &
	(\delta - \eps) \cdot \max\{30\tau - f(OPT), 20\tau\}
	\geq
	(\delta - \eps) \cdot 4f(OPT)
	\enspace,
\end{align*}
where the second inequality follows from our assumption that the value of $G(\delta - \eps)$ obeys the event $\EventPreProcess$, the third inequality follows from the choice of $\delta$ and the last inequality follows from the assumption that $\tau \geq f(OPT) / 5$.
\end{proof}

Combining the last two lemmata, we can now get 
part~\eqref{item:discrete_gain_loss_pre} of 
Proposition~\ref{prop:discrete_pre-process}.
\begin{corollary}
	Conditioned on $ \EventPreProcess $, $\bE[f(X)+f(Y)] 
	\geq 
	2[f(OPT) - \bE[f(OPT(X, Y))]] - 4\eps \cdot f(OPT)$.
\end{corollary}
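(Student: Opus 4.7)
The plan is to combine the two preceding lemmata (Lemmata~\ref{lem:discrete_loss_pre} and~\ref{lem:discrete_gain_pre}) in exactly the same spirit as the proof of the analogous \cref{cor:lower_bound_Fx_Fy} in the main section. The subtle issue here is that both lemmata are stated conditionally on a particular choice---one on a fixed value of $\delta$, the other on a fixed vector of estimates $\{G(\delta') \mid \delta' \in \Gamma\}$ that makes $\EventPreProcess$ happen---whereas the corollary asks for a bound conditioned only on $\EventPreProcess$.

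To handle this, first condition on an arbitrary fixed vector of estimates $\{G(\delta') \mid \delta' \in \Gamma\}$ for which $\EventPreProcess$ occurs. Under this conditioning, the value of $\delta$ picked on Line~\ref{ln:delta_x_condition} of \cref{alg:discrete_pre-process} is completely determined, so $\delta$ is a deterministic quantity. \cref{lem:discrete_gain_pre} therefore yields
\[
    \bE[f(X)+f(Y)] \;\geq\; 4(\delta-\eps)\cdot f(OPT)\enspace,
\]
and \cref{lem:discrete_loss_pre}, applied to this fixed $\delta$, yields
\[
    \bE[f(OPT(X,Y))] \;\geq\; (1-2\delta)\cdot f(OPT)
    \quad\Longleftrightarrow\quad
    4\delta\cdot f(OPT) \;\geq\; 2[f(OPT) - \bE[f(OPT(X,Y))]]\enspace.
\]
Chaining these two inequalities gives
\[
    \bE[f(X)+f(Y)] \;\geq\; 2[f(OPT) - \bE[f(OPT(X,Y))]] - 4\eps\cdot f(OPT)\enspace,
\]
again conditioned on the fixed choice of estimates.

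Finally, because the above bound holds for \emph{every} fixed vector of estimates consistent with $\EventPreProcess$, the law of total expectation (taking a convex combination over all such vectors, weighted by their conditional probabilities given $\EventPreProcess$) lifts the inequality to one conditioned only on $\EventPreProcess$, which is precisely what the corollary claims. There is no real obstacle here beyond the bookkeeping of the two layers of conditioning; the inequality is essentially algebraic once the two lemmata are in place.
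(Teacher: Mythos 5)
Your proposal is correct and takes essentially the same approach as the paper. The paper first takes expectations of each lemma's guarantee (over $\delta$ for Lemma~\ref{lem:discrete_loss_pre}, over the estimates for Lemma~\ref{lem:discrete_gain_pre}) to obtain bounds in terms of $\bE[\delta]$ and then combines them, whereas you combine pointwise for a fixed estimate vector (exploiting that $\delta$ is then deterministic) and average at the end; these orderings are equivalent by linearity and both rely on the identical key observation that the conditioning on $\EventPreProcess$ is determined by the estimates and is independent of the randomness used after $\delta$ is chosen.
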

\begin{proof}
Lemma~\ref{lem:discrete_loss_pre} applies conditioned on every given choice for $\delta$. By taking the expectation of its guarantee over the distribution of $\delta$ values resulting from the condition of the corollary (the condition is independent of the random decision made by the algorithm after choosing $\delta$), we get
\[
	\bE[f(OPT(X,Y))]
	\geq
	(1 - 2\bE[\delta]) \cdot f(OPT)
	\enspace.
\]
Similarly, Lemma~\ref{lem:discrete_gain_pre} applies conditioned on any fixed choice for the estimates $\{G(\delta') \mid \delta' \in \Gamma\}$ which obeys the condition of the current corollary. Thus, again we can take expectation of both its sides over the distribution of these estimates conditioned on the condition of the corollary, which gives us
\[
	\bE[f(X) + f(Y)] \geq 4(\bE[\delta] - \eps) \cdot f(OPT)
	\enspace.
\]
The corollary now follows by combining the last two inequalities.
\end{proof}
\begin{lemma}\label{lem:adaptivity_discrete_pre-process}
	Algorithm~\ref{alg:discrete_pre-process} has constant adaptivity and uses 
	$O(n\eps^{-3}\ln(\eps^{-1}))$ value oracle queries to $f$.
\end{lemma}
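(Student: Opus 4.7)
The plan is to proceed analogously to the adaptivity/query-complexity analyses of the other subroutines (Lemmas on {\Update} and {\PreProcess}, and Lemma~\ref{lem:adaptivity_discrete_update}): first argue that every value oracle query can be scheduled in a constant number of adaptive rounds, and then count how many queries are issued in total.

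For adaptivity, I would observe that after we fix the sampling randomness, all the $m$ samples used to estimate $G(\delta')$ for every $\delta' \in \Gamma$ involve evaluating $f$ on sets that are drawn independently of one another and of $f$'s values. Hence, all the value oracle queries required to compute the collection $\{G(\delta') \mid \delta' \in \Gamma\}$ can be issued in a single parallel round. Once these estimates are in hand, the choice of $\delta$ on Line~\ref{ln:delta_x_condition} is purely arithmetic, and the returned pair $\RSet(\delta)$ is produced by independent coin flips without any further queries. This gives a constant adaptive complexity.

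For the query count, I would multiply three quantities. First, by Observation~\ref{obs:bound_Delta_xy} we have $|\Gamma| \leq (2\eps)^{-1} = O(\eps^{-1})$, so the algorithm computes $O(\eps^{-1})$ estimates. Second, each estimate averages $m = \lceil 36\eps^{-2}\ln(3\eps^{-2})\rceil = O(\eps^{-2}\ln \eps^{-1})$ independent samples. Third, a single sample of $\sum_{u\in \cN} [f(u\mid R^u_x(\delta')-u) - f(u\mid R^u_y(\delta')-u)]$ requires evaluating $f$ at most four times per element $u$ (twice to form the marginal $f(u\mid R^u_x(\delta')-u)$ and twice for $f(u\mid R^u_y(\delta')-u)$), for a total of $4n$ value oracle queries per sample. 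Multiplying yields
\[
	O(\eps^{-1}) \cdot O(\eps^{-2}\ln \eps^{-1}) \cdot 4n
	=
	O(n\eps^{-3}\ln \eps^{-1})
	\enspace,
\]
as promised by Proposition~\ref{prop:discrete_pre-process}.

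There is no real obstacle here: the proof is essentially bookkeeping, and the only thing to double-check is that the $R^u_x(\delta'), R^u_y(\delta')$ are independent across $u$, so that a single draw of the sampled expression indeed uses only $O(n)$ queries (rather than something that would grow with the support sizes). The structure of $\RSet(\delta')$ defined above Algorithm~\ref{alg:discrete_pre-process} makes this immediate, so the final statement follows directly from multiplying the three bounds.
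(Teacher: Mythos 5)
Your proof is correct and takes essentially the same approach as the paper: you observe that all queries can be issued in a single parallel round for the adaptivity bound, and you multiply the same three factors — $|\Gamma| = O(\eps^{-1})$ (via Observation~\ref{obs:bound_Delta_xy}), $m = O(\eps^{-2}\ln\eps^{-1})$ samples per estimate, and $4n$ queries per sample — to obtain the query-complexity bound. The only addition beyond the paper's proof is your explicit sanity check that the per-sample cost is $O(n)$ because the $(R^u_x(\delta'), R^u_y(\delta'))$ pairs are drawn independently across $u$, which is harmless and accurate.
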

\begin{proof}
	Observe that all the value oracle queries used by {\DiscretePreProcess} can 
	be made 
	in one step, and thus, the this procedure has constant adaptivity. To prove the other part of the lemma, we observe that {\DiscretePreProcess} uses oracle queries only for calculating the estimates $\{G(\delta') \mid \delta' \in \Gamma\}$. There are $|\Gamma|$ such estimates, each estimate is calculated based on $m$ samples and each sample requires $4n$ oracle queries. Thus, the total number of oracle queries used by {\DiscretePreProcess} is
	\[
		|\Gamma| \cdot m \cdot 4n
		\leq
		\frac{1}{2\eps} \cdot (36\eps^{-2}\ln(3\eps^{-2}) + 1) \cdot 4n
		=
		O(n\eps^{-3} \ln(\eps^{-1}))
		\enspace,
	\]
	where the first inequality follows from Observation~\ref{obs:bound_Delta_xy}.
\end{proof}

\section{Algorithm for DR-Submodular Functions} \label{sec:dr_submodular}

In this section, we study the maximization of a DR-submodular function subject to a 
box constraint. A function $ F\colon\prod_{u \in \cN} [a_u,b_u]\to \mathbb{R} $ is 
DR-submodular~\cite{bian2016guaranteed} if for every two vectors $x,y\in \prod_{u \in \cN}
[a_u,b_u] $ and two scalars $u \in \cN $ and $ k\ge 0 $, it holds that
\[
F(ke_u+x)-F(x)\ge F(ke_u+y)-F(y)
\enspace,
\]
provided that $ x\le y $ and $ ke_u+x,ke_u+y\in \prod_{u \in \cN} [a_u,b_u] $, 
where $ e_u $ is the $ n $-dimensional vector whose $ u $-th component is $1$ and all its other components are zeros. One can note that the gradient of a 
differentiable DR-submodular function $F$ obeys $ \nabla F(x)\ge \nabla F(y) $ for every two vectors $ x,y \in 
\prod_{u \in \cN}
[a_u,b_u]$ such that $ x\le y $~\cite{bian2016guaranteed}. Another immediate observation is that every 
such function can be rescaled into a DR-submodular function defined on 
the hypercube $ [0,1]^\cN $. To be precise, $ G(x) = F(a+(b-a)x) $ is a 
DR-submodular function on the hypercube provided that $ F $ is DR-submodular on 
$ \prod_{u \in \cN} [a_u,b_u] $, and the two functions share their maximum values. Therefore, throughout this section we assume, without loss 
of generality, that $ F $ is DR-submodular on the hypercube. 
Additionally, we assume that $ F $ is non-negative and differentiable and that we have access to oracles that return $F(x)$ given a vector $x \in [0, 1]^\cN$ and return $\partial_u F(x)$ given such a vector and an element $u \in \cN$. Under these assumptions, we study 
the optimization problem $\max_{x\in [0,1]^\cN} F(x)$, which we refer to as {\UDRSM} in the remainder of this section.

Let $ \opt $ denote an arbitrary optimal solution for {\UDRSM}. Throughout this section, $ 
OPT(x,y) $ denotes $ (\opt \vee x)\wedge y $ for any two vectors $ x,y\in [0,1]^\cN $. We 
remark that this definition is different from the definition of $OPT(x, y)$ used in \cref{sec:algorithm}. However, the modification is a natural consequnce of the fact that the optimal solution is no longer a discrete set.

The result that we prove in this section is summarized in 
\cref{thm:DR_main_result_actual}.

\begin{theorem} \label{thm:DR_main_result_actual}
	For every constant $\eps\in (0,\nicefrac{1}{3})$, there is an algorithm 
	that assumes value 
	oracle access to the objective function $F$ 
	and achieves $(\nicefrac{1}{2} - 44\eps)$-approximation for {\UDRSM} using 
	$O(\eps^{-1})$ adaptive rounds and $O(n\eps^{-2}\log \eps^{-1})$ value oracle 
	queries to $F$.
\end{theorem}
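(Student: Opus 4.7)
I plan to apply Algorithm~\ref{alg:multilinear_oracle} from Section~\ref{sec:algorithm} directly to the DR-submodular function $F$, changing only the last line so that it returns the vector $x^\ell$ itself (the set $\RSet(x^\ell)$ has no natural meaning for a generic DR-submodular $F$). All accesses to $F$ the algorithm performs---value and partial-derivative queries---are available under the standing assumptions of this section.

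The analysis of Section~\ref{sec:algorithm} is written in terms of the multilinear extension, but every submodularity-flavored step in it is really an invocation of the coordinate-wise gradient monotonicity $\nabla F(x) \geq \nabla F(y)$ for $x \leq y$. Since that monotonicity is, by definition, a property of any differentiable DR-submodular function, \cref{prop:step} and \cref{prop:pre-process}, together with all the lemmas in \cref{ssc:update,ssc:pre-process} used to establish them, carry over essentially verbatim to the DR-submodular setting. The iteration counting argument of \cref{lem:initial_bound_two_sided} and the adaptivity and query-complexity bounds also transfer with no change, so it suffices to treat the two specific places where the analysis actually uses multilinearity.

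The first such place is the bound $F(\nicefrac{1}{2} \cdot \characteristic_\cN) \geq F(\opt)/4$ from \cref{lem:conditions_hold_two_sided}, which currently quotes the set-function result of~\cite{FMV11}. I will replace it with a short DR-argument: concavity of $t \mapsto F(t \opt)$ on $[0,1]$ together with $F(\characteristic_\varnothing) \geq 0$ gives $F(\nicefrac{1}{2} \cdot \opt) \geq \nicefrac{1}{2} F(\opt)$, and concavity of $t \mapsto F(\nicefrac{1}{2} \cdot \opt + t(\characteristic_\cN - \opt))$ on $[0,1]$ together with $F(\characteristic_\cN - \nicefrac{1}{2} \cdot \opt) \geq 0$ gives, at its midpoint, $F(\nicefrac{1}{2} \cdot \characteristic_\cN) \geq \nicefrac{1}{2}F(\nicefrac{1}{2} \cdot \opt) \geq \nicefrac{1}{4} F(\opt)$. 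The matching upper bound $F(\nicefrac{1}{2} \cdot \characteristic_\cN) \leq F(\opt)$ is immediate from the definition of $\opt$.

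The second and harder place is \cref{lem:loss_pre}, whose proof of $F(OPT(x,y)) \geq \Delta \cdot f(OPT)$ invokes the Lov\'{a}sz extension of the underlying set function---an object unavailable for a general DR-submodular $F$. The substitute I would prove is the direct inequality $F(\delta\characteristic_\cN \vee (\opt \wedge (1-\delta)\characteristic_\cN)) \geq (1-2\delta) F(\opt)$. My plan is to combine the weak submodularity $F(a \vee b) + F(a \wedge b) \leq F(a) + F(b)$ satisfied by every continuous DR-submodular function (applied with $a = \opt$ and $b = \delta\characteristic_\cN \vee (\opt \wedge (1-\delta)\characteristic_\cN)$, which yields $a \vee b = \opt \vee \delta\characteristic_\cN$ and $a \wedge b = \opt \wedge (1-\delta)\characteristic_\cN$) with the concavity of $F$ along the non-negative truncation directions $(\delta\characteristic_\cN - \opt)^+$ and $(\opt - (1-\delta)\characteristic_\cN)^+$, using non-negativity of $F$ to discard the boundary contributions involving $F(\delta\characteristic_\cN)$ and $F((1-\delta)\characteristic_\cN)$. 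Extracting the tight constant $1 - 2\delta$ is the principal technical obstacle of the proof, since the natural bookkeeping leaves residual gradient terms that must be absorbed carefully. Once this substitute inequality is in hand, \cref{cor:lower_bound_Fx_Fy} transfers without change and the rest of the Section~\ref{sec:algorithm} analysis---including the final $(\nicefrac{1}{2} - 44\eps)$-approximation bound on $F(x^\ell)$, together with the $O(\eps^{-1})$ adaptivity and $O(n\eps^{-2}\log \eps^{-1})$ query counts---transfers unchanged.
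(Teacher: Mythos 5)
Your overall plan — swap the return statement, keep the two subroutines, and re-prove only the places that genuinely use the multilinear/set structure — is exactly the plan the paper follows. Your substitute for $F(\nicefrac{1}{2}\characteristic_\cN)\ge\nicefrac{1}{4}F(\opt)$ is correct and is essentially the paper's Lemma~\ref{lem:DR_two_inequalities} sliced slightly differently. Your route to Lemma~\ref{lem:loss_pre} via $F(a\vee b)+F(a\wedge b)\le F(a)+F(b)$ is a genuinely different decomposition from the paper's (which just composes the two concavity bounds in sequence, $F((\opt\vee\delta\characteristic_\cN)\wedge(1-\delta)\characteristic_\cN)\ge(1-\delta)F(\opt\vee\delta\characteristic_\cN)\ge(1-\delta)^2F(\opt)\ge(1-2\delta)F(\opt)$), but it works just as cleanly: the weak-submodularity step gives $F(OPT(x,y))\ge F(\opt\vee\delta\characteristic_\cN)+F(\opt\wedge(1-\delta)\characteristic_\cN)-F(\opt)\ge 2(1-\delta)F(\opt)-F(\opt)=(1-2\delta)F(\opt)$, with no residual gradient terms at all. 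The pessimism about extracting $1-2\delta$ is unwarranted.

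The gap is in your claim that Proposition~\ref{prop:step} and ``all the lemmas in Sections~\ref{ssc:update} and~\ref{ssc:pre-process} carry over essentially verbatim.'' Lemma~\ref{lem:loss_two_sided} does not. Its proof differentiates $F((\characteristic_{OPT}\vee(x+tr))\wedge(y-t(\characteristic_\cN-r)))$ in $t$, and the clean split of coordinates into $u\in OPT$ (where the coordinate is exactly $y_u-t(1-r_u)$) versus $u\notin OPT$ (where it is exactly $x_u+tr_u$) works only because $\characteristic_{OPT}$ is a $\{0,1\}$-vector, so the inner $\vee$ and outer $\wedge$ resolve the same way for all $t\in[0,\delta]$. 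When $\opt$ is a general point of $[0,1]^\cN$, the coordinate of $(\opt\vee(x+tr))\wedge(y-t(\characteristic_\cN-r))$ equals $x_u+tr_u$, $\opt_u$, or $y_u-t(1-r_u)$ depending on where $\opt_u$ sits relative to the moving endpoints, and this classification changes with $t$. One has to introduce the $t$-dependent index sets $A(t)=\{u:\opt_u<x_u+tr_u\}$ and $B(t)=\{u:\opt_u>y_u-t(1-r_u)\}$, replace the two sums by sums over $A(t)$ and $B(t)$, and then extend the integrand bound from $A(t)\cup B(t)$ to all of $\cN$ using the non-negativity of $\max\{b_ur_u,a_u(1-r_u)\}$ (Observation~\ref{obs:gain_bound}); the coordinates in neither $A(t)$ nor $B(t)$ contribute nothing because the corresponding component of the argument is constant in $t$. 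Without this rewrite, the key loss bound of the Update procedure is unproved, so Proposition~\ref{prop:step} — and hence the whole approximation guarantee — does not actually follow.
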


The proposed algorithm that obeys all properties and guarantees promised in 
\cref{thm:DR_main_result_actual} is presented as 
\cref{alg:box_constrained_DRSM}. The algorithm is identical to 
\cref{alg:multilinear_oracle} except for its returned value. 
Since {\USM} anticipates a discrete solution,
\cref{alg:multilinear_oracle} returns a subset obtained by randomly rounding the 
potentially fractional vector $ x^i $. In contrast, 
\cref{alg:box_constrained_DRSM} returns the vector $ x^i $ itself. The two 
subroutines that \cref{alg:box_constrained_DRSM} invokes are exactly 
{\PreProcess} (\cref{alg:pre-process}) and {\Update} (\cref{alg:update}) from Section~\ref{sec:algorithm}.

\begin{algorithm}[ht]
	\caption{\texttt{Algorithm for Box-Constrained DR-Submodular Maximization}} 
	\label{alg:box_constrained_DRSM}
	\DontPrintSemicolon
	Let $\tau \gets F(\nicefrac{1}{2} \cdot \characteristic_\cN)$ and $\gamma 
	\gets 4\eps\tau$.\\
	Let $i \gets 0$ and $(x^0, y^0, \Delta^0) \gets \PreProcess( \tau)$.\\
	\While{$\Delta^i > 0$}
	{
		Let $(x^{i + 1}, y^{i + 1}, \Delta^{i + 1}) \gets \Update(x^i, y^i, 
		\Delta^i, \gamma)$.\\
		Update $i \gets i + 1$.
	}
	\Return{$x^i$}.
\end{algorithm}

The analysis of the number of adaptive rounds of \cref{alg:box_constrained_DRSM} and the number of oracle queries it uses is 
the same as the one done in the proof of \cref{thm:main_result_actual}. Thus, in the sequel, we only prove the 
first part of \cref{thm:DR_main_result_actual}, \ie, 
that \cref{alg:box_constrained_DRSM} achieves $(\nicefrac{1}{2} - 
13\eps)$-approximation for {\UDRSM} for every constant $\eps\in 
(0,\nicefrac{1}{25}]$.

In Section~\ref{sub:DR_step} we show that Proposition~\ref{prop:step}, which gives the properties of the procedure {\Update} under the {\USM} setting, applies (as is) also to the {\UDRSM} setting. For convenience, we repeat the proposition itself below.
\begin{repproposition}{prop:step}
	\stepProp
\end{repproposition}

The properties of the subroutine {\PreProcess} are given by \cref{prop:DR_pre-process}, which is proved in Section~\ref{sub:DR_pre-process}. One can observe that this proposition is identical to Proposition~\ref{prop:pre-process}, which gives the properties of {\PreProcess} under the {\USM} setting, except for two changes. First, that every appearance of $f(OPT)$ in Proposition~\ref{prop:pre-process} is replaced with an appearance of $F(\opt)$ in Proposition~\ref{prop:DR_pre-process}; and second, that Proposition~\ref{prop:DR_pre-process} refers also to the derivative oracle (which is not necessary in the {\USM} setting since the derivatives of the multilinear extension can be calculated using a value oracle).

\newcommand{\DRpreProcessProp}[1][]{The input for {\PreProcess} consists of a 
single value $\tau \geq 0$. If $\tau \geq F(\opt) / 4$, then {\PreProcess} 
outputs two vectors $x, y \in [0, 1]^\cN$ and a scalar $\Delta \in [0, 1]$ 
obeying
	\begin{compactenum}[(a)]
		\item $y - x = \Delta \cdot \characteristic_\cN$, \ifx&#1& \else 
		\label{item:DR_diff_delta_pre} \fi
		\item either $\Delta = 0$ or $\characteristic_\cN [\nabla F(x) - \nabla 
		F(y)] \leq 16\tau$ and \ifx&#1& \else \label{item:DR_potential_start} 
		\fi
		\item $F(x) + F(y) \geq 2[F(\opt) - F(OPT(x, y))] - 4\eps \cdot 
		F(\opt)$. 
		\ifx&#1& \else \label{item:DR_gain_loss_pre} \fi
	\end{compactenum}
	Moreover, {\PreProcess} requires only a constant number of adaptive rounds 
	and $O(n / \eps)$ value and derivative oracle queries to $F$.}
\begin{proposition} \label{prop:DR_pre-process}
	\DRpreProcessProp[l]
\end{proposition}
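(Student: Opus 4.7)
The plan is to mirror the proof of Proposition~\ref{prop:pre-process} and flag the one place where a genuinely new argument is needed. Parts~\textit{(a)} and~\textit{(b)}, together with the adaptivity and query-complexity bound, will follow by literally reusing Observations~\ref{obs:diff_delta_pre} and~\ref{obs:potential_start} and Lemma~\ref{lem:pre-process_adaptivity}; those arguments depend only on the structure of Algorithm~\ref{alg:pre-process} (the same subroutine invoked by Algorithm~\ref{alg:box_constrained_DRSM}) and on the definition of $\delta$, so they transfer unchanged, the only cosmetic difference being that $\nabla F$ is now queried directly via the derivative oracle. All of the real work lies in part~\textit{(c)}.

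For part~\textit{(c)} I will reproduce the structure of Corollary~\ref{cor:lower_bound_Fx_Fy}, which needs two ingredients: a ``gain'' bound $F(x)+F(y) \geq (\delta-\eps)\cdot 16\tau$ and a ``loss'' bound $F(OPT(x,y)) \geq \Delta\cdot F(\opt)$. The gain bound follows by repeating the proof of Lemma~\ref{lem:gain_pre} essentially verbatim: apply the chain rule to write $F(x) = F(0) + \int_0^\delta \characteristic_\cN \cdot \nabla F(t \characteristic_\cN)\,dt$ and symmetrically for $F(y)$, use the non-negativity of $F$ at $0$ and $\characteristic_\cN$, and use DR-submodularity to conclude that $\characteristic_\cN \cdot [\nabla F(t\characteristic_\cN) - \nabla F((1-t)\characteristic_\cN)] \geq 0$ for every $t \in [0, 1/2]$; the way $\delta$ is chosen then controls the remaining integral exactly as before. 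Once both bounds are in hand, I will finish by plugging them into the arithmetic chain of Corollary~\ref{cor:lower_bound_Fx_Fy}, replacing $f(OPT)$ by $F(\opt)$ throughout and using $\tau \geq F(\opt)/4$.

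The main obstacle will be the loss bound, because the Lov\'asz-extension argument of Lemma~\ref{lem:loss_pre} does not transfer when $\opt$ is a fractional vector rather than the characteristic vector of a set. My plan is a two-step workaround. First, since DR-submodularity (via the standard argument that integrates $\nabla F$ along a non-negative direction) implies the ordinary submodularity inequality $F(a)+F(b) \geq F(a \vee b)+F(a \wedge b)$, I apply it with $a = \opt$ and $b = OPT(x,y)$; a direct coordinate-wise check gives $\opt \vee OPT(x,y) = \opt \vee \delta \characteristic_\cN$ and $\opt \wedge OPT(x,y) = \opt \wedge (1-\delta)\characteristic_\cN$, so rearranging yields
\[
F(OPT(x,y)) \;\geq\; F(\opt \vee \delta \characteristic_\cN) + F(\opt \wedge (1-\delta)\characteristic_\cN) - F(\opt)\enspace.
\]
Second, I will show that each bracketed term on the right is at least $(1-\delta)F(\opt)$. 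For the $\vee$-term, consider $g(s) = F(\opt + s \cdot (\delta \characteristic_\cN - \opt)_+)$; the direction is coordinate-wise non-negative, so DR-submodularity implies $g$ is concave wherever it is defined, and a direct check shows the path remains in $[0,1]^\cN$ for $s \in [0, 1/\delta]$ (the binding constraint, from coordinates with $\opt_u = 0$, is exactly $s \leq 1/\delta$). Concavity on $[0, 1/\delta]$ gives $g(1) \geq (1-\delta)\, g(0) + \delta\, g(1/\delta)$; since $g(0) = F(\opt)$, $g(1) = F(\opt \vee \delta \characteristic_\cN)$, and $g(1/\delta) \geq 0$ by non-negativity, we conclude $F(\opt \vee \delta \characteristic_\cN) \geq (1-\delta) F(\opt)$. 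A symmetric argument running along the non-positive direction $-(\opt - (1-\delta)\characteristic_\cN)_+$ from $\opt$ (again extendable up to parameter $1/\delta$) yields $F(\opt \wedge (1-\delta)\characteristic_\cN) \geq (1-\delta)F(\opt)$. Substituting gives $F(OPT(x,y)) \geq 2(1-\delta)F(\opt) - F(\opt) = (1-2\delta)F(\opt) = \Delta\cdot F(\opt)$, as needed.
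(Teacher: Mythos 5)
Your proposal is correct, and parts (a), (b), and the complexity bounds do transfer verbatim as you say. The gain bound (the DR analogue of Lemma~\ref{lem:gain_pre}) also matches the paper's Lemma~\ref{lem:DR_gain_pre}, up to replacing $f(\varnothing)$ by $F(\characteristic_\varnothing)$.

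For the loss bound $F(OPT(x,y)) \geq \Delta \cdot F(\opt)$, you and the paper invoke the same essential tool---the concavity of $F$ restricted to a sign-definite ray (which the paper isolates as Lemma~\ref{lem:DR_two_inequalities}: $F(z \vee \delta\characteristic_\cN) \geq (1-\delta)F(z)$ and $F(z \wedge (1-\delta)\characteristic_\cN) \geq (1-\delta)F(z)$)---but you assemble it differently. The paper \emph{nests} the two inequalities: first it applies the $\wedge$-bound with $z = \opt \vee \delta\characteristic_\cN$ to get $F(OPT(x,y)) \geq (1-\delta)\,F(\opt \vee \delta\characteristic_\cN)$, then the $\vee$-bound with $z = \opt$, yielding $(1-\delta)^2 F(\opt) \geq (1-2\delta)F(\opt)$. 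You instead first invoke the lattice submodularity inequality $F(a)+F(b) \geq F(a\vee b)+F(a\wedge b)$ (which you correctly note follows from DR-submodularity by integrating $\nabla F$ along the direction $(b-a)_+$), identify $\opt \vee OPT(x,y) = \opt \vee \delta\characteristic_\cN$ and $\opt \wedge OPT(x,y) = \opt \wedge (1-\delta)\characteristic_\cN$, and then apply the two concavity bounds \emph{both at} $z = \opt$, summing to $2(1-\delta)F(\opt) - F(\opt) = (1-2\delta)F(\opt)$. Your coordinate-wise check of the lattice identity and your determination that the binding constraint on the ray parameter is $s \leq 1/\delta$ are both correct. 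The paper's nested multiplication is marginally tighter ($(1-\delta)^2 \geq 1-2\delta$) and skips the extra submodularity step; your additive decomposition is symmetric and applies the concavity arguments only at the fixed point $\opt$ rather than at an intermediate vector. Both reach the same final bound $(1-2\delta)F(\opt) = \Delta\cdot F(\opt)$ and are equally valid.
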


The following is a technical lemma that we use both in the analysis of Algorithm~\ref{alg:box_constrained_DRSM} and in the proof of Proposition~\ref{prop:DR_pre-process}.

\begin{lemma}\label{lem:DR_two_inequalities}
	For every vector $ z\in [0,1]^\cN $ and value $ \delta\in [0,1] $, the following two 
	inequalities hold: $ F(z\vee (\delta \cdot 
	\characteristic_\cN )) \ge (1-\delta) \cdot F(z) $ and $ 
	F(z\wedge((1-\delta) \cdot \characteristic_{\cN})) \ge (1-\delta) \cdot F(z) $.
\end{lemma}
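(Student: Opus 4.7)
The strategy is to prove both inequalities by a common chord argument, exploiting the fact that DR-submodularity makes $F$ concave along every non-negative line. The cases $\delta \in \{0, 1\}$ are immediate: each inequality reduces either to $F(z) \ge F(z)$ or to the non-negativity of $F$. I therefore focus on $\delta \in (0, 1)$.

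For the first inequality, set $v = (\delta \cdot \characteristic_\cN - z)_+$, so that $v \ge 0$, every coordinate of $v$ is at most $\delta$, and $z + v = z \vee (\delta \cdot \characteristic_\cN)$. Define $h(t) = F(z + tv)$. Since $v \ge 0$, the gradient-monotonicity property of DR-submodular functions stated earlier in the paper implies that $h'(t) = \langle \nabla F(z+tv), v \rangle$ is non-increasing in $t$, so $h$ is concave wherever $z + tv \in [0,1]^\cN$. The crucial claim is that this interval contains $[0, 1/\delta]$: for any coordinate $u$ with $v_u > 0$ we have $z_u < \delta$ and $v_u = \delta - z_u$, so the constraint $(z+tv)_u \le 1$ becomes $t \le (1-z_u)/(\delta - z_u)$, and a one-line algebraic check shows $(1-z_u)/(\delta - z_u) \ge 1/\delta$ for every $z_u \in [0, \delta)$ (it is equivalent to $z_u(1-\delta) \ge 0$).

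Concavity of $h$ on $[0, 1/\delta]$ then yields, by comparing secant slopes on $[0,1]$ and on $[0, 1/\delta]$,
\[
    h(1) - h(0) \ge \delta\bigl(h(1/\delta) - h(0)\bigr),
\]
so $h(1) \ge (1-\delta)\,h(0) + \delta\,h(1/\delta) \ge (1-\delta)\, F(z)$, where the final step uses $h(1/\delta) \ge 0$ from the non-negativity of $F$. The second inequality follows from the same chord argument applied to $v = (z - (1-\delta) \cdot \characteristic_\cN)_+$ and $h(t) = F(z - tv)$; the corresponding check that $z - tv$ remains in $[0,1]^\cN$ for $t \in [0, 1/\delta]$ reduces (symmetrically) to $z_u \le 1$. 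The one non-routine step in either proof is the verification that the line segment extends all the way to $t = 1/\delta$---this is precisely what allows a single appeal to the non-negativity of $F$ to deliver the full $(1-\delta)$ factor rather than the exponentially weaker bound one would obtain by handling each coordinate separately.
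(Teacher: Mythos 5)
Your proof is correct and is essentially the paper's proof under a trivial reparameterization: you write $h(t) = F(z + tv)$ on the interval $[0, 1/\delta]$ where the paper uses $G(t) = F(z + \tfrac{t}{\delta} v)$ on $[0,1]$, so your secant-slope inequality $h(1) \ge (1-\delta)h(0) + \delta\,h(1/\delta)$ is literally the paper's concavity inequality $G(\delta) \ge (1-\delta)G(0) + \delta\,G(1)$. The key steps---identifying the direction $v$, checking that the segment stays in $[0,1]^\cN$ all the way to the far endpoint, invoking concavity from the gradient-monotonicity property, and discarding the far endpoint's value via non-negativity---match the paper's argument exactly.
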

\begin{proof}
Note that the lemma is trivial for $\delta = 0$. Thus, in the rest of the proof we assume $\delta \in (0, 1]$. Let us now prove the first inequality of the lemma. Consider the function $ G(t) = 
	F(z+\frac{t}{\delta} (\characteristic_\varnothing \vee (\delta \cdot \characteristic_{\cN}-z)) ) $. For every $ t\in [0,1] $, we have
	$ z+\frac{t}{\delta} (\characteristic_\varnothing\vee (\delta \cdot \characteristic_{\cN}-z)) \ge z \ge \characteristic_\varnothing $ 
	and 
	$ z+\frac{t}{\delta} (\characteristic_\varnothing\vee (\delta \cdot \characteristic_{\cN}-z)) \le 
	z+\frac{1}{\delta} (\characteristic_\varnothing\vee (\delta \cdot \characteristic_{\cN}-z)) = z \vee 
	(\characteristic_{\cN}+(1-1/\delta)z)
	\leq
	\characteristic_{\cN}
	$, where the last inequality holds since $ 1-1/\delta\le 0 $. Thus, the function $ G $ is well-defined for $ 
	t\in [0,1] $. By the chain rule, its derivative is
	\[
	\nabla F\left(z+\frac{t}{\delta} (\characteristic_\varnothing\vee 
	(\delta\characteristic_{\cN}-z)) \right) \cdot \frac{1}{\delta} (\characteristic_\varnothing\vee 
	(\delta\characteristic_{\cN}-z)).
	\]
	By the DR-submodularity of $ F $, the first factor in this derivative is non-decreasing in $ t $. 
	Since the second term is a non-negative constant vector, we obtain that the entire derivative is 
	non-decreasing in $ t $, which implies that $G$ is concave. Therefore, 
	\[
	F(z\vee \delta\characteristic_{\cN} )=G(\delta)\ge (1-\delta) \cdot G(0) + 
	\delta \cdot G(1) \ge (1-\delta) \cdot F(z),
	\]
	where the second inequality holds since $ G(0) = F(z) $ and $ G(1)\ge 0 $ (this follows from the non-negativity of $ F $).
	
	We now get to proving the second inequality of the lemma. Consider 
	the function $ H(t) = F(z-\frac{t}{\delta}(\characteristic_\varnothing \vee ( z - 
	(1-\delta)\characteristic_{\cN} ) ))$. For every $ 
	t\in [0,1] $, we have $ z-\frac{t}{\delta}(\characteristic_\varnothing\vee ( z - 
	(1-\delta)\characteristic_{\cN} ) ) \le z \le \characteristic_\cN $ and $ 
	z-\frac{t}{\delta}(\characteristic_\varnothing\vee ( z - 
	(1-\delta)\characteristic_{\cN} ) ) \ge z-\frac{1}{\delta}(\characteristic_\varnothing\vee ( z - 
	(1-\delta)\characteristic_{\cN} ) ) = z\wedge ( 
	\frac{1-\delta}{\delta}(1-z) ) \ge \characteristic_\varnothing $. Thus, the function $ H $ is well-defined for $ 
	t\in [0,1] $. Additionally, we can prove that $H$ is concave using an argument similar to the one used above to prove that $G$ is concave. Therefore,
	\[
	F(z\wedge (1-\delta)\characteristic_{\cN} ) = H(\delta) \ge 
	(1-\delta) \cdot H(0) + \delta \cdot H(1) \ge (1-\delta) \cdot F(z),
	\]
	where the second inequality holds since $ H(0)=F(z) $ and $ H(1)\ge 0 $.
\end{proof}

The following two lemmata correspond to lemmata from Section~\ref{sec:algorithm}.

\begin{lemma}[Corresponds to Lemma~\ref{lem:conditions_hold_two_sided}] \label{lem:DR_conditions_hold_two_sided}
	It always holds that $\tau \in [\nicefrac{1}{4}\cdot F(\opt), F(\opt)]$, 
	and 
	for every integer 
	$0 \leq i \leq \ell$ it holds that $x^i, y^i \in [0, 1]^\cN$, $\Delta^i \in 
	[0, 1]$ and $x^i + \Delta^i \cdot \characteristic_\cN = y^i$.
\end{lemma}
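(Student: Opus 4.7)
The plan is to mirror the proof of Lemma~\ref{lem:conditions_hold_two_sided} almost verbatim, with only the bound $\tau \geq F(\opt)/4$ requiring a genuinely new argument, since the original proof invoked the set-function result of Feige et al.~\cite{FMV11} which does not transfer directly to the DR-submodular continuous setting. The upper bound $\tau \leq F(\opt)$ is immediate, as $\nicefrac{1}{2}\cdot\characteristic_{\cN}$ is a feasible point for {\UDRSM} and $\opt$ is by definition a maximizer, so $\tau = F(\nicefrac{1}{2}\cdot\characteristic_{\cN})\leq F(\opt)$.

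For the lower bound $\tau \geq F(\opt)/4$, I would apply Lemma~\ref{lem:DR_two_inequalities} twice with $\delta = \nicefrac{1}{2}$. First, taking $z = \opt$ in the first inequality of the lemma yields
\[
F\bigl(\opt \vee (\nicefrac{1}{2}\cdot\characteristic_{\cN})\bigr) \geq \tfrac{1}{2} F(\opt).
\]
Next, since $\opt \vee (\nicefrac{1}{2}\cdot\characteristic_{\cN}) \geq \nicefrac{1}{2}\cdot\characteristic_{\cN}$, we have
\[
\bigl(\opt \vee (\nicefrac{1}{2}\cdot\characteristic_{\cN})\bigr) \wedge (\nicefrac{1}{2}\cdot\characteristic_{\cN}) = \nicefrac{1}{2}\cdot\characteristic_{\cN}.
\]
Applying the second inequality of Lemma~\ref{lem:DR_two_inequalities} with $z = \opt \vee (\nicefrac{1}{2}\cdot\characteristic_{\cN})$ and $\delta = \nicefrac{1}{2}$ therefore gives
\[
F(\nicefrac{1}{2}\cdot\characteristic_{\cN}) \geq \tfrac{1}{2}\, F\bigl(\opt \vee (\nicefrac{1}{2}\cdot\characteristic_{\cN})\bigr) \geq \tfrac{1}{4} F(\opt),
\]
which establishes $\tau \geq F(\opt)/4$ and completes the first part of the lemma.

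The second part of the lemma is proved by induction on $i$ exactly as in the proof of Lemma~\ref{lem:conditions_hold_two_sided}. The base case $i=0$ follows from part~(\ref{item:DR_diff_delta_pre}) of Proposition~\ref{prop:DR_pre-process}, whose hypothesis $\tau \geq F(\opt)/4$ is guaranteed by the first part just established. For the inductive step, assuming $x^{i-1}, y^{i-1} \in [0,1]^\cN$, $\Delta^{i-1} \in [0,1]$ and $y^{i-1} = x^{i-1} + \Delta^{i-1}\cdot \characteristic_\cN$, the fact that iteration $i-1$ was not the last implies $\Delta^{i-1} > 0$, so the input to {\Update} in the $i$-th iteration satisfies the hypothesis of Proposition~\ref{prop:step}, and part~(\ref{item:diff_delta}) of that proposition delivers the required conclusions for $x^i, y^i, \Delta^i$. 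Note that Proposition~\ref{prop:step} applies verbatim in the {\UDRSM} setting as claimed in the paper (to be verified in Section~\ref{sub:DR_step}).

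I do not anticipate a real obstacle here: the only genuinely new ingredient is the DR-submodular $\tfrac{1}{4}$-approximation at the midpoint, and Lemma~\ref{lem:DR_two_inequalities} has been set up precisely so that two applications yield it cleanly. Everything else is a rewrite of the original induction with $f(OPT)$ replaced by $F(\opt)$.
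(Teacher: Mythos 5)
Your proof is correct and follows essentially the same route as the paper: two applications of Lemma~\ref{lem:DR_two_inequalities} at $\delta = \nicefrac{1}{2}$ (first inequality with $z = \opt$, second with $z = \opt \vee \nicefrac{1}{2}\cdot\characteristic_\cN$) to obtain $\tau \geq \nicefrac{1}{4}\cdot F(\opt)$, the trivial upper bound from optimality of $\opt$, and the same induction as Lemma~\ref{lem:conditions_hold_two_sided} for the second part. The only cosmetic difference is that you spell out the identity $(\opt \vee (\nicefrac{1}{2}\cdot\characteristic_\cN)) \wedge (\nicefrac{1}{2}\cdot\characteristic_\cN) = \nicefrac{1}{2}\cdot\characteristic_\cN$ as a separate step, which the paper folds into its opening equality.
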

\begin{proof}
	Observe that
	\begin{align*}
	F(\nicefrac{1}{2}\cdot \characteristic_{\cN}) ={} & F((\opt \vee 
	(\nicefrac{1}{2}\cdot\characteristic_{\cN}) ) \wedge 
	((1-\nicefrac{1}{2})\cdot\characteristic_{\cN}) )\\\ge{} & (1-\nicefrac{1}{2}) \cdot
	F(\opt 
	\vee 
	\nicefrac{1}{2}\cdot \characteristic_{\cN})\ge (1-\nicefrac{1}{2})^2 \cdot
	F(\opt) = 
	\nicefrac{1}{4}\cdot F(\opt)
	\enspace,
	\end{align*}
	where the first inequality follows from the second inequality of 
	\cref{lem:DR_two_inequalities} by setting 
	$ z = \opt \vee (\nicefrac{1}{2}\cdot \characteristic_{\cN}) $ and the second 
	inequality 
	follows from the first inequality of \cref{lem:DR_two_inequalities} by setting $ 
	z = \opt $. Moreover, by the definition of $ \opt $, we have $ \tau = F(\nicefrac{1}{2} \cdot \characteristic_\cN) \le F(\opt) $, which completes the proof of the first part of the lemma, \ie, that $\tau \in [\nicefrac{1}{4} \cdot F(\opt), F(\opt)]$.
	
	The proof of the 
	second part of the lemma is identical to the proof of the corresponding part of \cref{lem:conditions_hold_two_sided}, and thus, we omit it.
\end{proof}

\begin{lemma}[Corresponds to Lemma~\ref{lem:initial_bound_two_sided}] \label{lem:DR_initial_bound_two_sided}
	$\ell \leq 5\eps^{-1}$ and $F(x^\ell) + F(y^\ell) \geq 2(1 - 3\eps) \cdot 
	[F(\opt) - F(OPT(x^\ell, y^\ell))] - 168\eps \cdot F(\opt)$.
\end{lemma}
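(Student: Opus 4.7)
The plan is to replay the proof of Lemma~\ref{lem:initial_bound_two_sided} verbatim, with $f(OPT)$ replaced by $F(\opt)$ throughout. All the ingredients carry over: Proposition~\ref{prop:step} applies in the {\UDRSM} setting by the discussion in Section~\ref{sub:DR_step}; Proposition~\ref{prop:DR_pre-process} is the direct analog of Proposition~\ref{prop:pre-process}; and Lemma~\ref{lem:DR_conditions_hold_two_sided} supplies the inclusions $x^i,y^i\in[0,1]^\cN$, $\Delta^i\in[0,1]$, $y^i=x^i+\Delta^i\cdot\characteristic_\cN$ and $\tau\in[F(\opt)/4,F(\opt)]$ that are needed to invoke these propositions in every iteration.

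First I would bound $\ell$. Define the potential $\Phi(i)=\characteristic_\cN[\nabla F(x^i)-\nabla F(y^i)]$. Since $\Delta^i>0$ for $i<\ell$, part~(b) of Proposition~\ref{prop:step} gives $\Phi(i)\le \Phi(i-1)-4\eps\tau$ for every $1\le i\le \ell-1$, hence $\Phi(0)\ge \Phi(\ell-1)+4\eps\tau(\ell-1)$. Part~(b) of Proposition~\ref{prop:DR_pre-process} yields $\Phi(0)\le 16\tau$, while the DR-submodularity of $F$ (noted at the start of Section~\ref{sec:dr_submodular}, where it is recalled that $\nabla F(x)\ge \nabla F(y)$ whenever $x\le y$) together with $y^{\ell-1}\ge x^{\ell-1}$ gives $\Phi(\ell-1)\ge 0$. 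Combining, $16\tau\ge 4\eps\tau(\ell-1)$, so $\ell\le 1+4\eps^{-1}\le 5\eps^{-1}$.

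For the second inequality, part~(c) of Proposition~\ref{prop:DR_pre-process} gives
\[
F(x^0)+F(y^0)\ge 2(1-3\eps)\bigl[F(\opt)-F(OPT(x^0,y^0))\bigr]-4\eps\cdot F(\opt),
\]
where the $(1-3\eps)$ factor is inserted for free by upper bounding $F(OPT(x^0,y^0))\le F(\opt)$ (this holds since $OPT(x^0,y^0)\le \opt\vee x^0$ and iterated application of DR-submodularity, or equivalently by the monotonicity argument inside \cref{lem:DR_two_inequalities}-style reasoning, shows the inequality; alternatively $OPT(x^0,y^0)\le \characteristic_\cN$ and a direct maximality argument suffice). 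Then for each $1\le i\le \ell$, part~(c) of Proposition~\ref{prop:step} together with the already established $\Phi(i-1)\le \Phi(0)\le 16\tau$ yields
\begin{align*}
[F(x^i)+F(y^i)]-[F(x^{i-1})+F(y^{i-1})]&{}-2(1-3\eps)\bigl[F(OPT(x^{i-1},y^{i-1}))-F(OPT(x^i,y^i))\bigr]\\
&\ge -4\eps\tau(\Delta^{i-1}-\Delta^i)-32\eps^2\tau.
\end{align*}
Summing this telescoping inequality over $i=1,\dots,\ell$ and adding the pre-process bound, I would use $\Delta^0-\Delta^\ell\le 1$, $\tau\le F(\opt)$, and $\ell\le 5\eps^{-1}$ to obtain the claimed bound
\[
F(x^\ell)+F(y^\ell)\ge 2(1-3\eps)\bigl[F(\opt)-F(OPT(x^\ell,y^\ell))\bigr]-168\eps\cdot F(\opt).
\]

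There is no genuine obstacle: the only non-cosmetic differences from the {\USM} proof are (i) that $F(\opt)$ now plays the role that $f(OPT)$ played, justified by Lemma~\ref{lem:DR_conditions_hold_two_sided}, and (ii) that the inequality $F(OPT(x^0,y^0))\le F(\opt)$, which was free in the {\USM} setting because $F$ is the multilinear extension of $f$, now must be justified directly in the DR-submodular setting. The latter follows because $OPT(x^0,y^0)=(\opt\vee x^0)\wedge y^0\le \opt\vee x^0$, and two applications of \cref{lem:DR_two_inequalities} (or a direct argument that pushing the argument of $F$ monotonically toward $\opt$ can only increase its value when coordinates are moved by at most the corresponding gap, combined with $F(\opt)$ being the maximum) give the bound; this is the only step requiring any attention.
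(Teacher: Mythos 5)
Your proposal follows the same route as the paper: replay the proof of Lemma~\ref{lem:initial_bound_two_sided} with $f(OPT)\mapsto F(\opt)$, swap references to Proposition~\ref{prop:pre-process} and Lemma~\ref{lem:conditions_hold_two_sided} for Proposition~\ref{prop:DR_pre-process} and Lemma~\ref{lem:DR_conditions_hold_two_sided}, and use $\nabla F(x)\ge\nabla F(y)$ for $x\le y$ to get $\Phi(\ell-1)\ge 0$. You also correctly single out $F(OPT(x^0,y^0))\le F(\opt)$ as the one step whose justification changes in the DR setting.

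Your primary justification for that step, however, does not work. Lemma~\ref{lem:DR_two_inequalities} produces inequalities of the form $F(z\vee(\delta\characteristic_\cN))\ge(1-\delta)F(z)$ and $F(z\wedge((1-\delta)\characteristic_\cN))\ge(1-\delta)F(z)$; applying these to $z=\opt$ gives $F(OPT(x^0,y^0))\ge(1-\delta)^2 F(\opt)$, which is a lower bound and is the opposite of what you want. Similarly, ``iterated application of DR-submodularity'' cannot yield $F(OPT(x^0,y^0))\le F(\opt\vee x^0)$: DR-submodularity constrains the gradient, not the sign of the function's changes, so $F$ may decrease when coordinates are raised. The valid argument is the one you mention only as an alternative: $OPT(x^0,y^0)\in[0,1]^\cN$ and $\opt$ is the maximizer of $F$ over $[0,1]^\cN$, so $F(OPT(x^0,y^0))\le F(\opt)$ holds by definition of $\opt$. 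This is exactly the paper's justification, and it is immediate rather than a step ``requiring attention.''
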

\begin{proof}
The proof of Lemma~\ref{lem:initial_bound_two_sided} can be used to prove the current lemma under the following slight modifications.
\begin{itemize}
	\item Every appearance of $f(OPT)$ should replaced with $F(\opt)$.
	\item Every reference to Proposition~\ref{prop:pre-process} or Lemma~\ref{lem:conditions_hold_two_sided} should be replaced with a reference to Proposition~\ref{prop:DR_pre-process} or Lemma~\ref{lem:DR_conditions_hold_two_sided}, respectively.
	\item The proof uses the inequality $F(OPT(x^0, y^0)) \leq F(\opt)$. This inequality follows from the definition of $\opt$ since $OPT(x^0, y^0) \in [0, 1]^\cN$. \qedhere
\end{itemize}
\end{proof}

The following corollary completes the proof of the approximation guarantee part of 
\cref{thm:DR_main_result_actual}.

\begin{corollary}
	$
	F(x^\ell)
	\geq
	(\nicefrac{1}{2} - 44\eps) \cdot F(\opt)
	$.
	Hence, the approximation ratio of Algorithm~\ref{alg:multilinear_oracle} is 
	at least $\nicefrac{1}{2} - 44\eps$.
\end{corollary}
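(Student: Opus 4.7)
The plan is to mirror essentially verbatim the proof of the corresponding corollary at the end of \cref{sec:algorithm}, with $F(\opt)$ playing the role previously played by $f(OPT)$. Since \cref{alg:box_constrained_DRSM} returns the vector $x^\ell$ itself (rather than a rounded sample of it), the ``Hence'' clause is immediate once the first inequality is established.

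For the first inequality, I will exploit the termination condition of the while loop, which is $\Delta^\ell = 0$. Combined with \cref{lem:DR_conditions_hold_two_sided}, this gives $y^\ell = x^\ell + \Delta^\ell \cdot \characteristic_\cN = x^\ell$, and consequently
\[
OPT(x^\ell, y^\ell) = (\opt \vee x^\ell) \wedge x^\ell = x^\ell,
\]
since the coordinate-wise maximum dominates $x^\ell$ entry by entry, so the subsequent coordinate-wise minimum with $x^\ell$ returns $x^\ell$. Hence $F(OPT(x^\ell, y^\ell)) = F(x^\ell)$, and substituting into the guarantee of \cref{lem:DR_initial_bound_two_sided} yields
\[
2F(x^\ell) \geq 2(1-3\eps)[F(\opt) - F(x^\ell)] - 168\eps \cdot F(\opt),
\]
which rearranges to $(4 - 6\eps)\, F(x^\ell) \geq (2 - 174\eps)\, F(\opt)$.

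To conclude $F(x^\ell) \geq (\nicefrac{1}{2} - 44\eps) F(\opt)$ from this, I will split into two cases. If $\eps \geq \nicefrac{1}{88}$, the target bound $\nicefrac{1}{2} - 44\eps$ is non-positive and the non-negativity of $F$ immediately suffices. Otherwise $\eps$ is small enough that elementary algebra verifies $(2-174\eps)/(4-6\eps) \geq \nicefrac{1}{2} - 44\eps$. I do not anticipate any real obstacle: the analytic content has all been absorbed into \cref{lem:DR_initial_bound_two_sided}, and this corollary merely unpacks what termination buys us. The only cosmetic care needed is that $OPT(x,y)$ in this section is defined using the vector $\opt$ (rather than the characteristic vector of a discrete optimum), but the coordinate-wise collapse $(\opt \vee x^\ell) \wedge x^\ell = x^\ell$ is manifestly valid regardless.
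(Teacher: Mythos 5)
Your proof is correct and matches the paper's argument: termination gives $\Delta^\ell = 0$, hence $y^\ell = x^\ell$ and $OPT(x^\ell, y^\ell) = x^\ell$, after which you plug into Lemma~\ref{lem:DR_initial_bound_two_sided} and rearrange. Your explicit case split at $\eps = 1/88$ (combined with the bound $\eps < 1/88 < 5/264$ making the elementary algebra go through) is simply an unpacking of the paper's terse ``rearranging the last inequality and using the non-negativity of $F$.''
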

\begin{proof}
	

Observe that $\Delta^\ell = 0$ because otherwise the algorithm would not have stopped after $\ell$ iterations. Thus, $y^\ell = x^\ell + \Delta^\ell \cdot \characteristic_\cN = x^\ell$ and $OPT(x^\ell, y^\ell) = (\opt \vee x^\ell) \wedge y^\ell = x^\ell$. Plugging these observations into the guarantee of Lemma~\ref{lem:DR_initial_bound_two_sided}, we get
\[
	2F(x^\ell) \geq 2(1 - 3\eps) \cdot [F(\opt) - F(x^\ell)] - 168\eps \cdot F(\opt)
	\enspace,
\]
and the corollary now follows immediately by rearranging the last inequality and using the non-negativity of $F$.
\end{proof}

\subsection{The Procedure {\Update}}\label{sub:DR_step}

In this subsection, we prove that \cref{prop:step} still holds in the context of {\UDRSM}. Let us recall the statement of this proposition.

\begin{repproposition}{prop:step}
	\stepProp
\end{repproposition}

Proposition~\ref{prop:step} was proved in the context of {\USM} in Section~\ref{ssc:update}. Except for the proof of Lemma~\ref{lem:loss_two_sided}, all the proofs in this section can be made to apply to {\UDRSM} by simply replacing every reference to the submodularity of $f$ with a reference to the DR-submoduarlity of $F$. Thus, to prove that Proposition~\ref{prop:step} holds in the context of {\UDRSM}, it suffices to prove that Lemma~\ref{lem:loss_two_sided} holds in this setting, which is what we do next.

Recall that $U^+ \triangleq \{u \in \cN \mid a_u > 0 \text{ and } b_u > 0\}$.
\begin{replemma}{lem:loss_two_sided}
\lossTwoSided
\end{replemma}
\begin{proof}
	For every $ t\in [0,\delta] $, define $ v(t) \triangleq (\opt \vee (x + t r)) 
	\wedge (y - 
	t(\characteristic_\cN - r)) $, $ A(t)\triangleq \{ u\in \cN\mid \opt_u < 
	x_u + tr_u \} $, and $ B(t)\triangleq \{ u\in \cN\mid \opt_u > y_u-t(1-r_u) 
	\} $. Note that $ A(t) $ and $ B(t) $ are disjoint subsets of $ \cN $.
	Using the chain rule, we get
	\begin{align*}
	F(OPT(x, y)) - F(OPT(x', y')) 
	&=
	F((\opt \vee x) \wedge y) - F((\opt \vee 
	x') \wedge y') 
	=
	-\int_0^\delta \frac{d F(v(t))}{dt} dt\\
	={} &
	\int_0^\delta \left\{  \sum_{u\in B(t)} (1-r_u)\cdot \partial_u F(v(t)) - 
	\sum_{u\in A(t)} r_u \cdot \partial_u F(v(t)) \right\} dt
	\enspace.
	\end{align*}
	Using the DR-submodularity of $F$ and the fact that $x \leq 
	v(t) \leq y$, the rightmost side of the last equation can be upper bounded 
	as follows.
	\begin{align*}
	F(OPT(x, y)&) - F(OPT(x', y'))
	\leq
	\int_0^{\delta} \left\{\sum_{u \in B(t)} (1 - r_u) \cdot \partial_u F(x) - 
	\sum_{u \in A(t)}  r_u \cdot \partial_u F(y) 
	\right\} dt\\
	={} &
	\int_0^{\delta} \left\{\sum_{u \in B(t)} a_u(1 - r_u) + \sum_{u \in A(t)} 
	 b_u r_u \right\} dt
	\leq
	\int_0^{\delta} \sum_{u \in A(t) \cup B(t)} \mspace{-18mu} \max\{b_u r_u, a_u (1 - r_u)\} 
	dt\\
	\leq{} &
	\int_0^{\delta} \sum_{u \in \cN} \max\{b_u r_u, a_u (1 - r_u)\} 
	=
	\delta \cdot \sum_{u \in \cN} \max\{b_u r_u, a_u (1 - r_u)\} 
	\enspace,
	\end{align*}
	where the last inequality holds since for every element $u \in \cN$ the maximum $\max\{a_u, b_u\}$ is non-negative by Observation~\ref{obs:gain_bound}, and thus,  the maximum $\max\{b_u r_u, a_u(1 - r_u)\}$ is also non-negative.
	
	To complete the proof of the lemma, it remains to observe that for every element $u \in \cN \setminus U^+$ it holds that $\max\{b_u r_u, a_u(1 - r_u)\} = 0$. To see that this is the case, note that every such element $u$ must fall into one out of only two possible options. The first option is that $a_u > 0$ and $b_u \leq 0$, which imply $r_u = 1$, and thus, $\max\{b_u r_u, a_u(1 - r_u)\} = \max\{b_u, 0\} = 0$. The second option is that $a_u \leq 0$, which implies $r_u = 0$, and thus, $\max\{b_u r_u, a_u(1 - r_u)\} = \max\{0, a_u\} = 0$.
\end{proof}

\subsection{The Procedure {\PreProcess}}\label{sub:DR_pre-process}

In this subsection, we present the proof of \cref{prop:DR_pre-process}. Before 
presenting the proof, let us recall the statement of the proposition.

\begin{repproposition}{prop:DR_pre-process}
	\DRpreProcessProp
\end{repproposition}

We omit the proofs of parts~(\ref{item:diff_delta_pre}) and~(\ref{item:potential_start}) because they are identical to the proofs of the corresponding parts in Proposition~\ref{prop:pre-process} (these parts are proved by \cref{obs:diff_delta_pre,obs:potential_start}, respectively). Similarly, the analysis of the number adaptive rounds 
and the number of value oracle queries is identical to the proof of 
\cref{lem:pre-process_adaptivity}. Next, we prepare some results in order to 
prove part (\ref{item:DR_gain_loss_pre}) of Proposition~\ref{prop:DR_pre-process}.

\begin{lemma}[Counterpart of Lemma~\ref{lem:loss_pre}] \label{lem:DR_loss_pre}
	$F(OPT(x, y)) \geq \Delta \cdot f(\opt)$.
\end{lemma}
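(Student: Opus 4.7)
The plan is to prove this by applying \cref{lem:DR_two_inequalities} twice and then using the elementary bound $(1-\delta)^2 \geq 1 - 2\delta$. Recall from the description of Algorithm~\ref{alg:pre-process} that $x = \delta \cdot \characteristic_\cN$, $y = (1-\delta) \cdot \characteristic_\cN$, and $\Delta = 1 - 2\delta$. Thus the quantity we wish to lower bound is
\[
	F(OPT(x,y)) = F\bigl((\opt \vee (\delta \cdot \characteristic_\cN)) \wedge ((1-\delta) \cdot \characteristic_\cN)\bigr)\enspace.
\]

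First I would define the intermediate vector $w = \opt \vee (\delta \cdot \characteristic_\cN) \in [0,1]^\cN$, so that $OPT(x,y) = w \wedge ((1-\delta) \cdot \characteristic_\cN)$. Applying the first inequality of \cref{lem:DR_two_inequalities} with $z = \opt$ gives $F(w) \geq (1-\delta) \cdot F(\opt)$. Next, applying the second inequality of \cref{lem:DR_two_inequalities} with $z = w$ gives $F(w \wedge ((1-\delta) \cdot \characteristic_\cN)) \geq (1-\delta) \cdot F(w)$. Chaining these two bounds yields
\[
	F(OPT(x,y)) \geq (1-\delta) \cdot F(w) \geq (1-\delta)^2 \cdot F(\opt) \geq (1 - 2\delta) \cdot F(\opt) = \Delta \cdot F(\opt)\enspace,
\]
where the last inequality follows from expanding $(1-\delta)^2 = 1 - 2\delta + \delta^2 \geq 1 - 2\delta$.

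There is essentially no obstacle here. The discrete Lovász extension argument used in the USM counterpart (Lemma~\ref{lem:loss_pre}) is unavailable in the continuous DR-submodular setting, but \cref{lem:DR_two_inequalities} provides a clean substitute because each of its two inequalities captures exactly the kind of concavity-along-a-ray behavior we need: pushing a point up toward $\characteristic_\cN$ by joining with $\delta \cdot \characteristic_\cN$ and pulling it down toward $\characteristic_\varnothing$ by meeting with $(1-\delta) \cdot \characteristic_\cN$ each cost at most a factor of $(1-\delta)$.
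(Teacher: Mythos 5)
Your proof is correct and takes essentially the same route as the paper: two applications of Lemma~\ref{lem:DR_two_inequalities} (one for each inequality it provides) followed by $(1-\delta)^2 \geq 1-2\delta$. The only cosmetic difference is that the paper reads the chain from left to right while you build it up via the intermediate point $w$; you might also note explicitly, as the paper does, that passing from $(1-\delta)^2 \cdot F(\opt)$ to $(1-2\delta) \cdot F(\opt)$ uses the non-negativity of $F$.
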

\begin{proof}
	 Observe that
	 \begin{align*}
	 F(OPT(x,y)) ={} & F((\opt \vee (\delta \cdot \characteristic_{\cN}) ) \wedge 
	 ((1-\delta) \cdot \characteristic_{\cN}) )\\\ge{} & (1-\delta) \cdot F(\opt \vee 
	 (\delta \cdot \characteristic_{\cN}))\ge (1-\delta)^2 \cdot F(\opt)\ge (1-2\delta) \cdot F(\opt)
	 \enspace,
	 \end{align*}
	 where the first inequality follows from the second inequality of 
	 \cref{lem:DR_two_inequalities} when setting 
	 $ z = \opt \vee (\delta \cdot \characteristic_{\cN}) $, the second inequality 
	 follows from the first inequality of \cref{lem:DR_two_inequalities} when setting $ 
	 z = \opt $, and the third 
	 inequality is due to the non-negativity of $ F $.
\end{proof}

\begin{lemma}[Counterpart of \cref{lem:gain_pre}] \label{lem:DR_gain_pre}
	$F(x) + F(y) \geq (\delta - \eps) \cdot 16\tau$.
\end{lemma}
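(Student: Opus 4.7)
The plan is to mirror the proof of Lemma~\ref{lem:gain_pre} almost verbatim, since all the ingredients used there carry over to the DR-submodular setting. First I would use the chain rule to write
\[
F(x) = F(\delta \cdot \characteristic_\cN) = F(\characteristic_\varnothing) + \int_0^\delta \characteristic_\cN \cdot \nabla F(t \cdot \characteristic_\cN)\, dt \geq \int_0^\delta \characteristic_\cN \cdot \nabla F(t \cdot \characteristic_\cN)\, dt,
\]
using the non-negativity of $F$ at $\characteristic_\varnothing$ in place of the non-negativity of $f(\varnothing)$ used in the original proof. An analogous computation, applied to $F(y) = F((1-\delta) \cdot \characteristic_\cN)$ and using $F(\characteristic_\cN) \geq 0$, yields
\[
F(y) \geq -\int_0^\delta \characteristic_\cN \cdot \nabla F((1-t) \cdot \characteristic_\cN)\, dt.
\]

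Next I would add the two inequalities and split the resulting integral into the ranges $[0, \delta - \eps]$ and $[\delta - \eps, \delta]$, exactly as in the original proof. On the interval $[0, \delta - \eps]$, the way $\delta$ is selected on Line~\ref{line:delta_finding_x} of Algorithm~\ref{alg:pre-process} guarantees that $\characteristic_\cN \cdot [\nabla F(t \cdot \characteristic_\cN) - \nabla F((1-t) \cdot \characteristic_\cN)] > 16\tau$ for every $t < \delta - \eps$; integrating gives a lower bound of $(\delta - \eps) \cdot 16\tau$. On the interval $[\delta - \eps, \delta]$, the integrand is still non-negative: this is where DR-submodularity of $F$ replaces the submodularity of $f$, since for $t \in [0, 1/2]$ we have $t \cdot \characteristic_\cN \leq (1 - t) \cdot \characteristic_\cN$, and DR-submodularity implies $\nabla F(t \cdot \characteristic_\cN) \geq \nabla F((1-t) \cdot \characteristic_\cN)$.

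Combining these bounds yields $F(x) + F(y) \geq (\delta - \eps) \cdot 16\tau$, as required. There is no real obstacle here: the only substantive change from the set-function version is the replacement of $f(\varnothing), f(\cN) \geq 0$ by $F(\characteristic_\varnothing), F(\characteristic_\cN) \geq 0$, and of submodularity of $f$ by DR-submodularity of $F$, both of which are granted by the standing assumptions of Section~\ref{sec:dr_submodular}.
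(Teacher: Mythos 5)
Your proof is correct and takes exactly the same route as the paper: the paper's own proof of this lemma is a one-liner saying the argument is identical to that of Lemma~\ref{lem:gain_pre} with $f(\varnothing)$ replaced by $F(\characteristic_\varnothing)$ (and, implicitly, $f(\cN)$ by $F(\characteristic_\cN)$ and submodularity of $f$ by DR-submodularity of $F$), which is precisely the translation you carried out.
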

\begin{proof}
	The proof is identical to that of \cref{lem:gain_pre}, except that $ 
	f(\varnothing) $ should be replaced by $ F(\characteristic_\varnothing) $.
\end{proof}

The following corollary of the last two lemmata completes the proof of part 
(\ref{item:DR_gain_loss_pre}).

\begin{corollary}[Counterpart of 
\cref{cor:lower_bound_Fx_Fy}]\label{cor:DR_lower_bound_Fx_Fy}
	If $\tau \geq F(\opt)/4$, then $F(x) + F(y) \geq 2[F(\opt) - F(OPT(x, y))] 
	- 4\eps \cdot F(\opt)$.
\end{corollary}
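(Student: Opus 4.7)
The plan is to mirror exactly the proof of Corollary~\ref{cor:lower_bound_Fx_Fy}, since all the needed ingredients are already in place in their DR-submodular form. First I would invoke Lemma~\ref{lem:DR_gain_pre} to lower bound $F(x)+F(y)$ by $(\delta-\eps)\cdot 16\tau$. Then I would use the identity $\Delta = 1-2\delta$ from the construction of {\PreProcess} to rewrite $(\delta-\eps)\cdot 16\tau = (1-\Delta-2\eps)\cdot 8\tau$.

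Next I would invoke the assumption $\tau \geq F(\opt)/4$ to turn the $8\tau$ factor into $2F(\opt)$, producing the lower bound $(1-\Delta-2\eps)\cdot 2F(\opt)$. Expanding and rearranging this expression gives $2[F(\opt) - \Delta \cdot F(\opt)] - 4\eps\cdot F(\opt)$. Finally I would apply Lemma~\ref{lem:DR_loss_pre}, which states $F(OPT(x,y)) \geq \Delta \cdot F(\opt)$, to replace the term $\Delta \cdot F(\opt)$ by $F(OPT(x,y))$, yielding the desired inequality. Note that one must verify $(1-\Delta-2\eps) \geq 0$ for the direction of the inequality in this substitution to go through, but this follows from $\Delta = 1-2\delta \leq 1-2\eps$ since $\delta \geq \eps$ by construction in {\PreProcess} (c.f.\ the proof of Observation~\ref{obs:diff_delta_pre}).

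There is no real obstacle here: the proof is a direct transcription of the {\USM} version, with $f(OPT)$ replaced by $F(\opt)$ throughout, and relies only on the two DR-analogues Lemmas~\ref{lem:DR_loss_pre} and~\ref{lem:DR_gain_pre} that were already established. The conceptual work—namely, showing the analogues of the loss and gain bounds in the DR-submodular setting via the concavity trick of Lemma~\ref{lem:DR_two_inequalities}—has already been carried out earlier in the section, so the corollary itself reduces to an algebraic combination of two inequalities.
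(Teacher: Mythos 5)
Your proof is correct and takes exactly the same route as the paper, which simply states that the argument is identical to that of Corollary~\ref{cor:lower_bound_Fx_Fy} with $f(OPT)$ replaced by $F(\opt)$; you have spelled out that transcription (including the sanity check that $1-\Delta-2\eps \geq 0$, which the original proof also invokes via the observation $\Delta \leq 1-2\eps$) and correctly identified the two DR-analogue lemmata as the only ingredients.
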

\begin{proof}
	The proof is identical to that of \cref{cor:lower_bound_Fx_Fy}, except that 
	$ f(OPT) $ should be replaced by $ F(\opt) $.
\end{proof}

\end{document}